\documentclass[11pt]{article}

\usepackage{style}
\usepackage{authblk}
\usepackage{setspace}

\title{Fundamental Limits and Optimal Methods for Sharp Analytical Causal Bounds in Instrumental Variable Models}

\usepackage{authblk}

\renewcommand{\thefootnote}{\fnsymbol{footnote}}

\usepackage{authblk}

\renewcommand{\thefootnote}{\fnsymbol{footnote}}

\author[1]{Arefe Boushehrian$^*$}
\author[1]{Mohammad Reza Badri$^*$}
\author[2]{Sina Akbari}
\author[1]{Negar Kiyavash}

\affil[1]{EPFL, Switzerland}
\affil[2]{Statistical Laboratory, University of Cambridge, UK}
\date{}
\begin{document}
\maketitle

\begingroup
\renewcommand{\thefootnote}{\fnsymbol{footnote}}
\setcounter{footnote}{1}
\footnotetext{Authors contributed equally.}
\endgroup

\setcounter{footnote}{0}
\renewcommand{\thefootnote}{\arabic{footnote}}

\noindent

\begin{abstract}
    Bounding causal effects analytically, rather than numerically, is appealing for its interpretability and conceptual clarity.
    Existing sharp methods rely on optimization-based approaches such as the Balke–Pearl framework, whose computational complexity grows rapidly. An alternative line of work derives bounds heuristically using probability laws and generic inequalities, and some recent papers have claimed or conjectured that this approach can yield sharp analytical bounds with substantially lower complexity.
    In this paper, we show that this perceived advantage is illusory.
    In particular, in a discrete instrumental variable setting, we show that any sharp analytical bound for the average treatment effect 
    must be expressible as a maximum (minimum) over a collection of linear terms whose cardinality grows exponentially in the number of values taken by the outcome.
    In parallel, we show that the number of instrumental variable inequalities itself also grows exponentially. 
    Consequently, bounds and inequalities expressed using only polynomially many such terms cannot be sharp.
    As a constructive complement, the paper is accompanied by codes implemented in python and R to derive sharp analytical bounds and sharp inequalities with optimal efficiency, matching the lower bounds proven in this paper.
    These codes are available \href{https://github.com/ArefeBoushehrian/Analytical-Causal-Bounds-in-Instrumental-Variable-Models}{online}.
\end{abstract}
\vspace{2em}
\section{INTRODUCTION}
\label{sec: intro}

Bounding causal effects under partial identification has a long history in statistics, econometrics, and causal inference. 
When point identification fails, sharp bounds provide the tightest possible characterization of causal estimands consistent with observed data and maintained assumptions. 
Seminal contributions by \citet{manski1990nonparametric} and \citet{manski2003partial} formalized partial identification as a coherent framework, while subsequent work on instrumental variable (IV) models clarified how structural assumptions can result in narrower identification regions \citep{balke1995probabilistic, Balke1997BoundsOT, robins1989analysis, pearl1995testability}.

Among the available tools for bounding causal estimands, \emph{analytical bounds}, closed-form expressions written directly as functionals of observed probabilities, are especially appealing due to their interpretability and transparency.
However, the computational complexity of deriving sharp analytical bounds remains poorly understood.

Two broad approaches have been developed to derive analytical bounds in IV models.
The first formulates the bounding problem as an optimization over the space of full data distributions consistent with the observed data and the structural assumptions.
This approach originates in \cite{balke1995probabilistic}, and has been extended in multiple directions \citep{cheng2006bounds, Richardson2014ACE, Sachs2020AGM, duarte2024automated}.
These methods are guaranteed to yield sharp bounds, but their computational cost grows rapidly with the cardinalities of the observed variables, making them difficult to apply in more complex discrete settings.

The second approach derives inequalities algebraically from the observed data law using logical implications of the causal model, the laws of probability, and generic inequalities, e.g. Fréchet inequalities \citep{frechet1935generalisation, manski1990nonparametric, kedagni2020generalized, finkelstein2020deriving}.
While these bounds are not guaranteed to be sharp, they are often easier to interpret and compute.
Because these approaches do not explicitly enumerate vertices of the full-data polytope, it has been conjectured, implicitly or explicitly \cite{bonet2001instrumentality, kedagni2020generalized, shu2025identification} that sharp analytical bounds and complete testable implications might be obtained through these approaches without incurring the limiting computational complexity of the optimization-based approaches.

This paper shows that such a computational separation is impossible.
Focusing on the discrete instrumental variable (IV) model, we establish fundamental lower bounds on the complexity of \emph{both} sharp partial identification bounds for the average treatment effect (ATE) and sharp testable implications of the IV model.


First, for partial identification, we prove that when the outcome variable $Y$ takes $n$ distinct values, any sharp upper (or lower) bound for the ATE must be representable as the minimum (or maximum) over a collection of linear expressions whose cardinality grows exponentially in $n$. 
Consequently, no polynomial-size family of linear analytical expressions can yield sharp bounds in general.

Second, we provide a complete and sharp characterization of the testable implications of the discrete IV model. 
We show that the set of observed data distributions compatible with the model is defined by a family of linear inequalities corresponding to the extreme rays of a dual cone. 
Crucially, we prove that the number of necessary and sufficient IV inequalities also grows exponentially in the support size of the outcome. 
Thus, any complete system of testable implications must be exponentially large in the worst case. 
This extends and sharpens earlier results on instrumental inequalities \citep{pearl1995testability, bonet2001instrumentality, kedagni2020generalized} by providing necessity, sufficiency, and explicit lower bounds on their number in multi-valued settings.

Both results show that the exponential growth observed in linear programming approaches, therefore, reflects an intrinsic combinatorial feature of the IV model rather than an artifact of a particular computational technique.

As a constructive complement to our theoretical results, we develop a software package that computes sharp upper and lower bounds on the ATE as well as sharp IV inequalities (testable implications) with \emph{optimal output-sensitive complexity}, that is, time linear in the number of sharp terms (or inequalities). 
Rather than enumerating extreme points of the dual linear program, our implementation exploits the explicit structural characterizations derived in this paper to generate sharp bounds and IV inequalities directly. 
This avoids the combinatorial overhead inherent in off-the-shelf linear programming or vertex-enumeration routines. 
Simulation studies demonstrate substantial computational gains. 
While existing packages based on generic optimization or polyhedral enumeration become practically intractable even for moderate outcome support sizes, our method computes the full collection of sharp bounds and necessary and sufficient IV inequalities in milliseconds, even when the outcome support is large.



\subsection{RELATED WORK}
\label{sec: related works}
The identification of causal effects in IV models dates back to classical econometric analyses of simultaneous equations and compliance behavior. 
Under additional structural assumptions such as monotonicity, point identification of effects is possible \citep{angrist1995identification, angrist1991sources, chamberlain1986asymptotic, manski1998monotone}. 
Without such strengthening assumptions, causal effects are generally only partially identified.

Sharp bounds for IV models were first derived via linear programming in \citet{balke1995probabilistic}. 
Extensions to multi-valued instruments appeared in \citet{cheng2006bounds} and \citet{Richardson2014ACE}, building on earlier nonparametric bounds in \citet{robins1989analysis} and \citet{manski1990nonparametric}. 
Symbolic and algorithmic characterizations of causal bounds from general DAGs were developed in \citet{Sachs2020AGM}. 
Our results complement these works by proving that their observed time complexity is unavoidable for sharp analytical solutions, rather than an artifact of their approaches.



On the testability side, \citet{pearl1995testability} derived the original instrumental inequalities for the binary case, and \citet{bonet2001instrumentality} strengthened these results. 
\citet{kedagni2020generalized} provided necessary and sufficient generalized instrumental inequalities for the binary treatment and outcome setting under joint independence. 
Most recently, \citet{song2025categoricalinstrumentalvariablemodel} characterized the restrictions on the joint distribution of potential outcomes in a categorical IV model. 
We extend this literature by (i) providing necessary and sufficient inequalities for multi-valued outcomes, and (ii) establishing exponential lower bounds on the number of inequalities required for completeness.



Finally, continuous IV models have been studied under structural or semiparametric assumptions \citep{angrist1995identification, kitagawa2009identification, beresteanu2012partial}. 
However, without additional restrictions, continuous IV models impose no testable constraints on the observed distribution \citep{bonet2001instrumentality, gunsilius2021nontestability}. 
Our focus is therefore on the discrete setting, where rich geometric structure emerges.

\paragraph{Contributions.}
We make three main contributions:

\begin{enumerate}
    \item \textbf{Sharp analytical ATE bounds.}
    We prove that any sharp representation of the ATE bounds the discrete IV model must involve exponentially many linear terms in the outcome support size, showing that the computational burden is unavoidable when deriving sharp bounds.
    We provide an explicit and complete characterization of these bounds in the binary instrument setting.
    
    \item \textbf{Sharp testable implications.} We 
    establish exponential (in the outcome support size) lower bounds on the number of sharp testable implications (IV inequalities) in the discrete IV model. 
    We provide these sharp inequalities explicitly in the binary instrument setting.
    
    \item \textbf{Computationally optimal code.} We provide codes to compute all sharp bounds and inequalities with optimal time complexity in the binary instrument setting, which enables tractable computation of bounds and inequalities in discrete IV models.
\end{enumerate}




\section{Problem setup}
\label{sec: problem setup}
For $n\in\N$, we denote the set $\left\{0,\cdots,n-1\right\}$ by $[n]$. 
Throughout the paper, vectors are shown by bold letters (e.g., $\vp$), random variables 
by capital letters (e.g., $Y$), and probability measures by calligraphic letters (e.g., $\mathcal{P}$).

We consider the instrumental variable model with categorical variables.
We let $D$ denote a dichotomous treatment variable.
We assume without loss of generality that $D\in\{0,1\}$.
We denote the observed outcome variable and the instrument  by $Y$ and $Z$, respectively.
Following Neyman-Rubin potential outcome model, let $Y^{(d,z)}$ denote the potential outcome of $Y$, if the treatment and the instrument (possibly contrary to the fact) were set to $D=d$ and $Z=z$, respectively.
We assume these variables exist throughout.
The variables $\Dv{z}$ are defined similarly, as the potential outcome of $D$, had (possibly contrary to the fact) the instrumental variable been set to $Z=z$.
The existence of these variables, however, is only essential to some of our results.


We assume the outcome variable $Y$ takes values in a finite real-valued set
$ \left \{ \gamma_0,\gamma_1, \cdots , \gamma_{n-1} \right \}$, where 
$\gamma_0 < \gamma_1 < \cdots < \gamma_{n-1}$
without loss of generality.
We further assume that the instrument $Z$ takes values in $[\ell]$.

We require an exclusion restriction in our setting.
\begin{assumption} [Individual-level exclusion]
\label{asm: exclusion}
    $\Y{d, z} = \Y{d, z'}$ almost surely for all $z, z' \in [\ell]$ and every $d \in \{0, 1\}$.
\end{assumption}
The individual-level exclusion restriction posits that 
    there is no direct effect of $Z$ on the outcome $Y$ other than through the treatment of interest $D$.
We maintain this assumption throughout the manuscript and consequently simplify the notation and use $\Y{d}$ to denote potential outcomes.

In addition to the exclusion restriction, we will work under an IV independence assumption.
We consider the following two versions.
\begin{assumption} [Random assignment] The variables $\Dv{z}$ for $z\in[\ell]$ exist, and,
\label{asm: independence}
    \[Z \independent \left (  \Y0, \Y1, \Dv0, \Dv1,\dots,\Dv{\ell-1} \right ).\]
\end{assumption}

\begin{assumption}[Joint independence]
\label{asm: jointind}
    \[Z \independent  (  \Y0, \Y1).\]
\end{assumption}
\Cref{asm: independence} is stronger than \Cref{asm: jointind}, in the sense that the former implies the latter.
In the remainder of this paper, we work under \Cref{asm: independence} for ease of exposition.
However, we will show that most of our results are valid under \Cref{asm: jointind}.
Finally, we make the standard consistency assumption:
\begin{assumption}[Consistency]\label{asm:consistency}
\begin{align*}
\refstepcounter{assumpeq}\label{asm:consistency_a}
Y &= (1-D)Y^{(0)} + DY^{(1)} \tag{\theassumpeq}, \\
\refstepcounter{assumpeq}\label{asm:consistency_b}
D &= \textstyle\sum_{z\in[\ell]}\mathbbm{1}(Z=z)D^{(z)}, \tag{\theassumpeq}
\end{align*}

where $\mathbbm{1}(\cdot)$ is the indicator function.
\end{assumption}


Let $\mathcal{Q}^f$ and $\mathcal{P}$ represent the full data and the observed data laws, respectively.
For example, when $\ell=2$ (binary instrument), 
$\mathcal{Q}^f=\mathcal{L}(\Y{0},\Y{1},\Dv{0},\Dv{1},Z)$,
and $\mathcal{P}=\mathcal{L}(Y,D,Z)$.
We denote by $\mathcal{Q}$ the marginal law of $(\Y{0},\Y{1},\Dv{0},\dots,\Dv{\ell-1})$, after marginalizing out $Z$,   induced by $\mathcal{Q}^f$.
Moreover, for every $z\in[\ell]$, we define the stochastic kernel $\mathcal{P}_z$ as the regular conditional distribution\footnote{We assume without loss of generality that $\mathcal{P}(Z)>0$.
Otherwise, one can discard the values $z$ with zero probability and rearrange.} of $(Y,D)$ given $Z=z$ under $\mathcal{P}$. That is, $\mathcal{P}_z\coloneqq\mathcal{P}(Y,D\mid Z=z)$.
Finally, we use $q_{i j, \mathbf{d}}$ and $p_{i d,z}$ as shorthand for
$\mathcal{Q}\left(\Y{0}=\gamma_i,\Y{1}=\gamma_j, (\Dv{0},\dots, \Dv{\ell-1})=\mathbf{d}\right)$ and $\mathcal{P}_z(Y=\gamma_i, D=d)$, respectively.

In this work, we consider the following two problems.
\paragraph{Sharp Partial Identification Bounds for the Average Treatment Effect.}
We consider the problem of deriving sharp bounds for the average treatment effect, defined as 
\begin{equation}\label{eq:ate}
\begin{split}\textit{ATE}(\mathcal{Q}^f) &:= \E_{\mathcal{Q}^f}[\Y{1} - \Y{0}]
{\color{white}:}=
\textstyle\sum_{i,j\in[n]}
\sum_{\mathbf{d}\in\{0,1\}^\ell}
(\gamma_j - \gamma_i)\,
q_{ij,\mathbf{d}}
.\end{split}
\end{equation}
Specifically, the objective is to derive functionals $U(\cdot)$ and $L(\cdot)$ of the observed data law $\mathcal{P}$ such that the following inequalities are valid and sharp
\begin{equation}\label{eq: U and L}
L(\mathcal{P})\leq ATE(\mathcal{Q}^f)\leq U(\mathcal{P}),
\end{equation}


meaning that (i) they contain the true value of the estimand for every full data law that is compatible with the maintained assumptions and the observed distribution (valid); and, (ii)
they are the tightest possible among all valid bounds (sharp). 
Specifically, for every observed law $\mathcal{P}$ conforming to the IV model,
there exist full data distributions $\mathcal{Q}^f_L$ and $\mathcal{Q}^f_U$
satisfying the maintained assumptions and having the marginal $\mathcal{P}$ that attain these bounds, i.e.
\begin{equation*}
ATE(\mathcal{Q}^f_L)=L(\mathcal{P}), 
\qquad 
ATE(\mathcal{Q}^f_U)=U(\mathcal{P}).
\end{equation*}

In general, one has to derive the functional $L(\cdot)$ and $U(\cdot)$ separately.
However, in the following proposition
we show that using the symmetry of the ATE functional, $U(\mathcal{P})$ can be expressed as $L(\cdot)$ evaluated at a different point (and vise versa).



\begin{proposition}\label{prop: uno reverse}
For any observed data law $\mathcal{P}$ we have $U(\mathcal{P})\equiv -L(\bar{\mathcal{P}})$,
    where the law $\bar{\mathcal{P}}$ can be constructed based on $\mathcal{P}$ as follows:
\[
    \bar{\mathcal{P}}(Y=y, D=d, Z=z) = \mathcal{P} (Y=y, D=1-d, Z=z).
\]
\end{proposition}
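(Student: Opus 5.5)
We prove the identity by a relabeling of the treatment that maps the model to itself, preserves validity and sharpness, and flips the sign of the ATE. Since $U$ and $L$ are defined as the supremum and infimum of $ATE(\mathcal{Q}^f)$ over all full-data laws satisfying the maintained assumptions with observed marginal $\mathcal{P}$, it suffices to build a bijection between the compatible sets for $\mathcal{P}$ and $\bar{\mathcal{P}}$ under which the ATE changes sign. Then
\[
U(\mathcal{P})=\sup ATE=\sup(-ATE\circ\Phi^{-1})=-\inf_{\bar{\mathcal{P}}} ATE=-L(\bar{\mathcal{P}}).
\]

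\textbf{Construction of $\Phi$.} Given $\mathcal{Q}^f=\mathcal{L}(\Y0,\Y1,\Dv0,\dots,\Dv{\ell-1},Z)$, define new variables by swapping potential outcomes and complementing treatments:
\[
\tilde Y^{(0)}=\Y1,\quad \tilde Y^{(1)}=\Y0,\quad \tilde D^{(z)}=1-\Dv z,\quad \tilde Z=Z,
\]
with $\tilde D=1-D$ and $\tilde Y=Y$. Let $\bar{\mathcal{Q}}^f$ be their joint law. This map is an involution on laws, hence a bijection.

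\textbf{Checks.} Four properties need to be verified.
\begin{enumerate}
\item Exclusion (Assumption~\ref{asm: exclusion}) is automatic, since potential outcomes are indexed only by $d$ and swapping labels preserves the property.
\item Random assignment (Assumption~\ref{asm: independence}) holds because the new vector $(\tilde Y^{(0)},\tilde Y^{(1)},\tilde D^{(0)},\dots)$ is a measurable function of the old one, so independence from $Z$ is inherited. The same argument works under Assumption~\ref{asm: jointind}.
\item Consistency holds: $\tilde D=\sum_z \mathbbm{1}(Z=z)(1-\Dv z)=1-D$, and
\[
(1-\tilde D)\tilde Y^{(0)}+\tilde D\tilde Y^{(1)}=D\Y1+(1-D)\Y0=Y.
\]
So the induced observed law is $\mathcal{L}(Y,1-D,Z)=\bar{\mathcal{P}}$.
\item The ATE flips sign: $\E[\tilde Y^{(1)}-\tilde Y^{(0)}]=-\E[\Y1-\Y0]$.
\end{enumerate}

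\textbf{Conclusion.} Because the map is an involution, the same construction sends laws compatible with $\bar{\mathcal{P}}$ back to laws compatible with $\mathcal{P}$. The two compatible sets are therefore in bijection with negated ATE values, and the suprema and infima correspond as claimed.

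\textbf{Main subtlety.} The step needing care is interpreting $U$ and $L$ as the sharp bounds, i.e.\ the attained supremum and infimum. Sharpness for $\mathcal{P}$ transfers to $\bar{\mathcal{P}}$ through the bijection, with maximizers mapping to minimizers. We also note that $\bar{\mathcal{P}}$ conforms to the IV model exactly when $\mathcal{P}$ does, so both sides are defined on the same domain.
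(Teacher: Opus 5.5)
Your argument is correct and uses the same symmetry idea as the paper. Your relabeling, however, differs from the paper's in a way that matters.

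The paper defines $\bar{\mathcal{Q}}^f$ by swapping $Y^{(0)}$ and $Y^{(1)}$ while keeping the compliance vector $\mathbf{d}=(D^{(0)},\dots,D^{(\ell-1)})$ fixed. It then asserts that the induced observed law is $\bar{\mathcal{P}}$. Under consistency, that map actually induces $\mathcal{Q}^f(Y^{(1-d)}=y,\,D=d,\,Z=z)$. This is a counterfactual quantity and generally differs from $\mathcal{P}(Y=y,D=1-d,Z=z)=\mathcal{Q}^f(Y^{(1-d)}=y,\,D=1-d,\,Z=z)$. For instance, if everyone is a never-taker with $Y^{(0)}\equiv\gamma_0$ and $Y^{(1)}\equiv\gamma_1$, the swap alone gives $(Y,D)=(\gamma_1,0)$ almost surely rather than $(\gamma_0,1)$.

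Your extra step $D^{(z)}\mapsto 1-D^{(z)}$ is exactly what yields $\tilde D=1-D$ and $\tilde Y=Y$, and hence the induced law $\mathcal{L}(Y,1-D,Z)$. In the paper's LP notation this is the map $q_{ij,\mathbf{d}}\mapsto q_{ji,\mathbf{1}-\mathbf{d}}$. It sends $\Gamma(\mathbf{p})$ onto $\Gamma(\bar{\mathbf{p}})$ and $\mathbf{c}^\top\mathbf{q}$ to $-\mathbf{c}^\top\mathbf{q}$, which is what justifies the upper-bound LP reformulation.

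Your observation that the map is an involution also gives a bijection between the two compatible sets. This is needed for attained extremizers to transfer in both directions. The paper asserts that sharpness transfers from the lower bound for $\bar{\mathcal{P}}$ to the upper bound for $\mathcal{P}$ without establishing this bijection.
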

\begin{proof}
Let $\bar{\mathcal{Q}}^f$ be defined as
\begin{align*}
    \bar{\mathcal{Q}}^f(\Y0=\gamma_i,\Y1=\gamma_j,(\Dv0,\dots,\Dv\ell)=\mathbf{d}, Z=z) =\\ 
    \mathcal{Q}^f(\Y0=\gamma_j,\Y1=\gamma_i,(\Dv0,\dots,\Dv\ell)=\mathbf{d}, Z=z), 
\end{align*}
which is the probability measure resulting from relabeling $\Y0$ and $\Y1$ by swapping them.
By definition,
\begin{equation}\label{eq:qbar}
\begin{split}
    \textit{ATE}(\bar{\mathcal{Q}}^f) &:= \E_{\bar{\mathcal{Q}}^f}[\Y{1} - \Y{0}]
    = \E_{\mathcal{Q}^f}[\Y{0} - \Y{1}]
    = - \textit{ATE}(\mathcal{Q}^f).
\end{split}
\end{equation}
Let $\bar{\mathcal{P}}$ be the observed data law induced by $\bar{\mathcal{Q}}^f$ under consistency.
\Cref{eq:qbar} implies that if $L(\bar{\mathcal{P}})$ is a (sharp) lower bound for ATE$(\bar{\mathcal{Q}}^f)$, then $-L(\bar{\mathcal{P}})$ is a (sharp) upper bound for ATE$(\mathcal{Q}^f)$; that is,
$U(\mathcal{P})\equiv -L(\bar{\mathcal{P}})$.
The law $\bar{\mathcal{P}}$ can be easily constructed based on $\mathcal{P}$ as follows.
\[
    \bar{\mathcal{P}}(Y=y, D=d, Z=z) = \mathcal{P} (Y=y, D=1-d, Z=z).
\]
\end{proof}
Therefore, 
it is sufficient to characterize the lower bound functional $L(\cdot)$ for our purposes.
We call $\bar{\mathcal{P}}$ the conjugate observed law, and similar to $\mathcal{P}$, we define conditionals $\bar{\mathcal{P}}_z$ and use $\bar{p}_{ij,\mathbf{d}}$ as shorthand for its conditionals.


\paragraph{Sharp Testable Implications.}

It is known that the instrumental variable model has testable implications (see, e.g. \cite{pearl1995testability,kedagni2020generalized}).
In particular, the instrumental variable model imposes restrictions on the observed data law $\mathcal{P}$, and if these restrictions are violated, the IV model is falsified.
Here, we consider the problem of deriving the set of \emph{valid} and \emph{sharp} testable implications, in the sense that (i) if the true data-generating process follows the instrumental variable model, then $\mathcal{P}$ satisfies these implications, and (ii) if  $\mathcal{P}$ satisfies this set of testable implications, then there exists a full data law $\mathcal{Q}^f$ that follows the IV model and induces (marginalizes to) $\mathcal{P}$.
In other words, the IV model cannot be falsified.


\section{Linear Programming Formulation and the Dual}
\label{sec: lp formulation and dual}
In this section, we formalize the set of necessary and sufficient (linear) constraints that the full data law must satisfy under Assumptions \ref{asm: exclusion}, \ref{asm: independence} and \ref{asm:consistency}. Additionally, we review a unified linear programming approach to solving both classes of problems considered in this work.


\begin{restatable}{proposition}{prprelation}\label{prop:relation}
\label{prop: suffciency of constraints}
    Under Assumptions \ref{asm: exclusion}, \ref{asm: independence} and \ref{asm:consistency}, given the observed law $\mathcal{P}$, the marginal law of potential outcome variables $Q(\cdot)$ is sharply characterized as follows:


    \begin{align}
    \label{eq: relation between q and p}
    \begin{cases}
        p_{y 0, z} =
        \sum\limits_{j\in [n]}
        \sum\limits_{\substack{\mathbf{d} \in \{0,1\}^{\ell}, \\d_z=0}}
        q_{y j, \mathbf{d}}
        \quad 
        \forall y \in [n],\; z \in [\ell],
        \\
        p_{y 1, z} =
        \sum\limits_{i\in [n]}
        \sum\limits_{\substack{\mathbf{d} \in \{0,1\}^{\ell}, \\d_z=1}}
        q_{i y, \mathbf{d}}
        \quad 
        \forall y \in [n],\; z \in [\ell],
        \\
        q_{ij,\mathbf{d}} \ge 0 
        \quad \hspace{4em}\forall i,j \in [n],\; \mathbf{d} \in \{0,1\}^{\ell},
    \end{cases}
    \end{align}
    where $d_z$ is the $z$-th element of vector $\mathbf{d}$.
\end{restatable}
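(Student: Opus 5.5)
We prove two directions: \emph{validity}, meaning every law $\mathcal{Q}$ arising from a full data law that satisfies the assumptions and induces $\mathcal{P}$ obeys \eqref{eq: relation between q and p}; and \emph{sharpness}, meaning every nonnegative array $(q_{ij,\mathbf{d}})$ satisfying \eqref{eq: relation between q and p} is the marginal of some admissible $\mathcal{Q}^f$ whose observed law is $\mathcal{P}$.

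\textbf{Validity.} Fix $y\in[n]$ and $z\in[\ell]$. By \Cref{asm:consistency}, on the event $\{Z=z\}$ we have $D=\Dv{z}$ and $Y=\Y{D}$; the exclusion restriction (\Cref{asm: exclusion}) lets us write $\Y{d}$ without the instrument index. Hence
\[
\{Y=\gamma_y, D=0, Z=z\} = \{\Y{0}=\gamma_y, \Dv{z}=0, Z=z\}.
\]
Dividing by $\mathcal{P}(Z=z)>0$ and using \Cref{asm: independence} to drop the conditioning on $Z$ gives
\[
p_{y0,z} = \mathcal{Q}\bigl(\Y{0}=\gamma_y, \Dv{z}=0\bigr).
\]
Summing $q_{yj,\mathbf{d}}$ over all $j$ and all $\mathbf{d}$ with $d_z=0$ gives exactly this probability. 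The equation for $p_{y1,z}$ follows in the same way, with $\Y{1}$ in place of $\Y{0}$. Nonnegativity holds trivially because $\mathcal{Q}$ is a probability law.

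\textbf{Sharpness.} Take any nonnegative $q$ satisfying \eqref{eq: relation between q and p}. First check that $q$ is a probability vector. Summing the two equations over $y$ for a fixed $z$ gives
\[
\sum_{y,d} p_{yd,z} = \sum_{i,j,\mathbf{d}} q_{ij,\mathbf{d}},
\]
because every $\mathbf{d}$ has either $d_z=0$ or $d_z=1$. The left-hand side equals $1$, since $\mathcal{P}_z$ is a conditional distribution. Now define the full data law as the product
\[
\mathcal{Q}^f(\Y0=\gamma_i,\Y1=\gamma_j,\mathbf{D}=\mathbf{d},Z=z) := q_{ij,\mathbf{d}}\,\mathcal{P}(Z=z).
\]
By construction, $Z$ is independent of $(\Y0,\Y1,\mathbf{D})$, so \Cref{asm: independence} holds. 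We also set $\Y{d,z}:=\Y{d}$, which gives \Cref{asm: exclusion}, and we define $Y$ and $D$ through \Cref{asm:consistency}. Reversing the validity computation then shows that the induced conditional law of $(Y,D)$ given $Z=z$ equals $\mathcal{P}_z$, and that the marginal of $Z$ equals $\mathcal{P}(Z)$. So the induced observed law is exactly $\mathcal{P}$.

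\textbf{Main obstacle.} The only delicate point is conceptual rather than computational. The validity argument must use the random assignment assumption, \Cref{asm: independence}, with $Z$ independent of the $\Dv{z}$ jointly with the potential outcomes; joint independence alone would not let us drop the conditioning on $Z$. For sharpness, we must confirm that the product construction preserves every maintained assumption simultaneously and that the normalization is automatic. Both checks are routine once the event identity above is established.
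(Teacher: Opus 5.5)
Your proof is correct. The validity direction matches the paper's argument, but for sharpness you take a genuinely different route.

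The paper does not construct a full data law. It invokes Theorem~1 of \citet{song2025categoricalinstrumentalvariablemodel}, which characterizes the compatible laws of $(Y^{(0)},Y^{(1)})$ through the inequalities $\mathcal{Q}(Y^{(0)}\in\mathcal{V}^{(0)},Y^{(1)}\in\mathcal{V}^{(1)})\le\sum_{d}\mathcal{P}(Y\in\mathcal{V}^{(d)},D=d\mid Z=z)$. It then checks that these inequalities follow from the linear equations by dropping nonnegative $q$-terms. You instead write down the product law $q_{ij,\mathbf{d}}\,\mathcal{P}(Z=z)$ and verify exclusion, random assignment, consistency and the induced observed law directly.

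\emph{What your route buys.} It is elementary and self-contained, and it makes the normalization $\sum_{i,j,\mathbf{d}} q_{ij,\mathbf{d}}=1$ explicit. It also realizes the given $\mathcal{Q}$ exactly, including the law of the compliance vector $(D^{(0)},\dots,D^{(\ell-1)})$. That is precisely the sufficiency claim stated right after the proposition. The paper's argument, strictly read, certifies only that the $(Y^{(0)},Y^{(1)})$-marginal of $\mathcal{Q}$ is compatible with $\mathcal{P}$. This suffices for the ATE, which depends only on that marginal, but is weaker than the statement about $\mathcal{Q}$ as written.

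\emph{What the paper's route buys.} It ties the LP constraints to an existing characterization. The same result (equivalence of the random-assignment and joint-independence models) is what the paper uses for \Cref{thm: jointindep}. Your direct argument covers only the random-assignment case, where, as you correctly note, validity genuinely needs $Z$ to be independent of the $D^{(z)}$.

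One small addition would make your sharpness step complete as stated: summing your $\mathcal{Q}^f$ over $z$ returns $q_{ij,\mathbf{d}}$. So the constructed law marginalizes to $\mathcal{Q}$ as well as inducing $\mathcal{P}$.
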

The proof of \Cref{prop: suffciency of constraints} is given in Appendix \ref{proof: sufficiency}.
\Cref{prop: suffciency of constraints} states that for any IV model, these equations hold (necessity), and for every pair $\mathcal{P},\mathcal{Q}$ that satisfy these equations, there exists a full data law $\mathcal{Q}^{f*}$ conforming to the IV model that marginalizes to $\mathcal{Q}$ and $\mathcal{P}$ (sufficiency).


Let $\vp ,\Bar{\vp} \in \R^{2\ell n}$ and $\vq \in \R^ {2^\ell n^2}$ be vector representations of $p_{yd,z}$ and $\Bar{p}_{yd,z}$  for all $y,d,z$ and $q_{ij,\mathbf{d}}$ for all $i,j,\mathbf{d}$, respectively. 
\Cref{eq: relation between q and p} can be expressed in matrix form:
\begin{equation}
\label{eq: matrix LP formulation}
\Mt \vq = \vp, \qquad \vq \ge 0,
\end{equation}
where $M$ is a binary matrix such that
\[
M_{(ij,\mathbf{d}),(yd,z)} = 1
\]
if $q_{ij,\mathbf{d}}$ appears in the equation corresponding to $p_{yd,z}$ in \Cref{eq: relation between q and p}, and is $0$ otherwise. 
In Appendix, we provide an example where $n=\ell=2$ and the corresponding matrix $M$ is given (see \Cref{App: example1}).

\subsection{Linear Programming and the Dual Problem}
Based on \Cref{prop: suffciency of constraints} and its matrix representation (Equation \ref{eq: matrix LP formulation}),
the set of all full data laws consistent with the observed distribution $\vp$ is given by the convex polytope
\begin{align}
\Gamma(\vp)
:=
\left\{
\vq \in \mathbb{R}^{2^\ell n^2} :
\Mt \vq = \vp, \;
\vq \ge 0
\right\}.
\end{align}
Below, we review how the problems we consider in this paper are cast into linear programming (LP) problems using the latter observation.

\paragraph{Partial Identification for ATE.}

The ATE is a linear functional of the full data law. 
In particular,
\[
\text{ATE}(\mathcal{Q}^f)
=
\vc^\top \vq,
\]
where the coefficient vector $\vc$ is defined component-wise as
\[
c_{ij,\mathbf{d}}
=
\gamma_j - \gamma_i,
\]
for all $i,j \in [n]$ and all $\mathbf{d} \in \{0,1\}^{\ell}$ (see Equation \ref{eq:ate}).

Therefore, the sharp identification region for the ATE is given by solutions to the linear programs
\begin{align}\label{eq: primal lower bound PI}
L(\mathcal{P})
&=
\min_{\vq}
\quad
\vc^\top \vq
\quad \text{s.t.}
\quad
\Mt \vq = \vp,
\quad
\vq \ge 0
\end{align}
and

\begin{align}\label{eq: primal upper bound PI}
U(\mathcal{P})
&=
\max_{\vq}
\quad
\vc^\top \vq
\quad \text{s.t.}
\quad
\Mt \vq = \vp,
\quad
\vq \ge 0 \nonumber\\
&=
-L(\Bar{\mathcal{P}}) \nonumber\\
&=
-\min_{\vq}
\quad
\vc^\top \vq
\quad \text{s.t.}
\quad
\Mt \vq = \Bar{\vp},
\quad
\vq \ge 0,
\end{align}
where $U,L$ are defined in \Cref{eq: U and L}.
Since $\Gamma(\vp)$ (the feasibility region) is a convex polytope, the resulting interval
\[
[L(\mathcal{P}), U(\mathcal{P})] = [L(\mathcal{P}), -L(\bar{\mathcal{P}})]
\]
is the sharp identification region for the ATE.
\red

\black
\paragraph{Testable Implications of IV Model.}
To test whether a given observed probability vector $\vp$ is compatible with the IV model, we seek necessary and sufficient conditions on $\vp$ such that existence of a vector $\vq$ satisfying the linear system in \Cref{eq: matrix LP formulation} is guaranteed.
In other words, we want to determine whether the set $\Gamma(\vp)$ is nonempty. 
This can be cast as deciding the feasibility of the following linear program:
\begin{align}
\label{eq: general primal feasibility test}
\min_{\vq}
\quad
0
\quad \text{s.t.}
\quad
\Mt \vq = \vp,
\quad
\vq \ge 0.
\end{align}

The IV model can be falsified if and only if the latter is not feasible.


\paragraph{Dual Formulation.}
The dual formulation provides an alternative characterization of both sharp identification regions and testable implications. 
By the strong duality theorem of linear programming, whenever the feasible set $\Gamma(\vp)$ is nonempty, the optimal values of the primal and dual problems coincide 
(see, e.g., \cite[Section 5.2.3]{boyd2004convex}).
The dual has two desirable advantages:
(i) solving the dual can be computationally more efficient as it is defined over parameters corresponding to the observed data with dimension $2\ell n$, whereas the primal is defined over parameters corresponding to the full data law with dimension $2^\ell n^2$;
(ii) more importantly, the dual admits a 
form where the constraints do not depend on the observed vector $\vp$.
As we shall see in \Cref{sec: extreme points}, the latter is essential for deriving analytical (closed-form) solutions to the LPs.



The dual LPs associated with Equations 
\eqref{eq: primal lower bound PI} and \eqref{eq: primal upper bound PI} are
\begin{align}
\label{eq: dual_lower_bound_formal}
\max_{\vv\in \R^{2\ell n}}
\quad
\vv^\top \vp
\quad
\text{s.t.}
\quad
M \vv \le \vc,
\end{align}
and 
\begin{align}
\label{eq: dual_upper_bound_formal}
-\max_{\vv \in \R^{2\ell n}}
\quad
\vv^\top \Bar{\vp}
\quad
\text{s.t.}
\quad
M \vv \le \vc,
\end{align}
respectively,
where $\vv$ is the vector of dual variables corresponding to the equality constraints of primal LPs.
Every dual-feasible vector $\vv$ in Equation \eqref{eq: dual_lower_bound_formal} or \eqref{eq: dual_upper_bound_formal} generates a valid bound which holds uniformly over all full data laws consistent with the observed distribution. The optimal dual solution yields the tightest such bound. 

In the testable implications setting,
a fundamental result from linear programming (a theorem of the alternative; see, e.g., \cite[Section 5.8]{boyd2004convex}) implies that the primal problem of \Cref{eq: general primal feasibility test} 
is 
feasible if and only if
\begin{align}
\label{eq: general dual feasibility test_formal}
\{
\max_{\vr\in \R^{2\ell n}}
\quad
\vp^\top \vr
\quad
\text{subject to}
\quad
M \vr \le 0
\}
\;\;\le\;\; 0.
\end{align}
In other words,
each vector $\vr$ satisfying $M \vr \le 0$ imposes the linear inequality $\vp^\top \vr \le 0$ that must be satisfied by the observed distribution $\vp$. 
These inequalities are necessary and sufficient for the existence of a full data law $\vq \ge 0$ satisfying $\Mt \vq = \vp$.




\subsection{Dual Solution via Extreme Points and Rays}\label{sec: extreme points}
In this section, we discuss how to derive closed-form solutions of the dual LPs.
We
Denote the feasible regions of Equations \eqref{eq: dual_lower_bound_formal} and \eqref{eq: dual_upper_bound_formal} by
\[
\mathcal H := \{ \vv \in \R^{2\ell n} : M\vv \le \vc \},
\]
and the corresponding feasible region of \Cref{eq: general dual feasibility test_formal} by
\[
\mathcal K := \{ \vr \in \R^{2\ell n} : M\vr \le 0 \}.
\]
Note that these sets do not depend on the observed data law; they remain invariant across every instance of the IV model, and so are their extreme points.
Closed-form solutions can therefore be obtained through finding the extreme points and rays of these feasible regions.
We provide the necessary formalism below.

Both $\mathcal{H}$ and $\mathcal{K}$ are polyhedra. 
However, since $M$ might not have full column rank, these polyhedra contain affine directions given by $\ker(M)$. 
Consequently, feasible points are not isolated in $\R^{2\ell n}$: if $\vv$ is feasible, then so is $\vv + \vs$ for any $\vs \in \ker(M)$.
To eliminate this intrinsic degeneracy, we work modulo $\ker(M)$.

\begin{definition}[$M$-equivalent and $M$-distinct]
Two vectors $\vv_1, \vv_2$ are $M$-equivalent if
\[
\vv_1 - \vv_2 \in \ker(M),
\]
or equivalently,
\(
M\vv_1 = M\vv_2.
\)
Two vectors are called $M$-distinct if they are not $M$-equivalent.
\end{definition}

Working in the quotient space $\mathbb{R}^{2\ell n} \setminus \ker(M)$ removes affine directions and restores a proper notion of extremality.

\begin{definition}[Vertex]
\label{def: vertex}
A vector $\vv\in\mathcal{H}$ is a vertex of $\mathcal H$ if
for any representation
\[
\vv = \lambda \vv_1 + (1-\lambda) \vv_2,
\qquad \lambda \in (0,1),
\quad \vv_1, \vv_2 \in \mathcal H,
\]
the vectors $\vv,\vv_1,\vv_2$ are $M$-equivalent.
\end{definition}

A vertex is a point in $\mathcal{H}$ that cannot be written as a nontrivial convex combination of two $M$-distinct feasible points.
The dual optimum will be attained at a vertex of $\mathcal H$ (see \Cref{sec: appendix preliminaries vertex} for a more detailed discussion). 
Therefore, sharp identification bounds can be obtained by evaluating $\vpt \vv$ and $\Bar{\vp}^{\top} \vv$ over the (finite) set of vertices.


The set $\mathcal K$ is a polyhedral cone. 
The feasibility of the primal problem is equivalent to
\[
\vpt \vr \le 0 \quad \text{for all } \vr \in \mathcal K.
\]
Hence, the extreme rays of $\mathcal{K}$ determine all necessary and sufficient IV inequalities.

\begin{definition}[Extreme ray] \label{def: extreme ray}
A nonzero vector $\vr \in \mathcal K$ is an extreme ray of $\mathcal K$
if for any decomposition
\[
\vr = \vr_1 + \vr_2, 
\quad \vr_1, \vr_2 \in \mathcal K,
\]
there exist $\alpha, \beta \ge 0$ such that
\begin{equation*}
M\vr_1=\alpha M\vr, \quad M\vr_2=\beta M\vr
\end{equation*}
or equivalently, 
\[
\vr_1 = \alpha \vr + \vs_1, 
\qquad
\vr_2 = \beta \vr + \vs_2
\]
for some $\vs_1, \vs_2 \in \ker(M)$.
\end{definition}

Each extreme ray $\vr$ generates the affine set
\[
\mathcal{E}_{\vr}
= \{ \alpha \vr + \vs : \alpha > 0,\; \vs \in \ker(M) \}.
\]
Since $\mathcal K$ is a polyhedral cone, it admits only finitely many $M$-distinct extreme rays \cite{Rockafellar1970} (see \Cref{sec: appendix preliminaries extreme rays} for a more detailed discussion).
Consequently, $\mathcal K$ can be written as the conic hull of finitely many extreme rays. 
It follows that the infinite family of inequalities $\vpt \vr \le 0$ for all $\vr \in \mathcal K$ reduces to a finite family indexed by the extreme rays.


The above discussion yields a unifying principle:

\begin{itemize}
    \item Sharp identification bounds are obtained by enumerating the vertices of $\mathcal H$.
    \item Sharp testable implications are obtained by enumerating the extreme rays of $\mathcal K$.
\end{itemize}
\black

We shall explicitly characterize the vertices of $\mathcal{H}$
and the extreme rays of $\mathcal{K}$ 
in Sections~\ref{sec: pi} and~\ref{sec: iv}, respectively, and present closed-form sharp bounds and IV inequalities accordingly.

\section{Partial Identification of ATE}
\label{sec: pi}

We first present the sharp ATE bounds for the case where $\ell=2$ (binary instrument) in \Cref{sec: l2ate}.
Later in \Cref{sec: llate}, we provide an exponential lower bound on the number of terms in the bounds for an arbitrary value of $\ell$ (multi-valued instrument).

\subsection{Binary instrument}
\label{sec: l2ate}

To compute the vertices of $\mathcal{H}$ explicitly in this case, we use the special structure of $\mathcal{H}$.
An extensive list of the useful properties of $\mathcal{H}$ along with their proof is available in Appendix \ref{sec: proof of pi}.
Using these properties, for every feasible point $\vv\in\mathcal{H}$, we build a corresponding binary vector $\vB$, which we call the signature of that feasible point.
See 
\Cref{def: Bx} in Appendix \ref{app: PI b properties} for the details of constructing a signature vector.
We next define a set $S$ of such signature vectors that, as we shall see, correspond to the $M$-distinct vertices of $\mathcal{H}$.


\begin{restatable}[Admissible signatures]{definition}{defineS}\label{def: S define}
    The set of admissible signatures, denoted by $S$, is defined as $S=S_1\cup S_2\cup S_3$, where each $S_i$ is a set of binary vectors $\vB\in\R^{n\times2\times2}$ s.t. 
    \begin{enumerate}
        \item $\vB\in S_1$ iff
            \begin{itemize}
                \item $\exists\:t\in[n-1]$ such that $\B_{i00}=\B_{i01}=1$ for all $i\ge t$ and $\B_{i00}\ne \B_{i01}$ for all $i< t$.
                \item For all $i\in [n]$, $\B_{i10}\ne\B_{i11}$.
                \item There exist $i,j\in [n]$ such that $\B_{i10}=\B_{j11}=1$.
            \end{itemize}
        \item $\vB\in S_2$ iff
            \begin{itemize}
                \item $\B_{(n-1)00}=\B_{(n-1)01}=\B_{010}=\B_{011}=1$.
                \item For all $0\le i<n-1$, $\B_{i00}\ne\B_{i01}$.
                \item For all $0< j\le n-1$, $\B_{j10}\ne\B_{j11}$.
            \end{itemize}
        \item $\vB\in S_3$ iff 
            \begin{itemize}
                \item $\exists\:t\in [n-1]$ such that $\B_{i10}=\B_{i11}=1$ for all $i\le t+1$ and $\B_{i10}\ne \B_{i11}$ for all $i>t+1$.
                \item For all $i\in [n]$, $\B_{i00}\ne\B_{i01}$.
                \item There exist $i, j\in [n]$ such that $\B_{i00}=\B_{j01}=1$.
            \end{itemize}
    \end{enumerate}
\end{restatable}

In Appendix \ref{sec: proof of pi}, we first show that given a set of $M$-distinct vertices of $\mathcal{H}$, every vertex induces a unique admissible signature $\vB\in S$.
Conversely, given an admissible signature, we can reconstruct the corresponding vertex, as outlined below.


          
          
          


\begin{restatable}[Vertex map]{definition}{defineU}\label{def: U define}
Given an admissible signature $\vB\in S$,
we define the vector $\vu(\vB) \in \R^{n \times 2 \times 2}$ as follows:
\begin{itemize}
    \item $u_{i00} = -\gamma_i - \alpha$ if $\B_{i00} = 1$, and
          $u_{i00} = \gamma_0$ otherwise;
          
    \item $u_{i10} = \gamma_i$ if $\B_{i10} = 1$, and
          $u_{i10} = -\gamma_{n-1} - \alpha$ otherwise;
          
    \item $u_{i01} = -\gamma_i$ if $\B_{i01} = 1$, and
          $u_{i01} = \gamma_0 + \alpha$ otherwise;
          
    \item $u_{i11} = \gamma_i + \alpha$ if $\B_{i11} = 1$, and
          $u_{i11} = -\gamma_{n-1}$ otherwise,
\end{itemize}
where 
\[\alpha=
\begin{cases}
     -\gamma_0 - \gamma_t \quad&\text{ if } \vB \in S_1,\\
     -\gamma_0 - \gamma_{n-1} \quad&\text{ if } \vB \in S_2,\\
     -\gamma_t - \gamma_{n-1} \quad&\text{ if } \vB \in S_3.
\end{cases}
\]
\end{restatable}

This following result establishes a one-to-one correspondence between the set of admissible signatures $S$ and the $M$-distinct vertices of $\mathcal{H}$.

\begin{restatable}{theorem}{ateVertices}\label{thm:main PI result}
 The set $\mathcal{V}=\{\vu(\vB) : \vB \in S\}$ is a set of $M$-distinct vertices of $\mathcal{H}$, and any other vertex of $\mathcal{H}$ is $M$-equivalent to a member of $\mathcal{V}$.
\end{restatable}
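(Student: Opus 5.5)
The plan is to reduce vertex enumeration of $\mathcal H$ to a combinatorial statement about which constraints of $M\vv\le\vc$ can be simultaneously tight. For $\ell=2$, index the dual variables as $v_{idz}$ with $i\in[n]$, $d,z\in\{0,1\}$. Each row of $M$ involves exactly two dual variables, and taking the minimum over the free index gives four families:
\begin{align*}
&v_{i00}+v_{i01}\le \gamma_0-\gamma_i,&& v_{j10}+v_{j11}\le \gamma_j-\gamma_{n-1},\\
&v_{i00}+v_{j11}\le \gamma_j-\gamma_i,&& v_{j10}+v_{i01}\le \gamma_j-\gamma_i .
\end{align*}
These come from $\mathbf d=(0,0),(1,1),(0,1),(1,0)$ respectively. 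I will first check that $\ker(M)$ is one-dimensional. It is spanned by the shift that adds $+a$ to every $v_{\cdot 00}$ and $v_{\cdot 10}$ and $-a$ to every $v_{\cdot 01}$ and $v_{\cdot 11}$. This is exactly the freedom absorbed by the parameter $\alpha$ in \Cref{def: U define}.

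The key structural step is to view the $4n$ variables as nodes of a graph, with an edge for each constraint that is tight at $\vv$. The graph is bipartite with sides $\{\cdot00,\cdot10\}$ and $\{\cdot01,\cdot11\}$, and that bipartition is precisely the kernel direction. For a system $x_a+x_b=c_{ab}$ on a bipartite graph, the rank equals the number of nodes minus the number of connected components. Hence, by the usual rank characterization of vertices (after quotienting by $\ker M$, as in \Cref{def: vertex}), $\vv$ is a vertex iff its tight graph is connected, i.e.\ has rank $4n-1$. I would also confirm that every vertex has such a spanning tight structure. Distinct $M$-classes of vertices then correspond to distinct tight patterns, and this is encoded by the signature $\vB$ of \Cref{def: Bx}. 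My reading is that $\B_{idz}=1$ records whether node $idz$ is attached through a ``diagonal'' tight edge versus being pinned at its extreme value $\gamma_0$ or $-\gamma_{n-1}$ (shifted by $\alpha$).

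Next I will classify the connected tight patterns compatible with feasibility, and this is where the main work lies. The cost $\gamma_j-\gamma_i$ has a Monge (exchange) structure. I will use it to show two things: tight cross edges $00$–$11$ and $10$–$01$ cannot cross in index order, and at most one index block can carry both $00$ and $01$ (or both $10$ and $11$) marked. This produces the threshold $t$ and the complementarity conditions $\B_{i00}\ne\B_{i01}$ and $\B_{i10}\ne\B_{i11}$. Which side carries the doubly marked block splits the analysis into the three cases $S_1$, $S_2$, $S_3$. The existence conditions such as $\B_{i10}=\B_{j11}=1$ are exactly what guarantees that the two halves of the graph are connected to each other. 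I expect this case analysis, showing the list is exhaustive with no extra patterns, to be the main obstacle. I would attack it by taking a hypothetical tight pattern outside $S$ and exhibiting either a disconnected component, which gives a nontrivial direction in $\mathcal H$, or a violated constraint via the Monge inequality.

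For the converse, given $\vB\in S$, I propagate values along a spanning tree of the prescribed tight edges starting from the anchoring tight edge. This recovers $\vu(\vB)$ with $\alpha$ as stated, namely $-\gamma_0-\gamma_t$, $-\gamma_0-\gamma_{n-1}$, or $-\gamma_t-\gamma_{n-1}$ according to the anchor. I then verify feasibility of all four constraint families by a direct check using $\gamma_0<\dots<\gamma_{n-1}$. Finally, $M$-distinctness follows because distinct signatures give distinct sets of tight rows, and $M\vv$ determines which rows are tight.
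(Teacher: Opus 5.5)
Your bipartite-graph reformulation is correct. Each row of $M\vv\le\vc$ joins one $z=0$ and one $z=1$ coordinate, so $\vv$ is a vertex iff its tight graph on the $4n$ coordinates is connected. However, the structural lemma you intend to drive the classification with is false; the opposite holds.

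Because $\gamma_j-\gamma_i$ is additively separable, the exchange is an identity rather than a Monge inequality. We have $(v_{a00}+v_{j11})+(v_{i00}+v_{b11})=(v_{i00}+v_{j11})+(v_{a00}+v_{b11})$, while $(\gamma_j-\gamma_a)+(\gamma_b-\gamma_i)=(\gamma_j-\gamma_i)+(\gamma_b-\gamma_a)$. So if the two rows on the left are tight, both rows on the right are forced tight, and likewise for the rows $v_{i01}+v_{j10}\le\gamma_j-\gamma_i$. Tight cross rows of each type therefore form the \emph{complete} bipartite graph between $\{i:\B_{i00}=1\}$ and $\{j:\B_{j11}=1\}$ (resp.\ $\{i:\B_{i01}=1\}$ and $\{j:\B_{j10}=1\}$). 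Crossing tight edges are forced, not forbidden.

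For instance, take $n=3$, $\gamma=(0,1,2)$ and the $S_1$ signature with $t=0$: $\B_{i00}=\B_{i01}=1$ for all $i$, $\B_{011}=\B_{111}=\B_{210}=1$, all else $0$. Then $\vu(\vB)$ satisfies $u_{i00}+u_{j11}=\gamma_j-\gamma_i$ for every $i\in\{0,1,2\}$ and $j\in\{0,1\}$, including the crossing pair $(0,1),(1,0)$. A classification built on non-crossing would therefore discard genuine vertices.

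This rectangle property is also what your proposal leaves unjustified when it says the signature ``encodes'' the tight pattern. $\vB$ only records which coordinates touch some tight cross row. It determines the whole active set, and hence yields $\vv\sim\vu(\vB^{\vv})$ (the half of the theorem saying every vertex is represented in $\mathcal V$), only because of two facts. First, cross rows are tight precisely on the product of marked sets. Second, the tight pinning rows $v_{i00}+v_{i01}\le\gamma_0-\gamma_i$ and $v_{j10}+v_{j11}\le\gamma_j-\gamma_{n-1}$ can be read off from $\vB$. An unmarked coordinate forces its pinning row tight, and in a doubly marked block only the endpoint's pinning row is tight.

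Your exhaustiveness dichotomy, ``disconnected component or violated constraint'', also misses the forcing (closure) arguments these facts rest on. Consider the pattern with all $2n$ pinning rows tight, all rows $v_{i00}+v_{j11}$ tight, and no row $v_{i01}+v_{j10}$ tight. It is connected, and its equations have a feasible solution, yet it is not an active set. The tight pinning rows at $i=n-1$ and $j=0$ sum to $2(\gamma_0-\gamma_{n-1})$, which forces $v_{(n-1)01}+v_{010}=\gamma_0-\gamma_{n-1}$.

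Two smaller points. First, $\alpha$ is not the $\ker(M)$ direction. It is the relative offset between the classes $\{\cdot00,\cdot11\}$ and $\{\cdot01,\cdot10\}$, and it is fixed by the anchoring pinning row. Second, in the converse you must verify strict inequality on every row not prescribed by $\vB$. Otherwise $\vB^{\vu(\vB)}$ could strictly contain $\vB$, and your $M$-distinctness step breaks.
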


\begin{corollary}
    The number of $M$-distinct vertices of $\mathcal{H}$ is exactly
    \[
    5 \times 4^{n-1} - 2^{n+2} + 4,
    \]
    which is the number of admissible signatures.
\end{corollary}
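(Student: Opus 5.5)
The plan is to reduce the count of vertices to a count of admissible signatures, and then count $S$ directly. By \Cref{thm:main PI result}, every vertex of $\mathcal{H}$ is $M$-equivalent to some $\vu(\vB)$ with $\vB\in S$, and the vectors $\vu(\vB)$, $\vB\in S$, are pairwise $M$-distinct. Hence the number of $M$-distinct vertices equals $|S|$, and it suffices to show
\[
|S| = 5\times 4^{n-1}-2^{n+2}+4 .
\]
I will (i) show that $S_1,S_2,S_3$ are pairwise disjoint, so that $|S|=|S_1|+|S_2|+|S_3|$, and (ii) count each $S_k$ by fixing the free parameter $t$ and counting the free coordinate pairs. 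Throughout, $n\ge 2$.

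\textbf{Disjointness.} The sets are separated by which coordinate pairs are forced to be equal.
\begin{itemize}
\item Every $\vB\in S_1$ has $\B_{i10}\neq\B_{i11}$ for all $i$. In contrast, every $\vB\in S_2$ has $\B_{010}=\B_{011}=1$, and every $\vB\in S_3$ also has $\B_{010}=\B_{011}=1$ (take $i=0\le t+1$). So $S_1$ meets neither $S_2$ nor $S_3$.
\item Every $\vB\in S_3$ has $\B_{i00}\neq\B_{i01}$ for all $i$, whereas every $\vB\in S_2$ has $\B_{(n-1)00}=\B_{(n-1)01}=1$. So $S_2\cap S_3=\emptyset$.
\end{itemize}
Within $S_1$ (and similarly within $S_3$), the index $t$ is uniquely determined by $\vB$: it is the first index from which both entries of the pair equal $1$, with all earlier pairs differing. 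So summing over $t$ does not double count.

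\textbf{Counting.} Each pair required to be unequal contributes a factor $2$, and each pair forced to be $(1,1)$ contributes a factor $1$.
\begin{itemize}
\item \emph{$S_1$.} For fixed $t\in\{0,\dots,n-2\}$, the first block contributes $2^t$. The second block has all $n$ pairs unequal, giving $2^n$ configurations. The existence condition removes exactly two of them: the configuration with all $\B_{i10}=1$ (no $j$ with $\B_{j11}=1$) and the one with all $\B_{i11}=1$. This leaves $2^n-2$. Therefore
\[
|S_1|=(2^n-2)\sum_{t=0}^{n-2}2^t=(2^n-2)(2^{n-1}-1).
\]
\item \emph{$S_3$.} This is the mirror image: for fixed $t$, the second block has $n-2-t$ free pairs, and the first block again gives $2^n-2$. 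Summing $2^{n-2-t}$ over $t$ yields the same value, so $|S_3|=|S_1|$.
\item \emph{$S_2$.} Each block has exactly one forced pair and $n-1$ free pairs, and there is no existence condition. Hence $|S_2|=2^{n-1}\cdot 2^{n-1}=4^{n-1}$.
\end{itemize}

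\textbf{Combining.} We get
\[
|S| = 2(2^n-2)(2^{n-1}-1)+4^{n-1} = 4(2^{n-1}-1)^2+4^{n-1} = 4^n-2^{n+2}+4+4^{n-1},
\]
which equals $5\times 4^{n-1}-2^{n+2}+4$.

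The only delicate step is interpreting the existence clause in $S_1$ and $S_3$ correctly, since it is what removes exactly two configurations. Everything else is elementary bookkeeping once \Cref{thm:main PI result} supplies the bijection between $S$ and the $M$-distinct vertices.
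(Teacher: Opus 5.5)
Your proposal is correct and takes the route the paper intends: the paper states this corollary without a written proof, relying on the bijection of \Cref{thm:main PI result} to equate the number of $M$-distinct vertices with $|S|$. You supply the omitted enumeration correctly, namely that $S_1,S_2,S_3$ are pairwise disjoint, $|S_1|=|S_3|=(2^n-2)(2^{n-1}-1)$ and $|S_2|=4^{n-1}$, and the resulting totals match the paper's table (e.g.\ $8$ and $52$ for $n=2,3$).
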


Sharp ATE bounds can be expressed in terms of the $M$-distinct vertices of $\mathcal{H}$ as follows.

\begin{restatable}{theorem}{ateTerms}\label{thm: PI ATE bounds}
Under Assumptions \ref{asm: exclusion}, \ref{asm: independence}, and \ref{asm:consistency}, the ATE admits the following valid and sharp bounds:
    \begin{equation}
    \label{eq:atebounds}
        \max_{\vv \in \mathcal{V}} \vv^\top \vp
        \;\le\;
        ATE
        \;\le\;
        -\max_{\vv \in \mathcal{V}} \vv^\top \Bar{\vp}.
    \end{equation}
\end{restatable}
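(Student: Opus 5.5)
The plan is to reduce \Cref{thm: PI ATE bounds} to LP duality combined with \Cref{thm:main PI result}. By \Cref{prop: suffciency of constraints}, the sharp lower bound is exactly the optimal value $L(\mathcal{P})$ of the primal LP \eqref{eq: primal lower bound PI}. Whenever $\mathcal{P}$ conforms to the IV model, $\Gamma(\vp)$ is nonempty. It is also bounded, since $\vq\ge 0$ and its entries sum to $1$ through the constraints: summing the $p_{yd,z}$ over $y,d$ for a fixed $z$ gives $\sum q = 1$. Hence the primal optimum is finite and attained. Strong duality then gives $L(\mathcal{P})=\max_{\vv\in\mathcal{H}}\vv^\top\vp$, with the dual maximum attained.

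\textbf{Step 1: weak duality gives validity.} For any $\vv\in\mathcal{H}$ and any feasible $\vq$,
\[
\vv^\top\vp=\vv^\top M^\top\vq=(M\vv)^\top\vq\le\vc^\top\vq=ATE,
\]
using $\vq\ge0$. Since $\mathcal{V}\subset\mathcal{H}$ by \Cref{thm:main PI result}, the left inequality in \eqref{eq:atebounds} is valid.

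\textbf{Step 2: the dual optimum is attained at a vertex in $\mathcal{V}$.} The key observation is that $\vv^\top\vp$ is constant on $M$-equivalence classes whenever $\vp$ is in the range of $M^\top$. Indeed, if $\vs\in\ker M$, then $\vs^\top\vp=\vs^\top M^\top\vq=(M\vs)^\top\vq=0$. So the dual is effectively an LP over the quotient polyhedron $\mathcal{H}/\ker M$, which is pointed: its lineality space is exactly $\ker M$, because $M\vs\le 0$ and $M(-\vs)\le 0$ force $M\vs=0$.

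A bounded LP over a pointed polyhedron attains its optimum at a vertex (the appendix discussion in \Cref{sec: appendix preliminaries vertex} is cited for this). By \Cref{thm:main PI result}, that vertex is $M$-equivalent to some $\vu(\vB)\in\mathcal{V}$, which has the same objective value. Hence
\[
L(\mathcal{P})=\max_{\vv\in\mathcal{V}}\vv^\top\vp.
\]
Sharpness follows: a primal optimizer $\vq^*\in\Gamma(\vp)$ attains this value, and by the sufficiency part of \Cref{prop: suffciency of constraints} it lifts to a full data law $\mathcal{Q}^f_L$ satisfying the assumptions with $ATE(\mathcal{Q}^f_L)=L(\mathcal{P})$.

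\textbf{Step 3: the upper bound.} Apply \Cref{prop: uno reverse}. Since $\bar{\mathcal{P}}$ is the observed law of the relabeled full law $\bar{\mathcal{Q}}^f$, which also conforms to the model, Steps 1 and 2 apply to it. Thus $U(\mathcal{P})=-L(\bar{\mathcal{P}})=-\max_{\vv\in\mathcal{V}}\vv^\top\bar{\vp}$, and the upper bound is attained by the relabeled optimizer for $\bar{\mathcal{P}}$.

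The main obstacle is Step 2: justifying attainment at a vertex in the non-pointed setting. The route is to pass to a complementary subspace of $\ker M$, where $\mathcal{H}$ becomes a pointed polyhedron, and then invoke the standard vertex-optimality theorem. The substantive combinatorial content is entirely delegated to \Cref{thm:main PI result}, which is assumed.
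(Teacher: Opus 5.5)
Your proposal is correct and is essentially the paper's argument: the LP characterization via \Cref{prop: suffciency of constraints}, the dual optimum attained at an $M$-equivalence class of vertices, the enumeration of \Cref{thm:main PI result}, and \Cref{prop: uno reverse} for the upper bound; you additionally spell out steps the paper leaves implicit (primal boundedness and strong duality, constancy of $\mathbf{v}^\top\mathbf{p}$ on $M$-equivalence classes, and attainment via a lifted primal optimizer), while the paper cites \Cref{prop: sharpness of vertices} to argue every vertex term is non-redundant, which the statement does not require. One small caution for Step 3: the relabeling that makes $\bar{\mathcal{P}}$ the observed law of a model-conforming full law must flip $D^{(z)}\mapsto 1-D^{(z)}$ as well as swap $Y^{(0)},Y^{(1)}$ (at the level of $\mathbf{q}$, $q'_{ij,\mathbf{d}}=q_{ji,\mathbf{1}-\mathbf{d}}$), because swapping the potential outcomes alone does not induce $\bar{\mathcal{P}}$.
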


We presented the sharp bounds on ATE under random assignment (\Cref{asm: independence}).
However, our next result indicates that the ATE admits the same set of sharp bounds under the weaker assumption of joint independence.

\begin{restatable}{theorem}{prpjointindep}
\label{thm: jointindep}
    The ATE bounds of \Cref{eq:atebounds} are valid and sharp under Assumptions \ref{asm: exclusion}, \ref{asm: jointind} and \ref{asm:consistency_a}.
\end{restatable}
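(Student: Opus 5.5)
The proof has two parts. Validity should follow from the random-assignment case, and sharpness needs a coupling construction.

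\emph{Validity.} Under \Cref{asm: jointind} and \eqref{asm:consistency_a}, we still get linear restrictions on the law of $(\Y0,\Y1)$. Write $P(Y=\gamma_y,D=d\mid Z=z)=P(\Y{d}=\gamma_y,D=d\mid Z=z)$. Define a joint law of $(\Y0,\Y1,D)$ given each $Z=z$, say $\pi_z(i,j,d)=P(\Y0=\gamma_i,\Y1=\gamma_j,D=d\mid Z=z)$. Joint independence means $\sum_d \pi_z(i,j,d)=\mu(i,j)$ for every $z$, where $\mu(i,j)=P(\Y0=\gamma_i,\Y1=\gamma_j)$. Consistency gives $p_{y0,z}=\sum_j\pi_z(y,j,0)$ and $p_{y1,z}=\sum_i\pi_z(i,y,1)$.

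For $\ell=2$, put $q_{ij,(d_0,d_1)}$ equal to a coupling of the conditional laws $\pi_0(i,j,\cdot)/\mu(i,j)$ and $\pi_1(i,j,\cdot)/\mu(i,j)$ over $d_0,d_1$, scaled by $\mu(i,j)$; the product coupling will do. This $\vq\ge0$ satisfies $\Mt\vq=\vp$, because marginalizing over $d_{1-z}$ recovers $\pi_z$. Its ATE is $\sum(\gamma_j-\gamma_i)\mu(i,j)$, which is the true ATE. So the true ATE lies in $[L(\mathcal P),U(\mathcal P)]$ as computed by the LP in \Cref{eq: primal lower bound PI}, and by \Cref{thm: PI ATE bounds} that interval equals \eqref{eq:atebounds}. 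This holds even though no $\Dv{z}$ are assumed to exist. The same construction works for general $\ell$ with the product coupling over $z\in[\ell]$.

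\emph{Sharpness.} Since \Cref{asm: independence} together with \Cref{asm:consistency} implies \Cref{asm: jointind} and \eqref{asm:consistency_a}, the random-assignment model is a submodel. For any $\mathcal P$ compatible with the weaker model, the validity step produces a $\vq\in\Gamma(\vp)$, so $\mathcal P$ is also compatible with the stronger model. \Cref{prop: suffciency of constraints} then gives full data laws $\mathcal Q^f_L,\mathcal Q^f_U$ satisfying \Cref{asm: independence} and consistency that attain the endpoints. These laws also satisfy the weaker assumptions, so the endpoints are attained in the weaker model as well.

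\emph{Main obstacle.} The key step is the coupling argument: we must check that the product coupling yields exactly the marginal equations of \Cref{eq: relation between q and p}, including the care needed when $\mu(i,j)=0$, in which case we set $q_{ij,\mathbf d}=0$. The rest is bookkeeping.
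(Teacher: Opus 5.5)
Your proof is correct, but it takes a different route from the paper's. The paper's proof is one step: it invokes Theorem~1 of \cite{song2025categoricalinstrumentalvariablemodel}. That theorem says the random-assignment model (with the $D^{(z)}$) and the joint-independence model (with consistency for $Y$ only) admit exactly the same pairs $(\mathcal{P},\mathcal{Q})$. The two sharp identification regions therefore coincide, and the bounds of \Cref{eq:atebounds} carry over unchanged.

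You instead prove the nontrivial inclusion yourself. From any law satisfying joint independence you build
\[
q_{ij,\mathbf{d}}=\mu(i,j)^{1-\ell}\textstyle\prod_{z}\pi_z(i,j,d_z),
\]
set to zero when $\mu(i,j)=0$. In effect you fabricate the $D^{(z)}$ as conditionally independent given $(Y^{(0)},Y^{(1)})$. You then check $M^\top\mathbf{q}=\mathbf{p}$ and that the $(Y^{(0)},Y^{(1)})$-marginal, and hence the ATE, is unchanged. The reverse inclusion comes from the observation that the random-assignment model is a submodel.

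The coupling checks out, including the degenerate case $\mu(i,j)=0$, where $\pi_z(i,j,\cdot)\equiv 0$ automatically. Your appeal to \Cref{thm: PI ATE bounds} to identify the LP values with the vertex formulas is legitimate once $\Gamma(\mathbf{p})\neq\varnothing$ is established.

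What each route buys: yours is self-contained and elementary, shows concretely why dropping the $D^{(z)}$ costs nothing, and works verbatim for any $\ell$. The paper's is shorter but depends on an external result. That result does give more than the ATE statement needs, namely full equivalence of the admissible $(\mathcal{P},\mathcal{Q})$ pairs.
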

\Cref{thm: jointindep} does not require the variables $D^{(z)}$ to be defined.

For $n = 2$, the corresponding $S$, $\mathcal{V}$, and ATE bounds are provided in Appendix \ref{App: example2}.

\subsection{Multi-valued instrument}
\label{sec: llate}

To show that the number of linear terms in the ATE bounds grows exponentially in the multi-valued instrument case ($\ell>2$), 
we construct a family of $M$-distinct vertices whose cardinality grows as $\Omega(\ell^n)$.

\begin{restatable}{theorem}{propgeneralizedvertex}
\label{prop: generalized exponential vertices}
For any $a \in [\ell]$, and any
$\vs = (s_0,\dots,s_{n-1}) \in ([\ell]\setminus\{a\})^n$,
define $\vw(a,\vs)\in\R^{n\times 2\times\ell}$ as
\begin{align*}
w_{y1,j} &= 0, && \forall y,\ j \neq a, \\
w_{y1,a} &= \gamma_{y}-\gamma_{n-1}, && \forall y, \\
w_{y0,j} &= 0, && \forall y,\ j \notin \{a,s_y\}, \\
w_{y0,s_y} &= \gamma_{n-1}-\gamma_{y}, && \forall y, \\
w_{y0,a} &= \gamma_0-\gamma_{n-1}, && \forall y.
\end{align*}
Then $\left\{\vw(a,\vs): a\in[\ell],\ \vs \in ([\ell]\setminus\{a\})^n \text{ non-constant}\right\}$ is a set of $M$-distinct vertices of $\mathcal{H}$ (the feasible region of
Equation~\ref{eq: dual_lower_bound_formal}).
The number of these $M$-distinct vertices equals
\[
\ell\Big((\ell-1)^{\,n-1}-(\ell-1)\Big).
\]
\end{restatable}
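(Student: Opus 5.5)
The plan is to verify three things for each $\vw=\vw(a,\vs)$: dual feasibility $M\vw\le\vc$; extremality in the sense of \Cref{def: vertex}; and pairwise $M$-distinctness. A count then finishes the proof. Throughout I use the explicit form of the row of $M$ indexed by $(i,j,\mathbf{d})$, read off from \Cref{eq: relation between q and p}:
\[
(M\vv)_{ij,\mathbf{d}}=\textstyle\sum_{z:\,d_z=0} v_{i0,z}+\sum_{z:\,d_z=1} v_{j1,z}\;\le\;\gamma_j-\gamma_i .
\]

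\textbf{Feasibility.} I split by the value of $d_a$.

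If $d_a=1$, the treated part contributes exactly $w_{j1,a}=\gamma_j-\gamma_{n-1}$, since all other $w_{j1,z}$ vanish. The untreated part contributes $\gamma_{n-1}-\gamma_i\ge0$ if $d_{s_i}=0$, and $0$ otherwise. Hence the row is at most $\gamma_j-\gamma_i$, with equality iff $d_{s_i}=0$.

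If $d_a=0$, the row equals $\gamma_0-\gamma_{n-1}$ plus possibly $\gamma_{n-1}-\gamma_i$. This is at most $\gamma_0-\gamma_i\le\gamma_j-\gamma_i$, with equality iff $d_{s_i}=0$ and $j=0$.

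This identifies the tight set $T(a,\vs)$ explicitly. Note that $w_{(n-1)0,s_{n-1}}=0$, so $\vw(a,\vs)$ does not depend on $s_{n-1}$. The effective parameter is therefore $(s_0,\dots,s_{n-2})$, with the non-constancy requirement imposed on these coordinates. This is what gives the stated count
\[
\ell\big((\ell-1)^{n-1}-(\ell-1)\big).
\]
I will state this normalization explicitly.

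\textbf{Extremality.} I use the standard criterion. $\vw$ is a vertex modulo $\ker(M)$ iff every $\vx$ satisfying $(M\vx)_{ij,\mathbf{d}}=0$ for all $(i,j,\mathbf{d})\in T(a,\vs)$ also satisfies $M\vx=0$. Indeed, if $\vw=\lambda\vv_1+(1-\lambda)\vv_2$, then $\vx=\vv_1-\vw$ vanishes on the tight rows.

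The idea is to flip one coordinate of $\mathbf{d}$ at a time. Flipping $d_z$ changes the row functional by $x_{j1,z}-x_{i0,z}$.
\begin{enumerate}
\item Consider tight rows with $d_a=1$ and $d_{s_i}=0$, where $j$ and the remaining coordinates are free. Flipping any $z\notin\{a,s_i\}$ gives $x_{j1,z}=x_{i0,z}$ for all $j$.
\item Hence $x_{\cdot1,z}$ is constant in $y$; call it $\xi_z$. Also $x_{i0,z}=\xi_z$ whenever $z\notin\{a,s_i\}$.
\item Non-constancy of $\vs$ supplies some $i'$ with $s_{i'}\ne s_i$. Using row $i'$ extends the identity to $z=s_i$ as well, for all $z\ne a$.
\item Comparing the $d_a=1$ tight rows across $j$ fixes $x_{j1,a}$ up to a common shift. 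The $d_a=0$, $j=0$ tight rows then pin down $x_{i0,a}$.
\end{enumerate}
The remaining step is to substitute these relations into an arbitrary row $(i,j,\mathbf{d})$ and check that it vanishes. Equivalently, $\vx$ lies in $\ker(M)$, which contains the directions $x_{y0,z}=x_{y1,z}=\xi_z$ with $\sum_z\xi_z=0$, together with related shifts.

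I expect this spanning argument to be the main obstacle. The bookkeeping of which flips are available (in particular the case $i=n-1$, and the role of non-constancy) is delicate. My first attempt would be to characterize $\ker(M)$ explicitly for general $\ell$ and show that the tight relations force $\vx$ into it.

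\textbf{Distinctness.} Different pairs $(a,\vs)$ give different vectors $M\vw$, which I will show by exhibiting separating rows.
\begin{itemize}
\item For $a\ne a'$: take the row with $\mathbf{d}=e_a$ (only coordinate $a$ equal to $1$), with $i=0$ and $j=n-1$. It is tight for $(a,\vs)$ whenever $s_0\ne a$, which always holds. It is strictly slack for $(a',\vs')$ whenever $d_{s'_0}=1$; if that fails, I will adjust $\mathbf{d}$.
\item For equal $a$ but $s_y\ne s'_y$ with $y<n-1$: take $d_a=1$, $d_{s_y}=0$, $d_{s'_y}=1$. This row is tight for one vector and slack by $\gamma_{n-1}-\gamma_y>0$ for the other.
\end{itemize}
The count then follows: there are $\ell$ choices of $a$, $(\ell-1)^{n-1}$ choices of $(s_0,\dots,s_{n-2})$, and we subtract the $\ell-1$ constant choices.
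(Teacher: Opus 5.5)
There is a genuine gap in the extremality step, and it comes from your tight-set computation. For $i=n-1$ the untreated contribution $w_{(n-1)0,s_{n-1}}=\gamma_{n-1}-\gamma_{n-1}$ vanishes. So the row $(n-1,j,\mathbf d)$ is tight for \emph{every} $\mathbf d$ with $d_a=1$, and $(n-1,0,\mathbf d)$ is tight for every $\mathbf d$ with $d_a=0$. Your ``equality iff $d_{s_i}=0$'' is correct only for $i\le n-2$. You noticed this degeneracy when counting, but you did not carry it into $T(a,\vs)$, and the rows you do list are not enough.

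Concretely, take $x_{yd,a}=-\delta$ for all $y,d$, $x_{i0,s_i}=\delta$ for all $i$, and all other coordinates $0$. This $\vx$ annihilates every row in your $T(a,\vs)$. Yet $(M\vx)_{ij,\mathbf 1}=\sum_z x_{j1,z}=-\delta\neq 0$, so $\vx\notin\ker(M)=\{\vx:\ x_{yd,z}=\xi_z\ \text{for all } y,d,\ \sum_z\xi_z=0\}$; there are no further ``related shifts.'' Your steps 1--4 therefore leave one spare degree of freedom, the common value of $x_{\cdot1,a}$, which is never tied to the $\xi_z$. The final check on an arbitrary row must then fail.

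Step 3 cannot repair this. Every listed row with first index $i$ has $d_{s_i}=0$, so no flip at $s_i$ is available for that index. Rows with first index $i'\neq i$ do not involve $x_{i0,\cdot}$ at all. Non-constancy of $\vs$ only makes $x_{\cdot1,s_i}$ constant in $j$; it never yields $x_{i0,s_i}=\xi_{s_i}$.

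The missing rows are exactly the ones the paper exploits, namely its tight equalities $\sum_j x_{y1,j}=\gamma_y-\gamma_{n-1}$ and $x_{y0,s_y}+\sum_{j\neq s_y}x_{k1,j}=\gamma_k-\gamma_y$. With the full tight set the argument closes as follows.
\begin{enumerate}
\item Flipping $z\neq a$ in the rows $(n-1,j,\mathbf d)$ with $d_a=1$ gives $x_{j1,z}=x_{(n-1)0,z}=:\xi_z$ for all $j$.
\item The row $(n-1,j,\mathbf 1)$ gives $x_{j1,a}=-\sum_{z\neq a}\xi_z=:\xi_a$.
\item Your flips for $i\le n-2$ give $x_{i0,z}=\xi_z$ for $z\notin\{a,s_i\}$.
\item The row $(i,j,\mathbf 1-e_{s_i})$ then gives $x_{i0,s_i}=\xi_{s_i}$.
\item The rows $(i,0,\mathbf 1-e_a-e_{s_i})$ and $(n-1,0,\mathbf 1-e_a)$ give $x_{i0,a}=\xi_a$.
\end{enumerate}
Hence $\vx\in\ker(M)$. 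Non-constancy is never used, so every $\vw(a,\vs)$ is a vertex.

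Restricting to non-constant prefixes $(s_0,\dots,s_{n-2})$ therefore only serves to match the stated count. Under the literal indexing by non-constant $\vs\in([\ell]\setminus\{a\})^n$, constant prefixes also occur. For $\ell\ge 3$ the family then has $\ell(\ell-1)^{n-1}$ $M$-distinct members, which is still a valid lower bound.

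Your distinctness argument is sound and more careful than the paper's count. For $a\neq a'$ no adjustment of $\mathbf d$ is needed: the row $(0,n-1,e_a)$ is slack for $\vw(a',\vs')$ simply because $d_{a'}=0$ and $j\neq 0$.
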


\begin{restatable}[Exponential lower bound]{corollary}{thmExponentialATEBounds}%
\label{thm: exponential lower bound ATE}
Under Assumptions \ref{asm: exclusion}, \ref{asm: independence} and \ref{asm:consistency} (or alternatively, \ref{asm: exclusion}, \ref{asm: jointind} and \ref{asm:consistency_a}),
if a set of linear functionals of $\mathcal{P}$ is a sharp ATE bound, then it must contain at least
\[
\ell\Big((\ell-1)^{\,n-1}-(\ell-1)\Big)
\]
terms.  
\end{restatable}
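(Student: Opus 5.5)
The plan is to derive the corollary from \Cref{prop: generalized exponential vertices} by showing that each $M$-distinct vertex of $\mathcal{H}$ must appear, up to $M$-equivalence, as one of the linear terms of any sharp representation. Let the sharp lower bound be written as $L(\mathcal{P})=\max_{k\in K}\vv_k^\top\vp$ for finitely many vectors $\vv_k$; each term is a linear functional of $\vp$. The upper bound is handled the same way through \Cref{prop: uno reverse}. The LP formulation \eqref{eq: dual_lower_bound_formal} with strong duality gives
\[
L(\vp)=\max_{\vv\in\mathcal H}\vv^\top\vp
\]
for every compatible $\vp$. So $L$ is the support function of $\mathcal H$, restricted to the set $\mathcal P$ of compatible observed laws.

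The key step is a local-exposure argument. Fix a vertex $\vw=\vw(a,\vs)$. Since $\vw$ is a vertex of $\mathcal H$ modulo $\ker(M)$, it is exposed. I would therefore look for a compatible $\vp^*$, ideally in the relative interior of $\mathcal P$ or at least in a relatively open subset of it, at which $\vw$ is the unique maximizer up to $M$-equivalence.

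The objective $\vv^\top\vp$ is constant on $M$-equivalence classes exactly when $\vp\in\operatorname{range}(M^\top)$. This holds automatically, because $\vp=M^\top\vq$. The exposing direction can thus be taken to be $\vp^*=M^\top\vq^*$, and we need $\vq^*\ge0$ to be a probability vector.

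This direction comes from complementary slackness. Put $\vq^*_{ij,\mathbf d}>0$ exactly on those coordinates where the constraint $M\vw\le\vc$ is tight. Then $\vw$ is optimal for $\vp^*$. If, in addition, the tight rows span the row space of $M$ restricted to the minimal face, then $\vw$ is the unique optimum modulo $\ker M$, and the same holds on a neighborhood $N$ of $\vp^*$ within $\operatorname{aff}(\mathcal P)$.

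Given this, the conclusion follows. On $N$ we have $L(\vp)=\vw^\top\vp$, and this function is linear there. Since $L=\max_k\vv_k^\top\vp$ and $N$ is relatively open, some single $k$ satisfies $\vv_k^\top\vp=\vw^\top\vp$ on an open subset of $N$. By linearity, $\vv_k-\vw$ then vanishes on $\operatorname{span}(\mathcal P)=\operatorname{range}(M^\top)$. Working modulo the normalizations $\sum_{y,d}p_{yd,z}=1$, this is exactly the statement that the two functionals coincide as functions of $\mathcal P$, i.e.\ they are the same term.

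Distinct $M$-classes give distinct functionals on $\operatorname{range}(M^\top)$. Hence the map from vertices to terms $k$ is injective, and the count from \Cref{prop: generalized exponential vertices} is a lower bound on $|K|$. Under Assumptions \ref{asm: exclusion}, \ref{asm: jointind}, \ref{asm:consistency_a}, the same argument applies, because \Cref{thm: jointindep}'s reasoning shows the sharp bound is still the same LP value (the set of compatible $\vp$ and $L$ coincide).

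I expect the main obstacle to be verifying that the exposing $\vp^*$ is a genuine observed law with a full-dimensional neighborhood, i.e.\ that the supporting $\vq^*$ can be normalized to a probability vector. One subtlety is that $\vp\in\mathcal P$ forces $\sum_{y,d}p_{yd,z}=1$ for each $z$, so $L$ is only determined up to adding multiples of these normalization functionals. I would handle this by quotienting by the span of $\ker$ of that restriction, and by observing that $\vv^\top\vp$ over $\mathcal P$ determines $\vv$ modulo exactly $\ker(M)$, which is what $M$-distinctness measures. Positivity and normalization of $\vq^*$ are harmless, since scaling preserves optimality and the tight set is nonempty.
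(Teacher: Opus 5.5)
Your proposal is correct and follows essentially the same route as the paper: take the $M$-distinct vertices of Theorem~\ref{prop: generalized exponential vertices}, expose each one uniquely (modulo $\ker(M)$) by an observed law $M^\top\mathbf{q}^*$ with $\mathbf{q}^*$ supported exactly on its tight constraints, as in Proposition~\ref{prop: sharpness of vertices}, and conclude that each vertex forces its own term. Your open-neighborhood and finite-union step makes explicit what the paper leaves implicit in going from unique optimality to the presence of a matching term, but take that neighborhood inside the relative interior of the set of compatible laws (e.g.\ replace $\mathbf{q}^*$ by $(1-\epsilon)\mathbf{q}^*+\epsilon\,\mathbf{u}$ with $\mathbf{u}$ uniform), since $L=\max_k\mathbf{v}_k^\top\mathbf{p}$ is only assumed on that set and $M^\top\mathbf{q}^*$ may lie on its relative boundary.
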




\section{Testable Implications}
\label{sec: iv}
In a similar fashion to the previous section, we first present the sharp implications expressed as IV inequalities in a binary instrument case ($\ell=2$) in \Cref{sec: l2ineq}.
Later in \Cref{sec: llineq}, we provide an exponential lower bound on the number of sharp inequalities for an arbitrary value of $\ell$.

\subsection{Binary instrument}
\label{sec: l2ineq}
As mentioned in \Cref{sec: lp formulation and dual}, the implications of the IV model boil down to
\begin{equation}
\label{eq: polar_cone_condition_formal}
\vp^\top \vr \le 0
\quad
\text{for all } \vr \in \mathcal{K} .
\end{equation}
Since $\mathcal{K}$ is a polyhedral cone, it can be expressed as a cone combination of finitely many, e.g. $\omega$, extreme rays 
(see, e.g., \cite[Section 19, Part IV]{Rockafellar1970}):
\begin{equation}
\label{eq: cone representation}
    \mathcal K = \text{cone}(r_1,\dots,r_{\omega})
\end{equation}
\Cref{eq: polar_cone_condition_formal} 
is therefore equivalent to 
the finite family of inequalities
\begin{equation}
\label{eq: iv_inequalities_from_rays}
\vpt r_i \le 0 \qquad \forall i \in \{1,\dots,\omega\},
\end{equation}
where $\{r_1,\dots,r_\omega\}$ are $M$-distinct extreme rays of $\mathcal{K}$.
That is, the inequalities of \Cref{eq: iv_inequalities_from_rays} are sufficient to test the IV model.
In Appendix \ref{sec: proof of iv}, we characterize these extreme rays (see \Cref{thm: pattern of extreme rays}). 
We further show that each inequality of \Cref{eq: iv_inequalities_from_rays} is necessary, that is, for each $i\in\{1,\dots,\omega\}$, there exists a vector $\vp$ such that $\vp$ satisfies all inequalities but the one corresponding to $r_i$ (see \Cref{sec: proof of necessity and sufficiency of extreme rays}).
Therefore, to obtain an explicit set of necessary and sufficient (valid and sharp) IV inequalities, it suffices to characterize the $M$-distinct extreme rays $r_i$ of
$\mathcal K$.
The following result presents these sharp implications.

\begin{restatable}{theorem}{thminequalities}
\label{thm: iv inequalities}
    Let $\mathcal{P}$ be the observed probability distribution, and $p_{yd,z}$ defined as in \Cref{sec: problem setup}.
    The sharp testable implications of the IV model under Assumptions \ref{asm: exclusion}, \ref{asm: independence}, and \ref{asm:consistency} are the following inequalities:
    \begin{align}
    \label{eq: necessary and sufficient iv inequalities}
    \begin{cases}
        \sum\limits_{k\in [n]} -p_{k0,1} + \sum\limits_{k \in T} (p_{k1,0} -p_{k1,1})
        \leq 0 
         \qquad \forall T \subseteq [n-1], T \neq \varnothing \\
        \sum\limits_{k\in [n]} -p_{k1,0} + \sum\limits_{k \in T} (p_{k0,1} -p_{k0,0})
        \leq 0 
        \qquad \forall T \subset [n], T \neq \varnothing \\
        \sum\limits_{k\in [n]} -p_{k0,0} + \sum\limits_{k \in T} (p_{k1,1} -p_{k1,0})
        \leq 0 
         \qquad \forall T \subseteq [n-1], T \neq \varnothing 
    \end{cases}
    \end{align}
    
\end{restatable}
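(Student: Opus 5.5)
The argument reduces to the cone $\mathcal K=\{\vr: M\vr\le 0\}$ for $\ell=2$. By the theorem of the alternative in \Cref{eq: general dual feasibility test_formal}, $\Gamma(\vp)\neq\varnothing$ if and only if $\vp^\top\vr\le 0$ for every $M$-distinct extreme ray $\vr$ of $\mathcal K$. Validity and sufficiency of \Cref{eq: necessary and sufficient iv inequalities} therefore reduce to showing that the listed coefficient vectors are exactly the $M$-distinct extreme rays, up to adding elements of $\ker(M)$. Sharpness (non-redundancy) then needs, for each listed inequality, a $\vp$ that violates it while satisfying the others.

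\textbf{Step 1: write out the cone for $\ell=2$.} Index the dual vector as $r_{yd,z}$. For every $i,j\in[n]$ and $\mathbf d=(d_0,d_1)\in\{0,1\}^2$, the constraint reads
\[
r_{i0,0}[d_0{=}0]+r_{j1,0}[d_0{=}1]+r_{i0,1}[d_1{=}0]+r_{j1,1}[d_1{=}1]\le 0 .
\]
Each compliance type $\mathbf d$ gives one family of constraints:
\begin{itemize}
\item $(0,0)$: $r_{i0,0}+r_{i0,1}\le 0$;
\item $(1,1)$: $r_{j1,0}+r_{j1,1}\le 0$;
\item $(0,1)$: $r_{i0,0}+r_{j1,1}\le 0$ for all $i,j$, i.e.\ $\max_i r_{i0,0}+\max_j r_{j1,1}\le 0$;
\item $(1,0)$: $\max_j r_{j1,0}+\max_i r_{i0,1}\le 0$.
\end{itemize}
The kernel of $M$ is spanned by the shifts $r_{\cdot0,0}\mapsto r_{\cdot0,0}+c$, $r_{\cdot1,0}\mapsto r_{\cdot1,0}+c'$, $r_{\cdot0,1}\mapsto r_{\cdot0,1}-c$, $r_{\cdot1,1}\mapsto r_{\cdot1,1}-c'$ (adjusted so that the sums above are invariant). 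I will verify this by a rank count and then use the shifts to normalize, for example $\max_i r_{i0,0}=0$ and $\max_j r_{j1,0}=0$.

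\textbf{Step 2: characterize the extreme rays.} After normalizing, the cone becomes a product-like structure of box constraints coupled through two scalars. I will show that at an extreme ray the free coordinates sit at $\{0,\pm1\}$ after scaling. The natural route is a signature argument like the one behind \Cref{thm:main PI result}: record which constraints are tight, and require the tight set to have corank one modulo $\ker(M)$. I expect the rays to take the form ``$-1$ on a whole block, $+1/-1$ on a subset $T$ of paired coordinates''. This produces the three families in the statement, together with a symmetric fourth family that must turn out to be $M$-equivalent to one of the listed ones. That equivalence, or the use of the identities $\sum_k p_{k0,z}+p_{k1,z}=1$, explains why the first and third families range over $T\subseteq[n-1]$ while the second ranges over proper subsets $T\subset[n]$. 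Throughout I will check that each candidate satisfies $M\vr\le0$ (validity) and that its tight constraints have the right rank (extremality).

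\textbf{Step 3: sharpness.} For each listed inequality I will construct a $\vp$ that violates only that one. The simplest choice is a point slightly outside the facet: take a compatible $\vp$ induced by a $\vq$ supported on compliance types that makes exactly the other inequalities slack, then perturb along the normal. Because the rays are extreme, each defines a facet of the dual polyhedral cone $M^\top\R_{\ge0}$, so such a $\vp$ exists.

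The main obstacle is Step 2's completeness claim: proving that no other extreme rays exist, given the degeneracy modulo $\ker(M)$, and correctly reconciling the index ranges ($[n-1]$ versus proper subsets of $[n]$). I would handle it by case analysis on which of the four families of constraints are tight.
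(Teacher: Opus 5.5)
Your overall route is the paper's: theorem of the alternative, extreme rays of $\mathcal K$ modulo $\ker(M)$, a rank condition on the tight set, a case analysis, and then a separation/facet argument for non-redundancy. Two steps fail as written, however, and the central enumeration is missing.

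\textbf{The normalization in Step 1 is not available.} Under your shift, the $(0,1)$- and $(1,0)$-constraints change by $c-c'$ and $c'-c$. So the rank count you promise forces $c=c'$, and $\ker(M)$ is one-dimensional (\Cref{lem: rank is 4n-1}). Then $\max_i r_{i0,0}$ and $\max_j r_{j1,0}$ move by the same $c$. The sign of their difference is invariant under shifts and positive scaling, so you can set at most one of them to $0$. Imposing both discards genuine rays. For example, the generators of the second family, $r_{k1,0}=-1$, $r_{k1,1}=0$, and $r_{k0,1}=-r_{k0,0}=1$ for $k\in T$ (and $0$ otherwise), have $\max_i r_{i0,0}=0$ but $\max_j r_{j1,0}=-1$. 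The paper fixes only the single coordinate $r_{(n-1)1,1}=0$. This one normalization, together with kernel equivalence, is exactly what produces the index ranges you want to reconcile. Modulo $\ker(M)$, the first-family vector with $T\ni n-1$ equals the third-family vector with $[n]\setminus T$. Likewise, the $z$-swap of the second-family vector with $T$ equals the second-family vector with $[n]\setminus T$.

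\textbf{The reduction ``the listed coefficient vectors are exactly the $M$-distinct extreme rays'' is false,} so Step 2 cannot be proved as stated. Besides the $2^{n+1}-4$ listed rays, $\mathcal K$ contains two further kinds of generators:
\begin{itemize}
\item the lineality space $\ker(M)$ in both directions, encoding $\sum_{y,d}p_{yd,0}=\sum_{y,d}p_{yd,1}$;
\item the $4n$ rays $-\mathbf e_{yd,z}$, encoding $p_{yd,z}\ge 0$. After the normalization, the ray for $(n-1)1,1$ appears as family (III) of \Cref{def: pattern of extreme rays}.
\end{itemize}
This gives $2^{n+1}+4n-2$ generators in all (\Cref{crl: number of extreme rays}). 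Sufficiency therefore needs the extra observation that these additional inequalities hold automatically for any probability vector. In Step 3, the violating $\mathbf p$ must also satisfy them. Your facet argument delivers this only if you perturb inside $(\ker M)^{\perp}$ and then rescale.

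\textbf{Completeness in Step 2 is only announced.} The paper proves it in four steps:
\begin{enumerate}
\item A nonzero $\mathbf r$ is extreme iff its tight set has rank at least $4n-2$.
\item If some coordinate occurs in no tight constraint, $\mathbf r$ is a positive multiple of $-\mathbf e_i$.
\item Otherwise, consider the graph whose edges are the tight equalities $r_{\cdot,0}+r_{\cdot,1}=0$. It has a single component carrying the free parameter. This forces all nonzero entries to have equal magnitude, with the $z=0$ and $z=1$ entries of opposite sign, giving two sign patterns.
\item In each resulting subcase, the tight set has rank $2n-1+a+b$ or $2n-2+2a+2b$, where $a,b$ are the sizes of the relevant zero or support sets. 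Setting this equal to $4n-2$ pins down exactly the families above.
\end{enumerate}
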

\begin{corollary}
    The number of sharp IV inequalities in a binary instrument setting is
    $2^{n+1} - 4$. 
\end{corollary}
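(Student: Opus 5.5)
The plan is to read the count directly off \Cref{thm: iv inequalities} and then check that the listed inequalities really are pairwise different and irredundant. Only then does the number of sharp IV inequalities equal the number of index sets $T$, rather than merely bounding it from above.

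\textbf{Counting.} The first family is indexed by the nonempty subsets $T\subseteq[n-1]=\{0,\dots,n-2\}$, which gives $2^{n-1}-1$ inequalities. The third family is indexed by the same collection, giving another $2^{n-1}-1$. The second family is indexed by the nonempty proper subsets $T\subset[n]$, giving $2^{n}-2$. Summing,
\[
(2^{n-1}-1)+(2^{n}-2)+(2^{n-1}-1)=2^{n+1}-4 .
\]

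\textbf{Distinctness.} I would first show that no two listed inequalities coincide, including up to the $M$-equivalence that encodes the normalizations $\sum_{y,d}p_{yd,z}=1$.
\begin{itemize}
\item Inequalities from different families are separated by their ``full-sum'' negative block. This block is $-\sum_k p_{k0,1}$, $-\sum_k p_{k1,0}$ and $-\sum_k p_{k0,0}$ respectively, and it sits on a different $(d,z)$ slice in each family.
\item Within one family, two different sets $T\neq T'$ give coefficient vectors that differ at some coordinate $p_{k d,z}$ with $k\in T\triangle T'$.
\item To rule out $M$-equivalence, I would use the fact that the kernel directions of $M$ act only by adding constants to whole $z$-slices in a coupled way. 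Hence two $M$-equivalent rays differ by a vector that is constant across $k$ on each $(d,z)$ block. The difference of two listed rays is $\pm$ an indicator of $T\triangle T'$ on a block. This is nonconstant in $k$ because $T,T'$ are proper, and because $n-1\notin T$ in the first and third families. In the cross-family case the full-sum blocks sit on different slices, so the difference is likewise nonconstant.
\end{itemize}
This is exactly where the restrictions $T\neq\varnothing$ and $T$ proper (respectively $T\subseteq[n-1]$) matter. For example, $T=[n]$ in the second family would yield a ray $M$-equivalent to zero or to a trivial inequality, and so it is excluded.

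\textbf{Irredundancy.} I would invoke the necessity part established alongside \Cref{thm: iv inequalities} in Appendix~\ref{sec: proof of necessity and sufficiency of extreme rays}. For each listed inequality there is an observed $\vp$ satisfying all the others but violating it. Thus none can be dropped, and no smaller sharp system indexed by fewer extreme rays exists. Combined with the sufficiency in \Cref{thm: iv inequalities}, the sharp system has exactly $2^{n+1}-4$ members.

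The main obstacle is the $M$-distinctness check. It requires an explicit description of $\ker(M)$ in the binary-instrument case. I would derive this from \Cref{eq: relation between q and p} by noting that $M\vr$ has entries $r_{i0,z}+r_{j1,z'}$-type sums, and I would verify which shifts leave all of these sums invariant.
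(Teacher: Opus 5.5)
\paragraph{Verdict.} Your counting and the appeal to the necessity argument follow the paper's route. The corollary is read off from the three families of \Cref{thm: iv inequalities}, which have sizes $2^{n-1}-1$, $2^n-2$ and $2^{n-1}-1$. Equivalently, it is the $2^{n+1}+4n-2$ ray classes minus the $4n+2$ rays of families (I)--(III), which give the trivial constraints. Irredundancy comes from the separating-hyperplane argument in the appendix.

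\paragraph{The step that fails.} Your cross-family distinctness check does not work as written. By \Cref{rem: s}, $\ker(M)$ is spanned by the vector that is $+1$ on every $z=0$ coordinate and $-1$ on every $z=1$ coordinate. A kernel shift therefore moves a full block of $-1$'s from the slice $(d,z)=(0,1)$ to the slice $(0,0)$. So ``the full-sum blocks sit on different slices'' separates nothing between the first and third families.

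\begin{itemize}
\item Take $\mathbf r$ from the first family and $\mathbf r'$ from the third. In $\lambda\mathbf r-\mathbf r'$, the $(0,0)$ and $(0,1)$ blocks are the constants $1$ and $-\lambda$. This is exactly the kernel pattern with $s=1$, $\lambda=1$.
\item In fact, if $T$ ranged over all nonempty proper subsets, the first-family inequality with set $T$ and the third-family inequality with set $[n]\setminus T$ would differ by exactly the kernel generator. They would coincide on probability vectors.
\item What keeps them apart is the $d=1$ part. The $(1,0)$ block of the difference is $\lambda\mathbf 1_T+\mathbf 1_{T'}$, which is $0$ at $k=n-1$ but positive on $T$.
\end{itemize}

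The hypothesis $n-1\notin T$ is therefore needed in the cross-family step, not the within-family one. Within a family, the block on which both rays vanish identically already forces the kernel component to be zero and $\lambda=1$, and $T\neq T'$ then suffices. Note also that the within-family difference is $\mathbf 1_T-\mathbf 1_{T'}$ on one block and its negative on another, not an indicator of $T\triangle T'$. Finally, you dropped the scale $\lambda>0$ from the ray-class equivalence.

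\paragraph{Repair.} The simplest fix is the one the paper uses in Case~1 of its necessity proof.
\begin{itemize}
\item Every listed ray satisfies the normalization $r_{(n-1)1,1}=0$.
\item So $\lambda\mathbf r-\mathbf r'\in\ker(M)$ forces the kernel component to vanish at that coordinate, hence everywhere.
\item Then $\mathbf r'=\lambda\mathbf r$ with entries in $\{-1,0,1\}$ gives $\mathbf r=\mathbf r'$, and the listed vectors are visibly pairwise distinct.
\end{itemize}
No new computation of $\ker(M)$ is needed; the paper already shows it is one-dimensional, so the ``main obstacle'' you identify is already settled.

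A minor point: $T=[n]$ in the second family gives $\sum_k p_{k0,1}\le 1$. This is a conic combination of the trivial constraints $-p_{k1,1}\le 0$, not something $M$-equivalent to zero.
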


\begin{remark}
For a binary instrument,
\cite{bonet2001instrumentality} derived $2^{n+1}$ IV inequalities, showing that they are necessary but not sufficient.
In fact, some of those inequalities were shown to be redundant within the same paper. 
More recently, \cite{kedagni2020generalized} obtained $2^{n^2} + 2n(4 + 2^n) + 2^{n+1}$ inequalities and proved their necessity, noting that this set may not be sufficient (and clearly contains redundancies). 
\Cref{thm: iv inequalities} provides a non-redundant, necessary and sufficient set of size $2^{n+1} - 4$.
\end{remark}

For $n = 2$, the corresponding extreme rays and IV inequalities are provided in Appendix \ref{App: example3}.

\subsection{Multi-valued instrument}
\label{sec: llineq}
 

We now show that the number of IV inequalities grows exponentially when the instrument $Z$ takes $\ell > 2$ values. 
To this end, it suffices to find a set of 
$M$-distinct extreme rays of $\mathcal{K}$ with exponentially increasing cardinality, and construct the corresponding IV inequalities.
The following result presents such a set of inequalities.

\begin{restatable}{theorem}{propgeneralizediv}
\label{prop: exponential inequalities general}
Suppose Assumptions \ref{asm: exclusion}, \ref{asm: independence} and \ref{asm:consistency} hold.
For any $y' \in [n-1]$, $j' \in [\ell]$, and  non-constant $(j_0,\dots,j_{n-1}) \in ([\ell]\setminus\{j'\})^n$,
the following inequality holds:
\begin{align}
\label{eq: exponential iv inequality}
p_{y'1,j'} 
\;\le\;
\sum_{j \in \{j_0,\dots,j_{n-1}\}} p_{y'1,j}
\;+\;
\sum_{y=0}^{n-1} p_{y0,j_y}.
\end{align}
Moreover, 
none of these inequalities are implied by the others (no redundancy),
and their total number is
\[
(n-1)\,\ell\,\bigl((\ell-1)^n - (\ell-1)\bigr).
\]
\end{restatable}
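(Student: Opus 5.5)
The plan is to realize each inequality in \eqref{eq: exponential iv inequality} as $\vpt\vr\le 0$ for an explicit vector $\vr\in\mathcal K$. Then I would prove validity by checking $M\vr\le 0$, and prove non-redundancy by showing $\vr$ is an $M$-distinct extreme ray of $\mathcal K$. Fix $y'$, $j'$ and a non-constant $\vs=(j_0,\dots,j_{n-1})$, and let $J=\{j_0,\dots,j_{n-1}\}$, so that $j'\notin J$. Define $\vr\in\R^{n\times2\times\ell}$ by
\[
r_{y'1,j'}=1,\qquad r_{y'1,j}=-1 \text{ for } j\in J,\qquad r_{y0,j_y}=-1 \text{ for all } y,
\]
with all other entries equal to $0$. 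Then $\vpt\vr\le 0$ is exactly \eqref{eq: exponential iv inequality}.

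\textbf{Validity.} By \Cref{prop: suffciency of constraints}, the row of $M$ indexed by $(i,j,\mathbf d)$ gives
\[
(M\vr)_{ij,\mathbf d}=\sum_{z:\,d_z=0} r_{i0,z}+\sum_{z:\,d_z=1} r_{j1,z}.
\]
The only positive contribution is $+1$, which occurs when $j=y'$ and $d_{j'}=1$. I would split into two cases.
\begin{itemize}
\item If some $z\in J$ has $d_z=1$, then the term $r_{y'1,z}=-1$ cancels the $+1$.
\item Otherwise $d_z=0$ for every $z\in J$. In particular $d_{j_i}=0$, so the term $r_{i0,j_i}=-1$ appears and cancels the $+1$.
\end{itemize}
Hence $M\vr\le 0$, so $\vr\in\mathcal K$. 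By \eqref{eq: general dual feasibility test_formal}, every $\vp$ compatible with the IV model satisfies \eqref{eq: exponential iv inequality}. Alternatively, one can argue directly at the level of $q$: every unit with $Y^{(1)}=\gamma_{y'}$ and $D^{(j')}=1$ is counted on the right-hand side.

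\textbf{Extremality and non-redundancy.} Suppose $\vr=\vr_1+\vr_2$ with $\vr_1,\vr_2\in\mathcal K$. Every row that is tight for $\vr$, i.e.\ $(M\vr)_{ij,\mathbf d}=0$, must then be tight for both $\vr_1$ and $\vr_2$, since they are nonpositive and sum to zero. I would enumerate the tight rows explicitly. These are the pairs $(j,\mathbf d)$ with $j=y'$, $d_{j'}=1$ and exactly one compensating $-1$. There are also rows with value $0$ because no nonzero coordinate is involved, for instance $j\ne y'$ with $d_z=1$ for all $z\in J$ and $d_{j'}=0$.

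From these equalities I would show that $M\vr_1$ is forced to equal $\alpha M\vr$. The route is to express all coordinates of $\vr_1$, modulo $\ker(M)$, as linear functions of $\alpha:=(\vr_1)_{y'1,j'}$, using differences of tight rows that differ in a single $d_z$ or in a single $i$.

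The main obstacle is making this linear-algebra step clean: showing that the tight rows span a subspace of codimension exactly one modulo $\ker(M)$. This is where non-constancy of $\vs$ is essential. If $\vs$ were constant, say $J=\{z_0\}$, the ray should decompose into simpler Balke--Pearl-type rays involving only the instruments $j'$ and $z_0$, as in \Cref{thm: iv inequalities}. I would first work out the case $n=2$, $\ell=3$ to identify the right chain of tight rows, and then generalize. As a fallback, non-redundancy can be shown directly by constructing a $\vp$ that violates this inequality but satisfies all the others.

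\textbf{Distinctness and count.} For $M$-distinctness, I would compare sign patterns of $M\vr$. Different triples $(y',j',\vs)$ produce different sets of rows with value $+1$ or $-1$; equivalently, $\vr$ can be recovered from $M\vr$ modulo $\ker M$ because the inequalities have different supports in $\vp$ that are not related by the kernel relations $\sum_{y,d}p_{yd,z}$ being constant in $z$. Finally, the count is $(n-1)$ choices of $y'$, times $\ell$ choices of $j'$, times $(\ell-1)^n-(\ell-1)$ non-constant sequences $\vs$.
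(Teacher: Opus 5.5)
\textbf{The gap: extremality is never proved.} Your validity check is correct and is essentially the paper's Step~1. Your plan for the rest is the paper's Step~2 in substance: show that the rows of $M$ tight at $\vr$ cut out only $\ker(M)\oplus\mathrm{span}(\vr)$, conclude that $\vr$ is an extreme ray, then separate. But you never carry this out. You label it ``the main obstacle'', defer it to a worked example, and fall back on a violating $\vp$ that you do not construct. That computation is the whole content of the no-redundancy claim. Your $M$-distinctness sketch depends on it too, because for general $\ell$ you need the exact form of $\ker(M)$, and the paper only records $\dim\ker(M)\ge\ell-1$. That sketch should also compare entries of $\vr$, not rows of $M\vr$, which are all $\le 0$.

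\textbf{How the step closes.} Index a row of $M$ by $(y_0,y_1,S)$ with $S=\{z:d_z=0\}$, so that it reads $\sum_{z\in S}x_{y_00,z}+\sum_{z\notin S}x_{y_11,z}\le 0$, and let $J=\{j_0,\dots,j_{n-1}\}$. At $\vr$ the tight rows are exactly (a) $y_1\neq y'$ with $j_{y_0}\notin S$, and (b) $y_1=y'$, $j'\notin S$, with either $J\subseteq S$ or $J\setminus S=\{j_{y_0}\}$. Every other row takes a value $\le -1$. Rows (a) with $S=\varnothing$ and $S=\{i\}$ give $x_{y1,i}=h_i$ for $y\neq y'$, $x_{y0,i}=h_i$ for $i\neq j_y$, and $\sum_i h_i=0$. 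Write $x_{y0,j_y}=h_{j_y}+g_y$ and $x_{y'1,i}=h_i+f_i$. Rows (b) with $S=[\ell]\setminus\{j'\}$, with $S=[\ell]\setminus\{j',i\}$ for $i\notin J\cup\{j'\}$, and with $S=[\ell]\setminus\{j',j_{y_0}\}$ give $g_y=-f_{j'}$, $f_i=0$, and $f_{j_{y_0}}=-f_{j'}$ respectively. So $\ker(\Mr)$ is the $(\ell-1)$-dimensional space of zero-sum block shifts from \Cref{lem: invariency of M_x in general}, plus $\mathrm{span}(\vr)$. Those shifts lie in $\ker(M)$ and $\vr\notin\ker(M)$, so $\ker(M)$ is exactly the shifts and $\mathrm{rank}(\Mr)=2n\ell-\ell=2n\ell-\dim\ker(M)-1$. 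Hence $\vr$ is extreme by \Cref{lem: rank of rays are 2nl - dim ker M}. Equivalently, in your decomposition argument, $\alpha\ge 0$ because $M\vr$ has negative entries.

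\textbf{Distinctness and non-redundancy.} Every ray in the family vanishes at $x_{y1,z}$ for any $y$ outside the two $y'$-indices, and such a $y$ always exists. This forces the shift in $\vr'-\lambda\vr$ to be zero, which gives distinctness. Non-redundancy then follows by separation as in the binary case.

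\textbf{Two cautions.} First, the paper's printed tight family cannot be reused as is. Equation \eqref{eq: B4} with $i\in J$ and equation \eqref{eq: B5} take the value $-1$ at $\vr$, and the parametrization derived there ($x_{y0,j}=h_j$ for all $y$) does not contain $\vr$. Second, non-constancy is not where the difficulty lies. For constant $\vs\equiv z_0$, the inequality is the $|T|=1$ member of the first family of \Cref{thm: iv inequalities} for the instrument pair $(j',z_0)$. An analogous computation shows it is still extreme rather than a sum of simpler rays.
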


\begin{corollary}[Exponential lower bound]
\label{cor: exponential lower bound inequalities}
Any sharp set of linear IV (in $\mathcal{P}$) inequalities must contain at least
$
(n-1)\,\ell\,\bigl((\ell-1)^n - (\ell-1)\bigr)
$
inequalities.  
\end{corollary}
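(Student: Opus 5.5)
The plan is to turn the non-redundancy in \Cref{prop: exponential inequalities general} into a facet statement about the set of compatible observed laws. Once each of those inequalities is known to define a distinct facet, the standard polyhedral fact that every linear description of a polyhedron contains a defining inequality for each of its facets gives the lower bound.

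\textbf{Step 1: the compatible set as a cone cross-section.} Let $\mathcal{C}=\{\vp:\Gamma(\vp)\neq\varnothing\}$ be the set of observed laws compatible with the IV model. By \Cref{prop: suffciency of constraints} we have $\mathcal{C}=\{\Mt\vq:\vq\ge0\}\cap A$, where $A$ is the affine space of vectors whose $\ell$ blocks $(p_{yd,z})_{y,d}$ each sum to one. Write $\mathcal{D}:=\{\Mt\vq:\vq\ge 0\}$ for the cone generated by the columns of $\Mt$.

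\begin{itemize}
\item Each $q_{ij,\mathbf{d}}$ appears in exactly one equation per value of $z$. Hence every generator of $\mathcal{D}$ has block sums all equal to $1$. So $\mathcal{D}$ lies in the subspace $W$ of vectors with equal block sums, and $\mathcal{C}=\mathcal{D}\cap\{\text{block sum}=1\}$ is a bounded cross-section of $\mathcal{D}$.
\item Faces of $\mathcal{D}$ of positive dimension therefore correspond bijectively to faces of $\mathcal{C}$, with dimension dropping by one. In particular, facets of $\mathcal{D}$ (relative to $W$) correspond to facets of $\mathcal{C}$ (relative to $\operatorname{aff}\mathcal{C}$).
\end{itemize}

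\textbf{Step 2: polar duality.} By Farkas' lemma, $\mathcal{K}=\{\vr:M\vr\le0\}$ is exactly the polar cone of $\mathcal{D}$. The lineality space of $\mathcal{K}$ is $\mathcal{K}\cap(-\mathcal{K})=\ker M$. Polar duality between the polyhedral cones $\mathcal{D}$ and $\mathcal{K}$ then gives a bijection:
\begin{itemize}
\item $M$-distinct extreme rays of $\mathcal{K}$, that is, extreme rays of $\mathcal{K}/\ker M$,
\item facets of $\mathcal{D}$,
\end{itemize}
with each ray $\vr$ mapped to the face $\{\vp\in\mathcal{D}:\vp^\top\vr=0\}$.

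\textbf{Step 3: apply the theorem.} \Cref{prop: exponential inequalities general} gives valid inequalities $\vp^\top\vr\le 0$ that are mutually non-redundant. I will use this as the statement that their coefficient vectors $\vr$ are $M$-distinct extreme rays of $\mathcal{K}$; this is how the theorem is proved, via the extreme-ray characterization of \Cref{sec: extreme points}. By Steps 1 and 2, this yields
\[
(n-1)\,\ell\,\bigl((\ell-1)^n-(\ell-1)\bigr)
\]
distinct facets of $\mathcal{C}$.

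\textbf{Step 4: count any sharp system.} Now take any finite sharp system $\{\va_k^\top\vp\le b_k\}$, meaning it cuts out exactly $\mathcal{C}$ within the simplex constraints $A$ together with nonnegativity. The standard fact is that every facet $F$ of a polyhedron must be defined by some inequality of any linear description of it. Otherwise, a point slightly beyond the relative interior of $F$ would satisfy all the inequalities while lying outside $\mathcal{C}$.

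The delicate point here is the nonnegativity constraints $p_{yd,z}\ge 0$, which may be counted among the inequalities or taken for granted. So I must check that none of the exhibited facets is a coordinate facet $\{p_{yd,z}=0\}$. This holds because each exhibited inequality, of the form in \Cref{eq: exponential iv inequality}, involves several coordinates with mixed signs. Modulo the affine equalities of $A$, it is not a multiple of a single coordinate.

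Distinct facets need distinct defining inequalities, so the system has at least as many members as there are facets, and the corollary follows.

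\textbf{Main obstacle.} I expect the hardest step to be the dimension bookkeeping in Steps 1--2. One has to confirm that the facets of $\mathcal{D}$ relative to $W$ really are cut out by $M$-distinct rays, and are not merely supporting hyperplanes of lower-dimensional faces. I would first try to verify directly that $\dim\mathcal{D}=\dim W$, by exhibiting enough affinely independent compatible laws. With that in hand, a ray that is extreme modulo $\ker M$ is facet-defining by the standard face-lattice anti-isomorphism.
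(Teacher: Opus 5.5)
Your argument is correct, but it reaches the lower bound by a different route than the paper.

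\textbf{How the two routes differ.} The paper treats the corollary as immediate from Theorem~\ref{prop: exponential inequalities general}. That theorem's non-redundancy is proved as in Appendix~\ref{sec: proof of necessity and sufficiency of extreme rays}: each exhibited ray is separated from the cone generated by the remaining representatives, which yields a probability vector $\vp$ violating exactly that one inequality. You instead pass through polar duality ($M$-distinct extreme rays of $\mathcal K$ $\leftrightarrow$ facets of $\mathcal D$ $\leftrightarrow$ facets of $\mathcal C$). You then use the standard fact that every linear description of a polyhedron contains a defining inequality for each facet.

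\textbf{What each buys.} The separation argument gives explicit witnesses. Taken literally, however, it only shows that no inequality can be dropped from the particular extreme-ray family. Your facet formulation is what actually bounds the size of an arbitrary sharp system, a step the paper leaves implicit. It also makes explicit how the simplex and nonnegativity constraints are counted.

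\textbf{Your anticipated obstacle.} This is easy to clear. Compare two rows of $M$ that differ only in $d_z$: this gives $x_{y_00,z}=x_{y_11,z}$ for every $x\in\ker M$ and all $y_0,y_1$. Using the all-zeros $\mathbf d$ then shows that $\ker M$ is exactly the set of block-constant vectors whose constants sum to zero. Hence $\ker M=W^\perp$ and $\mathrm{span}\,\mathcal D=\mathrm{range}(M^\top)=(\ker M)^\perp=W$.

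\textbf{The coordinate-facet check.} ``Mixed signs'' is not by itself the reason, since shifts by $\ker M$ change signs. The clean reason is that block $j'$ of $\vr$ has exactly one nonzero entry, equal to $+1$, among $2n\ge 4$ entries. So no shift by an element of $\ker M$ turns $\vr$ into a negative multiple of a unit vector.

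\textbf{The imported extremality.} The extremality you rely on does hold, but the tight system displayed in the proof of Proposition~\ref{prop: extreme rays general appendix} is not accurate; for instance, \eqref{eq: B5} evaluates to $-1$ at $\vr$. A direct check works instead. The rows $(y_0,y_1,\mathbf d)$ of $M$ that are tight at $\vr$ are of two kinds:
\begin{itemize}
\item $y_1\neq y'$ with $d_{j_{y_0}}=1$;
\item $y_1=y'$ with $d_{j'}=1$ and $d_j=0$ for $j\in\{j_y\}\setminus\{j_{y_0}\}$.
\end{itemize}
Differences of such rows, together with non-constancy of $(j_y)$, force every vector annihilating them into $\mathrm{span}(\vr,\ker M)$.
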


\section{Simulation}
\label{sec: sim}
As shown in \Cref{thm: PI ATE bounds} and \Cref{thm: iv inequalities}, both the number of linear terms in the ATE bounds and the number of testable implications grows exponentially in the outcome support $n$.
These numbers are provided in \Cref{tab:ate_horizontal} for $2\leq n\leq10$. 
These results establish an exponential-time lower bound on the time complexity of deriving closed-form bounds and testable implications.
Our approach of explicitly characterizing the vertices (extreme rays) of the dual feasible set allows for the derivation of the bounds (inequalities) in time linear in the number of terms appearing in the bound (inequalities).
We can therefore use these characterizations to derive the ATE bounds as well as testable implications efficiently, i.e., with time complexity matching the lower bound.

\begin{table}[ht]
\centering
\begin{tabular}{|c|c|c|}
\hline
$n$ & Terms in ATE Bound & IV Inequalities \\ \hline
2 & 8 & 4 \\ \hline
3 & 52 & 12 \\ \hline
4 & 260 & 28 \\ \hline
5 & 1156 & 60 \\ \hline
6 & 4868 & 124 \\ \hline
7 & 19972 & 252 \\ \hline
8 & 80900 & 508 \\ \hline
9 & 325636 & 1020 \\ \hline
\end{tabular}
\caption{Number of linear terms in the ATE bounds, and the number of testable implications. $n$ represents the size of the outcome support.}
\label{tab:ate_horizontal}
\end{table}

\begin{figure}[t]
    \centering
    \includegraphics[width=0.6\linewidth]{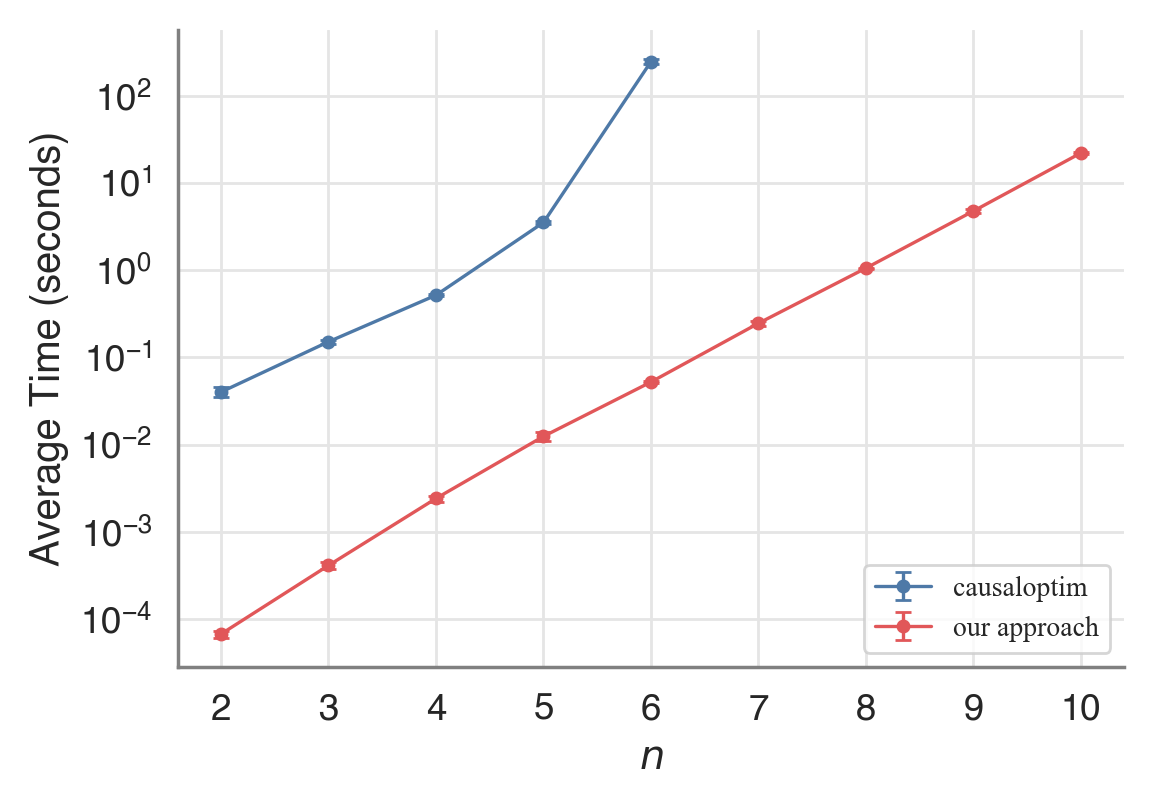}
    \caption{Average running time of our approach 
    vs causaloptim \citep{Sachs2020AGM} 
    for enumerating the vertices of the dual LP and producing the ATE bounds.}
    \label{fig:figure1}
\end{figure}

In order to assess whether this approach is advantageous to the existing ones,
we compared the running time of deriving sharp closed-form ATE bounds using our approach vs the approach of \citet{Sachs2020AGM}, who enumerate the vertices of the feasible set using off-the-shelf methods.
We used the causaloptim library in R \citep{Sachs2020AGM} for this comparison.
The code implementing our approach is available \href{https://github.com/ArefeBoushehrian/Analytical-Causal-Bounds-in-Instrumental-Variable-Models}{online}\footnote{https://github.com/ArefeBoushehrian/Analytical-Causal-Bounds-in-Instrumental-Variable-Models}.

\Cref{fig:figure1} illustrates the running times across different values of $n$ (the size of the outcome support).
As expected, our approach 
requires time exponential in $n$ (linear trend in the log-scale plot).
However, as shown in \Cref{fig:figure1}, the approach of \citet{Sachs2020AGM} 
requires time \emph{super-exponential} in $n$ to enumerate the vertices, albeit the number of vertices is only exponential.
With support sizes as small as $n=6$, we can already witness more than $1000$ times faster running times through our approach.


\section{CONCLUSION}
\label{sec: conclusion}
We studied the problem of deriving sharp analytical closed-form ATE bounds and testable implications of the discrete IV model.
We showed that the number of linear expressions in any set of sharp bounds for the ATE, as well as the number of sharp testable implications in such models grow exponentially in the outcome support.
This rules out any approach trying to derive sharp linear bounds (inequalities) in polynomial time.
On the positive side, we explicitly characterized the sharp ATE bounds and the sharp IV inequalities in the binary instrument setting.
As shown in our simulations, this approach makes deriving bounds and inequalities tractable for moderate sizes of outcome support, which were untractable with previously existing approaches.
We leave such sharp characterizations for multi-valued instruments as future work.

\section*{Aknowledgements}
Sina Akbari was supported by the Swiss National Science Foundation through a Mobility fellowship under grant P500PT\_230240.




\bibliographystyle{plainnat}
\bibliography{ref}

@book{balke1995probabilistic,
  title={Probabilistic counterfactuals: semantics, computation, and applications},
  author={Balke, Alexander Abraham},
  year={1995},
  publisher={University of California, Los Angeles}
}

@article{Balke1997BoundsOT,
  title={Bounds on Treatment Effects from Studies with Imperfect Compliance},
  author={Alexander Balke and Judea Pearl},
  journal={Journal of the American Statistical Association},
  year={1997},
  volume={92},
  pages={1171-1176},
  url={https://api.semanticscholar.org/CorpusID:18365761}
}

@book{Rockafellar1970,
url = {https://doi.org/10.1515/9781400873173},
title = {Convex Analysis},
author = {Ralph Tyrell Rockafellar},
publisher = {Princeton University Press},
address = {Princeton},
doi = {doi:10.1515/9781400873173},
isbn = {9781400873173},
year = {1970},
lastchecked = {2025-11-18}
}

@article{song2025categoricalinstrumentalvariablemodel,
  title={The categorical instrumental variable model: Characterization, partial identification, and statistical inference},
  author={Song, Yilin and Guo, F Richard and Chan, KC and Richardson, Thomas S},
  journal={arXiv preprint arXiv:2405.09510},
  year={2024}
}

@article{Sachs2020AGM,
  title={A General Method for Deriving Tight Symbolic Bounds on Causal Effects},
  author={Michael C. Sachs and Gustav Jonzon and Arvid Sj{\"o}lander and Erin E. Gabriel},
  journal={Journal of Computational and Graphical Statistics},
  year={2020},
  volume={32},
  pages={567 - 576},
  url={https://api.semanticscholar.org/CorpusID:244955008}
}

@article{kedagni2020generalized,
    author = {Kédagni, Désiré and Mourifié, Ismael},
    title = {Generalized instrumental inequalities: testing the instrumental variable independence assumption},
    journal = {Biometrika},
    volume = {107},
    number = {3},
    pages = {661-675},
    year = {2020},
    month = {02},
    issn = {0006-3444},
    doi = {10.1093/biomet/asaa003},
    url = {https://doi.org/10.1093/biomet/asaa003},
    eprint = {https://academic.oup.com/biomet/article-pdf/107/3/661/33658405/asaa003.pdf},
}

@inproceedings{pearl1995testability,
author = {Pearl, Judea},
title = {On the testability of causal models with latent and instrumental variables},
year = {1995},
isbn = {1558603859},
publisher = {Morgan Kaufmann Publishers Inc.},
address = {San Francisco, CA, USA},
booktitle = {Proceedings of the Eleventh Conference on Uncertainty in Artificial Intelligence},
pages = {435–443},
numpages = {9},
location = {Montr\'{e}al, Qu\'{e}, Canada},
series = {UAI'95}
}

@article{Richardson2014ACE,
title={ACE Bounds; SEMs with Equilibrium Conditions},
volume={29},
ISSN={0883-4237},
url={http://dx.doi.org/10.1214/14-STS485},
DOI={10.1214/14-sts485},
number={3},
journal={Statistical Science},
publisher={Institute of Mathematical Statistics},
author={Richardson, Thomas S. and Robins, James M.},
year={2014},
month=aug 
}

@article{robins1989analysis,
  title={The analysis of randomized and non-randomized AIDS treatment trials using a new approach to causal inference in longitudinal studies},
  author={Robins, James M},
  journal={Health service research methodology: a focus on AIDS},
  pages={113--159},
  year={1989},
  publisher={US Public Health Service}
}

@article{kitagawa2009identification,
  title={Identification region of the potential outcome distributions under instrument independence},
  author={Kitagawa, Toru},
  year={2009},
  publisher={Centre for Microdata Methods and Practice (Cemmap)}
}

@misc{angrist1995identification,
  title={Identification and estimation of local average treatment effects},
  author={Angrist, Joshua and Imbens, Guido},
  year={1995},
  publisher={National Bureau of Economic Research Cambridge, Mass., USA}
}

@misc{angrist1991sources,
  title={Sources of identifying information in evaluation models},
  author={Angrist, Joshua and Imbens, Guido},
  year={1991},
  publisher={National Bureau of Economic Research Cambridge, Mass., USA}
}

@article{chamberlain1986asymptotic,
  title={Asymptotic efficiency in semi-parametric models with censoring},
  author={Chamberlain, Gary},
  journal={journal of Econometrics},
  volume={32},
  number={2},
  pages={189--218},
  year={1986},
  publisher={Elsevier}
}

@misc{manski1998monotone,
  title={Monotone instrumental variables with an application to the returns to schooling},
  author={Manski, Charles F and Pepper, John V},
  year={1998},
  publisher={National Bureau of Economic Research Cambridge, Mass., USA}
}

@inproceedings{bonet2001instrumentality,
  title={Instrumentality tests revisited},
  author={Bonet, Blai},
  booktitle={Proceedings of the Seventeenth conference on Uncertainty in artificial intelligence},
  pages={48--55},
  year={2001}
}

@article{gunsilius2021nontestability,
  title={Nontestability of instrument validity under continuous treatments},
  author={Gunsilius, Florian F},
  journal={Biometrika},
  volume={108},
  number={4},
  pages={989--995},
  year={2021},
  publisher={Oxford University Press}
}

@article{beresteanu2012partial,
  title={Partial identification using random set theory},
  author={Beresteanu, Arie and Molchanov, Ilya and Molinari, Francesca},
  journal={Journal of Econometrics},
  volume={166},
  number={1},
  pages={17--32},
  year={2012},
  publisher={Elsevier}
}

@article{cheng2006bounds,
  title={Bounds on causal effects in three-arm trials with non-compliance},
  author={Cheng, Jing and Small, Dylan S},
  journal={Journal of the Royal Statistical Society Series B: Statistical Methodology},
  volume={68},
  number={5},
  pages={815--836},
  year={2006},
  publisher={Oxford University Press}
}

@book{boyd2004convex,
  title={Convex optimization},
  author={Boyd, Stephen and Vandenberghe, Lieven},
  year={2004},
  publisher={Cambridge university press}
}

@book{manski2003partial,
  title={Partial identification of probability distributions},
  author={Manski, Charles F},
  year={2003},
  publisher={Springer}
}

@article{manski1990nonparametric,
  title={Nonparametric bounds on treatment effects},
  author={Manski, Charles F},
  journal={The American Economic Review},
  volume={80},
  number={2},
  pages={319--323},
  year={1990},
  publisher={JSTOR}
}

@article{frechet1935generalisation,
  title={G{\'e}n{\'e}ralisation du th{\'e}oreme des probabilit{\'e}s totales},
  author={Fr{\'e}chet, Maurice},
  journal={Fundamenta mathematicae},
  volume={25},
  number={1},
  pages={379--387},
  year={1935},
  publisher={Polska Akademia Nauk. Instytut Matematyczny PAN}
}

@article{shu2025identification,
  title={Identification of Probabilities of Causation: A Complete Characterization},
  author={Shu, Xin and Wang, Shuai and Li, Ang},
  journal={arXiv preprint arXiv:2505.15274},
  year={2025}
}

@inproceedings{finkelstein2020deriving,
  title={Deriving bounds and inequality constraints using logical relations among counterfactuals},
  author={Finkelstein, Noam and Shpitser, Ilya},
  booktitle={Conference on uncertainty in artificial intelligence},
  pages={1348--1357},
  year={2020},
  organization={PMLR}
}

@article{duarte2024automated,
  title={An automated approach to causal inference in discrete settings},
  author={Duarte, Guilherme and Finkelstein, Noam and Knox, Dean and Mummolo, Jonathan and Shpitser, Ilya},
  journal={Journal of the American Statistical Association},
  volume={119},
  number={547},
  pages={1778--1793},
  year={2024},
  publisher={Taylor \& Francis}
}

\newpage
\appendix
\begin{center}
    \centering
    \bfseries
    \LARGE
    Appendices
\end{center}
\vspace{1em}
These appendices are organized as follows:
\begin{itemize}
\item The proof of \Cref{prop: suffciency of constraints} along with some preliminaries for the rest of our results is provided in \Cref{sec: appendix preliminaries}.

\item The proofs of Theorems \ref{thm:main PI result} and \ref{thm: PI ATE bounds} are provided in \Cref{sec: proof of pi}.

\item The proof of Theorem \ref{thm: iv inequalities} is provided in \Cref{sec: proof of iv}.

\item The proofs of Theorems \ref{cor: exponential lower bound inequalities} and \ref{prop: exponential inequalities general} are provided in \Cref{sec: proof of generalizations}.

\end{itemize}

\section{Preliminaries} \label{sec: appendix preliminaries}

\subsection{Proof of Proposition \ref{prop: suffciency of constraints}}

\prprelation*
\begin{proof} 
\label{proof: sufficiency}
    $(\Rightarrow)$
    Consider the first equation for a fixed $y$ and $z$.
    The right-hand-side summation is equal to $\mathcal{Q}(\Y{0}=y,\Dv{z}=0)$.
    Under \Cref{asm: independence}, the latter is also equal to $\mathcal{Q}(\Y{0}=y,\Dv{z}=0\mid Z=z)$, which, under consistency (\Cref{asm:consistency}) equals $\mathcal{P}(Y=y,D=0\mid Z=z)=p_{y0,z}$.
    The second set of equations hold similarly. The the inequalities $q_{ij,\mathbf{d}}\geq0$ hold because $\mathcal{Q}$ is a probability measure.
    
    $(\Leftarrow)$
    By Theorem~1 of \cite{song2025categoricalinstrumentalvariablemodel}, it suffices to show that the linear relations \Cref{eq: relation between q and p} imply the family of inequalities
    \begin{equation}
    \label{eq: song sufficient constraints}
    \mathcal{Q}\!\left(
    \Y0 \in \mathcal{V}^{(0)},\Y1 \in \mathcal{V}^{(1)}
    \right)
    \le
    \sum_{d=0}^{1}
    \mathcal{P}\!\left(Y \in \mathcal{V}^{(d)}, D=d \mid Z=z\right),
    \qquad \forall z\in[\ell],
    \end{equation}
    for every choice of nonempty sets
    $\mathcal{V}^{(0)},\mathcal{V}^{(1)}\subseteq\{\gamma_0,\dots,\gamma_{n-1}\}$.
    
    Fix an arbitrary $z\in[\ell]$.
    We show that \Cref{eq: song sufficient constraints} follows solely from \Cref{eq: relation between q and p} for this fixed $z$.
    
    \medskip
    \noindent Left-hand side:
    Expanding the definition of $\mathcal{Q}$ yields
    \begin{equation*}
    \mathcal{Q}\!\left(
    \Y0 \in \mathcal{V}^{(0)},\Y1 \in \mathcal{V}^{(1)}
    \right)
    =
    \sum_{\mathbf{d} \in\{0,1\}^{\ell}}
    \sum_{y_0\in\mathcal{V}^{(0)}}
    \sum_{y_1\in\mathcal{V}^{(1)}}
    q_{y_0 y_1, \mathbf{d}} .
    \end{equation*}
    
    \medskip
    \noindent Right-hand side:
    Using \eqref{eq: relation between q and p},
    \begin{equation*}
    \sum_{d=0}^{1}\sum_{y\in\mathcal{V}^{(d)}} p_{y d,z}
    =
    \sum_{y\in\mathcal{V}^{(0)}}
    \sum_{y_1}
    \sum\limits_{\substack{\mathbf{d} \in \{0,1\}^{\ell}, \\d_z=0}}
    q_{y y_1, \mathbf{d}}
    +
    \sum_{y\in\mathcal{V}^{(1)}}
    \sum_{y_0}
    \sum\limits_{\substack{\mathbf{d} \in \{0,1\}^{\ell}, \\d_z=1}}
    q_{y_0 y, \mathbf{d}}
    ,
    \end{equation*}
    
    Since all entries of $\vq$ are nonnegative, 
    Hence,
    \begin{equation*}
    \sum_{d=0}^{1}\sum_{y\in\mathcal{V}^{(d)}} p_{y d,z}
    \;\ge\;
    \sum_{\mathbf{d}\in\{0,1\}^{\ell}}
    \sum_{y_0\in\mathcal{V}^{(0)}}
    \sum_{y_{1}\in\mathcal{V}^{(1)}}
    q_{y_0 y_1, \mathbf{d}} .
    \end{equation*}
    
    The right-hand side equals the left-hand side of
    \eqref{eq: song sufficient constraints}, proving the inequality.
\end{proof}

\subsection{Vertices of \texorpdfstring{$\mathcal{H}$}{H}}
\label{sec: appendix preliminaries vertex}

We define the $M$-equivalence class of a vertex $\vv$ as
\[
    [\vv] = \{\, \vv + \vs \mid \vs \in \ker(M) \,\}.
\]
The \Cref{def: vertex} captures extremality after quotienting out the intrinsic affine directions induced by $\ker(M)$.

\begin{lemma}[Quotient characterization]
\label{lem: quotient_vertex_equiv}
Let $\pi : \R^{2\ell n} \to \R^{2\ell n} / \ker(M)$ denote the canonical projection, i.e., the map that identifies any two vectors $x,y \in \R^{2\ell n}$ such that $x - y \in \ker(M)$.
Then $\vv$ is a vertex of $\mathcal{H}$ in the sense of \Cref{def: vertex} if and only if $\pi(\vv)$ is an extreme point of the projected polyhedron $\pi(\mathcal{H})$.
\end{lemma}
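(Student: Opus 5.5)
The plan is to unwind the definitions on both sides and reduce everything to the standard fact that the quotient map $\pi$ is linear and its kernel is exactly $\ker(M)$. Write $K:=\ker(M)$. First I will record three elementary facts.
\begin{enumerate}
\item $\pi$ is linear and surjective, so it preserves convex combinations: $\pi(\lambda \vv_1+(1-\lambda)\vv_2)=\lambda\pi(\vv_1)+(1-\lambda)\pi(\vv_2)$.
\item $\pi(\vv_1)=\pi(\vv_2)$ if and only if $\vv_1$ and $\vv_2$ are $M$-equivalent.
\item $\mathcal H$ is invariant under translation by $K$, because $M(\vv+\vs)=M\vv\le \vc$. Hence $\pi^{-1}(\pi(\mathcal H))=\mathcal H$: any $\vx$ with $\pi(\vx)=\pi(\vv)$ for some $\vv\in\mathcal H$ equals $\vv+\vs$ with $\vs\in K$, and so lies in $\mathcal H$.
\end{enumerate}
Fact 3 is the one step that uses the structure of $\mathcal H$. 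It is what lets points of $\pi(\mathcal H)$ be lifted back to $\mathcal H$, and I expect it to be the only point needing care.

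\textbf{($\Rightarrow$).} Suppose $\vv$ is a vertex in the sense of \Cref{def: vertex}. Suppose also that $\pi(\vv)=\lambda \bar x_1+(1-\lambda)\bar x_2$ with $\bar x_i\in\pi(\mathcal H)$ and $\lambda\in(0,1)$.
\begin{itemize}
\item Choose lifts $\vx_i\in\mathcal H$ with $\pi(\vx_i)=\bar x_i$.
\item The vector $\vv':=\lambda\vx_1+(1-\lambda)\vx_2$ lies in $\mathcal H$ by convexity and satisfies $\pi(\vv')=\pi(\vv)$. So $\vv-\vv'=\vs\in K$.
\item Set $\vv_i:=\vx_i+\vs$. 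By Fact 3, $\vv_i\in\mathcal H$, and $\lambda\vv_1+(1-\lambda)\vv_2=\vv$ exactly.
\item \Cref{def: vertex} now gives that $\vv,\vv_1,\vv_2$ are $M$-equivalent. Therefore $\bar x_1=\pi(\vv_1)=\pi(\vv)=\bar x_2$, so $\pi(\vv)$ is an extreme point.
\end{itemize}

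\textbf{($\Leftarrow$).} Suppose $\pi(\vv)$ is extreme in $\pi(\mathcal H)$ and $\vv=\lambda\vv_1+(1-\lambda)\vv_2$ with $\vv_i\in\mathcal H$. Applying $\pi$ via Fact 1 gives $\pi(\vv)=\lambda\pi(\vv_1)+(1-\lambda)\pi(\vv_2)$ with $\pi(\vv_i)\in\pi(\mathcal H)$. Extremality forces $\pi(\vv_1)=\pi(\vv_2)=\pi(\vv)$, and Fact 2 translates this into $M$-equivalence of $\vv,\vv_1,\vv_2$.

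Finally, I will note that $\pi(\mathcal H)$ is itself a polyhedron, as a linear image of a polyhedron. This justifies calling it the projected polyhedron, and it is what allows the later appendix results to invoke finiteness of extreme points.
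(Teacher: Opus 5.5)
Your proof is correct and follows essentially the same route as the paper: in the forward direction you lift the two points of $\pi(\mathcal{H})$ back to $\mathcal{H}$, absorb the discrepancy $s\in\ker(M)$ into both lifts using the $\ker(M)$-invariance of $\mathcal{H}$, and then apply the vertex definition, while the reverse direction just applies the linear map $\pi$ and reads off $M$-equivalence. Your Fact 3, $\pi^{-1}(\pi(\mathcal{H}))=\mathcal{H}$, is the same observation the paper uses when it notes that feasibility depends only on $Mv$; it also guarantees $v\in\mathcal{H}$ in the reverse direction, which both arguments otherwise take for granted.
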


\begin{proof}
($\Rightarrow$)
Assume $\vv$ is a vertex and suppose that
\[
\pi(\vv) = \lambda \pi(\vv_1) + (1-\lambda)\pi(\vv_2)
\]
for some $\lambda\in(0,1)$ and $\vv_1,\vv_2 \in\mathcal{H}$.
By linearity of $\pi$, this implies
\[
\pi\!\Big(\vv - \lambda \vv_1 - (1-\lambda)\vv_2\Big) = \mathbf{0},
\]
hence $\vv = \lambda \vv_1 + (1-\lambda)\vv_2 + \vs$ for some $\vs\in\ker(M)$.
Since $M\vs=0$, we also have $M(\vv_1+\vs)=M\vv_1, M(\vv_2+\vs)=M\vv_2$ and $\vv_1+\vs, \vv_2+\vs \in\mathcal H$ (because feasibility depends only on $M\cdot$).
Thus we can rewrite
\[
\vv = \lambda (\vv_1+\vs) + (1-\lambda)(\vv_2 + \vs)
\]
with both points in $\mathcal H$.
By \Cref{def: vertex}, it follows that
$M(\vv_1+\vs)=M\vv$ and $M(\vv_2 + \vs)=M\vv$, i.e. $\pi(\vv_1)=\pi(\vv)$ and $\pi(\vv_2)=\pi(\vv)$.
Therefore $\pi(\vv)$ is an extreme point of $\pi(\mathcal H)$.

($\Leftarrow$)
Conversely, assume that $\pi(\vv)$ is an extreme point of $\pi(\mathcal H)$ and write
\[
\vv = \lambda \vv_1 + (1-\lambda)\vv_2,
\qquad
\lambda\in(0,1),\quad \vv_1,\vv_2\in\mathcal H.
\]
Applying $\pi$ yields
\[
\pi(\vv) = \lambda \pi(\vv_1) + (1-\lambda)\pi(\vv_2).
\]
By extremality of $\pi(\vv)$, we get $\pi(\vv_1)=\pi(\vv_2)=\pi(\vv)$, which is equivalent to
$M\vv_1=M\vv_2=M\vv$.
This is exactly the condition in \Cref{def: vertex}.
\end{proof}

\begin{lemma}\label{lem: ps is zero}
    For any observed law $\mathcal{P}$ conforming to the IV and $\vs \in \ker(M)$,
    $$\vpt \vs = 0.$$
\end{lemma}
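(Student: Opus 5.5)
The plan is to use the fact that a conforming $\mathcal{P}$ is, by definition, induced by some full data law satisfying the IV assumptions. By the necessity direction of \Cref{prop: suffciency of constraints}, this gives a vector $\vq \ge 0$ with $\Mt \vq = \vp$. The claim then reduces to a one-line transposition identity, and the rest is just checking that the ingredients are legitimately available.

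\textbf{Step 1: produce a preimage of $\vp$.} Since $\mathcal{P}$ conforms to the IV model, there is a full data law $\mathcal{Q}^f$ satisfying Assumptions \ref{asm: exclusion}, \ref{asm: independence} and \ref{asm:consistency} that induces $\mathcal{P}$. Let $\mathcal{Q}$ be its marginal over potential outcomes and $\vq$ the corresponding vector. The $(\Rightarrow)$ part of \Cref{prop: suffciency of constraints} gives exactly the relations \eqref{eq: relation between q and p}, which in matrix form read $\Mt\vq = \vp$ by \Cref{eq: matrix LP formulation}.

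\textbf{Step 2: transpose.} For any $\vs \in \ker(M)$,
\[
\vpt \vs = (\Mt \vq)^\top \vs = \vq^\top M \vs = \vq^\top \mathbf{0} = 0 .
\]
Nonnegativity of $\vq$ is not needed here; only the existence of some preimage under $\Mt$ matters. Equivalently, $\vp \in \operatorname{range}(\Mt) = \ker(M)^\perp$.

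\textbf{Possible obstacle.} There is essentially none. The one point to handle carefully is what ``conforming'' means. If it is meant under the weaker joint-independence assumption (\Cref{thm: jointindep}), I would still need $\vp$ to lie in the range of $\Mt$. I would obtain this by invoking \Cref{prop: suffciency of constraints} in its sufficiency direction, or by noting that feasibility of \Cref{eq: general primal feasibility test} is exactly what conformity means in this LP framework. In every case, nonemptiness of $\Gamma(\vp)$ gives the required $\vq$, and Step 2 applies verbatim.
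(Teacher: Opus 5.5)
Your proof is correct and is exactly the paper's argument: obtain $\vq$ with $M^\top\vq=\vp$ from conformity (the necessity direction of \Cref{prop: suffciency of constraints}), then transpose to get $\vp^\top\vs=\vq^\top M\vs=0$. One small slip in your aside: the sufficiency direction of that proposition runs from the linear system to a model, not the other way. For the joint-independence reading you would instead cite the observational equivalence of \citet{song2025categoricalinstrumentalvariablemodel} used for \Cref{thm: jointindep}. Alternatively, note that $\ker(M)$ consists of vectors that are constant on each $z$-block with block constants summing to zero, so $\vp^\top\vs=0$ holds for any vector of conditional distributions.
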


\begin{proof}
    Since $M^\top \vq = \vp$, 
    \[
        \vp^\top \vs = (M^\top \vq)^\top \vs 
        = \vq^\top M \vs\overset{(a)}{=} \vq^\top 0 = 0,
    \]
    where in (a) we used the fact that $\vs \in \ker(M)$, we have $M\vs = 0$. 
\end{proof}



Consequently, for any linear objective $\vp^\top \vv$, it suffices to consider \emph{one}
representative per vertex class $[\vv]$.

By the fundamental theorem of linear programming, whenever the primal is feasible and bounded,
the dual optimum over $\mathcal H$ is attained at an extreme point of $\pi(\mathcal H)$,
equivalently at a vertex $M$-equivalence class of $\vv$ of $\mathcal H$ in the sense of
\Cref{def: vertex}. Therefore, sharp identification bounds can be obtained by
evaluating $\vp^\top \vv$ over a finite set containing one representative from each vertex class.

\subsection{Extreme Rays of \texorpdfstring{$\mathcal{K}$}{K}}
\label{sec: appendix preliminaries extreme rays}

The set $\mathcal K := \{ \vr \in \mathbb{R}^{2\ell n} : M\vr \le \boldsymbol{0}\}$ is a polyhedral cone.
Moreover, by standard LP duality, feasibility of the primal system is equivalent to
\[
\vp^\top \vr \le 0 \quad \text{for all } \vr \in \mathcal K.
\]
Hence, sharp testable implications (i.e., a necessary and sufficient finite family of IV inequalities) can be obtained by
characterizing the extreme structure of $\mathcal K$.

\paragraph{\texorpdfstring{$M$}{M}-Equivalence under Scaling and Kernel Shifts.}
Because $\ker(M)\neq\{\boldsymbol 0\}$ (see \Cref{rem: s}), the cone $\mathcal K$ is generally not pointed and contains a lineality space.
In particular, $\ker(M)\subseteq \mathcal K$ and $\ker(M)\subseteq -\mathcal K$.
Accordingly, we work with rays modulo positive scaling and kernel shifts.

\begin{definition}[Extreme ray class]
\label{def: extreme_ray_class}
For nonzero $\vr,\vr' \in \mathcal K$, we write $\vr \sim \vr'$ if there exists $\lambda>0$ such that
\[
\lambda \vr - \vr' \in \ker(M).
\]
We call the $M$-equivalence class
\[
\mathcal{E}_{\vr} := \{ \lambda \vr + \vs : \lambda>0,\ \vs\in\ker(M)\}
\]
the \emph{ray class} generated by $\vr$.
\end{definition}

\begin{definition}[$M$-Distinct extreme rays]
\label{def: distinct of extreme rays}
Two extreme rays $\vr_1, \vr_2 \notin \ker(M)$ are called $M$-distinct if
\[
\mathcal{E}_{\vr_1} \neq \mathcal{E}_{\vr_2}.
\]
\end{definition}

\begin{lemma}
\label{lem: uniqueness of extreme rays}
For any two $M$-distinct extreme rays $\vr_1$ and $\vr_2$, we have
\[
\mathcal{E}_{\vr_1} \cap \mathcal{E}_{\vr_2} = \varnothing.
\]
\end{lemma}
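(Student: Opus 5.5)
The plan is to argue by contraposition, using the fact that the ray classes are exactly the equivalence classes of the relation $\sim$ from \Cref{def: extreme_ray_class}. Once we know that $\sim$ is an equivalence relation and that $\mathcal{E}_{\vr}$ is the full equivalence class of $\vr$ (restricted to nonzero elements of $\mathcal K$), the lemma is immediate: two equivalence classes are either disjoint or equal. If $\mathcal{E}_{\vr_1}$ and $\mathcal{E}_{\vr_2}$ shared a point, the classes would coincide, contradicting $M$-distinctness (\Cref{def: distinct of extreme rays}).

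First I check that $\sim$ is an equivalence relation.
\begin{itemize}
\item \emph{Reflexivity:} take $\lambda=1$ and $\vs=\mathbf 0$.
\item \emph{Symmetry:} if $\lambda\vr-\vr'=\vs\in\ker(M)$, then $\lambda^{-1}\vr'-\vr=-\lambda^{-1}\vs\in\ker(M)$.
\item \emph{Transitivity:} suppose $\lambda\vr-\vr'=\vs$ and $\mu\vr'-\vr''=\vs'$. Then $\mu\lambda\vr-\vr''=\mu\vs+\vs'\in\ker(M)$, since $\ker(M)$ is a linear subspace and $\mu\lambda>0$.
\end{itemize}

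Next I confirm that $\mathcal{E}_{\vr}\subseteq\mathcal K$, so that it lives in the right space. For $\vw=\lambda\vr+\vs$ we have $M\vw=\lambda M\vr\le 0$. Directly from the definition, $\vw\in\mathcal{E}_{\vr}$ if and only if $\vr\sim\vw$.

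Now suppose some $\vw\in\mathcal{E}_{\vr_1}\cap\mathcal{E}_{\vr_2}$ exists. Then $\vr_1\sim\vw$ and $\vr_2\sim\vw$, so $\vr_1\sim\vr_2$ by symmetry and transitivity. Take any $\vx\in\mathcal{E}_{\vr_1}$, so $\vr_1\sim\vx$. Combined with $\vr_2\sim\vr_1$, transitivity gives $\vr_2\sim\vx$, hence $\vx\in\mathcal{E}_{\vr_2}$. The symmetric argument gives the reverse inclusion, so $\mathcal{E}_{\vr_1}=\mathcal{E}_{\vr_2}$, which contradicts $M$-distinctness.

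The only subtle point is nondegeneracy. We need $\vr_i\notin\ker(M)$ so that the classes are genuine rays, meaning they do not contain $\mathbf 0$ and scaling is meaningful. This is part of the hypothesis in \Cref{def: distinct of extreme rays}, and nothing else in the argument requires care. Extremality itself plays no role; the statement holds for any nonzero elements of $\mathcal K$ outside $\ker(M)$.
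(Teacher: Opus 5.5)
Your proof is correct and takes essentially the paper's route: a common point $\alpha_1\vr_1+\vs_1=\alpha_2\vr_2+\vs_2$ gives $\vr_1\in\mathcal{E}_{\vr_2}$ and $\vr_2\in\mathcal{E}_{\vr_1}$, so the two classes coincide, which contradicts $M$-distinctness. Your check that $\sim$ is an equivalence relation makes explicit the transitivity step the paper uses silently when it passes from mutual membership to $\mathcal{E}_{\vr_1}=\mathcal{E}_{\vr_2}$, and your version also avoids the paper's harmless coefficient slip in the rearrangement (the kernel term there should be $\tfrac{1}{\alpha_1}(\vs_2-\vs_1)$).
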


\begin{proof}
Suppose, by contradiction, that there exists $\vr' \in \mathcal{E}_{\vr_1} \cap \mathcal{E}_{\vr_2}$.
Then there exist $\alpha_1,\alpha_2>0$ and $\vs_1,\vs_2\in\ker(M)$ such that
\[
\vr'=\alpha_1\vr_1+\vs_1
\qquad\text{and}\qquad
\vr'=\alpha_2\vr_2+\vs_2.
\]
Rearranging yields
\[
\vr_1
= \frac{\alpha_2}{\alpha_1}\,\vr_2
+ \frac{\alpha_2}{\alpha_1}(\vs_2-\vs_1),
\]
which implies $\vr_1 \in \mathcal{E}_{\vr_2}$.
By symmetry, $\vr_2 \in \mathcal{E}_{\vr_1}$, hence
$\mathcal{E}_{\vr_1}=\mathcal{E}_{\vr_2}$, a contradiction.
\end{proof}

The following lemma formalizes the fact that extremality is a property of the ray class and coincides with extremality
in the quotient cone.

\begin{lemma}[Quotient characterization of extreme rays]
\label{lem: quotient_extreme_rays}
Let $\pi:\R^{2\ell n}\to \R^{2\ell n}\setminus \ker(M)$ be the canonical projection and let $ \widehat{\mathcal K}:=\pi(\mathcal K)$.
Then $\vr\in\mathcal K\setminus\ker(M)$ is an extreme ray in the sense of \Cref{def: extreme ray}
if and only if $\pi(\vr)$ generates an extreme ray of the pointed polyhedral cone $ \widehat{\mathcal K}$.
Moreover, if $\vr\sim\vr'$, then $\pi(\vr)$ and $\pi(\vr')$ generate the same ray in $ \widehat{\mathcal K}$.
\end{lemma}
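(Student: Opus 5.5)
The plan is to reduce everything to the linear bijection between $\R^{2\ell n}/\ker(M)$ and $\mathrm{im}(M)$ induced by $M$. First I will set this up. Let $V=\R^{2\ell n}$ and $W=\ker(M)$. Since $M\vr$ depends only on the class of $\vr$ modulo $W$, the map $\tilde M:V/W\to \mathrm{im}(M)$, $\pi(\vr)\mapsto M\vr$, is well defined. It is linear, and it is injective because $M\vr=0$ iff $\vr\in W$. So $\tilde M$ is a linear isomorphism onto $\mathrm{im}(M)$.

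Under $\tilde M$, the set $\widehat{\mathcal K}=\pi(\mathcal K)$ maps onto $C:=\{\vx\in\mathrm{im}(M):\vx\le 0\}$. Indeed, $\vr\in\mathcal K$ iff $M\vr\le 0$, and membership depends only on $\pi(\vr)$. The set $C$ is the intersection of a subspace with the nonpositive orthant. It is therefore a polyhedral cone, and it is pointed because $C\cap(-C)=\{0\}$. Pulling back along the isomorphism, $\widehat{\mathcal K}=\tilde M^{-1}(C)$ is a pointed polyhedral cone in $V/W$. This justifies the phrase ``pointed polyhedral cone'' in the statement.

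Next comes the equivalence. The condition in \Cref{def: extreme ray} reads: for every decomposition $\vr=\vr_1+\vr_2$ with $\vr_i\in\mathcal K$, there exist $\alpha,\beta\ge0$ with $M\vr_1=\alpha M\vr$ and $M\vr_2=\beta M\vr$. Applying $\tilde M^{-1}$, this is the same as $\pi(\vr_1)=\alpha\pi(\vr)$ and $\pi(\vr_2)=\beta\pi(\vr)$.

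For ($\Leftarrow$), suppose $\pi(\vr)$ generates an extreme ray of $\widehat{\mathcal K}$ and take any decomposition $\vr=\vr_1+\vr_2$ in $\mathcal K$. Projecting gives $\pi(\vr)=\pi(\vr_1)+\pi(\vr_2)$ with $\pi(\vr_i)\in\widehat{\mathcal K}$. By the standard definition of an extreme ray of a pointed cone, each $\pi(\vr_i)$ is a nonnegative multiple of $\pi(\vr)$. That is exactly the condition of \Cref{def: extreme ray}.

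For ($\Rightarrow$), suppose $\vr$ satisfies \Cref{def: extreme ray} and $\pi(\vr)=\hat x_1+\hat x_2$ with $\hat x_i\in\widehat{\mathcal K}$. The main obstacle is lifting this decomposition back to $\mathcal K$ so that it sums exactly to $\vr$, not merely modulo $W$. To do this, pick representatives $\vr_1,\vr_2'\in\mathcal K$ of $\hat x_1,\hat x_2$. Then $\vs:=\vr-\vr_1-\vr_2'\in W$. Set $\vr_2:=\vr_2'+\vs$. We have $\vr_2\in\mathcal K$, because $M\vr_2=M\vr_2'$ and membership in $\mathcal K$ depends only on $M\cdot$. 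Now $\vr=\vr_1+\vr_2$ is a genuine decomposition in $\mathcal K$. The definition then gives $\hat x_i=\alpha_i\pi(\vr)$ with $\alpha_i\ge0$, so $\pi(\vr)$ is extreme. Finally, $\pi(\vr)\ne0$ because $\vr\notin\ker(M)$, so $\pi(\vr)$ does generate a ray.

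For the last claim, if $\vr\sim\vr'$ then $\lambda\vr-\vr'\in W$ for some $\lambda>0$. Hence $\pi(\vr')=\lambda\pi(\vr)$, and the two points generate the same ray in $\widehat{\mathcal K}$. This is immediate from \Cref{def: extreme_ray_class}.

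The argument mirrors \Cref{lem: quotient_vertex_equiv}. The only non-routine step is the kernel-shift lifting in ($\Rightarrow$).
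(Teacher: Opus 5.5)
Your proof is correct and follows the same route as the paper. One direction projects a decomposition $\vr=\vr_1+\vr_2$ in $\mathcal K$ to the quotient; the other lifts a decomposition in $\widehat{\mathcal K}$ back to $\mathcal K$ and absorbs the leftover element of $\ker(M)$ into one summand. The paper writes both directions as contrapositives, whereas you argue directly. Identifying $\widehat{\mathcal K}$ with $\mathrm{im}(M)\cap\{\vx\le 0\}$ via the isomorphism induced by $M$ also proves pointedness and polyhedrality, which the paper only asserts, and your handling of the scaling $\lambda$ in the final claim is more careful than the paper's.
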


\begin{proof}
First, note that $\pi(\vr)=\pi(\vr')$ holds if and only if $\vr-\vr'\in\ker(M)$, hence $\vr\sim\vr'$ implies that
$\pi(\vr)$ and $\pi(\vr')$ lie on the same ray in $ \widehat{\mathcal K}$.
Now suppose that $\vr$ is not extreme in the sense of \Cref{def: extreme ray}.
Then there exist $\vr_1,\vr_2\in\mathcal K$ such that $\vr=\vr_1+\vr_2$ and neither summand belongs
to the ray class $\mathcal{E}_{\vr}$, i.e., $\vr_{t}\notin \{\alpha\vr+\vs:\alpha > 0,\ \vs\in\ker(M)\}$.
Applying $\pi$ yields $\pi(\vr)=\pi(\vr_1)+\pi(\vr_2)$ with $\pi(\vr_1),\pi(\vr_2)\in \widehat{\mathcal K}$,
and neither $\pi(\vr_{t})$ lies on the ray generated by $\pi(\vr)$.
Hence $\pi(\vr)$ does not generate an extreme ray of $ \widehat{\mathcal K}$.

Conversely, if $\pi(\vr)$ does not generate an extreme ray of $ \widehat{\mathcal K}$, then we can write
$\pi(\vr)= \widehat r_1+ \widehat r_2$ with $ \widehat r_1, \widehat r_2\in \widehat{\mathcal K}$ such that neither lies on the ray of $\pi(\vr)$.
Choose preimages $\vr_1,\vr_2\in\mathcal K$ with $\pi(\vr_{t})= \widehat r_{t}$.
Then $\pi(\vr-\vr_1-\vr_2)=0$, so $\vr=\vr_1+\vr_2+\vs$ for some $\vs\in\ker(M)$.
Since $\vs\in\mathcal K$ and $\mathcal K$ is a cone, we may absorb $\vs$ into, say, $\vr_2$ and obtain a decomposition
$\vr=\tilde{\vr}_{1}+\tilde{\vr}_{2}$ with $\tilde{\vr}_{1},\tilde{\vr}_{2}\in\mathcal K$ whose projections
do not lie on the ray of $\pi(\vr)$; equivalently, neither summand belongs to the ray class $\mathcal{E}_{\vr}$.
Thus $\vr$ is not extreme in the sense of \Cref{def: extreme ray}.
\end{proof}

\begin{corollary}[Finite generation and reduction to finitely many inequalities]
\label{cor: finite_extreme_rays_reduction}
The quotient cone $ \widehat{\mathcal K}=\pi(\mathcal K)$ is a pointed polyhedral cone and therefore admits only finitely many
extreme rays. Consequently, there exist $\vr_1,\dots,\vr_\omega\in\mathcal K$ such that
\[
\mathcal K = \ker(M) + \text{cone}(\vr_1,\dots,\vr_\omega),
\]
and every element of $\mathcal K$ can be written as
$\vs + \sum_{i=1}^\omega \lambda_i \vr_i$ for some $\vs\in\ker(M)$ and $\lambda_i\ge0$.

In particular, if $\vpt \vs = 0$ for all $\vs\in\ker(M)$,
then the family of inequalities
\[
\vpt \vr \le 0 \quad \forall \vr\in\mathcal K
\]
is equivalent to the finite family
\[
\vpt \vr_i \le 0
\qquad \forall i\in\{1,\dots,\omega\}.
\]
\end{corollary}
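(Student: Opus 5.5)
The plan is to make the quotient concrete through the linear map $\vv \mapsto M\vv$. Since $\ker(M)$ is exactly the kernel of this map, the canonical projection $\pi$ factors through an isomorphism between $\R^{2\ell n}/\ker(M)$ and $\operatorname{im}(M)$. Under this identification we get
\[
\widehat{\mathcal K}=\pi(\mathcal K)\;\cong\;\{\vw\in\operatorname{im}(M):\ \vw\le \boldsymbol 0\},
\]
because $\vr\in\mathcal K$ if and only if $M\vr\le\boldsymbol 0$. The right-hand side is the intersection of a linear subspace with the nonpositive orthant, hence a polyhedral cone.

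\textbf{Pointedness.} The cone is pointed: if both $\vw$ and $-\vw$ lie in it, then $\vw\le\boldsymbol 0$ and $-\vw\le\boldsymbol 0$, so $\vw=\boldsymbol 0$. Equivalently, the lineality space of $\mathcal K$ is exactly $\ker(M)$.

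\textbf{Finitely many extreme rays.} I would then invoke the Minkowski--Weyl theorem for pointed polyhedral cones (\cite[Section 19]{Rockafellar1970}). A pointed polyhedral cone has finitely many extreme rays and equals their conic hull. This yields finitely many generators $\widehat r_1,\dots,\widehat r_\omega$ of $\widehat{\mathcal K}$.

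\textbf{Lifting to $\mathcal K$.} Choose preimages $\vr_i\in\mathcal K$ with $\pi(\vr_i)=\widehat r_i$. Concretely, $\widehat r_i=M\vr_i$ for some $\vr_i$, and any such $\vr_i$ lies in $\mathcal K$. By \Cref{lem: quotient_extreme_rays}, each $\vr_i$ is an extreme ray of $\mathcal K$ in the sense of \Cref{def: extreme ray}.

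\textbf{Decomposition.} Take any $\vr\in\mathcal K$ and write $M\vr=\sum_i\lambda_i M\vr_i$ with $\lambda_i\ge 0$. Then $\vs:=\vr-\sum_i\lambda_i\vr_i\in\ker(M)$, which gives $\mathcal K\subseteq \ker(M)+\text{cone}(\vr_1,\dots,\vr_\omega)$. The reverse inclusion holds because $\mathcal K$ is a convex cone containing both $\ker(M)$ and every $\vr_i$.

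\textbf{Reduction to finitely many inequalities.} Assume $\vpt\vs=0$ on $\ker(M)$; for laws conforming to the IV model this is \Cref{lem: ps is zero}.
\begin{itemize}
\item One direction is immediate, since each $\vr_i\in\mathcal K$.
\item For the other direction, write $\vr=\vs+\sum_i\lambda_i\vr_i$. Then
\[
\vpt\vr=\vpt\vs+\sum_i\lambda_i\,\vpt\vr_i=\sum_i\lambda_i\,\vpt\vr_i\le 0 .
\]
\end{itemize}

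\textbf{Main obstacle.} The only delicate point is justifying that the quotient is a genuine pointed polyhedral cone to which Minkowski--Weyl applies; the quotient is not a subset of the original space. The identification with $\operatorname{im}(M)\cap\R^{2^\ell n^2}_{\le 0}$ handles this cleanly, and the rest is routine.
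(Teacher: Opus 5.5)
Your proof is correct and follows essentially the same route as the paper: show that the quotient cone is pointed and polyhedral, invoke Rockafellar's finite-generation theorem, lift one representative per extreme ray back to $\mathcal K$ to get $\mathcal K=\ker(M)+\text{cone}(\vr_1,\dots,\vr_\omega)$, and then use $\vpt\vs=0$ on $\ker(M)$ to reduce to the finitely many generators. Your identification of $\pi(\mathcal K)$ with $\mathrm{im}(M)\cap\R^{2^\ell n^2}_{\le 0}$ makes explicit the polyhedrality and pointedness that the paper only asserts ``by construction''.
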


\begin{proof}
Since $\mathcal K$ is polyhedral, $ \widehat{\mathcal K}=\pi(\mathcal K)$ is also polyhedral. By construction it is pointed,
hence it has finitely many extreme rays; see, e.g., \cite[Section~19]{Rockafellar1970}.
Choosing one representative $\vr_i$ in each extreme-ray class and lifting back to $\mathcal K$ yields
$\mathcal K=\ker(M)+\text{cone}(\vr_1,\dots,\vr_\omega)$.

Finally, if $\vr=\vs+\sum_i \lambda_i \vr_i$, then
\[
\vp^\top \vr = \vp^\top \vs + \sum_i \lambda_i \vp^\top \vr_i
= \sum_i \lambda_i \vp^\top \vr_i,
\]
so $\vp^\top\vr\le0$ for all $\vr\in\mathcal K$ holds if and only if
$\vp^\top\vr_i\le0$ for all $i$.
\end{proof}

\paragraph{Lineality Space.}
Since $\ker(M)$ is a linear subspace, there exists finite nonzero vectors $\vs_1, \cdots \vs_t$
such that
\[
\ker(M) = \mathrm{span}(\vs_1, \cdots, \vs_t) = \text{cone}(\vs_1, \cdots, \vs_t) + \text{cone}(-\vs_1, \cdots, -\vs_t).
\]
Consequently,
\[
\mathcal K
=
\text{cone}(\vs_1, \cdots, \vs_t, -\vs_1, \cdots, -\vs_t, \vr_1, \dots, \vr_\omega),
\]
where $\vr_1,\dots,\vr_\omega$ are representatives of extreme-ray classes
in $\mathcal K \setminus \ker(M)$.
Moreover, note that $\vs_1, \cdots, \vs_t$ and $-\vs_1, \cdots, -\vs_t$ satisfy \Cref{def: extreme ray}, thus, they are extreme-rays
in $\mathcal K \cap \ker(M)$.

Therefore, if $\vpt \vs_i \leq 0$ and $- \vpt \vs_i \leq 0$ are satisfied for all $i \leq t$, then we have $\vpt \vs = 0$ for all $\vs\in\ker(M)$.
As a result, based on \Cref{cor: finite_extreme_rays_reduction}, 
\[
\vpt \vr \le 0 \quad \forall \vr\in\mathcal K
\]
is equivalent to the finite family
\begin{align*}
\vpt \vr_i &\le 0
\qquad \forall i\in\{1,\dots,\omega\} \\
\vpt \vs_i &\le 0
\qquad \forall i\in\{1,\dots,t\} \\
-\vpt \vs_i &\le 0
\qquad \forall i\in\{1,\dots,t\}.
\end{align*}


Consequently, sharp testable implications can be obtained by enumerating
one representative from each extreme-ray class along with $\vs_1, \cdots, \vs_t$ and $-\vs_1, \cdots, -\vs_t$ from the extreme ray class $\mathcal{E}_{0}$ which denote the $\ker(M)$.

\section{Vertex Enumeration}
\label{sec: proof of pi}

In this section, we provide the proofs and notations for \Cref{sec: pi}. First, in \Cref{app: PI definitions}, we introduce the basic definitions and notation required for the problem. In \Cref{app: PI basic properties}, we present fundamental lemmas that characterize the vertices and the structure of the dual LP matrix. In \Cref{app: PI b properties}, we define the signature binary number associated with a vertex $\vv$ and prove lemmas describing its structure. Next, in \Cref{app: PI bijection}, we establish the existence of a bijection between these signatures and the vertices. Finally, in \Cref{app: PI proof of main theorem}, we prove the main results, \Cref{thm:main PI result} and \Cref{thm: PI ATE bounds}, using the previously established lemmas.

\subsection{Definitions and Notations}\label{app: PI definitions}

In this subsection, we introduce the notation, definitions, and preliminary results required for the subsequent proofs.

\begin{definition}[Active Constraint Matrix]\label{def: vertex matrix}
For a point $\vv$, let $\Mv$ denote the submatrix of $M$ and 
$\vcv$ the corresponding subvector of $\vc$ associated with active (tight) constraints at $\vv$, i.e.,
$\Mv \vv = \vcv.$
We call $\Mv$ the \emph{active constraint matrix} at $\vv$.
\end{definition}

\begin{definition}\label{def: submatrixx}
Let $M(\alpha,\beta)$ denote the submatrix of $M$ obtained by restricting to
the rows indexed by $\alpha$ and the columns indexed by $\beta$.
Here, a row index $\alpha$ is of the form $y_0y_1d_0d_1$, and a column index
$\beta$ is of the form $ydz$, where each symbol may also be the wildcard $*$,
indicating that it ranges over all possible values.

By $M_1 \in M_2$, we mean that the row indices appearing in $M_1$ are a
subset of the row indices appearing in $M_2$, and the column indices appearing
in $M_1$ are a subset of the column indices appearing in $M_2$.

Also, by $M(\alpha_1 \cup \alpha_2, \beta_1\cup \beta_2)$ we mean the submatrix obtained by
restricting to all rows indexed by either $\alpha_1$ or $\alpha_2$, and to the
columns indexed by either $\beta_1$ or $\beta_2$. 

The same notation applies to $\Mv$. An example for this definition is provided in \cref{App: submatrix}.
\end{definition}

Based on the LP formulation, each variable $q_{y_0y_1,d_0d_1}$ is constrained by exactly
two $p_{yd,z}$s. More precisely,
\begin{itemize}
    \item $q_{y_0y_1,00}$ is constrained by $p_{y_0 0,0}$ and $p_{y_0 0,1}$,
    \item $q_{y_0y_1,01}$ is constrained by $p_{y_0 0,0}$ and $p_{y_1 1,1}$,
    \item $q_{y_0y_1,10}$ is constrained by $p_{y_0 0,1}$ and $p_{y_1 1,0}$,
    \item $q_{y_0y_1,11}$ is constrained by $p_{y_1 1,0}$ and $p_{y_1 1,1}$.
\end{itemize}
According to these relations between $\vp$ and $\vq$, the matrix $M$ will have a specific pattern outlined in the following remark.

\begin{remark}\label{rem: Q structure}
For $i,j\in[n]$, all entries equal to $1$ in $M$ occur in the following
positions and any other entry is eqaul to $0$:
\begin{itemize}
    \item $M(ij00,i00)$ and $M(ij00,i01)$,
    \item $M(ij01,i00)$ and $M(ij01,j11)$,
    \item $M(ij10,i01)$ and $M(ij10,j10)$,
    \item $M(ij11,j10)$ and $M(ij11,j11)$.
\end{itemize}
\end{remark}

\subsection{Structural Properties of Vertices}\label{app: PI basic properties}

In this subsection, we establish some fundamental structural properties of the constraint matrix $M$ and the active constraint matrix $\Mv$ associated with a vertex $\vv$. 

\begin{lemma}\label{lem: rank is 4n-1}
The rank of matrix $M$ is $4n-1$.
\end{lemma}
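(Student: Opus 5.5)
We are in the binary instrument case $\ell=2$, so $M$ is a $4n^2\times 4n$ matrix whose columns are indexed by $(y,d,z)\in[n]\times\{0,1\}\times\{0,1\}$. By \Cref{rem: Q structure}, each row of $M$ has exactly two ones. The plan is to read $M$ as the (unsigned) incidence matrix of a graph $G$ and apply the classical rank formula for such matrices. The vertices of $G$ are the $4n$ column indices, and each row is an edge. The rank of an unsigned incidence matrix is $|V|-b$, where $b$ is the number of bipartite connected components. Equivalently, the kernel has dimension $b$, spanned by $\pm1$ vectors that alternate sign across the bipartition of each bipartite component.

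First we list the edges. Row $ij00$ joins $i00$ and $i01$, row $ij01$ joins $i00$ and $j11$, row $ij10$ joins $i01$ and $j10$, and row $ij11$ joins $j10$ and $j11$.

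Next we show $G$ is connected. Every column vertex of the form $\cdot 00$ is joined through rows $ij01$ to every $\cdot 11$ vertex. Similarly, every $\cdot 01$ vertex is joined through rows $ij10$ to every $\cdot 10$ vertex. The edges $i00$–$i01$ (rows $ii00$) then link these two halves. So there is a single component.

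Then we show $G$ is bipartite. Take the bipartition $A=\{y00, y10 : y\in[n]\}$ and $B=\{y01, y11 : y\in[n]\}$, and check each edge type.
\begin{itemize}
\item $00$–$01$ edges go $A$–$B$.
\item $00$–$11$ edges go $A$–$B$.
\item $01$–$10$ edges go $B$–$A$.
\item $10$–$11$ edges go $A$–$B$.
\end{itemize}
So every edge crosses the partition, and $G$ has one bipartite component. It follows that $\ker M$ is one-dimensional, spanned by $\vs$ with $s_{yd,z}=+1$ on $A$ and $-1$ on $B$. Concretely, $s_{y0,0}=1$, $s_{y1,0}=1$, $s_{y0,1}=-1$, $s_{y1,1}=-1$. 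Hence $\operatorname{rank} M = 4n-1$.

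To keep the proof self-contained rather than citing the incidence-matrix formula, we argue the two inclusions directly.
\begin{itemize}
\item \emph{$\vs\in\ker M$:} each row has one endpoint in $A$ and one in $B$, so its two entries of $\vs$ cancel.
\item \emph{$\ker M\subseteq\operatorname{span}(\vs)$:} if $M\vv=0$, then every edge forces $v_a=-v_b$ across its endpoints. Since $G$ is connected, $\vv$ is determined by one coordinate, and the forced value equals that coordinate times $\pm1$ according to the side of the bipartition. So $\vv\in\operatorname{span}(\vs)$.
\end{itemize}
The only step needing care is the connectivity argument, in particular confirming that the $00/11$ block and the $01/10$ block are actually linked. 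The $i00$–$i01$ edges do this, so I expect no real obstacle. This $\vs$ is presumably the kernel vector referred to in \Cref{rem: s}.
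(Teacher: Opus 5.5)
Your proof is correct and follows essentially the same route as the paper. The paper exhibits the same kernel vector (the $z=0$ columns minus the $z=1$ columns) and then propagates coefficients between columns that share a row, which is your connectivity-plus-bipartition argument on the row/edge graph; your graph packaging is slightly cleaner, since connectivity makes the paper's ``without loss of generality $c_{i00}\neq 0$'' step automatic.
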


\begin{proof}
By \Cref{rem: Q structure}, every row of $M$ contains exactly one entry equal
to $1$ in $M(****,**0)$ and exactly one entry equal to $1$ in $M(****,**1)$.
Therefore, the sum of all columns of $M(****,**0)$ minus the sum of all columns of
$M(****,**1)$ gives a nontrivial linear combination of columns equal to zero.
Hence, $M$ cannot have full rank, and
\[
\rank(M)\le 4n-1.
\]

Now suppose that there exists a nontrivial linear combination of columns of $M$
that is equal to zero.
Then at least one column must have a nonzero coefficient.
Without loss of generality, assume that the coefficient assigned to the column
$M(****,i00)$ is $c_{i 0 0} = s\neq 0$.

For any $j\in[n]$, the columns $M(****,i00)$ and $M(****,j11)$ share a $1$ in some row of
$M$.
Since each row of $M$ contains exactly two entries equal to $1$, the
corresponding coefficient must satisfy
\[
c_{j11}=-s
\]
in order for the linear combination to vanish on that row.
Applying the same argument to all such rows implies that
\[
c_{j11}=-s \quad \text{for all } j\in[n].
\]

Next, since each column $M(****,j11)$ shares $1$’s with the column $M(****,j00)$,
the same reasoning shows that
\[
c_{j00}=s \quad \text{for all } j\in[n].
\]

Similarly, for any $i\in[n]$, the columns $M(****,i00)$ and $M(****,i01)$ share $n$
entries equal to $1$ in the rows indexed by $M(i*00,i0*)$.
Thus, to cancel these contributions we must have
\[
c_{i01}=-s \quad \text{for all } i\in[n].
\]
By the same argument, since $M(****,i01)$ and $M(****,i10)$ share $1$’s, we obtain
\[
c_{i10}=s \quad \text{for all } i\in[n].
\]

Therefore, in any nontrivial linear dependence among the columns of $M$, all
columns appear with coefficient either $s$ or $-s$, and this dependence is
unique up to scaling.
Consequently, the null space of $M$ is one-dimensional, and hence
\[
\rank(M)=4n-1.
\]
\end{proof}
\begin{remark}\label{rem: s}
As shown in the proof of \Cref{lem: rank is 4n-1}, the null space of $M$ is
one-dimensional.
Moreover, any vector $\vs=(s_{ydz})_{y\in[n],\,d,z\in\{0,1\}}$ in $\ker(M)$
must satisfy
\[
s_{yd0}=s,\qquad
s_{yd1}=-s
\quad\text{for all } y\in[n],d\in\{0,1\},
\]
for some scalar $s\in\mathbb{R}$. 
\end{remark}

\begin{lemma}\label{lem: vert iff rank is 4n-1}
Let $\vv$ be a feasible point satisfying $M\vv\le \vc$.
Then $\vv$ is a vertex if and only if
\[
\rank(\Mv)=4n-1.
\]
\end{lemma}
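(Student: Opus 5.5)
We prove the lemma by reducing \Cref{def: vertex} to the standard rank characterization of basic feasible points, carried out in the quotient space. By \Cref{lem: rank is 4n-1} and \Cref{rem: s}, $\ker(M)$ is one-dimensional, spanned by the vector $\vs$ with $s_{yd0}=1$ and $s_{yd1}=-1$. Every row of $M$ annihilates $\vs$, so every row of $\Mv$ does too. Hence $\ker(\Mv)\supseteq\ker(M)$, which gives $\rank(\Mv)\le 4n-1$ automatically. The claim is therefore that $\vv$ is a vertex if and only if $\ker(\Mv)=\ker(M)$. We prove the two directions separately.

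($\Leftarrow$) Assume $\rank(\Mv)=4n-1$, so $\ker(\Mv)=\ker(M)$. Suppose $\vv=\lambda\vv_1+(1-\lambda)\vv_2$ with $\lambda\in(0,1)$ and $\vv_1,\vv_2\in\mathcal H$. For each active row $m$ we have $m^\top\vv_i\le c_m$ for $i=1,2$ and $\lambda m^\top\vv_1+(1-\lambda)m^\top\vv_2=c_m$. Both inequalities must therefore be tight, so $\Mv\vv_1=\Mv\vv_2=\vcv=\Mv\vv$. It follows that $\vv_i-\vv\in\ker(\Mv)=\ker(M)$, that is, $M\vv_i=M\vv$, and $\vv$ is a vertex.

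($\Rightarrow$) Assume $\rank(\Mv)<4n-1$. Then there is a direction $\vd\in\ker(\Mv)\setminus\ker(M)$, which means $M\vd\neq 0$. Every inactive row $m$ satisfies $m^\top\vv<c_m$ strictly. There are finitely many such rows, so a single $\varepsilon>0$ makes $m^\top(\vv\pm\varepsilon\vd)\le c_m$ hold for all of them. The active rows are unchanged because $\Mv\vd=0$, so $\vv\pm\varepsilon\vd\in\mathcal H$. Now $\vv=\tfrac12(\vv+\varepsilon\vd)+\tfrac12(\vv-\varepsilon\vd)$, and $M(\vv+\varepsilon\vd)-M\vv=\varepsilon M\vd\neq0$. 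The points are therefore not $M$-equivalent, which contradicts \Cref{def: vertex}.

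The only step needing care is the containment $\ker(M)\subseteq\ker(\Mv)$, which follows directly from \Cref{rem: s}. It is what makes $4n-1$, rather than $4n$, the correct threshold. It also ensures the perturbation argument distinguishes genuine directions from the trivial kernel shift. No step here is a serious obstacle; it is a quotient version of the classical basic-feasible-solution characterization, consistent with \Cref{lem: quotient_vertex_equiv}.
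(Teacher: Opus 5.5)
Your proof is correct and follows the paper's argument essentially step for step. For the forward direction, both use a perturbation along a direction in $\ker(M_{\mathbf v})\setminus\ker(M)$. For the converse, both force the active constraints to be tight at both endpoints of a convex combination, so the differences lie in $\ker(M_{\mathbf v})=\ker(M)$. Your explicit remark that $\ker(M)\subseteq\ker(M_{\mathbf v})$, so $\mathrm{rank}(M_{\mathbf v})\le 4n-1$ always, is the same fact the paper uses implicitly when it notes that $M_{\mathbf v}$ is a row-submatrix of $M$.
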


\begin{proof}
We prove both directions.

\medskip
\noindent
\textbf{($\Leftarrow$) If $\rank(\Mv)<4n-1$, then $\vv$ is not a vertex.}

Suppose $\rank(\Mv)<4n-1$.
Since 
$\rank(M)=4n-1$, 
we have
\[
\dim ( \ker(\Mv) ) > \dim (\ker(M)).
\]
Hence, there exists a nonzero vector $\vt \in \ker(\Mv)$ such that
$\vt \notin \ker(M)$.

Since $\vt\in\ker(\Mv)$, we have
\[
\Mv(\vv+\epsilon\vt)=\Mv\vv=\vcv
\]
for all $\epsilon\in\mathbb{R}$.
For sufficiently small $\epsilon>0$, the inequalities corresponding to
non-active constraints remain strict, because they are strict at $\vv$.
Therefore,
\[
M(\vv+\epsilon\vt)\le \vc
\quad\text{and}\quad
M(\vv-\epsilon\vt)\le \vc
\]
for sufficiently small $\epsilon$.

Thus both $\vv+\epsilon\vt$ and $\vv-\epsilon\vt$ are feasible.
Moreover,
\[
\vv=\tfrac12(\vv+\epsilon\vt)+\tfrac12(\vv-\epsilon\vt),
\]
and since $\vt\notin\ker(M)$, we have
\[
M(\vv+\epsilon\vt)\neq M\vv,
\]
so the two feasible points are $M$-distinct.

Hence $\vv$ can be written as a non-trivial convex combination of two
$M$-distinct feasible points, and therefore $\vv$ is not a vertex by
\Cref{def: vertex}.

\medskip
\noindent
\textbf{($\Rightarrow$) If $\rank(\Mv)=4n-1$, then $\vv$ is a vertex.}

Now suppose $\rank(\Mv)=4n-1$.
Assume
\[
\vv=\lambda\vvo+(1-\lambda)\vvt,
\qquad
\lambda\in(0,1),
\]
for feasible points $\vvo,\vvt$.

Since $\vv$ is active on $\Mv$, we have
\[
\Mv\vv=\vcv.
\]
Applying $\Mv$ to the convex combination gives
\[
\Mv\vv
=\lambda\Mv\vvo+(1-\lambda)\Mv\vvt.
\]
Because $\vvo$ and $\vvt$ are feasible,
\[
\Mv\vvo\le \vcv,
\qquad
\Mv\vvt\le \vcv.
\]
Since the convex combination equals $\vcv$, it follows that
\[
\Mv\vvo=\vcv
\quad\text{and}\quad
\Mv\vvt=\vcv.
\]
Therefore,
\[
\Mv(\vv-\vvo)=\mathbf{0},
\qquad
\Mv(\vv-\vvt)=\mathbf{0}.
\]

Because $\rank(\Mv)=4n-1=\rank(M)$, and the fact the $\Mv$ is a submatrix of $M$ we have
\[
\ker(\Mv)=\ker(M).
\]
Hence,
\[
\vv-\vvo\in\ker(M),
\qquad
\vv-\vvt\in\ker(M).
\]

By \Cref{def: vertex}, any two feasible points differing by a vector
in $\ker(M)$ represent the same extreme point.
Thus $\vv$ cannot be written as a non-trivial convex combination of two
$M$-distinct feasible points, and therefore $\vv$ is a vertex.
\end{proof}

From now on, we assume that $\vv$ represents a vertex. This means that
$M\vv \le \vc$ and $\Mv\vv = \vcv$, and $\rank(\Mv)=4n-1$ (by \Cref{lem: rank is 4n-1}).

\begin{remark}\label{rem: uniqness of Mv}
    Since $\rank(M)=\rank(\Mv)=4n-1$, the system $\Mv \vv = \vc$ has a unique solution in the affine
    space $\vv + \ker(M)$. Therefore, $\Mv$ uniquely determines the
    $M$-equivalence class $[\vv]$.

    In particular, if two active constraint matrices share the same
    row basis (that is, they consist of the same linearly independent
    rows), then they determine the same solution $\vv$ and thus the same
    $M$-equivalence class $[\vv]$. Consequently, the two active constraint
    matrices must coincide.
\end{remark}

\begin{lemma} \label{lem: first and last n^2 rows}
    If $M(ij00,***)\in \Mv$ then $j=0$. Also if $M(ij11,***) \in \Mv$ then $i=n-1$.
\end{lemma}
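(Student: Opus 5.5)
The argument uses only \Cref{rem: Q structure} and feasibility $M\vv\le\vc$. The rank condition of \Cref{lem: vert iff rank is 4n-1} is not needed, so the claim in fact holds for every feasible $\vv$. The idea is that all rows $M(ij00,***)$ with the same $i$ have the same left-hand side, so only the one with the smallest right-hand side can be tight.

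First I read off the row $M(ij00,***)$ from \Cref{rem: Q structure}. Its only nonzero entries are $1$'s in columns $i00$ and $i01$. The corresponding dual constraint is therefore
\[
v_{i00}+v_{i01}\;\le\; c_{ij,00}=\gamma_j-\gamma_i .
\]
The left-hand side does not depend on $j$. Hence for fixed $i$ the $n$ rows $M(i000,***),\dots,M(i(n-1)00,***)$ constrain the same quantity $v_{i00}+v_{i01}$, with right-hand sides $\gamma_j-\gamma_i$.

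Now suppose $M(ij00,***)\in\Mv$ with $j\neq 0$. Tightness gives $v_{i00}+v_{i01}=\gamma_j-\gamma_i$. Since $\gamma_0<\gamma_j$, this yields $v_{i00}+v_{i01}>\gamma_0-\gamma_i=c_{i0,00}$. That violates the constraint of row $M(i000,***)$, contradicting feasibility. Hence $j=0$.

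The second claim follows by the same reasoning. By \Cref{rem: Q structure}, the row $M(ij11,***)$ has its $1$'s in columns $j10$ and $j11$, giving the constraint $v_{j10}+v_{j11}\le\gamma_j-\gamma_i$. For fixed $j$ the left-hand side is independent of $i$, and the right-hand side is strictly smallest at $i=n-1$ because $\gamma_{n-1}$ is the largest value. If the row were tight for some $i<n-1$, the row $M((n-1)j11,***)$ would be violated. Hence $i=n-1$.

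There is no real obstacle here. The only point needing care is reading the column supports of these rows correctly from \Cref{rem: Q structure}, and using that the $\gamma$'s are strictly ordered so that the inequalities are strict.
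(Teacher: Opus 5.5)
Your proof is correct and is essentially the paper's argument. For fixed $i$ all rows $M(ij00,***)$ are identical, so tightness at some $j>0$ would violate the $j=0$ row; the symmetric argument forces $i=n-1$ for the rows $M(ij11,***)$. Your remark that only feasibility $M\mathbf{v}\le\mathbf{c}$ is used (not the rank condition) is accurate: the paper's ``contradicts the fact that $\mathbf{v}$ is a vertex'' really only invokes feasibility.
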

\begin{proof}
    Suppose that $M(ij00,***)\in \Mv$ for some $j>0$, then
    \[
        M(ij00,***)\vv = \gamma_j-\gamma_i > \gamma_0 - \gamma_i .
    \]
    On the other hand, based on the structure of $M$ we know that
    $M(ij00,***) = M(i000,***)$, so
    \[
        M(i000,***)\vv > \gamma_0 - \gamma_i ,
    \]
    which contradicts the fact that $\vv$ is a vertex. Hence $j=0$.

    Now suppose that $M(ij11,***)\in \Mv$ for some $i<n-1$. Then
    \[
        M(ij11,***)\vv = \gamma_j-\gamma_i < \gamma_j - \gamma_{n-1}.
    \]
    Again, by the structure of $M$ we have
    $M(ij11,***) = M((n-1)j11,***)$, and therefore
    \[
        M((n-1)j11,***)\vv < \gamma_j - \gamma_{n-1},
    \]
    which contradicts the fact that $\vv$ is a vertex. Hence $i=n-1$.
\end{proof}

\begin{lemma}\label{lem: one 1 in each column}
For any vertex $\vv$, every column of $\Mv$ contains at least one entry equal to $1$.
\end{lemma}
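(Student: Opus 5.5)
We argue by contradiction, using the rank characterization of \Cref{lem: vert iff rank is 4n-1}. Suppose some column of $\Mv$, say the one indexed by $ydz$, contains no entry equal to $1$. This means that among the active rows at $\vv$, none involves the dual variable $v_{ydz}$; equivalently, every row of $M$ with a $1$ in column $ydz$ is strictly slack at $\vv$.

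\textbf{Rank bound.} If column $ydz$ of $\Mv$ is identically zero, then $\rank(\Mv)\le$ the rank of $\Mv$ with that column deleted. I claim this is at most $4n-2$. Indeed, deleting one column of $M$ removes the one-dimensional kernel described in \Cref{rem: s}. Since that kernel vector has all coordinates equal to $\pm s\neq 0$, no nonzero kernel vector is supported off column $ydz$. The remaining $4n-1$ columns are therefore linearly independent in $M$, and hence $M$ with column $ydz$ deleted has rank exactly $4n-1$.

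That fact alone does not finish the argument, so I would use a direct perturbation instead. Let $\ve_{ydz}$ be the unit vector in coordinate $ydz$. Then $\Mv \ve_{ydz}=\mathbf 0$, because column $ydz$ of $\Mv$ is zero, so $\ve_{ydz}\in\ker(\Mv)$. But $\ve_{ydz}\notin\ker(M)$, because the kernel vectors of \Cref{rem: s} have every coordinate nonzero (or are zero). Hence $\dim\ker(\Mv)>\dim\ker(M)=1$, so $\rank(\Mv)<4n-1$, which contradicts \Cref{lem: vert iff rank is 4n-1}.

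The same conclusion also follows from the perturbation argument in the ($\Leftarrow$) direction of that lemma. Both $\vv\pm\epsilon \ve_{ydz}$ remain feasible for small $\epsilon$: the active rows do not see coordinate $ydz$, and the inactive rows are strict. Their images under $M$ differ, so $\vv$ is not a vertex.

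\textbf{Main obstacle.} The only point needing care is that the column $ydz$ of $M$ is nonzero at all, so that $\ve_{ydz}\notin\ker(M)$. This is immediate from \Cref{rem: Q structure}, since each column $ydz$ receives $1$'s from $n$ rows (for instance $M(yj00,y00)$). The remainder is routine.
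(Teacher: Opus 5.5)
Your proof is correct and is essentially the paper's argument: the unit vector $\mathbf{e}_{ydz}$ lies in $\ker(\Mv)$ but not in the one-dimensional $\ker(M)$, so $\rank(\Mv)<4n-1$, contradicting \Cref{lem: vert iff rank is 4n-1}. Your direct perturbation of $\vv$ by $\pm\epsilon\mathbf{e}_{ydz}$ is just that lemma's proof written out, and the opening ``Rank bound'' paragraph is a false start that can be dropped: it shows $M$ with one column deleted has rank $4n-1$, which says nothing about $\Mv$.
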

\begin{proof}
We know that the null space of $\Mv$ is one-dimensional and is spanned by some non-zero vector $\vs$ (see \Cref{rem: s}).
Suppose, by contradiction, that there exists a column indexed by $(y,d,z)$ in $\Mv$ that contains no entries equal to $1$.
Then the corresponding standard basis vector ${\ve}_{ydz}$ satisfies
\[
\Mv \ve_{ydz} = \mathbf{0}.
\]
Hence $\ve_{ydz}$ lies in the null space of $\Mv$.

Since $\ve_{ydz}$ is linearly independent from $\vs$, the null space of $\Mv$ has dimension at least $2$.
This contradicts the fact that $\vv$ is a vertex based on \Cref{lem: vert iff rank is 4n-1}. Hence every column of $\Mv$ must contain at least one entry equal to $1$.
\end{proof}

\subsection{Signature of a Vertex}\label{app: PI b properties}
In this section, we define the signature of a vertex and prove
its fundamental properties. These results will help us show that the signature is
unique for each vertex and that it can be used to reconstruct the
vertex.

\begin{definition}[Signature of a Point]\label{def: Bx}
We define signature $\vBu \in \{0,1\}^{4n}$ for a point $\vu$ as the column-wise logical OR of the rows of
$\Mu(**01\cup **10,***)$.
That is, for each index $(y,d,z)$, the entry $\Bu_{ydz}$ equals $1$ if there exists at least one entry equal to $1$ in the column $\Mu(**01\cup**10,ydz)$, and equals $0$ otherwise.
\end{definition}

\begin{lemma}\label{lem: first and last n rows based on B}
For any $y,z$, if $\Bv_{y0z}=0$, then $M(y000,***)\in \Mv$.
Similarly, if $\Bv_{y1z}=0$, then $M((n-1)y11,***)\in \Mv$.
\end{lemma}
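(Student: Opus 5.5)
The approach uses \Cref{lem: one 1 in each column}: at a vertex $\vv$, every column of $\Mv$ contains at least one $1$. I give the argument for the first claim and then treat the second the same way.

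\textbf{First claim.} Fix $y,z$ with $\Bv_{y0z}=0$. By \Cref{def: Bx}, no active row of type $**01$ or $**10$ has a $1$ in column $y0z$. Since column $y0z$ still has some active row with a $1$, that row must be of type $**00$ or $**11$.

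By \Cref{rem: Q structure}, rows $ij11$ have their $1$'s only in columns $j10$ and $j11$, which carry $d=1$. So they never touch column $y0z$, and the active row is some $ij00$. Rows $ij00$ put their $1$'s in columns $i00$ and $i01$, so $i=y$ and the row is $M(yj00,***)$. By \Cref{lem: first and last n^2 rows}, $j=0$, hence $M(y000,***)\in\Mv$. This works for both $z=0$ and $z=1$, since $y000$ covers columns $y00$ and $y01$.

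\textbf{Second claim.} Suppose $\Bv_{y1z}=0$. Column $y1z$ again needs an active row with a $1$, and that row is not of type $01$ or $10$. Rows $ij00$ touch only $d=0$ columns, so the active row is some $ij11$. Its $1$'s are in columns $j10$ and $j11$, so $j=y$ and the row is $M(iy11,***)$. By \Cref{lem: first and last n^2 rows}, $i=n-1$, which gives $M((n-1)y11,***)\in\Mv$.

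\textbf{Where care is needed.} The step to check is that \Cref{lem: one 1 in each column} really applies. That lemma relies on the vertex having $\rank(\Mv)=4n-1$ with a one-dimensional null space, which \Cref{lem: vert iff rank is 4n-1} supplies, so the standing assumption that $\vv$ is a vertex is essential. The rest is bookkeeping of which columns each row type can hit.
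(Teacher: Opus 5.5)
Your proposal is correct and follows the paper's proof essentially step for step. Both arguments invoke \Cref{lem: one 1 in each column} to guarantee an active row with a $1$ in column $ydz$, use the definition of the signature to exclude rows of type $**01$ and $**10$, and read off from \Cref{rem: Q structure} that the row must be $M(yj00,***)$ (resp.\ $M(iy11,***)$). Both then apply \Cref{lem: first and last n^2 rows} to force $j=0$ (resp.\ $i=n-1$); your remark that $**11$ rows never meet a $d=0$ column, and $**00$ rows never meet a $d=1$ column, only spells out what the paper leaves implicit.
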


\begin{proof}
By \Cref{lem: one 1 in each column}, every column of $\Mv$ must contain at least one entry equal to $1$.
Fix indices $y,z$ and suppose that $\Bv_{ydz}=0$.
By definition of $\vBv$, this means that there is no entry equal to $1$ in $\Mv(**01\cup**10,ydz)$.
Consequently, the required $1$ in column $\Mv(****,ydz)$ must lie in $\Mv(**00\cup **11,ydz)$.

First, assume that $d=0$.
By the structure of $M$ (see \Cref{rem: Q structure}), the only rows of $\Mv(**00\cup **11,y0z)$ that can contain a $1$ are the rows $M(yj00,***)$.
Hence, at least one such row must belong to $\Mv$.
By \Cref{lem: first and last n^2 rows}, this is only possible when $j=0$, and therefore $M(y000,***)\in \Mv$.

The case $d=1$ follows by a similar argument: if $\Bv_{y1z}=0$, then the $1$ in column $(y1z)$ must lie in $\Mv(**11,***)$, and by the structure of $M$ the only possible rows are $M(ij11,***)$.
Applying \Cref{lem: first and last n^2 rows} yields $i=n-1$, and hence $M((n-1)y11,***)\in \Mv$.
\end{proof}

\begin{lemma}\label{lem: full 1 columns}
 For any $i,j\in[n]$, $\Bv_{i00}=\Bv_{j11}=1$ if and only if $M(ij01,***)\in \Mv$.
Similarly,  for any $i,j\in[n]$, $\Bv_{i01}=\Bv_{j10}=1$ if and only if $M(ij10,***)\in \Mv$.
\end{lemma}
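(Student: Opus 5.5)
The plan is a direct exchange argument built on the sparsity pattern of $M$ in \Cref{rem: Q structure}, together with the fact that $c_{ij,\mathbf{d}}=\gamma_j-\gamma_i$ does not depend on $\mathbf{d}$. I treat the first equivalence (rows $**01$) in detail; the second (rows $**10$) is identical after relabeling columns.

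\emph{The ($\Leftarrow$) direction.} Suppose $M(ij01,***)\in\Mv$. By \Cref{rem: Q structure}, this row has its two ones exactly in columns $i00$ and $j11$. It is one of the rows of $\Mv(**01\cup**10,***)$ over which the OR in \Cref{def: Bx} is taken. Hence $\Bv_{i00}=\Bv_{j11}=1$.

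\emph{The ($\Rightarrow$) direction: locating the witnessing rows.} Assume $\Bv_{i00}=\Bv_{j11}=1$. By \Cref{rem: Q structure}, every row of type $**10$ has its ones in columns of the form $*01$ and $*10$. So the only rows in $M(**01\cup**10,***)$ with a one in column $i00$ are the rows $ik01$, $k\in[n]$. Hence $\Bv_{i00}=1$ gives some $k$ with $M(ik01,***)\in\Mv$, that is,
\[
v_{i00}+v_{k11}=\gamma_k-\gamma_i .
\]
Likewise, the only rows with a one in column $j11$ are the rows $mj01$, so $\Bv_{j11}=1$ gives some $m$ with
\[
v_{m00}+v_{j11}=\gamma_j-\gamma_m .
\]

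\emph{The ($\Rightarrow$) direction: the exchange.} Feasibility $M\vv\le\vc$ applied to the rows $ij01$ and $mk01$ gives
\[
v_{i00}+v_{j11}\le\gamma_j-\gamma_i,
\qquad
v_{m00}+v_{k11}\le\gamma_k-\gamma_m .
\]
Adding the two equalities and adding the two inequalities produces the same left-hand side, $v_{i00}+v_{j11}+v_{m00}+v_{k11}$, and the same right-hand side, $\gamma_k+\gamma_j-\gamma_i-\gamma_m$. So the sum of the two inequalities holds with equality, and therefore each of them holds with equality. In particular the row $ij01$ is tight, i.e.\ $M(ij01,***)\in\Mv$.

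\emph{Second statement.} The same steps apply to the rows $**10$. The row $ij10$ has its ones in columns $i01$ and $j10$, and the only rows among $**01\cup**10$ hitting column $i01$ (resp.\ $j10$) are the rows $ik10$ (resp.\ $mj10$). The exchange then uses the rows $ij10$ and $mk10$, whose right-hand sides are again $\gamma_j-\gamma_i$ and $\gamma_k-\gamma_m$.

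The only step needing care is checking from \Cref{rem: Q structure} that no $**10$ row hits the columns $*00$ or $*11$, and no $**01$ row hits $*01$ or $*10$. Once that is confirmed, the exchange argument is immediate. The lemma uses only feasibility of $\vv$, not that $\vv$ is a vertex.
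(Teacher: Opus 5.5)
Your proof is correct and is essentially the paper's argument. Both pick the witnessing tight rows $ik01$ and $mj01$ and use the exchange identity $M(ik01,***)+M(mj01,***)=M(ij01,***)+M(mk01,***)$, together with the fact that the right-hand sides $\gamma_j-\gamma_i$ do not depend on $\mathbf{d}$. The only difference is that you argue directly (the two summed inequalities are tight, so each one is) rather than by contradiction, which also handles the degenerate cases $k=j$ or $m=i$ without needing the paper's $a\neq i$, $b\neq j$ remark.
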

\begin{proof}
    The “$\Leftarrow$” direction in both statements follows immediately from
\Cref{def: Bx}, since the presence of a row in $\Mv$ forces a $1$ in the
corresponding columns.

    We prove the “$\Rightarrow$” directions.
    
\medskip
\noindent\textbf{First case.}
Assume $\Bv_{i00}=\Bv_{j11}=1$.
By \Cref{def: Bx} and the fact that $\Mv(**10,*00\cup*11)$ contains no entries equal to $1$,
there must exist at least one row in $\Mv(**01,***)$ containing required $1$ entries.

Suppose, by contradiction, that $M(ij01,***)\notin \Mv$.
Then there exist indices $a\neq i$ and $b\neq j$ such that
\[
M(aj01,***),\; M(ib01,***) \in \Mv.
\]
(If no such $a$ existed, $M(**01,j11)$ would contain no $1$(see \Cref{rem: Q structure}), contradicting
$\Bv_{j11}=1$; the same argument applies to $b$).

Since these rows belong to $\Mv$, we have
\[
M(aj01,***)\vv = \gamma_j - \gamma_a,
\qquad
M(ib01,***)\vv = \gamma_b - \gamma_i,
\]
and therefore
\[
\bigl(M(aj01,***) + M(ib01,***)\bigr)\vv
= \gamma_j - \gamma_i + \gamma_b - \gamma_a.
\]

On the other hand, since $M(ij01,***)\notin \Mv$, we obtain
\[
M(ij01,***)\vv < \gamma_j - \gamma_i,
\qquad
M(ab01,***)\vv \le \gamma_b - \gamma_a,
\]
and therefore,
\[
\bigl(M(ij01,***) + M(ab01,***)\bigr)\vv
< \gamma_j - \gamma_i + \gamma_b - \gamma_a = \bigl(M(aj01,***) + M(ib01,***)\bigr)\vv.
\]

By the structure of $M$ (see \Cref{rem: Q structure}),
\[
M(aj01,***) + M(ib01,***) = M(ij01,***) + M(ab01,***),
\]
which yields a contradiction.

\medskip
\noindent\textbf{Second case.}
Assume $\Bv_{i01}=\Bv_{j10}=1$ and suppose $M(ij10,***)\notin \Mv$.
Since $\Mv(**01,*01\cup *10)$ contains no $1$’s, the required $1$’s must lie in $\Mv(**10,***)$.
Thus, there exist indices $a\neq i$ and $b\neq j$ such that
\[
M(aj10,***),\; M(ib10,***) \in \Mv.
\]

As before, using the structure of $M$ we have
\[
M(aj10,***) + M(ib10,***) = M(ij10,***) + M(ab10,***),
\]
and evaluating at $\vv$ gives
\begin{align*}
\gamma_j - \gamma_i + \gamma_b - \gamma_a
&= (M(aj10,***) + M(ib10,***))\vv
\\&= (M(ij10,***) + M(ab10,***))\vv
\\&< \gamma_j - \gamma_i + \gamma_b - \gamma_a,
\end{align*}
a contradiction.

Therefore, $M(ij10,***)\in \Mv$.
\end{proof}

\begin{lemma}\label{lem: no double zeros in respective columns}
For any $i\in[n]$, we have
\[
\Bv_{i00}+\Bv_{i01}>0
\quad\text{and}\quad
\Bv_{i10}+\Bv_{i11}>0.
\]
Equivalently, at least one of $\Bv_{i00},\Bv_{i01}$ and at least one of
$\Bv_{i10},\Bv_{i11}$ must be equal to $1$.
\end{lemma}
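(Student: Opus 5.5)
The plan is to argue by contradiction, using the rank characterization of vertices in \Cref{lem: vert iff rank is 4n-1}. If $\Bv_{i00}=\Bv_{i01}=0$ for some $i$, I will exhibit a nonzero vector $\vt\notin\ker(M)$ with $\Mv\vt=\mathbf{0}$. Then $\dim\ker(\Mv)\ge 2$, so $\rank(\Mv)<4n-1$, contradicting that $\vv$ is a vertex. The second inequality is handled by the symmetric argument.

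\textbf{Case $d=0$.} Suppose $\Bv_{i00}=\Bv_{i01}=0$. I first list, using \Cref{rem: Q structure}, the rows of $M$ that have a $1$ in column $i00$ or column $i01$.
\begin{itemize}
\item Column $i00$ has a $1$ only in rows $ij00$ and $ij01$, for $j\in[n]$.
\item Column $i01$ has a $1$ only in rows $ij00$ and $ij10$, for $j\in[n]$.
\end{itemize}
By \Cref{def: Bx}, $\Bv_{i00}=0$ means no row $ij01$ is active, and $\Bv_{i01}=0$ means no row $ij10$ is active. So the only active rows meeting these two columns are of the form $ij00$. By \Cref{lem: first and last n^2 rows}, such a row is active only when $j=0$. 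By \Cref{lem: first and last n rows based on B}, the row $i000$ is indeed active.

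Hence, within $\Mv$, columns $i00$ and $i01$ are supported on exactly one common row, namely $M(i000,***)$, which has a $1$ in both columns.

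\textbf{The kernel vector.} Set $\vt=\ve_{i00}-\ve_{i01}$. For the active row $i000$, $\Mv\vt$ gives $1-1=0$. Every other active row has zeros in both columns, so it also gives $0$. Thus $\Mv\vt=\mathbf{0}$.

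By \Cref{rem: s}, every nonzero vector in $\ker(M)$ has all $4n$ coordinates equal to $\pm s$ with $s\neq0$. The vector $\vt$ has zero coordinates, for example at $i10$, so $\vt\notin\ker(M)$. Together with the kernel generator $\vs$, this gives $\dim\ker(\Mv)\ge2$, which is the desired contradiction.

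\textbf{Case $d=1$.} Suppose $\Bv_{i10}=\Bv_{i11}=0$. Again by \Cref{rem: Q structure}:
\begin{itemize}
\item Column $i10$ has a $1$ only in rows $ji10$ and $ji11$.
\item Column $i11$ has a $1$ only in rows $ji01$ and $ji11$.
\end{itemize}
The rows of types $01$ and $10$ are excluded by $\Bv=0$. By \Cref{lem: first and last n^2 rows}, a row $ji11$ is active only for $j=n-1$. By \Cref{lem: first and last n rows based on B}, the row $(n-1)i11$ is active.

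Taking $\vt=\ve_{i10}-\ve_{i11}$ gives $\Mv\vt=\mathbf{0}$ and $\vt\notin\ker(M)$, and the same contradiction follows.

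\textbf{Main obstacle.} The delicate step is the bookkeeping that checks every row with a $1$ in the two columns against the three earlier lemmas, so that the support really reduces to a single row. The rest is immediate from the one-dimensionality of $\ker(M)$.
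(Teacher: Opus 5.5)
Your proof is correct and is essentially the paper's argument. Assuming both signature entries vanish, the only active rows touching the two columns are of type $ij00$ (resp.\ $ji11$), and each has a $1$ in both columns. Hence $\ve_{i00}-\ve_{i01}$ (resp.\ $\ve_{i10}-\ve_{i11}$) is a second kernel direction of $\Mv$, contradicting $\rank(\Mv)=4n-1$; your explicit check via \Cref{rem: s} that this vector lies outside $\ker(M)$, and your written-out $d=1$ case, fill in details the paper leaves implicit.
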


\begin{proof}
We prove the first statement; the second follows by an analogous argument.

Suppose, by contradiction, that $\Bv_{i00}=\Bv_{i01}=0$.
By \Cref{lem: first and last n rows based on B}, this implies that
$M(i000,***)\in\Mv$.

Moreover, $M(i000,***)$ is the only row of $\Mv$ that can contain a $1$ in
either column $\Mv(****,i00)$ or $\Mv(****,i01)$.
Indeed, since $\Bv_{i00}=\Bv_{i01}=0$, no row of $\Mv(**01\cup **10,***)$ contains a $1$
in these columns.
By the structure of $M$, the only remaining cells of $M$ that contain a $1$
in columns $\Mv(****,i00)$ or $\Mv(****,i01)$ are $M(**00,i*00\cup i*01)$.
By \Cref{lem: first and last n^2 rows}, among these rows only $M(i000,***)$ can
belong to $\Mv$.

Now consider the vector
\[
\mathbf{t} = \ve_{i00} - \ve_{i01}.
\]
Since the only row of $\Mv$ with nonzero entries in columns $\Mv(****,i00)$ or
$\Mv(****,i01)$ is $M(i000,***)$, and this row contains a $1$ in both columns,
we have $\Mv \mathbf{t} = \mathbf{0}$.
Thus, $\mathbf{t}$ lies in the null space of $\Mv$.

Together with the vector $\vs$, which spans the null space of $\Mv$ (see \cref{rem: s}), this implies that the null space of $\Mv$ is at least two-dimensional

Consequently, $\Mv$ cannot have rank $4n-1$, contradicting the assumption
that $\vv$ is a vertex.

Therefore, $\Bv_{i00}+\Bv_{i01}>0$.

The proof for $\Bv_{i10}+\Bv_{i11}>0$ is identical.
\end{proof}

\begin{lemma}\label{lem: the only possible equility of duals}
We have
\[
\Bv_{i00}=\Bv_{i01}=\Bv_{j10}=\Bv_{j11}=1
\]
if and only if $i=n-1$, $j=0$, and
\[
M((n-1)000,***),\; M((n-1)011,***) \in \Mv.
\]
\end{lemma}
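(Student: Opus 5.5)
The plan is a short direct argument on the linear values of the rows of $\Mv$ at $\vv$. The forward direction combines \Cref{lem: full 1 columns} with feasibility $M\vv\le\vc$ and the ordering $\gamma_0<\cdots<\gamma_{n-1}$. The converse is a two-row "exchange" identity of the same kind as in the proof of \Cref{lem: full 1 columns}. Throughout, I use \Cref{rem: Q structure} to write each row's value as a sum of two dual coordinates:
\begin{itemize}
\item $M(ab00,***)\vv=v_{a00}+v_{a01}$;
\item $M(ab01,***)\vv=v_{a00}+v_{b11}$;
\item $M(ab10,***)\vv=v_{a01}+v_{b10}$;
\item $M(ab11,***)\vv=v_{b10}+v_{b11}$.
\end{itemize}
The corresponding entry of $\vc$ is $\gamma_b-\gamma_a$.

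\textbf{($\Rightarrow$).} Assume $\Bv_{i00}=\Bv_{i01}=\Bv_{j10}=\Bv_{j11}=1$. By \Cref{lem: full 1 columns}, both $M(ij01,***)$ and $M(ij10,***)$ lie in $\Mv$. Summing the two tight equalities gives
\[
(v_{i00}+v_{i01})+(v_{j10}+v_{j11})=2(\gamma_j-\gamma_i).
\]
Feasibility bounds each bracket separately. The row $M(i000,***)$ gives $v_{i00}+v_{i01}\le\gamma_0-\gamma_i$, and the row $M((n-1)j11,***)$ gives $v_{j10}+v_{j11}\le\gamma_j-\gamma_{n-1}$. Substituting and rearranging yields
\[
(\gamma_j-\gamma_0)+(\gamma_{n-1}-\gamma_i)\le 0.
\]
Both terms are nonnegative, so both vanish, and strict monotonicity of the $\gamma$'s forces $j=0$ and $i=n-1$. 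Moreover, both feasibility bounds must then hold with equality. Hence $M((n-1)000,***)$ and $M((n-1)011,***)$ are active, i.e.\ belong to $\Mv$.

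\textbf{($\Leftarrow$).} Assume $i=n-1$, $j=0$, and that $M((n-1)000,***)$ and $M((n-1)011,***)$ are in $\Mv$. Then
\[
v_{(n-1)00}+v_{(n-1)01}=\gamma_0-\gamma_{n-1},\qquad v_{010}+v_{011}=\gamma_0-\gamma_{n-1}.
\]
By \Cref{rem: Q structure}, $M((n-1)000,***)+M((n-1)011,***)=M((n-1)001,***)+M((n-1)010,***)$. So the rows $(n-1)001$ and $(n-1)010$, which both have right-hand side $\gamma_0-\gamma_{n-1}$, evaluate at $\vv$ to a total of $2(\gamma_0-\gamma_{n-1})$. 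Each is individually at most $\gamma_0-\gamma_{n-1}$, so both are tight and both rows belong to $\Mv$. They lie in $\Mv(**01\cup**10,***)$ and have ones in columns $(n-1)00,\,011$ and $(n-1)01,\,010$ respectively. By \Cref{def: Bx}, this gives $\Bv_{(n-1)00}=\Bv_{(n-1)01}=\Bv_{010}=\Bv_{011}=1$.

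There is no real obstacle here. The one step needing care is identifying which feasibility rows give the sharpest bounds on $v_{i00}+v_{i01}$ and $v_{j10}+v_{j11}$: the row with $j'=0$ in the $00$ block, and the row with $i'=n-1$ in the $11$ block, mirroring \Cref{lem: first and last n^2 rows}. With those chosen, the inequality chain closes.
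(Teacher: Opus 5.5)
Your proof is correct and follows the paper's argument step for step. In the forward direction, both use \Cref{lem: full 1 columns} to make the rows $ij01$ and $ij10$ active, then bound their sum by the rows $i000$ and $(n-1)j11$; the paper phrases this via the identity $M(ij01,***)+M(ij10,***)=M(i000,***)+M((n-1)j11,***)$. In the converse, both use the same exchange identity to force the rows $(n-1)001$ and $(n-1)010$ to be tight. Your rewriting of the key inequality as $(\gamma_j-\gamma_0)+(\gamma_{n-1}-\gamma_i)\le 0$ concludes $j=0$, $i=n-1$ more cleanly than the paper's wording, and you also make explicit why both bounding rows must then be active.
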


\begin{proof}
We first prove the “$\Rightarrow$” direction.

Suppose 
\[
\Bv_{i00}=\Bv_{i01}=\Bv_{j10}=\Bv_{j11}=1.
\]
By \Cref{lem: full 1 columns}, this implies that
\[
M(ij01,***),\; M(ij10,***) \in \Mv.
\]
Hence
\[
M(ij01,***)\vv = \gamma_j - \gamma_i,
\qquad
M(ij10,***)\vv = \gamma_j - \gamma_i,
\]
and
\[
(M(ij01,***) + M(ij10,***))\vv = 2(\gamma_j - \gamma_i).
\]

By the structure of $M$ (see \Cref{rem: Q structure}), we have 
\[
M(ij01,***) + M(ij10,***) = M(i000,***) + M((n-1)j11,***).
\]
Therefore,
\[
2(\gamma_j - \gamma_i)
= (M(i000,***) + M((n-1)j11,***))\vv
\le (\gamma_0 - \gamma_i) + (\gamma_j - \gamma_{n-1}).
\]

Rearranging yields
\[
\gamma_j - \gamma_i \le \gamma_0 - \gamma_{n-1}.
\]
Since $\gamma_0 - \gamma_{n-1}$ is the maximum possible difference,
equality can occur only if $j=0$ and $i=n-1$.
In this case, equality in the above inequality forces
\[
M((n-1)000,***),\; M((n-1)011,***) \in \Mv,
\]
because
\[
(M((n-1)000,***) + M((n-1)011,***))\vv
= 2(\gamma_0 - \gamma_{n-1}).
\]

\medskip
We now prove the “$\Leftarrow$” direction.

Suppose $i=n-1$, $j=0$, and
\[
M((n-1)000,***),\; M((n-1)011,***) \in \Mv.
\]
By the structure of $M$, we have
\[
M((n-1)001,***) + M((n-1)010,***)
= M((n-1)000,***) + M((n-1)011,***).
\]
Evaluating at $\vv$ gives
\[
(M((n-1)001,***) + M((n-1)010,***))\vv
= 2(\gamma_0 - \gamma_{n-1}).
\]

Since each of
$M((n-1)001,***)\vv, M((n-1)010,***)\vv$
is bounded above by $\gamma_0 - \gamma_{n-1}$,
both must attain this value.
Thus,
\[
M((n-1)001,***),\; M((n-1)010,***) \in \Mv,
\]
which implies
\[
\Bv_{(n-1)00}=\Bv_{(n-1)01}=\Bv_{010}=\Bv_{011}=1.
\]
\end{proof}
\begin{remark}\label{rem: only possible equality of duals}
If $\Bv_{i00}=\Bv_{i01}$ for some $i<n-1$, then for all $j\in [n]$ we have $\Bv_{j10}\ne\Bv_{j11}$.
Similarly, if $\Bv_{j10}=\Bv_{j11}$ for some $j>0$, then for all $i\in [n]$ we have $\Bv_{i00}\ne\Bv_{i01}$.
\end{remark}
\begin{proof}
    This follows directly from \Cref{lem: the only possible equility of duals}, together with \Cref{lem: one 1 in each column}.
\end{proof}

\begin{lemma} \label{lem: at least one intersection}
    There exists at least one $i\in[n]$ and $d\in \{0,1\}$ such that 
    \[
    \Bv_{id0}=\Bv_{id1}=1.
    \]
\end{lemma}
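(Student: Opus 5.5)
Assume, for contradiction, that for every $y\in[n]$ and $d\in\{0,1\}$ we have $\Bv_{yd0}+\Bv_{yd1}\le 1$. Combined with \Cref{lem: no double zeros in respective columns}, this means exactly one of $\Bv_{yd0},\Bv_{yd1}$ equals $1$ for every pair $(y,d)$. We will exhibit a nonzero vector in $\ker(\Mv)$ that is not in $\ker(M)$. Then $\rank(\Mv)<4n-1$, which contradicts \Cref{lem: vert iff rank is 4n-1}.

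The candidate is the \emph{partial} kernel vector $\vt$ defined by
\[
t_{yd0}=\Bv_{yd0},\qquad t_{yd1}=-\Bv_{yd1},
\]
so each $(y,d)$ gets $+1$ if its $1$ sits at $z=0$ and $-1$ if it sits at $z=1$. Note that $\vt$ is not a multiple of $\vs$ from \Cref{rem: s}. Indeed, $\vs$ has nonzero entries in all $4n$ coordinates, whereas $\vt$ is zero in exactly $2n$ of them. We then verify $\Mv\vt=\mathbf{0}$ row by row, using \Cref{rem: Q structure} for the column pair of each row.

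The rows of type $ij01$ and $ij10$ are handled by \Cref{lem: full 1 columns}. A row $M(ij01,***)$ lies in $\Mv$ only if $\Bv_{i00}=\Bv_{j11}=1$, and its contribution is then $t_{i00}+t_{j11}=1-1=0$. Likewise a row $M(ij10,***)$ in $\Mv$ has $\Bv_{i01}=\Bv_{j10}=1$, giving $t_{i01}+t_{j10}=-1+1=0$.

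The rows of type $**00$ and $**11$ are the main obstacle. By \Cref{lem: first and last n^2 rows}, the only such rows that can be active are $M(i000,***)$ and $M((n-1)j11,***)$. Row $i000$ touches columns $i00$ and $i01$, so its product is $\Bv_{i00}-\Bv_{i01}$. Since exactly one of these is $1$, the product is $\pm1$, not $0$. So this row must be excluded, or the sign choice must be adjusted.

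I would first try to rule these rows out. Suppose $M(i000,***)\in\Mv$ and, say, $\Bv_{i00}=1$. Then some active row $M(ij01,***)$ exists. Adding the tight constraints, the identity
\[
M(i000)+M((n-1)j11)=M(ij01)+M((n-1)010)
\]
(with the analogous variants) should force further tight rows. Alternatively, an argument in the spirit of \Cref{lem: the only possible equility of duals} should show that this makes two entries of some pair equal to $1$, contradicting the assumption.

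If this exclusion fails, the fallback is to flip sign on a connected component. Consider the bipartite "active graph" on the columns, with an edge for each active row, and suppose no pair $(y,d)$ has both $z$-entries equal to $1$. I would show that the active graph is then disconnected. The graph must be connected for the kernel to be one-dimensional. The argument would be that the edges $ij01$, $ij10$ never link $z=0$ with $z=1$ within the same $(y,d)$, and that a single $i000$ or $(n-1)j11$ edge cannot reconnect everything. Setting $\vt=\vs$ on one component and $0$ elsewhere then yields the second kernel vector. I expect this connectivity count to be the delicate step.
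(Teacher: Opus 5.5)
Your proposal has a genuine gap: you are trying to derive a rank contradiction, but this contradiction is not a rank phenomenon, so neither of your two routes can close.

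\textbf{Why both routes fail.} First, under your hypothesis every pair $(\Bv_{i00},\Bv_{i01})$ and $(\Bv_{j10},\Bv_{j11})$ contains a zero. So \Cref{lem: first and last n rows based on B} forces $M(i000,***)\in\Mv$ for \emph{every} $i$, and $M((n-1)j11,***)\in\Mv$ for every $j$. ``Ruling these rows out'' is therefore equivalent to proving the lemma itself.

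Second, the disconnectedness claim is false in general. Write $P=\{i:\Bv_{i00}=1\}$ and $R=\{j:\Bv_{j11}=1\}$. Then \Cref{lem: full 1 columns}, together with \Cref{lem: first and last n^2 rows} and \Cref{lem: first and last n rows based on B}, pins down the active set completely:
\begin{itemize}
\item $ij01$ for $i\in P$, $j\in R$;
\item $ij10$ for $i\notin P$, $j\notin R$;
\item all rows $i000$ and all rows $(n-1)j11$.
\end{itemize}
Take $P=R=[n]$ (or $P=R=\varnothing$). The active graph is then a $K_{n,n}$ on $\{i00\}\cup\{j11\}$ (resp.\ $\{i01\}\cup\{j10\}$), with a pendant edge to every remaining column. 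It is connected, so $\rank(\Mv)=4n-1$ and no second kernel vector exists. These configurations can only be excluded using the cost vector $\vc$, not the combinatorics of $M$. In all other configurations your component argument does work, but it is not needed.

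\textbf{The missing idea: the extremal pair $(n-1,0)$.} Both $M((n-1)000,***)$ and $M((n-1)011,***)$ are active. Use the identity
\[
M((n-1)000,***)+M((n-1)011,***)=M((n-1)001,***)+M((n-1)010,***).
\]
This is the only case in which your displayed identity is correct; in general it should read $M(i000,***)+M((n-1)j11,***)=M(ij01,***)+M(ij10,***)$.

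At $\vv$, the left side equals $2(\gamma_0-\gamma_{n-1})$. Each row on the right is bounded above by $\gamma_0-\gamma_{n-1}$, so both must be tight. Tightness gives $\Bv_{(n-1)00}=\Bv_{(n-1)01}=1$ (and $\Bv_{010}=\Bv_{011}=1$), which contradicts your hypothesis. This is the ``$\Leftarrow$'' direction of \Cref{lem: the only possible equility of duals}, and it is the paper's entire proof. It covers every configuration at once, with no kernel vector required.
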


\begin{proof}
Suppose, by contradiction, that for every $i\in[n]$ and every $d\in\{0,1\}$,
we have
\[
\Bv_{id0}\ne\Bv_{id1}.
\]
In particular, for every $i$,
\[
\Bv_{i00}+\Bv_{i01}=1
\quad\text{and}\quad
\Bv_{i10}+\Bv_{i11}=1,
\]
by \Cref{lem: no double zeros in respective columns}.

For every $i$,
since exactly one of $\Bv_{i00},\Bv_{i01}$ is equal to $0$,
\Cref{lem: first and last n rows based on B} implies that
\[
M(i000,***)\in \Mv,
\]
and, in particular for $i=n-1$,
\begin{equation}
    \label{eq:plem13a}
    M((n-1)000,***)\in \Mv.
\end{equation}
Similarly, since exactly one of $\Bv_{i10},\Bv_{i11}$ is equal to $0$,
again by \Cref{lem: first and last n rows based on B} we obtain
\[
M((n-1)i11,***)\in \Mv,
\]
and, for the choice of $i=0$,
\begin{equation}
    \label{eq:plem13b}
    M((n-1)011,***)\in \Mv.
\end{equation}
Equations \eqref{eq:plem13a} and \eqref{eq:plem13b} together with \Cref{lem: the only possible equility of duals} imply
\[
\Bv_{(n-1)00}=\Bv_{(n-1)01}=\Bv_{010}=\Bv_{011}=1,
\]
which contradicts the assumption that
$\Bv_{id0}+\Bv_{id1}=1$ for all $i,d$.

Therefore, there must exist at least one $i\in[n]$ and $d\in\{0,1\}$ such that
\[
\Bv_{id0}=\Bv_{id1}=1.
\]
\end{proof}

\begin{lemma}\label{lem: rank of Q(01|10) part}
Let $a_{dz}$ denote the number of indices $i$ such that $\Bv_{idz}=1$.
Then
\[
\rank(\Mv(**01,***)) \le a_{00}+a_{11}-1
\quad\text{and}\quad
\rank(\Mv(**10,***)) \le a_{01}+a_{10}-1.
\]
\end{lemma}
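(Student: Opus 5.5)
The plan is to view the rows of $\Mv(**01,***)$ as edges of a bipartite graph and bound the rank by the number of vertices minus one. By \Cref{rem: Q structure}, the row $M(ij01,***)$ has exactly two nonzero entries, a $1$ in column $i00$ and a $1$ in column $j11$. Every row of $\Mv(**01,***)$ is therefore supported on the columns
\[
C_{01}:=\{i00 : \Bv_{i00}=1\}\cup\{j11 : \Bv_{j11}=1\}.
\]
The inclusion holds by \Cref{def: Bx}: if $M(ij01,***)\in\Mv$, then the columns $i00$ and $j11$ contain a $1$ in $\Mv(**01\cup**10,***)$, so $\Bv_{i00}=\Bv_{j11}=1$. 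Since rows of type $**10$ only touch columns $*01$ and $*10$, the indicator can be attributed to the right block, though only the inclusion matters for the rank bound. Hence $\Mv(**01,***)$, restricted to its nonzero columns, has at most $|C_{01}|=a_{00}+a_{11}$ columns.

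This gives only $\rank\le a_{00}+a_{11}$. To get the extra $-1$, I will exhibit a nonzero vector in the kernel of $\Mv(**01,***)$ that is supported on $C_{01}$. The natural candidate is
\[
\vt=\sum_{i:\Bv_{i00}=1}\ve_{i00}-\sum_{j:\Bv_{j11}=1}\ve_{j11}.
\]
Each row $M(ij01,***)$ in $\Mv$ pairs one column $i00$ with one column $j11$, both of which lie in $C_{01}$, so its inner product with $\vt$ is $1-1=0$. Equivalently, the rows are edges of a bipartite graph whose two sides have sizes $a_{00}$ and $a_{11}$, and the incidence-type matrix with signs $(+,+)$ has this sign-alternating vector in its kernel.

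If $C_{01}$ is nonempty, then $\vt\neq 0$. The rank of the submatrix on these columns is then at most $|C_{01}|-1=a_{00}+a_{11}-1$, and rank is unaffected by the columns that are entirely zero. If $C_{01}$ is empty, the block has no rows and its rank is $0$. In that case the inequality $0\le -1$ would fail, so the degenerate case needs separate treatment. I would argue that the statement is meant for the nonempty case, or note that when $a_{00}+a_{11}\ge 1$ there must actually be rows present. The right reading is presumably that the bound is used only when some $01$-row is active; I would state that convention explicitly.

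The $**10$ block is handled identically. \Cref{rem: Q structure} gives $M(ij10,***)$ with $1$'s in columns $i01$ and $j10$. The kernel vector is
\[
\sum_{i:\Bv_{i01}=1}\ve_{i01}-\sum_{j:\Bv_{j10}=1}\ve_{j10},
\]
which gives $\rank\le a_{01}+a_{10}-1$. The only delicate point is this empty-block edge case; the rest is routine.
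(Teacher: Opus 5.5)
Your proof is correct and is essentially the paper's argument: the paper also notes that $\Mv(**01,***)$ is supported on at most $a_{00}+a_{11}$ columns, and that the sum of its $(i00)$-columns equals the sum of its $(j11)$-columns, which is exactly your kernel vector. Your caveat about the empty block is valid. Columns $*00$ and $*11$ can only be marked by $**01$ rows, so the bound fails exactly when $a_{00}=a_{11}=0$, where it reads $0\le -1$; the paper's proof passes over this silently, and when it uses the lemma in the next proof it treats an empty $**01$ block separately as contributing rank zero.
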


\begin{proof}
We prove the bound for $\Mv(**01,***)$; the proof for $\Mv(**10,***)$ is identical.

By definition of $\vBv$, the submatrix $\Mv(**01,***)$ has nonzero entries only in
columns indexed by $(i00)$ and $(i11)$ with $\Bv_{i00}=1$ or $\Bv_{i11}=1$ (as all entries in $\Mv(**10,*00\cup *11)$ are equal to $0$).
Hence, $\Mv(**01,***)$ has at most $a_{00}+a_{11}$ nonzero columns.

Moreover, by the structure of $M$, every row of $\Mv(**01,***)$ contains exactly
one entry equal to $1$ in a column of type $(i00)$ and exactly one entry equal
to $1$ in a column of type $(j11)$.
Consequently, the sum of all columns of type $(i00)$ is equal to the sum of all
columns of type $(j11)$.
This yields a nontrivial linear dependence among the columns of $\Mv(**01,***)$.

Therefore, the column space of $\Mv(**01,***)$ has dimension at most
$a_{00}+a_{11}-1$, and hence
\[
\rank(\Mv(**01,***)) \le a_{00}+a_{11}-1.
\]

The same argument, applied to columns of type $(i01)$ and $(i10)$, gives
\[
\rank(\Mv(**10,***)) \le a_{01}+a_{10}-1.
\]
\end{proof}

\begin{lemma}\label{lem: at least one one in Bs}
For any $d,z\in\{0,1\}$, there exists an index $i\in[n]$ such that
\[
\Bv_{idz}=1.
\]
\end{lemma}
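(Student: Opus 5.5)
We argue by contradiction: suppose that for some $d,z\in\{0,1\}$ we have $\Bv_{idz}=0$ for every $i\in[n]$, so that $a_{dz}=0$ in the notation of \Cref{lem: rank of Q(01|10) part}. By \Cref{lem: no double zeros in respective columns}, this forces $\Bv_{id(1-z)}=1$ for all $i$, so $a_{d(1-z)}=n$. The plan is to show that the rows of $\Mv$ then cannot reach rank $4n-1$, which contradicts \Cref{lem: vert iff rank is 4n-1}.

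Take the case $(d,z)=(0,0)$; the other three cases are symmetric, and for $(1,*)$ one swaps the roles of rows $**00$ and $**11$. Since $\Bv_{i00}=0$ for all $i$, \Cref{def: Bx} says no row of $\Mv(**01,***)$ has a $1$ in a column of type $(i00)$. By \Cref{rem: Q structure}, every row of type $**01$ has such a $1$, so $\Mv(**01,***)$ is empty. The active rows are therefore only of types $**00$, $**10$ and $**11$. By \Cref{lem: first and last n^2 rows} these are restricted to $M(i000,***)$ (at most $n$ rows, supported on columns $i00,i01$), $M((n-1)j11,***)$ (at most $n$ rows, supported on $j10,j11$), and rows $**10$. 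By \Cref{lem: rank of Q(01|10) part}, the $**10$ rows have rank at most $a_{01}+a_{10}-1\le 2n-1$.

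Next we bound the total rank. The columns of type $(i00)$ meet only the rows $M(i000,***)$, and each such row is $\ve_{i00}+\ve_{i01}$ in row form. A cleaner route is to exhibit a second kernel vector of $\Mv$ besides $\vs$ from \Cref{rem: s}. I would try
\[
\vt=\sum_i(\ve_{i00}-\ve_{i01})+\sum_j(\ve_{j10}-\ve_{j11}),
\]
or a variant of it. On the rows $M(i000)$ it gives $1-1=0$, and on the rows $M((n-1)j11)$ it gives $1-1=0$. On a row $M(ij10)=\ve_{i01}+\ve_{j10}$ it gives $-1+1=0$. The excluded row type $M(ij01)=\ve_{i00}+\ve_{j11}$ would give $1-1=0$ as well, so I should check that $\vt$ is not proportional to $\vs$. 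By \Cref{rem: s}, $\vs$ has entries $+s$ on $z=0$ columns and $-s$ on $z=1$ columns. Our $\vt$ has $+1$ on $i00$ and $j10$ (both $z=0$) and $-1$ on $i01$ and $j11$ (both $z=1$), so $\vt=\vs$ and this attempt fails.

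Instead, I will use the fact that all $**01$ rows are absent. Take $\vt=\sum_i \ve_{i00}-\sum_j \ve_{j11}$ adjusted on the $i01$ columns: set $\vt=\sum_i(\ve_{i00}-\ve_{i01})$, supported only on the $(0,\cdot)$ columns. On the rows $M(i000)$ it gives $0$ and on the rows $M((n-1)j11)$ it gives $0$. On $M(ij10)$ it gives $-1$, which is nonzero. So I modify further: take $\vt=\sum_i\ve_{i00}$ alone. This vanishes on all $**10$ and $**11$ rows and is nonzero only on $M(i000)$. Combined with the freedom of $\vs$, we need a vector that is zero on the rows $M(i000)$. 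The choice $\vt=\sum_i \ve_{i00}+c\,\vs$ gives $1+c(s-s)=1$ on $M(i000)$, which fails. The final attempt is $\vt=\ve_{k00}-\ve_{k01}$ plus a compensating term on $\ve_{*10}$ so that $M(kj10)$ cancels, namely adding $\sum_j\ve_{j10}$ and subtracting $\sum_j\ve_{j11}$; this returns to $\vs$-type behaviour except on columns $i\ne k$. The main obstacle is exactly this step: finding a kernel direction independent of $\vs$. If it cannot be found, I would fall back on a direct rank count: the $(i00)$ columns each touch only the single row $M(i000)$, so those $n$ rows must be present to cover every column (\Cref{lem: one 1 in each column}). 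Each such row is then "spent" on its own $i00$ column, leaving the remaining rows to span the other $3n$ columns modulo the dependence, with rank at most $n+(2n-1)$ from the $**11$ and $**10$ rows. This totals at most $4n-1-1$ after accounting for the dependence $\vs$, and that deficiency gives the contradiction.
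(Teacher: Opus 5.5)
Your setup is correct. Assuming $\Bv_{i00}=0$ for all $i$ empties $\Mv(**01,***)$. The surviving active rows are $M(i000,***)$, $M((n-1)j11,***)$ and rows of type $**10$, and \Cref{lem: rank of Q(01|10) part} bounds the $**10$ block by $2n-1$.

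The gap is that the argument never closes, and your fallback rank count is wrong. Adding $n$ (rows $M(i000,***)$), at most $n$ (rows $M((n-1)j11,***)$) and $2n-1$ gives $4n-1$, not $4n-2$. The ``$-1$'' in $a_{01}+a_{10}-1$ already \emph{is} the $\vs$-dependence, so subtracting it again double counts.

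In fact, no argument using only the row pattern can succeed here. Take as active set all $M(i000,***)$, all $M((n-1)j11,***)$ and all $M(ij10,***)$. This is consistent with $\Bv_{i01}=\Bv_{j10}=1$ and $\Bv_{i00}=\Bv_{j11}=0$. A kernel vector must satisfy $x_{i00}=-x_{i01}$, $x_{j11}=-x_{j10}$ and $x_{i01}=-x_{j10}$ for all $i,j$. These force exactly the pattern of $\vs$ in \Cref{rem: s}, so this set has rank exactly $4n-1$. Your search for a second kernel direction independent of $\vs$ was therefore bound to fail.

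The missing ingredient is the cost vector $\vc$, i.e., which constraints can actually be tight at $\vv$. Since $\rank(\Mv)\le (2n-1)+\#\{\text{active rows among } M(i000,***),\,M((n-1)j11,***)\}$, rank $4n-1$ forces all $2n$ of these rows to be active. In particular, $M((n-1)000,***)$ and $M((n-1)011,***)$ are active.

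By \Cref{rem: Q structure}, $M((n-1)000,***)+M((n-1)011,***)=M((n-1)001,***)+M((n-1)010,***)$. Evaluating at $\vv$, the left side equals $2(\gamma_0-\gamma_{n-1})$. Each row on the right is bounded above by $\gamma_0-\gamma_{n-1}$, so both are tight. In particular, $M((n-1)001,***)$ is a $**01$ row in $\Mv$ (equivalently $\Bv_{(n-1)00}=1$), contradicting the hypothesis.

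This step is \Cref{lem: the only possible equility of duals}, and it is exactly how the paper finishes. The remaining cases $(d,z)$ follow symmetrically, with the roles of the $**01$ and $**10$ blocks exchanged.
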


\begin{proof}
It suffices to prove the statement for $(d,z)=(0,0)$; the other cases follow
by symmetry.

Suppose, by contradiction, that
\[
\Bv_{i00}=0 \quad \text{for all } i\in[n].
\]
By \Cref{lem: full 1 columns}, this implies that no row of the form
$M(i'j'01,***)$ can belong to $\Mv$.
Consequently, $\Mv(**01,***)$ is empty and for every $j\in[n]$,
\[
\Bv_{j11}=0.
\]

We now bound the rank of $\Mv$.
First, submatrix $\Mv(**01,***)$ is
empty and contributes zero to the rank.
By \Cref{lem: rank of Q(01|10) part}, the submatrix $\Mv(**10,***)$ has rank at most
\[
a_{01}+a_{10}-1 \le 2n-1.
\]

Next, by \Cref{lem: first and last n^2 rows}, at most $2n$ rows from
$M(**00\cup **11,***)$ can belong to $\Mv$.
Therefore, the total rank of $\Mv$ is at most
\[
(2n-1) + 2n = 4n-1.
\]
Equality can occur only if \emph{all} such rows belong to $\Mv$.

In particular, this forces
\[
M((n-1)000,***),\; M((n-1)011,***) \in \Mv.
\]
By \Cref{lem: the only possible equility of duals}, this implies
\[
\Bv_{(n-1)00}=\Bv_{(n-1)01}=\Bv_{010}=\Bv_{011}=1,
\]
which contradicts the assumption that $\Bv_{i00}=0$ for all $i$.

Hence, there exists at least one $i\in[n]$ such that $\Bv_{i00}=1$.
The same argument applies to any choice of $(d,z)$.
\end{proof}

\begin{lemma}\label{lem: after intersection structure}
Suppose that $\Bv_{i00}=\Bv_{i01}=1$ for some $i\in[n]$.
Then for every $j$ with $i<j\le n-1$, we have
\[
\Bv_{j00}=\Bv_{j01}=1
\quad\text{and}\quad
M(j000,***)\notin \Mv.
\]
Similarly, if $\Bv_{i10}=\Bv_{i11}=1$ for some $i\in[n]$, then for every $j$ with
$0\le j<i$,
\[
\Bv_{j10}=\Bv_{j11}=1
\quad\text{and}\quad
M((n-1)j11,***)\notin \Mv.
\]
\end{lemma}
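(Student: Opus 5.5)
The argument compares the tight constraints at row $i$ with the feasibility constraints at row $j$, using the additive structure of $M$ recorded in \Cref{rem: Q structure}. It proves the first statement directly; the second is the mirror image.

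\textbf{Step 1 (extract tight rows).} Since $\Bv_{i00}=1$, \Cref{def: Bx} together with \Cref{rem: Q structure} gives a row $M(ib01,***)\in\Mv$ for some $b$. Since $\Bv_{i01}=1$, there is a row $M(ic10,***)\in\Mv$ for some $c$. Concretely,
\[
v_{i00}+v_{b11}=\gamma_b-\gamma_i,\qquad v_{i01}+v_{c10}=\gamma_c-\gamma_i .
\]
Adding these and using feasibility of the row $M(i000,***)$, namely $v_{i00}+v_{i01}\le\gamma_0-\gamma_i$, gives the lower bound
\[
v_{b11}+v_{c10}\;\ge\;\gamma_b+\gamma_c-\gamma_i-\gamma_0 .
\]

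\textbf{Step 2 (transfer to row $j$).} Fix $j>i$. Feasibility of the rows $M(jb01,***)$ and $M(jc10,***)$ gives
\[
v_{j00}+v_{b11}\le\gamma_b-\gamma_j,\qquad v_{j01}+v_{c10}\le\gamma_c-\gamma_j .
\]
Adding these and substituting the bound from Step 1 yields
\[
v_{j00}+v_{j01}\;\le\;\gamma_0+\gamma_i-2\gamma_j\;<\;\gamma_0-\gamma_j,
\]
where the strict inequality holds because $\gamma_i<\gamma_j$. The left-hand side equals $M(j000,***)\vv$, so this constraint is slack and $M(j000,***)\notin\Mv$.

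\textbf{Step 3 (conclude on the signature).} \Cref{lem: first and last n rows based on B} says that $\Bv_{j0z}=0$ forces $M(j000,***)\in\Mv$. By its contrapositive, Step 2 gives $\Bv_{j00}=\Bv_{j01}=1$.

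\textbf{Second statement.} Start from $\Bv_{i10}=\Bv_{i11}=1$. This gives rows $M(ai01,***)$ and $M(a'i10,***)$ in $\Mv$. Adding their tight equations and using feasibility of $M((n-1)i11,***)$, namely $v_{i10}+v_{i11}\le\gamma_i-\gamma_{n-1}$, gives
\[
v_{a00}+v_{a'01}\;\ge\;2\gamma_i-\gamma_a-\gamma_{a'}-\gamma_i+\gamma_{n-1}.
\]
For $j<i$, add the feasibility constraints of the rows $M(aj01,***)$ and $M(a'j10,***)$ and substitute this bound. This gives
\[
v_{j10}+v_{j11}\;\le\;2\gamma_j-\gamma_i-\gamma_{n-1}\;<\;\gamma_j-\gamma_{n-1},
\]
so $M((n-1)j11,***)\notin\Mv$. \Cref{lem: first and last n rows based on B} then gives $\Bv_{j10}=\Bv_{j11}=1$.

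The only real work is choosing which pair of row constraints to add so that the column terms cancel. Once the pairing with the shared $b,c$ (respectively $a,a'$) is fixed, the rest is bookkeeping.
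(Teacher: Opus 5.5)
Your proof is correct and follows the paper's argument. The core chain is the same as the paper's: add the two tight mixed rows at $i$, use feasibility of $M(i000,***)$, then compare with feasibility of the mixed rows at $j$; the conclusion about the signature then comes from the contrapositive of \Cref{lem: first and last n rows based on B}. The one difference is that you read the witnessing tight rows $M(ib01,***)$ and $M(ic10,***)$ directly off the definition of the signature, whereas the paper obtains them through \Cref{lem: at least one one in Bs} and \Cref{lem: full 1 columns} (plus an appeal to \Cref{rem: only possible equality of duals} that it never uses), so your version is somewhat leaner.
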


\begin{proof}
We prove the first statement; the second follows by symmetry.

Suppose $\Bv_{i00}=\Bv_{i01}=1$ for some $i$.
If $i=n-1$, there is nothing to prove, so assume $i<n-1$.

By \Cref{rem: only possible equality of duals}, it is impossible that
$\Bv_{t10}=\Bv_{t11}=1$ for any $t$.
Hence, for every $t\in[n]$, by \Cref{lem: no double zeros in respective columns} exactly one of $\Bv_{t10},\Bv_{t11}$ is equal to $0$.
By \Cref{lem: first and last n rows based on B}, this implies
\[
M((n-1)t11,***)\in \Mv
\quad\text{for all } t\in[n].
\]

By \Cref{lem: at least one one in Bs}, there exist indices $a,b$ such that
\[
\Bv_{a10}=1
\quad\text{and}\quad
\Bv_{b11}=1.
\]
Since $\Bv_{i01}=1$ and $\Bv_{a10}=1$, and $\Bv_{i00}=1$ and $\Bv_{b11}=1$,
\Cref{lem: full 1 columns} implies
\[
M(ia10,***),\; M(ib01,***) \in \Mv,
\]
which yields
\[
M(ia10,***)\vv = \gamma_a-\gamma_i,
\qquad
M(ib01,***)\vv = \gamma_b-\gamma_i.
\]
Adding the sides gives
\begin{equation}
v_{i00}+v_{i01} + v_{a10}+v_{b11} = \gamma_a+\gamma_b-2\gamma_i.
\label{eq: X1}
\end{equation}

Also we have the inequality
\[
M(i000,***)\vv = v_{i00}+v_{i01} \le \gamma_0-\gamma_i,
\]
which implies
\begin{equation}
v_{a10}+v_{b11}\ge \gamma_a+\gamma_b-\gamma_i-\gamma_0.
\label{eq: X2}
\end{equation}

Now fix any $j$ with $i<j\le n-1$.
Suppose, by contradiction, that $\Bv_{j00}=0$ or $\Bv_{j01}=0$.
Then by \Cref{lem: first and last n rows based on B},
\[
M(j000,***)\in \Mv,
\]
so
\begin{equation}
v_{j00}+v_{j01}=\gamma_0-\gamma_j.
\label{eq: X3}
\end{equation}

Using inequalities for the rows $M(ja10,***)$ and $M(jb01,***)$, we obtain
\[
v_{j01}+v_{a10}\le \gamma_a-\gamma_j,
\qquad
v_{j00}+v_{b11}\le \gamma_b-\gamma_j.
\]
Adding and using \eqref{eq: X3} gives
\begin{equation}
v_{a10}+v_{b11}\le \gamma_a+\gamma_b-\gamma_j-\gamma_0.
\label{eq: X4}
\end{equation}

Combining \eqref{eq: X2} and \eqref{eq: X4} yields
\[
\gamma_a+\gamma_b-\gamma_i-\gamma_0
\le v_{a10}+v_{b11}
\le \gamma_a+\gamma_b-\gamma_j-\gamma_0,
\]
which is impossible since $\gamma_j>\gamma_i$.
Hence,
\[
\Bv_{j00}=\Bv_{j01}=1.
\]

Finally, if $M(j000, ***)\in \Mv$ while $\Bv_{j00}=\Bv_{j01}=1$, then equality holds in
\eqref{eq: X4}, which again contradicts \eqref{eq: X2} because $\gamma_j>\gamma_i$.
Therefore,
\[
M(j000, ***)\notin \Mv.
\]

This completes the proof of the first statement.
The second statement follows from a symmetric argument.




\end{proof}

\begin{lemma}\label{lem: B intersenct affect on first and last n rows}
Suppose $0\le i<n-1$ is the smallest index such that
\[
\Bv_{i00}=\Bv_{i01}=1.
\]
Then $M(i000,***)\in\Mv$.
Similarly, if $0<j\le n-1$ is the largest index such that
\[
\Bv_{j10}=\Bv_{j11}=1,
\]
then $M((n-1)j11,***)\in\Mv$.
\end{lemma}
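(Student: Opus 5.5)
The plan is a rank-counting argument by contradiction. The idea is to show that if the row $M(i000,***)$ were missing from $\Mv$, then $\Mv$ would have too few independent rows. That would give $\rank(\Mv)\le 4n-2$, contradicting \Cref{lem: vert iff rank is 4n-1}.

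Fix the smallest index $i<n-1$ with $\Bv_{i00}=\Bv_{i01}=1$, and suppose $M(i000,***)\notin\Mv$. I would first pin down the signature.
\begin{itemize}
\item For every $k<i$, minimality gives $\Bv_{k00}\ne\Bv_{k01}$. By \Cref{lem: no double zeros in respective columns}, exactly one of $\Bv_{k00},\Bv_{k01}$ equals $1$.
\item For every $j>i$, \Cref{lem: after intersection structure} gives $\Bv_{j00}=\Bv_{j01}=1$ and $M(j000,***)\notin\Mv$.
\item Since $i<n-1$, \Cref{rem: only possible equality of duals} gives $\Bv_{t10}\ne\Bv_{t11}$ for all $t$. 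Hence exactly one of $\Bv_{t10},\Bv_{t11}$ equals $1$.
\end{itemize}
With $a_{dz}$ as in \Cref{lem: rank of Q(01|10) part}, this yields
\[
a_{00}+a_{01}=i+2(n-i)=2n-i,\qquad a_{10}+a_{11}=n.
\]

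Next I would bound the rank of $\Mv$ block by block.
\begin{itemize}
\item By \Cref{lem: rank of Q(01|10) part}, $\Mv(**01,***)$ and $\Mv(**10,***)$ together contribute rank at most $(a_{00}+a_{11}-1)+(a_{01}+a_{10}-1)=3n-i-2$.
\item By \Cref{lem: first and last n^2 rows}, the rows of type $**11$ in $\Mv$ are among the $n$ rows $M((n-1)t11,***)$. They contribute at most $n$.
\item By \Cref{lem: first and last n^2 rows}, the rows of type $**00$ in $\Mv$ are among $M(k000,***)$. The rows with $k>i$ are excluded by the step above, and $k=i$ is excluded by assumption. So at most the $i$ rows with $k<i$ remain.
\end{itemize}
By subadditivity of rank over row blocks,
\[
\rank(\Mv)\le (3n-i-2)+n+i=4n-2<4n-1,
\]
which is the desired contradiction. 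Hence $M(i000,***)\in\Mv$.

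For the second statement I would argue symmetrically. Let $j>0$ be the largest index with $\Bv_{j10}=\Bv_{j11}=1$, and suppose $M((n-1)j11,***)\notin\Mv$.
\begin{itemize}
\item By \Cref{rem: only possible equality of duals}, every $t$ has exactly one of $\Bv_{t00},\Bv_{t01}$ equal to $1$. So $a_{00}+a_{01}=n$, and the $**00$ rows contribute at most $n$.
\item By \Cref{lem: after intersection structure}, each $k<j$ has both $\Bv_{k10},\Bv_{k11}$ equal to $1$ and $M((n-1)k11,***)\notin\Mv$. Maximality of $j$ forces exactly one $1$ for each $k>j$. So $a_{10}+a_{11}=2(j+1)+(n-1-j)$.
\item The $**11$ rows in $\Mv$ can then only be $M((n-1)k11,***)$ with $k>j$, giving at most $n-1-j$ rows.
\end{itemize}
Adding the contributions gives $\rank(\Mv)\le (n+n+j+1-2)+n+(n-1-j)=4n-2$, the same contradiction.

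The main obstacle is the bookkeeping: making sure each type of row is counted exactly once across the blocks, and checking the exact hypotheses of \Cref{lem: after intersection structure} and \Cref{rem: only possible equality of duals} in the boundary cases $i=0$ and $j=n-1$. The subadditivity of rank across row blocks is straightforward.
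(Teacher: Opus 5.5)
Your proposal is correct and follows the paper's proof essentially step for step. You assume the row is absent, and you use minimality (resp.\ maximality) together with \Cref{lem: after intersection structure} and \Cref{rem: only possible equality of duals} to obtain $a_{00}+a_{01}=2n-i$, $a_{10}+a_{11}=n$ (resp.\ $n$ and $n+j+1$). Bounding the rank by $(3n-i-2)+(n+i)=4n-2$ (resp.\ $(2n+j-1)+(2n-j-1)=4n-2$) then contradicts $\operatorname{rank}=4n-1$ at a vertex, exactly as the paper does.
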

\begin{proof}
First, we prove the first statement.

Let $i<n-1$ be the smallest index such that $\Bv_{i00}=\Bv_{i01}=1$.
By \Cref{rem: only possible equality of duals}, this implies that
\begin{equation}
\Bv_{y10}+\Bv_{y11}=1 \quad \text{for all } y\in[n].
\label{eq: Y1}
\end{equation}

By minimality of $i$ and \Cref{lem: after intersection structure}, we have
\begin{equation}
\Bv_{j00}+\Bv_{j01}=
\begin{cases}
1, & 0\le j<i,\\
2, & i\le j\le n-1.
\end{cases}
\label{eq: Y2}
\end{equation}
Let $a_{dz}$ denote the number of indices $k$ with $\Bv_{k dz}=1$.
From \eqref{eq: Y2} we obtain
\[
a_{00}+a_{01}=i+2(n-i)=2n-i,
\qquad
a_{10}+a_{11}=n
\]
by \eqref{eq: Y1}.
Hence, by \Cref{lem: rank of Q(01|10) part},
\begin{equation}
\rank\bigl(\Mv(**01\cup **10,***)\bigr)
\le (a_{00}+a_{01})+(a_{10}+a_{11})-2
= 3n-i-2.
\label{eq: Y3}
\end{equation}
By \Cref{lem: after intersection structure}, no row $M(j000,***)$ with $j>i$
can belong to $\Mv$.

Suppose, by contradiction, that $M(i000,***)\notin\Mv$.
Then the submatrix $\Mv(**00\cup **11,***)$ contains at most $n+i$ rows and therefore
has rank at most $n+i$.
Combining with \eqref{eq: Y3}, we obtain
\[
\rank(\Mv)\le (3n-i-2)+(n+i)=4n-2,
\]
which contradicts the fact that $\vv$ is a vertex and hence
$\rank(\Mv)=4n-1$.

Therefore, $M(i000,***)\in\Mv$.

The proof of the second statement is symmetric. Let $0<j\le n-1$ be the largest index such that
$\Bv_{j10}=\Bv_{j11}=1$.
Since $j>0$, by \Cref{rem: only possible equality of duals} we cannot have
$\Bv_{t00}=\Bv_{t01}=1$ for any $t$, and hence
\begin{equation}
\Bv_{t00}+\Bv_{t01}=1 \quad \text{for all } t\in[n].
\label{eq: Y4}
\end{equation}
By maximality of $j$ and \Cref{lem: after intersection structure}, we have
\begin{equation}
\Bv_{t10}+\Bv_{t11}=
\begin{cases}
2, & t\le j,\\
1, & t>j.
\end{cases}
\label{eq: Y5}
\end{equation}
Let $a_{dz}$ be as above. Then \eqref{eq: Y4} gives $a_{00}+a_{01}=n$, and \eqref{eq: Y5} yields
\[
a_{10}+a_{11}=2(j+1)+(n-(j+1))=n+j+1.
\]
Therefore, by \Cref{lem: rank of Q(01|10) part},
\begin{equation}
\rank\bigl(\Mv(**01\cup **10,***)\bigr)\le (a_{00}+a_{01})+(a_{10}+a_{11})-2
= (n)+(n+j+1)-2 = 2n+j-1.
\label{eq: Y6}
\end{equation}

By \Cref{lem: after intersection structure}, no row $M((n-1)t11,***)$ with $t<j$
can belong to $\Mv$.

If $M((n-1)j11,***)\notin \Mv$, then $\Mv(**00\cup **11,***)$ contains at most
$n+(n-(j+1))=2n-j-1$ rows, hence has rank at most $2n-j-1$.
Combining with \eqref{eq: Y6} gives
\[
\rank(\Mv)\le (2n+j-1)+(2n-j-1)=4n-2,
\]
contradicting $\rank(\Mv)=4n-1$ at a vertex. Thus $M((n-1)j11,***)\in \Mv$.

\end{proof}

\begin{lemma}\label{lem: double intersect}
The index $i=n-1$ is the smallest index such that
\[
\Bv_{i00}=\Bv_{i01}=1
\]
if and only if the index $j=0$ is the largest index such that
\[
\Bv_{j10}=\Bv_{j11}=1.
\]
\end{lemma}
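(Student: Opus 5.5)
By the symmetry between the two halves of $M$ that the paper already uses repeatedly ("the second follows by symmetry"), it is enough to prove the forward implication. The reverse implication follows from the same argument with the roles of $(d,z)$ swapped and the order of indices reversed. So we assume that $i=n-1$ is the smallest index with $\Bv_{i00}=\Bv_{i01}=1$. In particular $\Bv_{(n-1)00}=\Bv_{(n-1)01}=1$, and by \Cref{lem: no double zeros in respective columns} we have $\Bv_{k00}+\Bv_{k01}=1$ for every $k<n-1$. The goal is to show that $\Bv_{010}=\Bv_{011}=1$ and that no $j>0$ has $\Bv_{j10}=\Bv_{j11}=1$.

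\textbf{Step 1: $j=0$ is a double index.} Since every $k<n-1$ has exactly one zero among $\Bv_{k00},\Bv_{k01}$, \Cref{lem: first and last n rows based on B} gives $M(k000,***)\in\Mv$ for all $k<n-1$, and \Cref{lem: after intersection structure} contributes nothing beyond $n-1$. We then count ranks as in \Cref{lem: B intersenct affect on first and last n rows}. We have $a_{00}+a_{01}=n+1$. Suppose first that no index has $\Bv_{t10}=\Bv_{t11}=1$. Then $a_{10}+a_{11}=n$, and \Cref{lem: rank of Q(01|10) part} gives $\rank(\Mv(**01\cup**10,***))\le 2n-1$. The rows of type $**00$ number at most $n$. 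If $M((n-1)000,***)\notin\Mv$, there are at most $n-1$ of them, and the rows of type $**11$ number at most $n$. The total is then at most $4n-2$, which is a contradiction. So $M((n-1)000,***)\in\Mv$.

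We also have $M((n-1)011,***)\in\Mv$, because $\Bv_{010}+\Bv_{011}=1$ and \Cref{lem: first and last n rows based on B} applies. Now \Cref{lem: the only possible equility of duals} forces $\Bv_{010}=\Bv_{011}=1$, contradicting the assumption. Hence some $t$ has $\Bv_{t10}=\Bv_{t11}=1$, and \Cref{lem: after intersection structure} then gives $\Bv_{010}=\Bv_{011}=1$.

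\textbf{Step 2: no $j>0$ is a double index.} Suppose some $j>0$ has $\Bv_{j10}=\Bv_{j11}=1$. By \Cref{rem: only possible equality of duals}, every $t$ would then have $\Bv_{t00}\ne\Bv_{t01}$. This contradicts $\Bv_{(n-1)00}=\Bv_{(n-1)01}=1$. Hence $j=0$ is the largest such index, which completes the forward implication.

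\textbf{Converse.} Assume $j=0$ is the largest index with $\Bv_{j10}=\Bv_{j11}=1$. We run the mirrored argument. Step 2's analogue comes from \Cref{rem: only possible equality of duals}: any $i<n-1$ with $\Bv_{i00}=\Bv_{i01}=1$ would force $\Bv_{010}\ne\Bv_{011}$, which is impossible. Step 1's analogue uses the rank count with the rows $M((n-1)t11,***)$ for $t>0$ to force $M((n-1)011,***)\in\Mv$. It then obtains $M((n-1)000,***)\in\Mv$, either from $\Bv_{(n-1)00}+\Bv_{(n-1)01}=1$ or via the same counting argument. \Cref{lem: the only possible equility of duals} then yields $\Bv_{(n-1)00}=\Bv_{(n-1)01}=1$.

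The main obstacle is the rank bookkeeping in Step 1 and its mirror. One has to check carefully that the bound from \Cref{lem: rank of Q(01|10) part}, together with the count of allowed $**00$ and $**11$ rows from \Cref{lem: first and last n^2 rows}, really falls short of $4n-1$ whenever the required boundary row is missing. If the naive count fails to reach a contradiction, the fallback is to argue directly with \Cref{lem: at least one intersection} and \Cref{lem: full 1 columns}.
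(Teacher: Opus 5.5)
Your proof is correct and follows the paper's argument. \Cref{rem: only possible equality of duals} rules out any double index $j>0$, and the rank count ($\rank(\Mv(**01\cup**10,***))\le 2n-1$ plus at most $2n$ boundary rows of types $**00$ and $**11$) forces $M((n-1)000,***)\in\Mv$. \Cref{lem: the only possible equility of duals} then gives $\Bv_{010}=\Bv_{011}=1$. Your hedging about the rank bookkeeping is unnecessary: the bound $(2n-1)+(n-1)+n=4n-2<4n-1$ closes the count exactly as in the paper. Your explicit sketch of the converse is a valid version of the symmetry the paper invokes.
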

\begin{proof}
It suffices to prove one direction, since the statement is symmetric.

Assume that $i=n-1$ is the smallest index such that
$\Bv_{i00}=\Bv_{i01}=1$.
By \Cref{rem: only possible equality of duals}, this implies that there
is no index $j>0$ for which $\Bv_{j10}=\Bv_{j11}=1$.

Suppose, by contradiction, that $\Bv_{010}\neq 1$ or $\Bv_{011}\neq 1$.
Then $\Bv_{010}+\Bv_{011}=1$.
For all $j\in[n]$, we therefore have $\Bv_{j10}+\Bv_{j11}=1$.
By \Cref{lem: rank of Q(01|10) part}, this implies
\[
\rank(\Mv(**01\cup **10,***)) \le 2n-1.
\]

Moreover, since there are at most $2n$ rows in $\Mv(**00\cup **11,***)$, we have
\[
\rank(\Mv(**00 \cup **11,***)) \le 2n.
\]
Consequently,
\[
\rank(\Mv) \le 4n-1,
\]
and equality can hold only if all such rows belong to $\Mv$.
In particular, this forces
\[
M((n-1)000,***),\; M((n-1)011,***) \in \Mv,
\]
which implies $\Bv_{010}=\Bv_{011}=1$ by \Cref{lem: the only possible equility of duals}.
This contradicts the assumption.

Therefore, $\Bv_{010}=\Bv_{011}=1$, and hence $j=0$ is the largest index such
that $\Bv_{j10}=\Bv_{j11}=1$.
\end{proof}

\subsection{Bijection Between Vertices and Their Signature}\label{app: PI bijection}

Here we show that there is a bijection between vertices and their signatures. Consequently, the vertices
can be enumerated via their signatures.

\begin{proposition}\label{pro: M and B unique} 
The active constraint matrix $\Mv$ uniquely determines $\vBv$, and conversely,
$\vBv$ uniquely determines $\Mv$.
\end{proposition}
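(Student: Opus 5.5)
The forward direction is immediate from \Cref{def: Bx}. The signature $\vBv$ is the column-wise OR of the rows of $\Mv(**01\cup**10,***)$, so it is a deterministic function of which rows of $M$ lie in $\Mv$. Hence $\Mv$ determines $\vBv$.

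For the converse, I will reconstruct the full row set of $\Mv$ from $\vBv$ block by block. I split the rows of $M$ into the four types $**00$, $**01$, $**10$ and $**11$.

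\textbf{Middle blocks.} \Cref{lem: full 1 columns} states that $M(ij01,***)\in\Mv$ if and only if $\Bv_{i00}=\Bv_{j11}=1$. It likewise states that $M(ij10,***)\in\Mv$ if and only if $\Bv_{i01}=\Bv_{j10}=1$. So the rows of types $01$ and $10$ in $\Mv$ are read off from $\vBv$ directly.

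\textbf{Outer blocks.} By \Cref{lem: first and last n^2 rows}, the only candidate rows of type $00$ are $M(i000,***)$, and the only candidate rows of type $11$ are $M((n-1)j11,***)$. For each $i$ I decide membership of $M(i000,***)$ by cases on the pair $(\Bv_{i00},\Bv_{i01})$; \Cref{lem: no double zeros in respective columns} excludes the pair $(0,0)$.

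If exactly one of the two entries is $0$, then \Cref{lem: first and last n rows based on B} forces $M(i000,***)\in\Mv$.

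If both entries equal $1$, let $t$ be the smallest index with $\Bv_{t00}=\Bv_{t01}=1$, which is determined by $\vBv$.
\begin{itemize}
\item If $i>t$, \Cref{lem: after intersection structure} gives $M(i000,***)\notin\Mv$.
\item If $i=t<n-1$, \Cref{lem: B intersenct affect on first and last n rows} gives $M(i000,***)\in\Mv$.
\item If $i=t=n-1$, then \Cref{lem: double intersect} together with \Cref{lem: the only possible equility of duals} shows this happens exactly when $M((n-1)000,***)$ and $M((n-1)011,***)$ both lie in $\Mv$, so the row is included.
\end{itemize}
The rows $M((n-1)j11,***)$ are handled by the symmetric argument, using the largest index $j$ with $\Bv_{j10}=\Bv_{j11}=1$.

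\textbf{Expected obstacle.} The main difficulty is the boundary case $i=t=n-1$ and its mirror $j=0$. There \Cref{lem: the only possible equility of duals} only gives an equivalence between $\vBv$ having the pattern $\Bv_{(n-1)00}=\Bv_{(n-1)01}=\Bv_{010}=\Bv_{011}=1$ and both rows lying in $\Mv$. I need to check that this pattern is exactly what $\vBv$ exhibits in that case. If it is not, I would fall back on a rank count as in \Cref{lem: bool double intersect}'s proof (\Cref{lem: double intersect}): exclusion of either row forces $\rank(\Mv)\le 4n-2$. With every block's membership fixed by $\vBv$, $\Mv$ is uniquely determined. \Cref{rem: uniqness of Mv} then confirms that this is consistent with the vertex class.
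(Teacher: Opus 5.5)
Your proposal is correct and follows essentially the same route as the paper. You read off the middle rows via \Cref{lem: full 1 columns}, and you settle the outer rows $M(i000,***)$ and $M((n-1)j11,***)$ with \Cref{lem: first and last n rows based on B}, \Cref{lem: after intersection structure}, \Cref{lem: B intersenct affect on first and last n rows}, \Cref{lem: double intersect} and \Cref{lem: the only possible equility of duals}; you only organize the case split row by row rather than by the paper's global cases ($i<n-1$, $i=n-1$, $j>0$).

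Your ``expected obstacle'' is not actually one. When $t=n-1$, \Cref{lem: double intersect} already gives $\Bv_{010}=\Bv_{011}=1$, so the pattern required by \Cref{lem: the only possible equility of duals} holds and the rank-count fallback is unnecessary.
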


\begin{proof}
One can construct $\vBv$ from $\Mv$ by \Cref{def: Bx}.

Now suppose that we are given $\vBv$ for some vertex $\vv$ and wish to recover
$\Mv$.
First, by \Cref{lem: full 1 columns}, the submatrix $\Mv(**01\cup **10,***)$ is uniquely
determined by $\vBv$.

It remains to determine which of the $2n$ rows in $M(**00\cup **11,***)$ belong to $\Mv$.
By \Cref{lem: at least one intersection}, there exists an index $y\in[n]$ and
$d\in\{0,1\}$ such that
\[
\Bv_{yd0}=\Bv_{yd1}=1.
\]

We first consider the case $d=0$ and let $i$ be the smallest index such that
$\Bv_{i00}=\Bv_{i01}=1$.

If $i=n-1$, then by \Cref{lem: double intersect} we have
$\Bv_{010}=\Bv_{011}=1$, and by \Cref{lem: the only possible equility of duals}
it follows that
\[
M((n-1)000,***),\; M((n-1)011,***) \in \Mv.
\]
Moreover, for any $0\le i'<n-1$ and $0<j'\le n-1$, we have
$\Bv_{i'00}+\Bv_{i'01}=1$ and $\Bv_{j'10}+\Bv_{j'11}=1$
(by \Cref{rem: only possible equality of duals}),
so exactly one entry in each pair is zero.
By \Cref{lem: first and last n rows based on B}, this implies
\[
M(i'000,***),\; M((n-1)j'11,***) \in \Mv.
\]
Hence, $\Mv$ is uniquely determined by $\vBv$ in this case.

Now suppose that $i<n-1$.
By \Cref{rem: only possible equality of duals}, for every $j\in[n]$ we have
$\Bv_{j10}+\Bv_{j11}=1$, and hence exactly one of them is zero.
By \Cref{lem: first and last n rows based on B}, this implies
\[
M((n-1)j11,***)\in\Mv \quad \text{for all } j\in[n].
\]
Furthermore, by \Cref{lem: after intersection structure}, for all $i'>i$ we have
$M(i'000,***)\notin\Mv$.
By minimality of $i$, for all $i'<i$ we have $\Bv_{i'00}+\Bv_{i'01}=1$, and hence
$M(i'000,***)\in\Mv$.
Finally, by \Cref{lem: B intersenct affect on first and last n rows}, we also
have $M(i000,***)\in\Mv$.
Thus, $\Mv$ is uniquely determined by $\vBv$ in this case as well.

We now consider the case $d=1$.
Let $j$ be the largest index such that $\Bv_{j10}=\Bv_{j11}=1$.

If $j=0$, then by \Cref{lem: double intersect} we reduce to the first case
considered above, and $\Mv$ can again be uniquely constructed from $\vBv$.

Finally, suppose that $j>0$.
By \Cref{rem: only possible equality of duals}, for every $i\in[n]$ we have
$\Bv_{i00}+\Bv_{i01}=1$, and hence exactly one of them is zero.
By \Cref{lem: first and last n rows based on B}, this implies
\[
M(i000,***)\in\Mv \quad \text{for all } i\in[n].
\]
Moreover, by \Cref{lem: after intersection structure}, for all $j'<j$ we have
$M((n-1)j'11,***)\notin\Mv$, while by maximality of $j$ and
\Cref{lem: B intersenct affect on first and last n rows} we have
$M((n-1)j11,***)\in\Mv$.
For $j'>j$, we have $\Bv_{j'10}+\Bv_{j'11}=1$, and hence
$M((n-1)j'11,***)\in\Mv$ by \Cref{lem: first and last n rows based on B}.

Therefore, in all cases, $\Mv$ is uniquely determined by $\vBv$.
This completes the proof.
\end{proof}

\begin{remark}
Since $[\vv]$ and $\Mv$ uniquely determine each other, it follows that
$[\vv]$ uniquely determines $\vBv$, and conversely $\vBv$ uniquely determines
$[\vv]$.
\end{remark}

We introduce the following set and claim that it precisely consists of
all  signatures associated with vertices.

\defineS*

\begin{proposition}\label{pro: Bv is in S}
$\vBv\in S$ for any vertex $\vv$.
\end{proposition}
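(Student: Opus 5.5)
The plan is a case analysis driven by \Cref{lem: at least one intersection}. That lemma gives at least one pair $(y,d)$ with $\Bv_{yd0}=\Bv_{yd1}=1$, and \Cref{rem: only possible equality of duals} says such ``double ones'' cannot occur in both the $d=0$ block and the $d=1$ block unless we are in the special configuration $i=n-1$, $j=0$. This suggests three cases, matching $S_1$, $S_2$, $S_3$:
\begin{enumerate}
\item The $d=0$ block contains a double with smallest index $i<n-1$.
\item The smallest double in the $d=0$ block is at $n-1$ (equivalently, by \Cref{lem: double intersect}, the largest double in the $d=1$ block is at $0$).
\item The $d=1$ block contains a double with largest index $j>0$.
\end{enumerate}
These cases are exhaustive. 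Case 2 is characterised by \Cref{lem: double intersect}. The other two cases exclude each other, since a double at some $i<n-1$ in the $d=0$ block forbids any double in the $d=1$ block by \Cref{rem: only possible equality of duals}, and symmetrically.

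\textbf{Case 1} ($\vBv\in S_1$ with $t=i$). By \Cref{lem: after intersection structure}, every $i'\ge i$ has $\Bv_{i'00}=\Bv_{i'01}=1$. By minimality of $i$, every $i'<i$ has a pair that is not a double, and \Cref{lem: no double zeros in respective columns} then forces $\Bv_{i'00}\ne\Bv_{i'01}$. \Cref{rem: only possible equality of duals} rules out doubles in the $d=1$ block, so together with \Cref{lem: no double zeros in respective columns} we get $\Bv_{y10}\ne \Bv_{y11}$ for all $y$. Finally, \Cref{lem: at least one one in Bs} supplies indices $a,b$ with $\Bv_{a10}=\Bv_{b11}=1$. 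Since $t=i\in[n-1]$, all conditions of $S_1$ hold.

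\textbf{Case 3} ($\vBv\in S_3$). This is the mirror image of Case 1: use the second halves of \Cref{lem: after intersection structure} and \Cref{rem: only possible equality of duals}, then apply \Cref{lem: at least one one in Bs} to the $d=0$ block. The one point needing care is the index convention in $S_3$, where doubles occupy $i\le t+1$. I would set $t=j-1$, so that $j>0$ gives $t\in[n-1]$, and check that the doubles are exactly the indices $\le j$ while all indices $>j$ have non-equal pairs.

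\textbf{Case 2} ($\vBv\in S_2$). \Cref{lem: double intersect} gives $\Bv_{(n-1)00}=\Bv_{(n-1)01}=\Bv_{010}=\Bv_{011}=1$. Minimality of $n-1$ together with \Cref{lem: no double zeros in respective columns} gives non-equal pairs for all $i<n-1$ in the $d=0$ block. Maximality of $0$ gives non-equal pairs for all $j>0$ in the $d=1$ block.

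The main obstacle I expect is in the bookkeeping rather than in any single case. First, confirming the $S_3$ indexing ($t+1$ versus $t$) against the lemmas. Second, making sure that in Cases 1 and 3 the boundary configurations, such as $i=n-1$ reached from the $d=0$ side, are routed into Case 2 and not double counted. Both should follow directly from \Cref{lem: double intersect}.
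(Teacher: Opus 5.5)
Your proposal is correct and follows essentially the paper's own argument. It uses the same three-way split by where the doubled pairs sit: $S_1$ and $S_3$ come from \Cref{lem: after intersection structure}, \Cref{rem: only possible equality of duals} and \Cref{lem: at least one one in Bs}, and $S_2$ comes from \Cref{lem: double intersect}. The paper defines the $S_2$ case as the complement of the other two and invokes \Cref{lem: at least one intersection} inside it, whereas you use that lemma for exhaustiveness, which is an immaterial difference. Your explicit choice $t=j-1$ in the $S_3$ case is actually more careful than the paper: the paper takes $t$ to be the largest doubled index itself, which does not literally match the $i\le t+1$ convention in the definition of $S_3$.
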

\begin{proof}
Suppose that some $0\le t<n-1$ is the smallest index such that $\Bv_{t00}=\Bv_{t01}$.
Then \Cref{lem: after intersection structure}, \Cref{rem: only possible equality of duals}, and \Cref{lem: at least one one in Bs} imply that the first, second, and third conditions of $S_1$ hold, respectively.
Thus $\vBv\in S_1$.

Suppose that some $0< t\le n-1$ is the largest index such that $\Bv_{t10}=\Bv_{t11}$.
Then \Cref{lem: after intersection structure}, \Cref{rem: only possible equality of duals}, and \Cref{lem: at least one one in Bs} imply that the first, second, and third conditions of $S_3$ hold, respectively.
Thus $\vBv\in S_3$.

Finally, suppose that $\Bv_{i00}\ne\Bv_{i01}$ for all $0\le i<n-1$ and that $\Bv_{j10}\ne\Bv_{j11}$ for all $0< j\le n-1$.
By \Cref{lem: at least one intersection}, at least one of $\Bv_{010}=\Bv_{011}$ or $\Bv_{(n-1)00}=\Bv_{(n-1)01}$ must hold.
If one holds, then \Cref{lem: double intersect} implies the other, and hence the first condition of $S_2$ is satisfied.
Moreover, \Cref{lem: the only possible equility of duals} implies that the remaining two conditions of $S_2$ also hold.
Therefore $\vBv\in S_2$.
\end{proof}

\begin{definition}\label{def: valid}
A vector $\vB \in \{0,1\}^{4n}$ is called \emph{valid} if there exists a vertex $\vv$ such that $\vB = \vBv$.
\end{definition}

\defineU*

\begin{proposition}\label{pro: B in S is valid}
    Any $\vB\in S$ is valid and its corresponding vertex is $\vu(\vB)$.
\end{proposition}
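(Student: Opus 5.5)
To prove \Cref{pro: B in S is valid}, I fix $\vB\in S$ and set $\vu=\vu(\vB)$ as in \Cref{def: U define}. I will show that $\vu$ is a vertex whose signature is exactly $\vB$. By \Cref{lem: vert iff rank is 4n-1} this comes down to three checks:
\begin{enumerate}
\item feasibility, $M\vu\le\vc$;
\item that the set of active rows contains a rank-$(4n-1)$ subfamily;
\item that the column-wise OR of the active rows of type $**01\cup**10$ equals $\vB$.
\end{enumerate}
By \Cref{rem: Q structure}, every row of $M$ has exactly two ones. Each constraint therefore has the form $u_{a}+u_{b}\le \gamma_j-\gamma_i$, and feasibility splits into four families indexed by $d_0d_1\in\{00,01,10,11\}$. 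I treat $S_1$, $S_2$, $S_3$ separately, since only $\alpha$ changes between them. I also plan to note the symmetry between $S_1$ and $S_3$: relabel $\gamma_k\mapsto-\gamma_{n-1-k}$ and swap the roles $(d,z)$. If this works, $S_3$ follows from $S_1$.

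\textbf{Feasibility.} I take the rows one family at a time.
\begin{itemize}
\item Rows $ij00$ read $u_{i00}+u_{i01}\le\gamma_j-\gamma_i$, so only $j=0$ matters. I check the four cases for $(\B_{i00},\B_{i01})$. The case $(1,1)$ gives $-2\gamma_i-\alpha\le\gamma_0-\gamma_i$, which in $S_1$ is $\gamma_t\le\gamma_i$ and holds exactly because $i\ge t$ there. The mixed cases give equality $\gamma_0-\gamma_i$. The case $(0,0)$ is excluded by $S$.
\item Rows $ij11$ read $u_{j10}+u_{j11}\le\gamma_j-\gamma_i$, so only $i=n-1$ matters. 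They are handled the same way; in $S_1$ the pair $(1,1)$ never occurs for $d=1$.
\item Rows $ij01$ read $u_{i00}+u_{j11}\le\gamma_j-\gamma_i$. When $\B_{i00}=\B_{j11}=1$ this is an equality. The other three sign patterns reduce to $\gamma_0\le\gamma_i$-type or $\gamma_{n-1}\ge\gamma_j$-type inequalities. For example, $(0,1)$ gives $\gamma_0+\gamma_j+\alpha\le\gamma_j-\gamma_i$, i.e.\ $\gamma_i\le\gamma_t$ under $S_1$. This inequality is not automatic, and it is where the block structure of $S_1$ must be used.
\item Rows $ij10$ are analogous.
\end{itemize}
I expect the main obstacle to be these mixed-pattern inequalities. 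They hold only thanks to the defining conditions of $S$, namely the threshold $t$ and the complementarity $\B_{i00}\ne\B_{i01}$ for $i<t$. So I must carefully combine, for each index, which entry is $0$ with whether $i<t$ or $i\ge t$. If a direct check fails, I would first try the alternative representative $\vu+s\,\vs$, with $\vs\in\ker M$ as in \Cref{rem: s}, since only $M\vu$ matters.

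\textbf{Signature and rank.} From the equality cases above, the active rows of type $01/10$ are exactly $ij01$ with $\B_{i00}=\B_{j11}=1$ and $ij10$ with $\B_{i01}=\B_{j10}=1$. Provided the strictness in the other patterns holds (strict whenever $\gamma$'s are distinct and the index is off the threshold), the OR of these rows reproduces $\vB$. This uses that every column class $(d,z)$ has at least one $1$, which is the third condition of $S_1$ and is guaranteed similarly in $S_2$, $S_3$. For the rank, I view the active rows as edges of a graph on the $4n$ column indices and show it is connected; its rank is then $4n-1$, since $M$ is the incidence matrix of a bipartite graph whose sides are $z=0$ and $z=1$. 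The $01$ edges form a complete bipartite graph between $\{i00:\B_{i00}=1\}$ and $\{j11:\B_{j11}=1\}$, and the $10$ edges do the same for $01/10$. The active $00$ and $11$ edges, $i000$ for the complementary or threshold indices and $(n-1)j11$, attach every remaining column and link the two components. The link comes through an index with $\B_{i00}=\B_{i01}=1$, or through the pair of rows $(n-1)000$ and $(n-1)011$ in $S_2$.

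**Conclusion.** Connectivity gives $\rank(M_{\vu})=4n-1$, so $\vu$ is a vertex by \Cref{lem: vert iff rank is 4n-1}. Its signature is $\vB$, so $\vB$ is valid. By \Cref{pro: M and B unique}, the corresponding vertex class is $[\vu(\vB)]$.
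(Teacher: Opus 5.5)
Your plan is correct and follows the paper's proof. You check feasibility and tightness family by family, show that the active $01/10$ rows are exactly the pairs of ones in $\vB$ (so the signature is $\vB$), and show that the active rows have rank $4n-1$. Your connectivity argument on the bipartite $z=0$/$z=1$ graph is a cleaner phrasing of the paper's kernel propagation, with the linking row being $t000$ in $S_1$, $(n-1)000$ in $S_2$ and $(n-1)t11$ in $S_3$.

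The mixed-pattern checks you left open all go through. The paper handles them in one step by noting that each $u_{ydz}$ strictly increases when $\B_{ydz}$ goes from $0$ to $1$, at every index where a $0$ is allowed. Your fallback of replacing $\vu$ by $\vu+s\vs$ could not help in any case, since $M(\vu+s\vs)=M\vu$.
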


\begin{proof}
In this proof, for every $\vB \in S$, we prove that vector $\vu(\vB)$ satisfies
all the required equalities and strict inequalities.
In particular, we prove that
\[
M\vu(\vB) \le \vc
\quad\text{and}\quad
\vB = \vB^{\vu(\vB)},
\]
which results the feasibility of $\vu(\vB)$ and its uniqueness with respect to $\vB$.
Finally, we verify that $\vu(\vB)$ is a vertex by showing that
\[
\rank(M_{\vu(\vB)}) = 4n-1,
\]
in view of \Cref{lem: vert iff rank is 4n-1}. During this proof, $\vu$ is used as shorthand for $\vu(\vB)$.

\medskip
\noindent
\textbf{Relevant constraints.}
Similar to \Cref{lem: first and last n^2 rows}, it is enough to check the inequality for rows
\[
M(i000,***) \quad\text{and}\quad M((n-1)j11,***)
\]
among the rows of $M(**00\cup **11,***)$, because all other rows correspond to strictly weaker inequalities.
Therefore, it suffices to verify the following inequalities:
\begin{itemize}
    \item $M(i000,***)\vu=u_{i00}+u_{i01}\le \gamma_0-\gamma_i$;
    \item $M((n-1)j11,***)\vu=u_{j10}+u_{j11}\le \gamma_j-\gamma_{n-1}$;
    \item $M(ij01,***)\vu=u_{i00}+u_{j11}\le \gamma_j-\gamma_i$;
    \item $M(ij10,***)\vu=u_{i01}+u_{j10}\le \gamma_j-\gamma_i$,
\end{itemize}
for all relevant indices.

\medskip
\noindent
\textbf{Equality structure.}
If $\B_{i00}=\B_{j11}=1$, then
\[
u_{i00}+u_{j11}=\gamma_j-\gamma_i,
\]
so the inequality corresponding to $M(ij01,***)$ holds with equality, independently
of the value of $\alpha$.
If $\B_{i01}=\B_{j10}=1$, then
\[
u_{i01}+u_{j10}=\gamma_j-\gamma_i,
\]
so the inequality corresponding to $M(ij10,***)$ also holds with equality.

To establish strict inequality whenever at least one of the corresponding
$\vB$-entries is zero, we verify that each component $u_{ydz}$ strictly
increases when $\B_{ydz}$ takes the value $1$ instead of $0$.
Since equality holds when both corresponding $\vB$-entries are equal to $1$,
this implies that the sum is strictly smaller than $\gamma_j-\gamma_i$
whenever at least one of them is zero.

\medskip
\noindent
\textbf{Case $\vB\in S_1$.}
Let $0\le t<n-1$ be the smallest index such that $\B_{t00}=\B_{t01}=1$.
Set
\[
\alpha=-\gamma_0-\gamma_t.
\]

\begin{itemize}
    \item For $i<t$, $\B_{i00}\neq\B_{i01}$, and hence
          $u_{i00}+u_{i01}=\gamma_0-\gamma_i$.
          For $i\ge t$, $\B_{i00}=\B_{i01}=1$, and
          \[
          u_{i00}+u_{i01}
          =-2\gamma_i-\alpha
          =\gamma_0-\gamma_i+(\gamma_t-\gamma_i).
          \]
          Thus equality holds if and only if $i=t$, and the inequality is strict
          for all $i>t$ (as $\gamma_t-\gamma_i<0$).
          Therefore,
          \[
          M(i000,***)\in\Mu \iff i\le t.
          \]

    \item For all $j\in[n]$, $\B_{j10}\neq\B_{j11}$.
          Consequently,
          \[
          u_{j10}+u_{j11}=\gamma_j-\gamma_{n-1},
          \]
          so equality holds for all $j$, and
          \[
          M((n-1)j11, ***)\in\Mu \quad \text{for all } j\in[n].
          \]

    \item If $\B_{i00}=0$, which occurs only for $i<t$, then
          \[
          -\gamma_i-\alpha>\gamma_0,
          \]
          which shows that $u_{i00}$ is strictly larger when $\B_{i00}=1$.
          If $\B_{j11}=0$, then
          \[
          \gamma_j+\alpha>-\gamma_{n-1},
          \]
          which shows that $u_{j11}$ is strictly smaller when $\B_{j11}=0$.
          Therefore,
          \[
          M(ij01,***)\in\Mu \iff \B_{i00}=\B_{j11}=1.
          \]

    \item If $\B_{i01}=0$, then
          \[
          -\gamma_i>\gamma_0+\alpha,
          \]
          so $u_{i01}$ is strictly larger when $\B_{i01}=1$.
          If $\B_{j10}=0$, then
          \[
          \gamma_j>-\gamma_{n-1}-\alpha,
          \]
          so $u_{j10}$ is strictly smaller when $\B_{j10}=0$.
          Hence,
          \[
          M(ij10,***)\in\Mu \iff \B_{i01}=\B_{j10}=1.
          \]
\end{itemize}

Thus $M\vu\le\vc$ and $\vBu=\vB$ for $\vB\in S_1$.

\medskip
\noindent
\textbf{Case $\vB\in S_2$.}
Set $\alpha=-\gamma_0-\gamma_{n-1}$.

\begin{itemize}
    \item For $i<n-1$, $\B_{i00}\neq\B_{i01}$, so
          $u_{i00}+u_{i01}=\gamma_0-\gamma_i$.
          For $i=n-1$, $\B_{(n-1)00}=\B_{(n-1)01}=1$, and hence
          $u_{(n-1)00}+u_{(n-1)01}=\gamma_0-\gamma_{n-1}$.
          Therefore,
          \[
          M(i000,***)\in\Mu \quad \text{for all } i\in[n].
          \]

    \item For $j>0$, $\B_{j10}\neq\B_{j11}$, which gives equality.
          For $j=0$, $\B_{010}=\B_{011}=1$, which also gives equality.
          Thus,
          \[
          M((n-1)j11, ***)\in\Mu \quad \text{for all } j\in[n].
          \]

    \item If $\B_{i00}=0$, then $i<n-1$ and
          \[
          -\gamma_i-\alpha>\gamma_0,
          \]
          showing that $u_{i00}$ is strictly larger when $\B_{i00}=1$.
          If $\B_{j11}=0$, then $j>0$ and
          \[
          \gamma_j+\alpha>-\gamma_{n-1},
          \]
          showing that $u_{j11}$ is strictly smaller when $\B_{j11}=0$.
          Hence,
          \[
          M(ij01,***)\in\Mu \iff \B_{i00}=\B_{j11}=1.
          \]

    \item If $\B_{i01}=0$, then $i<n-1$ and
          \[
          -\gamma_i>\gamma_0+\alpha,
          \]
          showing that $u_{i01}$ is strictly larger when $\B_{i01}=1$.
          If $\B_{j10}=0$, then $j>0$ and
          \[
          \gamma_j>-\gamma_{n-1}-\alpha,
          \]
          showing that $u_{j10}$ is strictly smaller when $\B_{j10}=0$.
          Hence,
          \[
          M(ij10,***)\in\Mu \iff \B_{i01}=\B_{j10}=1.
          \]
\end{itemize}

Thus $M\vu\le\vc$ and $\vBu=\vB$ for $\vB\in S_2$.

\medskip
\noindent
\textbf{Case $\vB\in S_3$.}
Let $0<t\le n-1$ be the largest index such that $\B_{t10}=\B_{t11}=1$.
Set
\[
\alpha=-\gamma_t-\gamma_{n-1}.
\]

\begin{itemize}
    \item For all $i\in[n]$, $\B_{i00}\neq\B_{i01}$, and hence
          $u_{i00}+u_{i01}=\gamma_0-\gamma_i$.
          Therefore,
          \[
          M(i000,***)\in\Mu \quad \text{for all } i\in[n].
          \]

    \item For $j>t$, $\B_{j10}\neq\B_{j11}$ and equality holds.
          For $j\le t$, $\B_{j10}=\B_{j11}=1$, so
          \[
          u_{j10}+u_{j11}
          =\gamma_j-\gamma_{n-1}+(\gamma_j-\gamma_t).
          \]
          Equality holds only when $j=t$.
          Hence,
          \[
          M((n-1)j11, ***)\in\Mu \iff j\le t.
          \]

    \item If $\B_{i00}=0$, then
          \[
          -\gamma_i-\alpha>\gamma_0,
          \]
          which shows that $u_{i00}$ is strictly larger when $\B_{i00}=1$.
          If $\B_{j11}=0$, then $j>t$ and
          \[
          \gamma_j+\alpha>-\gamma_{n-1},
          \]
          which shows that $u_{j11}$ is strictly smaller when $\B_{j11}=0$.
          Therefore,
          \[
          M(ij01,***)\in\Mu \iff \B_{i00}=\B_{j11}=1.
          \]

    \item If $\B_{i01}=0$, then
          \[
          -\gamma_i>\gamma_0+\alpha,
          \]
          showing that $u_{i01}$ is strictly larger when $\B_{i01}=1$.
          If $\B_{j10}=0$, then $j>t$ and
          \[
          \gamma_j>-\gamma_{n-1}-\alpha,
          \]
          showing that $u_{j10}$ is strictly smaller when $\B_{j10}=0$.
          Hence,
          \[
          M(ij10,***)\in\Mu \iff \B_{i01}=\B_{j10}=1.
          \]
\end{itemize}

Thus $M\vu\le\vc$ and $\vBu=\vB$ for $\vB\in S_3$.

\medskip
\noindent
It remains to prove that $\rank(\Mu)=4n-1$ in all cases.

To do so, we show that $\ker(\Mu)$ is one-dimensional.

Suppose that $\vs\in\ker(\Mu)$.
Then
\[
\Mu\vs=0.
\]

\medskip
\noindent
\textbf{Structure coming from $M(ij01,***)$.}

In all cases we have
\[
M(ij01,***)\in\Mu \iff \B_{i00}=\B_{j11}=1.
\]
Hence, whenever $\B_{i00}=\B_{j11}=1$, we obtain
\[
M(ij01,***)\vs
= s_{i00}+s_{j11}
=0.
\]

By the structure of $\vB$, there exists some index $i_0$ such that
$\B_{i_000}=1$.
Let
\[
s_{i_000}=a.
\]
Then for every $j$ such that $\B_{j11}=1$, the relation
$s_{i_000}+s_{j11}=0$ implies
\[
s_{j11}=-a.
\]
Substituting this back into $s_{i00}+s_{j11}=0$ for any $i$ with
$\B_{i00}=1$ gives
\[
s_{i00}=a.
\]

Therefore,
\[
s_{i00}=a \quad\text{if }\B_{i00}=1,
\qquad
s_{j11}=-a \quad\text{if }\B_{j11}=1.
\]

\medskip
\noindent
\textbf{Structure coming from $M(ij10,***)$.}

Similarly, in all cases we have
\[
M(ij10,***)\in\Mu \iff \B_{i01}=\B_{j10}=1.
\]
Hence, whenever $\B_{i01}=\B_{j10}=1$, we obtain
\[
M(ij10,***)\vs
= s_{i01}+s_{j10}
=0.
\]

Let
\[
s_{i_001}=a'
\]
for some $i_0$ with $\B_{i_001}=1$.
Then for every $j$ such that $\B_{j10}=1$ we have
\[
s_{j10}=-a',
\]
and consequently for every $i$ such that $\B_{i01}=1$,
\[
s_{i01}=a'.
\]

Thus,
\[
s_{i01}=a' \quad\text{if }\B_{i01}=1,
\qquad
s_{j10}=-a' \quad\text{if }\B_{j10}=1.
\]

\medskip
\noindent
\textbf{Determination of the remaining entries.}

If $\B_{y0z}=0$ for some $y\in[n]$ and $z \in \{0,1\}$, then by construction
\[
M(y000,***)\in\Mu.
\]
Hence
\[
M(y000,***)\vs
= s_{y00}+s_{y01}
=0.
\]
Since by the structure of $\vB$ we know that
\[
\B_{y00}+\B_{y01}>0,
\]
at least one of $s_{y00}$ or $s_{y01}$ is already determined
(either equal to $a$ or $a'$).
The equation $s_{y00}+s_{y01}=0$ therefore uniquely determines
the other entry.

Similarly, if $\B_{y1z}=0$, then
\[
M((n-1)y11,***)\in\Mu,
\]
and therefore
\[
s_{y10}+s_{y11}=0,
\]
which again uniquely determines the remaining component.

Consequently, every component of $\vs$ is determined in terms of
$a$ and $a'$.

\medskip
\noindent
\textbf{Case $\vB\in S_1$.}

In this case, we know that
\[
M(t000,***)\in\Mu
\]
for the index $t$ defined in $S_1$.
Hence
\[
s_{t00}+s_{t01}=0.
\]
Since $\B_{t00}=\B_{t01}=1$, we have
\[
s_{t00}=a,
\qquad
s_{t01}=a'.
\]
Thus
\[
a+a'=0,
\quad\text{so}\quad
a=-a'.
\]

Therefore all components of $\vs$ are determined by a single
parameter $a$, and the null space is one-dimensional.

\medskip
\noindent
\textbf{Case $\vB\in S_2$.}

Here we know that
\[
M((n-1)000,***)\in\Mu,
\]
so
\[
s_{(n-1)00}+s_{(n-1)01}=0.
\]
Since $\B_{(n-1)00}=\B_{(n-1)01}=1$, we have
\[
s_{(n-1)00}=a,
\qquad
s_{(n-1)01}=a',
\]
which implies
\[
a+a'=0.
\]
Again $a=-a'$, so $\vs$ depends on a single parameter.

\medskip
\noindent
\textbf{Case $\vB\in S_3$.}

In this case,
\[
M((n-1)t11,***)\in\Mu,
\]
for the index $t$ defined in $S_3$. So
\[
s_{t10}+s_{t11}=0.
\]
Since $\B_{t10}=\B_{t11}=1$, we have
\[
s_{t10}=-a',
\qquad
s_{t11}=-a.
\]
Thus
\[
-a-a'=0,
\quad\text{so again}\quad
a=-a'.
\]

\medskip
\noindent
In all three cases, $\vs$ is uniquely determined up to a scalar multiple.
Hence,
\[
\dim (\ker(\Mu))=1,
\]
and therefore
\[
\rank(\Mu)=4n-1.
\]

Finally, we have proven that for each $\vB \in S$ there exists a unique
active constraint matrix $\Mu$ and a unique vector $\vu$ such that
$\vu$ is a vertex.
Hence, every $\vB \in S$ is valid, and this validity is unique.
\end{proof}

\begin{remark}\label{rem: order of creating v}
    If $\vB$ is valid, we can recreate its corresponding vertex in $\mathcal{O}(n)$, which indicates the length of the vertex.
\end{remark}
\begin{proof}
Given $\vB$, we can explicitly construct the corresponding vertex $\vu$
using \Cref{def: U define}.
This construction requires only $\mathcal{O}(n)$ time.

Indeed, once $\vB$ is given, the only remaining parameter to determine
is the value of $\alpha$.
To compute this value, it suffices to find either
\begin{itemize}
    \item the smallest index $t$ such that $\B_{t00}=\B_{t01}$, or
    \item the largest index $t$ such that $\B_{t10}=\B_{t11}$.
\end{itemize}
Both quantities can be obtained by a single linear scan of $\vB$,
and therefore can be computed in $\mathcal{O}(n)$ time.
\end{proof}

\subsection{Proof of the Theorems \ref{thm:main PI result}, \ref{thm: PI ATE bounds}, and \ref{thm: jointindep}}\label{app: PI proof of main theorem}
\ateVertices*
\begin{proof}
By \cref{pro: Bv is in S}, every vertex has a signature in $S$. 
Therefore, any vertex is $M$-equivalent to a member of $\mathcal{V}$.
Conversely, by \cref{pro: B in S is valid}, every $\vB \in S$ has an associated vertex, and this vertex is unique. Indeed, no two $M$-distinct vectors produce the same signature $\vB$.
Thus, set $\mathcal{V}$ includes $M$-distinct vertices.
As a result, there is a bijection between the elements of $S$ and vertices and the proof is completed.
\end{proof}

\begin{proposition}
\label{prop: sharpness of vertices}
For every vertex $\vv$ of the dual feasible set, there exists a distribution $\mathcal{P}$ conforming to the IV model (equivalently, a corresponding $\mathcal{Q}$ that the pair $\mathcal{P}, \mathcal{Q}$ satisfy equations in \ref{eq: relation between q and p}) such that $\vv$ uniquely attains the maximum (or, in the minimization case, the minimum) value of the dual objective.
\end{proposition}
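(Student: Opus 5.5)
The plan is a complementary-slackness construction: choose a primal point $\vq$ supported exactly on the constraints that are active at $\vv$, and take $\vp=\Mt\vq$. Fix a vertex $\vv$ of $\mathcal{H}$ with active constraint matrix $\Mv$. By \Cref{lem: vert iff rank is 4n-1}, $\rank(\Mv)=4n-1$. Let $A$ be the set of row indices $(ij,\mathbf d)$ of $M$ that are active at $\vv$, meaning $M_{(ij,\mathbf d)}\vv=c_{ij,\mathbf d}$. Define $\vq$ by $q_{ij,\mathbf d}=1/|A|$ for $(ij,\mathbf d)\in A$ and $q_{ij,\mathbf d}=0$ otherwise; any strictly positive weights on $A$ that are normalized to sum to one would also work. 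Set $\vp:=\Mt\vq$.

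The first step is to check that $\vp$ is a legitimate observed law conforming to the IV model. By \Cref{rem: Q structure}, each $q_{ij,\mathbf d}$ enters exactly one equation $p_{yd,z}$ for each fixed $z$. Hence $\sum_{y,d}p_{yd,z}=\sum q=1$ for every $z$, and $\vp\ge 0$. So each $\mathcal{P}_z$ is a probability kernel, and together with any positive marginal for $Z$ this gives an observed law. The pair $(\vp,\vq)$ satisfies \eqref{eq: relation between q and p} by construction. By \Cref{prop: suffciency of constraints}, there is therefore a full-data law in the IV model inducing them.

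The second step is weak duality with equality. For any $\vw\in\mathcal{H}$,
\[
\vw^\top\vp=\vq^\top M\vw=\sum_{r\in A}q_r\,(M\vw)_r\le\sum_{r\in A}q_r c_r=\vq^\top\vc ,
\]
because $q_r>0$ only on $A$ and $M\vw\le\vc$. At $\vw=\vv$ every term is tight, so $\vv^\top\vp=\vq^\top\vc$, and $\vv$ attains the maximum of the dual objective.

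The third step, which is the crux, is uniqueness modulo $\ker(M)$. Suppose $\vw\in\mathcal{H}$ also attains the maximum. Since every $q_r$ with $r\in A$ is strictly positive, equality in the display above forces $(M\vw)_r=c_r$ for all $r\in A$, that is, $\Mv\vw=\vcv=\Mv\vv$. Then $\vw-\vv\in\ker(\Mv)$. Because $\Mv$ is a submatrix of $M$, we have $\ker(M)\subseteq\ker(\Mv)$, and both kernels have dimension one by the rank count; hence $\ker(\Mv)=\ker(M)$. So $\vw$ is $M$-equivalent to $\vv$, and $\vv$ is the unique maximizer up to the intrinsic degeneracy. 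By \Cref{lem: ps is zero}, that degeneracy does not affect the objective value.

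For the minimization case, which gives the upper bound, I would apply the same construction to $\bar{\vp}$. By \Cref{prop: uno reverse}, $U$ is $-\max\vv^\top\bar{\vp}$. So I take $\bar{\vp}=\Mt\vq$ as above and define $\mathcal{P}$ as its conjugate, swapping $d\leftrightarrow 1-d$; this is again an IV-conforming law. The main point needing care is that uniqueness uses strict positivity of $\vq$ on all of $A$, not merely on a row basis. Supporting $\vq$ only on a basis of rows would still give uniqueness, but putting weight on all of $A$ avoids having to choose one.
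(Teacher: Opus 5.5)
Your proposal is correct and is essentially the paper's proof. Both put equal mass on the constraints active at $\vv$, set $\vp=M^\top\vq$, and use strict positivity of $\vq$ on the active set to force any maximizer $\vw$ to satisfy $\Mv\vw=\Mv\vv$, so that $\vw-\vv\in\ker(\Mv)=\ker(M)$ by the rank condition. Your direct kernel argument, your check that $\vp$ is a genuine observed law, and your conjugate-law treatment of the upper bound are more careful than the paper's appeal to the row-basis remark, but the route is the same.
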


Let $\vv$ be a vertex of the feasible set of \Cref{eq: dual_lower_bound_formal}.
By the definition of a vertex (see \Cref{lem: vert iff rank is 4n-1}), there exists a collection of inequalities in \Cref{eq: dual_lower_bound_formal} that are tight at $\vv$ and whose corresponding equalities admit a unique vertex solution, namely $\vv$ itself
(see \Cref{rem: uniqness of Mv}).
Since \Cref{eq: dual_lower_bound_formal} is the dual of
\Cref{eq: primal lower bound PI}, each dual inequality corresponds to a primal variable
$q_{kl,ij}$ for some $i,j \in \{0,1\}$ and $k,l \in [n]$.

\begin{proof} 
\label{proof: sharpness}
    Fix a vertex $\vv$, and let $u$ denote the number of inequalities in \Cref{eq: dual_lower_bound_formal} that are tight at $\vv$.
    By construction of the dual via the Lagrangian, each such inequality is associated with a primal variable $q_{\y0\y1,\dv0\dv1}$.
    
    We define a distribution $\mathcal{Q}$ as follows:
    for each inequality that holds with equality at $\vv$, set the corresponding $q_{\y0\y1,\dv0\dv1} = \frac{1}{u}$, and set all remaining probabilities in the support of $\mathcal{Q}$ to zero.
    This defines a valid distribution supported exactly on the tight inequalities at $\vv$.
    
    Next, rewrite the dual objective in \Cref{eq: dual_lower_bound_formal} in terms of
    $q_{\y0\y1,\dv0\dv1}$:
    \begin{align*}
        \sum_{d,z \in \{0,1\},\, y \in [n]} p_{yd,z} \, v_{yd,z}
        =&
        \sum_{j,l,y} q_{yl,0j} \, v_{y0,0}
        + \sum_{j,k,y} q_{ky,1j} \, v_{y1,0} \\
        &+ \sum_{i,l,y} q_{yl,i0} \, v_{y0,1}
        + \sum_{i,k,y} q_{ky,i1} \, v_{y1,1},
    \end{align*}
    which can be equivalently expressed as
    \begin{gather}
    \label{eq: equivalent dual objective value}
        \sum_{k,l} q_{kl,00} \bigl( v_{k0,0} + v_{k0,1} \bigr)
        + \sum_{k,l} q_{kl,01} \bigl( v_{k0,0} + v_{l1,1} \bigr) \\
        + \sum_{k,l} q_{kl,10} \bigl( v_{l1,0} + v_{k0,1} \bigr)
        + \sum_{k,l} q_{kl,11} \bigl( v_{l1,0} + v_{l1,1} \bigr).
        \nonumber
    \end{gather}

    Each term in \Cref{eq: equivalent dual objective value} is a convex combination of expressions that are individually bounded above (or below) by the inequalities defining the dual feasible set.
    Since all nonzero coefficients $q_{\y0\y1,\dv0\dv1}$ equal $\frac{1}{u} > 0$, the dual objective is maximized (or minimized) if and only if every inequality corresponding to a nonzero $q_{\y0\y1,\dv0\dv1}$ is tight.
    
    By construction of $M$, this occurs precisely when the set of inequalities that are tight at $\vv$ hold with equality.
    In other words, $\vu$ can maximize the dual objective if and only if $\Mv \subseteq \Mu$.
    Since $\vv$ is a vertex, by \Cref{rem: uniqness of Mv}, $\Mu$ has the same basis as $\Mv$ and must be equal to it.
    Thus, the system of equalities represented by $\Mv$ admits a unique vertex solution, namely $\vv$ itself.
    Therefore, $\vv$ uniquely attains the extreme value of the dual objective, completing the proof.
\end{proof}

\ateTerms*
\begin{proof}
    We have enumerated all $M$-distinct vertices of the dual feasible set,
    which is sufficient. Moreover, by \Cref{prop: sharpness of vertices},
    each vertex provides a unique bound. Hence, every vertex contributes
    a necessary term to the bounds.

    Therefore, we obtain
    \[
        \max_{\vv \in \mathcal{V}} \vv^\top \vp = L(\mathcal{P})
        \;\le\;
        ATE
        \;\le\;
        -L(\Bar{\mathcal{P}}) = -\max_{\vv \in \mathcal{V}} \vv^\top \Bar{\vp}.
    \]
    based on Equations \ref{eq: primal lower bound PI} and \ref{eq: primal upper bound PI}.
\end{proof}

\prpjointindep*
\begin{proof}
    As established in Theorem 1 of \cite{song2025categoricalinstrumentalvariablemodel}, the set of admissible pairs $(\mathcal{P}, \mathcal{Q})$ under Assumptions \ref{asm: exclusion}, \ref{asm: independence}, and \ref{asm:consistency} (denoted by $\mathcal{M}_1$ in \cite{song2025categoricalinstrumentalvariablemodel}) is identical to the set of admissible pairs under Assumptions \ref{asm: exclusion}, \ref{asm: jointind}, and \ref{asm:consistency_a} (denoted by $\mathcal{M}_2$ in \cite{song2025categoricalinstrumentalvariablemodel}).

    More precisely, for every pair $(\mathcal{P}, \mathcal{Q})$ for which there exists a full data law $\mathcal{Q}^{f*}$ satisfying Assumptions \ref{asm: exclusion}, \ref{asm: independence}, and \ref{asm:consistency}, there exists a full data law $\mathcal{Q}^{f'}$ satisfying Assumptions \ref{asm: exclusion}, \ref{asm: jointind}, and \ref{asm:consistency_a} that induces the same observed distribution $\mathcal{P}$, and vice versa.
    
    Consequently, the sharp identification region for the ATE under the two sets of assumptions is identical. 
    In particular, the bounds in \eqref{eq:atebounds}, obtained via \eqref{eq: primal lower bound PI} and \eqref{eq: primal upper bound PI}, remain valid and sharp under Assumptions \ref{asm: exclusion}, \ref{asm: jointind}, and \ref{asm:consistency_a}.
\end{proof}

\section{Extreme Ray Enumeration}
\label{sec: proof of iv}

Note that \Cref{eq: general dual feasibility test_formal} 
can be also written as follows:
\begin{align}
    0 \geq
    \max \quad 
    &\Sigma_{d,z\in \{0,1\}, y \in [n] } p_{yd, z} x_{yd,z}
    \\
    \text{s.t.} \quad 
    & x_{k 0, 0} + x_{k 0, 1} \leq 0 \quad \forall k \in [n]
    \nonumber \\
    & x_{k 0, 0} + x_{l 1, 1} \leq 0 \quad \forall k, l \in [n]
    \nonumber \\
    & x_{l 1, 0} + x_{k 0, 1} \leq 0 \quad \forall k, l \in [n]
    \nonumber \\
    & x_{l 1, 0} + x_{l 1, 1} \leq 0 \quad \forall l \in [n]
    \nonumber 
\end{align}

In the definition below, we introduce a set of vectors $\mathcal{R}$.
We shall shortly prove that $\mathcal{R}$ is exactly the set of vectors that can generate the cone $\mathcal{K} := \{ \vx \in \R^{2\ell n} : M\vx \le \mathbf{0} \}$.
\begin{definition}
\label{def: pattern of extreme rays}
    Let $\mathcal{R}$ be the set of all vectors $\vr = \big( r_{ki,j}\big)_{k\in[n],i,j\in\{0,1\}}$ that belong to one of the following families:
    
    \medskip
    \noindent\textbf{(I) The all-$\pm 1$ rays.}
    One vector corresponding to each $s\in\{-1,1\}$, defined as
    \begin{equation*}
    r_{k i,1}=s,\qquad r_{k i,0}=-s,
    \qquad \forall\, k\in[n],\ i\in\{0,1\}.
    \end{equation*}
    \medskip
    \noindent\textbf{(II) Single-entry perturbations.}
    One vector corresponding to every $(k',i',j')\in [n]\times\{0,1\}\times\{0,1\}$ with $(k',i',j')\neq (n-1,1,1)$, defined as
    \begin{align*}
    \begin{cases}
        r_{k'i',j'}=-1,\\
        r_{k i,j}=0 \qquad \forall(k,i,j)\neq (k',i',j')
    \end{cases}
    \end{align*}
    \medskip
    \noindent\textbf{(III) One special coordinate set to zero.}
    \begin{align*}
    \begin{cases}
        r_{k1,0}=-1,\ \ r_{k0,0}=-1 \qquad& \forall k\in[n], \\
    r_{k1,1}=1\qquad& \forall k\in[n-1], \nonumber \\
    r_{k0,1}=1\qquad& \forall k\in[n], \nonumber \\
    r_{(n-1)1,1}=0 .
    \end{cases}
    \end{align*}
    \medskip
    \noindent\textbf{(IV) Rays indexed by a nonzero binary vector $\mathbf{s}\in\{0,1\}^{n-1}$.}
    One vector corresponding to every
    $\mathbf{s}\in\{0,1\}^{n-1}\setminus\{\mathbf{0}\}$, defined as
    \begin{align*}
    \begin{cases}r_{k0,1}=-1,\ \ r_{k0,0}=0 \quad &\forall k\in[n], \\
    r_{k1,0}=s_k\quad &\forall k\in[n-1],\nonumber \\
    r_{k1,1}=-r_{k1,0}\quad &\forall k\in[n],\nonumber \\
    r_{(n-1)1,0}=0.
    \end{cases}
    \end{align*}
    \medskip
    \noindent\textbf{(V) Rays indexed by a nontrivial binary vector $\mathbf{s}\in\{0,1\}^n$.}
    One vector corresponding to every $\mathbf{s}\in\{0,1\}^n\setminus\{\mathbf{0},\mathbf{1}\}$, defined as
    \begin{align*}
    \begin{cases}r_{k1,1}=0,\ \ r_{k1,0}=-1 \quad &\forall k\in[n], \\
    r_{k0,1}=s_k\quad &\forall k\in[n], \\
    r_{k0,0}=-r_{k0,1}\quad &\forall k\in[n].
    \end{cases}
    \end{align*}
    \medskip
    \noindent\textbf{(VI) Another family indexed by $\mathbf{s}\in\{0,1\}^{n-1}\setminus\{\mathbf{0}\}$.}
    One vector corresponding to every $\mathbf{s}\in\{0,1\}^{n-1}\setminus\{\mathbf{0}\}$, defined as
    \begin{align*}
    \begin{cases}r_{k0,1}=0,\ \ r_{k0,0}=-1 \quad &\forall k\in[n], \\
    r_{k1,1}=s_k\quad &\forall k\in [n-1],\\
    r_{k1,0}=-r_{k1,1}\quad &\forall k\in[n], \\
    r_{(n-1)1,1}=0.
    \end{cases}
    \end{align*}
\end{definition}
\begin{restatable}{proposition}{thmpattern}
\label{thm: pattern of extreme rays}
    The cone $\mathcal{K}$ is generated by cone combination of vectors $\vr\in \mathcal{R}$.
\end{restatable}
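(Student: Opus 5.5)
The plan is to prove two inclusions: $\mathrm{cone}(\mathcal R)\subseteq\mathcal K$, and that every $\vx\in\mathcal K$ is a conic combination of elements of $\mathcal R$.

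The first inclusion is routine. For each family (I)--(VI) I will check the four constraint types from the display preceding \Cref{def: pattern of extreme rays}. Family (I) is exactly $\pm$ the kernel vector of \Cref{rem: s}, so every row of $M$ evaluates to $0$ on it. Family (II) has a single $-1$ entry, so every row sum is $\le 0$. For (III)--(VI), each constraint pairs one coordinate from a block that is identically $-1$ or $0$ with one coordinate from a $0/1$ (or $0/-1$) block, and a direct inspection shows the sum is always $\le 0$.

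For the converse I will use the reduction of \Cref{cor: finite_extreme_rays_reduction}. Since $\ker(M)$ is spanned by the family (I) vectors, which appear with both signs, it suffices to show that every extreme ray of the pointed quotient cone $\pi(\mathcal K)$ has a representative in families (II)--(VI). By \Cref{lem: rank is 4n-1} the quotient has dimension $4n-1$. Hence $\vr\in\mathcal K\setminus\ker(M)$ is extreme if and only if its active constraint matrix $M_{\vr}$ (rows with $M\vr=0$) has rank $4n-2$; this is the homogeneous analogue of \Cref{lem: vert iff rank is 4n-1}, proved the same way.

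I will then rerun the signature machinery of \Cref{app: PI b properties} with $\vc=\mathbf 0$, i.e. with all $\gamma_i$ formally equal. Several lemmas simplify in this setting.
\begin{itemize}
\item \Cref{lem: first and last n^2 rows} degenerates: the rows $M(ij00,***)$ for different $j$ coincide, and likewise for $M(ij11,***)$, so only the $2n$ distinct rows matter.
\item \Cref{lem: full 1 columns} survives verbatim, since its proof uses only the exchange identity $M(aj01)+M(ib01)=M(ij01)+M(ab01)$.
\item Rank counts of the type in \Cref{lem: rank of Q(01|10) part} now have one less unit to spare, because we need rank $4n-2$ rather than $4n-1$.
\end{itemize}
With $\vBr$ defined as in \Cref{def: Bx}, I will show that, up to adding a kernel vector, a ray is determined by its signature together with which of the $2n$ rows $M(k000,***)$ and $M((n-1)l11,***)$ are active. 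I will then enumerate the admissible configurations. Losing exactly one rank unit relative to a vertex either kills one column constraint, which gives the single-entry rays (II), or breaks the block $\Mr(**01,***)$ or $\Mr(**10,***)$ entirely. The latter splits the coordinates into a ``$-1$'' block and a $0/\pm1$ block indexed by a subset, which gives families (III)--(VI). The exclusion of $(n-1,1,1)$ in (II) and the special coordinate in (III), (IV) and (VI) come from using the kernel shift to normalize $r_{(n-1)1,1}$.

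As a fallback or cross-check, I would also try a direct greedy decomposition. First shift $\vx$ by a kernel vector so that $\max_l x_{l1,1}=0$. Then subtract nonnegative multiples of the type (II) vectors to raise coordinates until every coordinate is blocked by some tight constraint. The remainder should be a nonnegative combination of one type (III)--(VI) vector per threshold level, obtained by sorting the coordinate values and peeling off level sets, much like a layer-cake decomposition.

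The main obstacle is the case analysis showing that rank exactly $4n-2$ forces one of the listed patterns, and that no mixed pattern (e.g. partial breaks in both the $01$ and $10$ blocks) survives. I expect to handle this by adapting \Cref{lem: after intersection structure} and \Cref{lem: B intersenct affect on first and last n rows} with counting arguments modified to account for the extra unit of rank deficiency.
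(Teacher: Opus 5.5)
Your outer framework is sound and matches the paper. That covers checking $\mathrm{cone}(\mathcal R)\subseteq\mathcal K$ family by family, passing to the pointed quotient, and using that $\vr\notin\ker(M)$ is extreme iff $\mathrm{rank}(\Mr)=4n-2$.

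The gap is that the real content of the proposition is only announced: you never show that every extreme ray is, up to $\ker(M)$ and scaling, one of (II)--(VI). The route you announce for it cannot work as described. The Appendix~B lemmas are not merely harder to prove when $\vc=\mathbf 0$; their \emph{conclusions} are false for extreme rays of $\mathcal K$. They encode the strict order $\gamma_0<\dots<\gamma_{n-1}$, which is what produces the threshold index $t$ in $S$.

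For example, take the family (II) ray with $r_{k0,0}=-1$ and all other entries $0$, with $n\ge2$. Every constraint not involving $r_{k0,0}$ is tight. So its signature has $\B_{k'00}=\B_{k'01}=1$ for all $k'\neq k$ and $\B_{l10}=\B_{l11}=1$ for all $l$. This contradicts \Cref{lem: the only possible equility of duals} (take $j=n-1\neq0$). For $k>0$ it also contradicts the monotone structure of \Cref{lem: after intersection structure}.

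In the cone, the threshold is replaced by an arbitrary subset of $[n]$, which is exactly where the $2^n$ count comes from. So no modified rank counting can rescue these lemmas; a new classification argument is needed, and you do not supply one.

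Your heuristic for the outcome is also inaccurate. For the subset rays (IV)--(VI), neither $\Mr(**01,***)$ nor $\Mr(**10,***)$ is empty. Instead, one whole family of diagonal rows is inactive: $M(k000,***)$ for (IV) and (VI), and $M((n-1)l11,***)$ for (V). Meanwhile the two cross blocks are complete bipartite on complementary index sets. Family (III) is not of this kind at all; it is the normalized form of $-\mathbf e_{(n-1)1,1}$.

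Your fallback is the more promising route, and it would differ genuinely from the paper. The paper proves \Cref{clm: patterns of extreme rays}: on an extreme ray all nonzero coordinates have equal modulus with sign Pattern A or B, via connectivity of the tight-equality graph. It then enumerates zero sets by rank counts. A direct decomposition would avoid extreme-ray theory entirely. However, its decisive claim, that after blocking the remainder is one (IV)--(VI) vector per level, is stated only as an expectation, and that is where all the combinatorics lives.

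It can be closed as follows.
\begin{enumerate}
\item Write $a_k,b_l,c_k,d_l$ for the coordinates $(k0,0)$, $(l1,0)$, $(k0,1)$, $(l1,1)$. Set $y=x$ on the $z=0$ coordinates and $y=-x$ on the $z=1$ coordinates. Every constraint then reads $y_u\le y_v$ along an edge $a_kc_k$, $a_kd_l$, $b_lc_k$ or $b_ld_l$, and $\ker(M)$ becomes the constants.
\item For a blocked $x$, each level set $U=\{y\ge t\}$ above the minimum is closed upward along edges.
\item Blockedness forces $U$ to contain a $z=0$ coordinate, since an equal-valued neighbour of any $z=1$ element of $U$ lies in $U$. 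Symmetrically, the complement of $U$ must contain a $z=1$ coordinate.
\item All $a_kd_l$ and all $b_lc_k$ are edges, so $U$ cannot contain both an $a$ and a $b$. Say $U$ meets only the $a$'s. Then every $d_l\in U$, and blockedness of each $c_k\in U$ forces $a_k\in U$.
\item Hence $U=\{a_k,c_k:k\in J\}\cup\{d_0,\dots,d_{n-1}\}$ with $\varnothing\neq J\neq[n]$, which is family (V) modulo $\ker(M)$.
\item The other case gives $U=\{b_l,d_l:l\in J\}\cup\{c_0,\dots,c_{n-1}\}$, which is (IV) or (VI) according to whether $n-1\in J$.
\item The layer-cake identity then finishes the argument.
\end{enumerate}
Without an argument of this kind, the proposal does not yet prove the statement.
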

In order to prove \Cref{thm: pattern of extreme rays}, we characterize the extreme rays.
To do so, we will utilize a few preliminary results presented below.

\begin{definition}[Active Constraint Matrix]\label{def: vertex matrix IV version}
For a feasible point $\vr$ in $\mathcal{K}$, the active constraint matrix at $\vr$, denoted by $\Mr$, is the maximal (row-) submatrix of $M$ such that
\[
\Mr \vr = \mathbf{0}.
\]
\end{definition}

\begin{definition}[Positive Homogeneity]
A set defined by a system of equalities is positively homogeneous if it is closed under nonnegative scaling, i.e., if $\vr$ satisfies the defining equalities, then $\lambda \vr$ also satisfies them for all $\lambda \ge 0$.
\end{definition}

\begin{lemma}
\label{lem: rank of rays are 4n-2}
Let $\vr \neq \mathbf 0$ be a feasible point of the cone $\mathcal{K}:=\{\vx: M\vx \le \mathbf{0}\}$.
Then $\vr$ is an extreme ray of $\mathcal K$ if and only if the active contain matrix at least $4n-2$ linearly independent constraints, i.e.,
\begin{equation*}
    \rank(\Mr) \geq 4n-2
\end{equation*}
\end{lemma}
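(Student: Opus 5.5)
We would prove \Cref{lem: rank of rays are 4n-2} by a direct dimension count, modeled on \Cref{lem: vert iff rank is 4n-1}. The inputs are three facts: $\rank(M)=4n-1$ and $\ker(M)=\mathrm{span}(\vs)$ (\Cref{lem: rank is 4n-1}, \Cref{rem: s}), and $\Mr\vr=\mathbf 0$ by \Cref{def: vertex matrix IV version}. Throughout we take $\vr\in\mathcal K\setminus\ker(M)$, as in \Cref{def: distinct of extreme rays}; vectors in $\ker(M)$ are handled separately as the lineality space.

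First we record $\rank(\Mr)\le 4n-2$. If instead $\rank(\Mr)=4n-1=\rank(M)$, then $\ker(\Mr)=\ker(M)$. Since $\vr\in\ker(\Mr)$, this would put $\vr\in\ker(M)$, which is excluded. So the condition ``$\ge 4n-2$'' really means ``$=4n-2$'', i.e.\ $\dim\ker(\Mr)=2$. Because $\vs$ and $\vr$ both lie in $\ker(\Mr)$ and are linearly independent, this gives $\ker(\Mr)=\mathrm{span}(\vr,\vs)$.

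\emph{($\Leftarrow$)} Assume $\rank(\Mr)\ge 4n-2$, and let $\vr=\vr_1+\vr_2$ with $\vr_1,\vr_2\in\mathcal K$.
\begin{enumerate}
\item Since $\Mr\vr_1\le\mathbf 0$, $\Mr\vr_2\le\mathbf 0$ and their sum is $\Mr\vr=\mathbf 0$, both must vanish. Hence $\vr_i\in\ker(\Mr)$, so $\vr_i=\alpha_i\vr+\beta_i\vs$.
\item It remains to show $\alpha_i\ge 0$. Suppose $\alpha_i<0$. Then $M\vr_i=\alpha_i M\vr$.
\item Since $M\vr\le\mathbf 0$ and $M\vr\neq\mathbf 0$ (because $\vr\notin\ker(M)$), the vector $\alpha_i M\vr$ has a strictly positive entry.
\item This contradicts $\vr_i\in\mathcal K$. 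Therefore $M\vr_i=\alpha_i M\vr$ with $\alpha_i\ge 0$, which is exactly \Cref{def: extreme ray}.
\end{enumerate}

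\emph{($\Rightarrow$)} Assume $\rank(\Mr)<4n-2$. Then $\dim\ker(\Mr)\ge 3$, so we can choose $\vt\in\ker(\Mr)\setminus\mathrm{span}(\vr,\vs)$.
\begin{enumerate}
\item As in \Cref{lem: vert iff rank is 4n-1}, the inactive constraints are strict at $\vr$, and the active ones satisfy $\Mr(\vr\pm\epsilon\vt)=\mathbf 0$. Hence $\vr_{1,2}:=\tfrac12(\vr\pm\epsilon\vt)\in\mathcal K$ for small $\epsilon>0$, and $\vr=\vr_1+\vr_2$.
\item If $\vr$ were extreme, we would have $M\vr_1=\alpha M\vr$. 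This gives $\epsilon M\vt=(2\alpha-1)M\vr$.
\item So $\vt-\tfrac{2\alpha-1}{\epsilon}\vr\in\ker(M)=\mathrm{span}(\vs)$, i.e.\ $\vt\in\mathrm{span}(\vr,\vs)$, contradicting the choice of $\vt$.
\end{enumerate}

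The main subtlety is the lineality space $\ker(M)$. It is what forces the threshold to be $4n-2$ rather than the full dimension minus one in the ambient space. It is also why the sign argument for $\alpha_i$ must use $M\vr\neq\mathbf 0$ rather than $\vr\neq\mathbf 0$. Both points are handled by the first step, where $\ker(\Mr)$ is identified with $\mathrm{span}(\vr,\vs)$.
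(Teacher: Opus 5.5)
Your proof is correct and follows the same route as the paper. For ($\Rightarrow$) you perturb to $\vr\pm\epsilon\vt$ with $\vt\in\ker(\Mr)\setminus\mathrm{span}(\vr,\ker(M))$, and for ($\Leftarrow$) you identify $\ker(\Mr)=\mathrm{span}(\vr,\ker(M))$. Your sign argument for $\alpha_i\ge 0$, using $M\vr\le\mathbf 0$ and $M\vr\neq\mathbf 0$, actually supplies a step the paper leaves implicit.

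The one thing to add is the case $\vr\in\ker(M)\setminus\{\mathbf 0\}$, which the lemma covers but you set aside. There $\Mr=M$ has rank $4n-1$. Any decomposition has $M\vr_1,M\vr_2\le\mathbf 0$ summing to $\mathbf 0$, so both vanish and the definition holds trivially.
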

\begin{proof}
$(\Rightarrow)$
Fix an extreme ray $\vr$.
Suppose by contradiction that $\rank(\Mr)<4n-2$.  
Then the null space of $\Mr$ has dimension strictly larger than $1+\dim(\ker(M))$. 
Hence there exists a nonzero vector
\begin{equation*}
\vt \in \ker(\Mr)\setminus \mathrm{span}\!\left(\vr,\ker(M)\right).
\end{equation*}
Since $\vt\in\ker(\Mr)$, all inequalities that are tight at $\vr$ remain tight along the direction $\vt$. 
Therefore, for sufficiently small
$\epsilon>0$,
\begin{equation*}
M(\vr+\epsilon\vt)\le\mathbf{0},
\qquad
M(\vr-\epsilon\vt)\le\mathbf{0},
\end{equation*}
so both $\vr+\epsilon\vt$ and $\vr-\epsilon\vt$ belong to $\mathcal K$.
Moreover,
\begin{equation*}
\vr=\tfrac12(\vr+\epsilon\vt)+\tfrac12(\vr-\epsilon\vt).
\end{equation*}
Because $\vt\notin \mathrm{span}(\vr,\ker(M))$, at least one of the vectors $\vr\pm\epsilon\vt$ is not contained in $\mathrm{span}(\vr,\ker(M))$, contradicting \Cref{def: extreme ray}.
Hence $\rank(\Mr)\ge 4n-2$.

$(\Leftarrow)$
Suppose $\rank(\Mr)\ge 4n-2$. Then
\begin{equation*}
\dim(\ker(\Mr))\le 1+\dim(\ker(M)).
\end{equation*}
By construction, $\vr\in\ker(\Mr)$ and $\ker(M)\subseteq\ker(\Mr)$.

Let $\vr_1,\vr_2\in\mathcal K$ such that
\begin{equation*}
\vr=\vr_1+\vr_2 .
\end{equation*}
Applying $\Mr$ gives
\begin{equation*}
\mathbf{0}=\Mr\vr=\Mr\vr_1+\Mr\vr_2 .
\end{equation*}
Since $\vr_1,\vr_2\in\mathcal K$, we have $\Mr\vr_1\le\mathbf{0}$ and $\Mr\vr_2\le\mathbf{0}$, hence
\begin{equation*}
\Mr\vr_1=\Mr\vr_2=\mathbf{0},
\end{equation*}
which implies $\vr_1,\vr_2\in\ker(\Mr)$.

If $\vr\in\ker(M)$, then $\Mr=M$ by definition, and therefore $\vr_1,\vr_2\in\ker(M)$, satisfying \Cref{def: extreme ray}.
Otherwise, $\vr\notin\ker(M)$. 
Since $\ker(M)\subseteq\ker(\Mr)$ and
\begin{equation*}
\dim(\ker(\Mr))\le 1+\dim(\ker(M)),
\end{equation*}
we must have
\begin{equation*}
\ker(\Mr)=\mathrm{span}\!\left(\vr,\ker(M)\right).
\end{equation*}
Thus any feasible decomposition of $\vr$ lies in $\mathrm{span}\!\left(\vr,\ker(M)\right)$, which proves that $\vr$ is an extreme ray.
\end{proof}


According to \Cref{lem: rank is 4n-1} and \Cref{lem: rank of rays are 4n-2}, we have two sets of extreme rays, one set for those with $\rank(\Mr) = 4n-1$ and one set with $\rank(\Mr) = 4n-2$.

\begin{remark}[Invariant directions and trivial rays]
\label{rem: trivial ray}
Based on the structure of the cone $\mathcal{K} := \{\vx : M \vx \le \mathbf{0}\}$, if $\rank(\Mr) = \rank(M) = 4n-1$ holds for a vector $\vr$, then $M\vr = \mathbf{0}$.
Therefore, only the following vectors can satisfy $\rank(\Mr) = 4n-1$:
\begin{align*}
    \{ \vr_a: \ \ a\in\reals\},
    \end{align*}
    where $\vr_a$ is defined as
    \begin{align*} 
    r_{ki,0} = a,
    r_{ki,1} = -a,
    \forall k \in [n],\ i\in\{0,1\}.
\end{align*}
On the other hand, each vector $\vr_a$ is a scalar multiple of any other vectors of this kind.
As a result, all vectors $\vr$ that satisfy $\rank(\Mr) = 4n-1$, produce two extreme rays.
Convenient choices for these extreme rays are the vectors $\vr^{\mathrm{inv}_1}$ and $\vr^{\mathrm{inv}_2}$ defined by
\begin{align}
\label{eq: trivial ray}
    \begin{cases}
        r^{\mathrm{inv}_1}_{ki,0} = -1, & \forall k \in [n],\ i \in \{0,1\}, \\
        r^{\mathrm{inv}_1}_{ki,1} = \phantom{-}1, & \forall k \in [n],\ i \in \{0,1\}.
    \end{cases} \nonumber \\
    \begin{cases}
        r^{\mathrm{inv}_2}_{ki,0} = \phantom{-}1, & \forall k \in [n],\ i \in \{0,1\}, \\
        r^{\mathrm{inv}_2}_{ki,1} = -1, & \forall k \in [n],\ i \in \{0,1\}.
    \end{cases}
\end{align}
Moreover, since both inequalities $\vpt \vr^{\mathrm{inv}_1} \leq 0$ and
\[
-\vpt \vr^{\mathrm{inv}_1} = \vpt \vr^{\mathrm{inv}_2} \leq 0
\]
are required to hold, it follows that $\vpt \vr^{\mathrm{inv}_1} = 0$.
As $\vr^{\mathrm{inv}_1}$ and $\vr^{\mathrm{inv}_2}$ span $\ker(M)$, we have
\[
\vpt \vs = 0, \qquad \forall \vs \in \ker(M),
\]
whenever $\vp$ satisfies \Cref{eq: polar_cone_condition_formal}.
\end{remark}

The two extreme rays identified in \Cref{rem: trivial ray} belong to the family \textbf{(I)} in \Cref{def: pattern of extreme rays}.

\begin{lemma}[Shift invariance]
\label{lem: invariency of M_x}
    Let $\vx$ be feasible for \Cref{eq: general dual feasibility test_formal}, i.e., $M \vx \le \mathbf{0}$.
    Fix any $a \in \mathbb R$, and define $\vy$ by
    \begin{align*}
    y_{ki,0} = x_{ki,0} + a,
    \qquad
    y_{ki,1} = x_{ki,1} - a,
    \qquad
    \forall k \in [n],\ i\in\{0,1\}.
    \end{align*}
    Then $\vy$ is also feasible: $M \vy \le \mathbf{0}$.
    Moreover, $\Mx = \My$ and $\vpt \vx = \vpt \vy$; in particular, $\vx$ and $\vy$ induce the same constraints on $\vp$.
\end{lemma}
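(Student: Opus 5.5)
The plan is to verify the three claims of \Cref{lem: invariency of M_x} directly from the row structure of $M$ described in \Cref{rem: Q structure}, using the fact that the shift $\vy-\vx$ is exactly the kernel vector identified in \Cref{rem: s}.

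\textbf{Step 1: the shift lies in $\ker(M)$.} Define $\vs$ by $s_{ki,0}=a$ and $s_{ki,1}=-a$ for all $k\in[n]$, $i\in\{0,1\}$, so that $\vy=\vx+\vs$. By \Cref{rem: Q structure}, every row of $M$ contains exactly two ones: one in a column with $z=0$ and one in a column with $z=1$. This matches the four constraint types written above, namely $x_{k0,0}+x_{k0,1}$, $x_{k0,0}+x_{l1,1}$, $x_{l1,0}+x_{k0,1}$ and $x_{l1,0}+x_{l1,1}$. Hence each row applied to $\vs$ gives $a+(-a)=0$, so $M\vs=\mathbf 0$. This agrees with the description of $\ker(M)$ in \Cref{rem: s} and \Cref{lem: rank is 4n-1}.

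\textbf{Step 2: feasibility and equality of active sets.} From Step 1, $M\vy=M\vx+M\vs=M\vx$ entrywise. Therefore $M\vy\le\mathbf 0$ holds exactly when $M\vx\le\mathbf 0$. Moreover, a row is tight for $\vy$ exactly when it is tight for $\vx$. Since the active constraint matrix (\Cref{def: vertex matrix IV version}) is determined by the set of rows with $M(\cdot)\vx=0$, we get $\Mx=\My$.

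\textbf{Step 3: equality of the induced constraints.} We have $\vpt\vy=\vpt\vx+\vpt\vs$. If $\vp$ is induced by some $\vq\ge0$ with $\Mt\vq=\vp$, then \Cref{lem: ps is zero} gives $\vpt\vs=0$. If we only assume $\vp$ is a collection of conditional laws, we can compute directly:
\[
\vpt\vs=a\sum_{y,d}p_{yd,0}-a\sum_{y,d}p_{yd,1}=a-a=0,
\]
because each $\mathcal P_z$ is a probability distribution. I would present this direct computation, since it does not presuppose compatibility with the IV model, which is exactly what is being tested.

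There is no real obstacle here. The only point needing care is Step 3: $\vpt\vs=0$ must hold for every observed law, not only for those conforming to the IV model. This is handled by the normalization $\sum_{y,d}p_{yd,z}=1$ for each $z$, and it is consistent with \Cref{rem: trivial ray}.
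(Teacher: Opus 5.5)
Your proof is correct and follows the paper's argument. Every row of $M$ pairs exactly one $z=0$ coordinate with one $z=1$ coordinate, so the shift is a kernel vector and leaves $M\mathbf{x}$, hence feasibility and the active set, unchanged; your Step 3 spells out the normalization $\sum_{y,d}p_{yd,z}=1$ that the paper's brief ``same cancellation'' remark leaves implicit here (the paper states it explicitly only in the general-$\ell$ version), and this is the right justification because, unlike appealing to $\mathbf{p}^\top\mathbf{s}=0$ for IV-compatible $\mathbf{p}$, it does not presuppose what is being tested.
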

\begin{proof}
Each inequality in $M \vx \le \mathbf{0}$ involves a sum of one $(\cdot,0)$-coordinate and one $(\cdot,1)$-coordinate.
Under the transformation above, these sums remain unchanged, hence feasibility is preserved.
The equalities $\Mx=\My$ and $\vpt \vx=\vpt \vy$ follow from the same cancellation (see \Cref{rem: trivial ray}). 
\end{proof}

\begin{remark}[Normalization]
\label{rem: fix last variable to zero}
By \Cref{lem: invariency of M_x}, for the set of extreme rays with $\rank(\Mr) = 4n-2$, we may impose the normalization $r_{(n-1)1,1}=0$ without loss of generality.
\end{remark}

\subsection{Characterizing the Extreme Rays} 
\label{sec: instructing pattern of extreme rays}

Under \Cref{rem: fix last variable to zero}, the vector $\vr$ has $4n-1$ free (unfixed) coordinates.
Throughout the remainder of this appendix, the term \emph{coordinate} refers to these $4n-1$ unfixed entries, excluding $r_{(n-1)1,1}$.
By \Cref{lem: rank of rays are 4n-2}, any nonzero, nontrivial extreme ray $\vr$ of $\mathcal K$—excluding the case $\rank(\Mr)=4n-1$ already treated in \Cref{rem: trivial ray}—satisfies exactly $4n-2$ linearly independent tight equalities.
Consequently, the solution space of the linear equation system $\Mr \vr = \mathbf{0}$ is one–dimensional.

Equivalently, $\vr$ is unique up to scaling.
In particular, at most one coordinate can remain absent from the active equalities.
We therefore classify all extreme rays according to the number of unconstrained coordinates.

\subsubsection{Case 1: A Single Unconstrained Coordinate}
\label{sec: case 1}

Suppose that exactly one coordinate does not appear in any equality.
Then all remaining $4n-2$ variables are uniquely determined by the linear equation system.
Since the system is homogeneous and closed under positive homogeneity, this unique solution should also satisfy positive homogeneity (if $\vr$ is the unique solution then $\vr = \lambda \vr$ $\forall \lambda \geq 0$) therefore, this solution have to set all constrained variables equal to zero.

Thus $\vr = c \mathbf{e}_i$, where $c$ is constant and $\mathbf{e}_i$ is the standard basis vector corresponding to the unconstrained coordinate.
Feasibility of \Cref{eq: general dual feasibility test_formal} requires $c<0$, as a result, a convenient choice for these extreme rays are
\begin{equation*}
\vr = - \mathbf{e}_i.
\end{equation*}
This case contributes exactly $4n-1$ extreme rays belonging to the family \textbf{(II)} in \Cref{def: pattern of extreme rays}.

In what follows, we assume that every variable appears in at least one equality of our linear equation system.

\begin{lemma}
\label{clm: patterns of extreme rays}
Every $M$-equivalence class $\mathcal{E}_\vr$ (other than those in Case~\ref{sec: case 1} and \Cref{eq: trivial ray}) has a member, e.g. $\vr$, that satisfies
\begin{equation*}
\vr \in \{-1,0,1\}^{4n-1}.
\end{equation*}
Moreover, $\vr$ must follow exactly one of the two sign patterns below:
\begin{align}
\label{eq: different extreme ray cases}
\textbf{Pattern A:}\quad
\begin{cases}
r_{ki,0} \in \{0,-1\},\\
r_{ki,1} \in \{0,+1\},
\end{cases}
\qquad
\textbf{Pattern B:}\quad
\begin{cases}
r_{k0,0} = 0,\\
r_{k1,0} \in \{0,+1\},\\
r_{ki,1} \in \{0,-1\},
\end{cases}
\end{align}
for all $k \in [n]$ and $i\in\{0,1\}$.
\end{lemma}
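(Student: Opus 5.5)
The plan is to read the active system $\Mr\vr=\mathbf 0$ as a graph problem. By \Cref{rem: Q structure}, every row of $M$ is a constraint of the form $x_u+x_w\le 0$, where $u$ is a coordinate with $z=0$ (of type $y00$ or $y10$) and $w$ is a coordinate with $z=1$ (of type $y01$ or $y11$). So the active rows at $\vr$ define a bipartite graph $G_{\vr}$ on the $4n$ coordinates: the two sides are the $z=0$ and $z=1$ coordinates, and there is an edge $\{u,w\}$ for every tight constraint. Each edge imposes $r_u=-r_w$. On each connected component of $G_{\vr}$, the solutions of $\Mr\vr=\mathbf 0$ are therefore exactly "value $c$ on the $z=0$ side, value $-c$ on the $z=1$ side". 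Hence $\dim\ker(\Mr)$ equals the number of connected components. Isolated vertices count as components, but by assumption every coordinate appears in some equality.

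The first step is a component count. By \Cref{lem: rank of rays are 4n-2}, a nontrivial extreme ray has $\rank(\Mr)=4n-2$, so $G_{\vr}$ has exactly two components $C_1,C_2$. This is consistent with the description of $\ker(M)$ in \Cref{rem: s}: the vector that is $c$ on all $z=0$ coordinates and $-c$ on all $z=1$ coordinates is the solution with equal constants on both components. Let $C_2$ be the component containing the coordinate $(n-1)1,1$. The normalization $r_{(n-1)1,1}=0$ of \Cref{rem: fix last variable to zero} then forces $\vr$ to vanish on $C_2$. On $C_1$, $\vr$ equals $a$ on $z=0$ coordinates and $-a$ on $z=1$ coordinates, with $a\neq0$ because $\vr$ is not $\mathbf 0$. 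Rescaling by $1/|a|$ stays within the ray class, so we get a representative in $\{-1,0,1\}^{4n-1}$.

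The remaining step, and the only one with real content, is to use feasibility $M\vr\le\mathbf 0$ on the non-tight rows to pin down the signs. The key observation is that $x_{k0,0}$ shares a constraint (row $k(n-1)01$) with $x_{(n-1)1,1}$, which is $0$. Hence $r_{k0,0}\le 0$ for every $k$, i.e.\ $r_{k0,0}\in\{0,-1\}$ always. I then split on the sign of $a$.
\begin{itemize}
\item If $a=-1$, every $z=0$ entry lies in $\{0,-1\}$ and every $z=1$ entry lies in $\{0,+1\}$. This is Pattern~A.
\item If $a=+1$, no $r_{k0,0}$ can equal $+1$, so $r_{k0,0}=0$ for all $k$. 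The other $z=0$ entries $r_{k1,0}$ lie in $\{0,+1\}$, and the $z=1$ entries lie in $\{0,-1\}$. This is Pattern~B.
\end{itemize}

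Finally, the two patterns are mutually exclusive. A vector satisfying both would need $r_{k1,0}\in\{0,-1\}\cap\{0,1\}$ and $r_{ki,1}\in\{0,1\}\cap\{0,-1\}$, together with $r_{k0,0}=0$. That forces $\vr=\mathbf 0$, which is excluded.

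I expect the main care to be needed in the component-counting step. One has to check that the normalization coordinate $(n-1)1,1$ really sits inside one of the two components rather than being a third, isolated one. This follows from the standing assumption that every coordinate is covered by an active equality, which is the Case~\ref{sec: case 1} exclusion, together with the fact that $r_{(n-1)1,1}$ was fixed only after choosing the representative via \Cref{lem: invariency of M_x}.
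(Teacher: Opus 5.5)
Your proof is correct and follows essentially the paper's route (graph of tight constraints, a single free component, sign alternation across the $z=0$/$z=1$ bipartition, and $r_{k0,0}\le 0$ from the row pairing $k00$ with $(n-1)11$). Building the graph on all $4n$ coordinates, so that $\dim\ker(\Mr)$ is exactly the number of components, is cleaner than the paper's affine parametrization and self-loop bookkeeping. Your closing worry is vacuous and its stated justification is slightly off: isolated vertices are already counted among the two components, and $(n-1)1,1$ genuinely is isolated for the family~(III) ray, in which case it simply is $C_2$.
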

Note that, based on the feasibility of $\vr$, in \emph{Pattern B}, if $r_{ki,1} = 0$ for all possible $k$ and $i$, then $r_{k1,0}$ have to be zero for all $k$ which yields to a trivial ray $\boldsymbol{0}$ that is a member of trivial extreme ray  \Cref{eq: trivial ray}.
Excluding that, these two patterns are disjoint.

Before presenting the formal proof, we provide a high-level outline.
\begin{proof}[Proof outline]
Let $\vr$ be an extreme ray not covered by Case~\ref{sec: case 1}.
The defining equalities form a homogeneous linear system with $4n-1$ variables and rank $4n-2$.
Hence, the solution space is one–dimensional and can be parameterized as
\begin{equation*}
r_{ki,j} = a_{ki,j} w + b_{k,i,j},
\end{equation*}
for some scalar $w$.

Homogeneity of the system implies that if $\vr$ is feasible, then $\lambda \vr$ is feasible for all $\lambda\ge0$.
This forces each nonzero coordinate of $\vr$ to have fixed magnitude, which we normalize to $1$.
Therefore $r_{ki,j}\in\{-1,0,1\}$.

Each equality has the form $r_{k_0i_0,0} + r_{k_1i_1,1} = 0$, which implies that all nonzero $r_{ki,0}$ share the same sign and all nonzero $r_{ki,1}$ share the opposite sign.
Using \Cref{rem: fix last variable to zero}, which enforces $r_{k0,0}\le0$, only the two sign patterns in \eqref{eq: different extreme ray cases} are possible.
\end{proof}
Using the intuition given in the outline, we continue on with the full version of the proof.
\begin{proof}[Full version of the proof]
    Consider an extreme ray $\vr$ and the system of linear equalities obtained from the inequalities of \Cref{eq: general dual feasibility test_formal} that are active at $\vr$.
    By Assumption~\ref{rem: fix last variable to zero} and the feasibility of $\vr$, we have
    \begin{equation*}
    r_{(n-1)1,1} + r_{k0,0} \le 0 \Rightarrow r_{k0,0} \le 0 \qquad \forall k \in [n].
    \end{equation*}
    
    As mentioned before, the solution space of this homogeneous linear system $M\vr \leq \mathbf{0}$ is one--dimensional.
    Consequently, there exists a scalar parameter $w \in \reals$ and coefficients $a_{ki,j}, b_{ki,j} \in \reals$ such that every solution of the system, and in particular each the extreme ray $\vr$, can be written as
    \begin{equation}
    \label{eq: linear system answer form}
    r_{ki,j} = a_{ki,j} w + b_{ki,j},
    \qquad
    \forall k \in [n],\ i,j \in \{0,1\}.
    \end{equation}
    
    Now consider any equality of the form
    \begin{equation*}
    r_{k_0i_0,0} + r_{k_1i_1,1} = 0
    \end{equation*}
    appearing in the system.
    Substituting \Cref{eq: linear system answer form} yields
    \begin{equation*}
    (a_{k_0i_0,0} + a_{k_1i_1,1})\, w
    +
    (b_{k_0i_0,0} + b_{k_1i_1,1}) = 0
    \qquad \forall w \in \reals.
    \end{equation*}
    Since this equality must hold for all $w$, we obtain
    \begin{equation*}
    a_{k_0i_0,0} + a_{k_1,i_1,1} = 0,
    \qquad
    b_{k_0i_0,0} + b_{k_1i_1,1} = 0.
    \end{equation*}

    To show that each nonzero coordinate has the same absolute value, we use the following crucial property of extreme rays: positive homogeneity.
    Therefore, for any $\lambda \ge 0$, there must exist a scalar $w_{\lambda}$ such that
    \begin{equation*}
    \lambda r_{ki,j} = a_{ki,j} w_{\lambda} + b_{ki,j}
    \qquad \forall k,i,j.
    \end{equation*}
    Substituting \Cref{eq: linear system answer form} into the left-hand side gives
    \begin{equation*}
    \lambda a_{ki,j} w + \lambda b_{ki,j}
    = a_{ki,j} w_{\lambda} + b_{ki,j}.
    \end{equation*}
    For indices $(k,i,j)$ such that $a_{ki,j} \neq 0$, this implies
    \begin{equation}
    \label{eq: relation for any coordinate}
    w_{\lambda} = \lambda w + \frac{(\lambda - 1)b_{ki,j}}{a_{ki,j}}.
    \end{equation}
    Since \Cref{eq: relation for any coordinate} holds for any indices that $a_{ki,j} \neq 0$, there exists a constant $c$ independent of $(k,i,j)$ such that
    \begin{equation*}
    \frac{b_{ki,j}}{a_{ki,j}} = c
    \end{equation*}
    whenever $a_{ki,j} \neq 0$.
    
    On the other hand, if $a_{ki,j} = 0$, then the above homogeneity condition reduces to
    \begin{equation*}
    b_{ki,j} = \lambda b_{ki,j} \qquad \forall \lambda \ge 0,
    \end{equation*}
    which forces $b_{ki,j} = 0$.

    We now justify the normalization of the nonzero coordinates of $\vr$.
    Define a graph $G$ whose vertices correspond to the coordinates $(k,i,j)$ of $\vr$, and where an edge connects two vertices $(k_0,i_0,0)$ and $(k_1,i_1,1) \neq ((n-1),1,1)$ whenever the equality
    \begin{equation*}
    r_{k_0,i_0,0} + r_{k_1,i_1,1} = 0
    \end{equation*}
    appears in the active constraint system.
    Also, we have some self loops on vertices $(k_0,i_0,0)$ whenever we have 
    \begin{equation*}
    r_{k_0,i_0,0} + r_{(n-1),1,1} = 0 \Rightarrow r_{k_0,i_0,0} = 0
    \end{equation*}
    in the active constraint system.

    Each connected component of $G$ corresponds to a subset of variables that are linearly linked through the equalities.
    Therefore, we can separate the whole linear system as several disjoint linear equations, each corresponding to a connected component of $G$.
    Since the whole linear system has rank $4n-2$ over $4n-1$ variables, exactly one connected component of $G$ (and its corresponding linear equation) contains one degree of freedom, while all others correspond to uniquely determined variables (since their linear equation system is full ranked, it has a unique solution, which is therefore a zero vector).
    Equivalently, for all vertices outside this specific component we have $a_{ki,j}=0$ and hence $r_{ki,j}=0$.

    Within the unique component containing the free variable, the equalities force all variables to be affine functions of the same scalar $w$.
    Note that there is no edge of type $r_{k_0,i_0,0} = 0$ in this connected component, due to the fact that each variable has a path with the node corresponding $r_{k_0,i_0,0}$ must have a value equal to zero and uniquely determined.
    Moreover, along any edge of $G$, the equality $r_{k_0i_0,0} + r_{k_1i_1,1} = 0$ implies that the corresponding coordinates must have the same absolute value.
    By connectivity, it follows that all nonzero coordinates in this component have equal absolute value.

    Since $\vr$ is a ray, it is defined only up to positive scaling.
    Without loss of generality (by rescaling $w$), we may normalize the ray so that
    \begin{equation*}
    \lvert a_{ki,j} w + b_{ki,j} \rvert = 1
    \qquad \text{whenever } a_{ki,j} \neq 0.
    \end{equation*}
    Consequently, each coordinate of $\vr$ satisfies
    \begin{equation*}
    r_{ki,j} \in \{-1, 0, 1\}.
    \end{equation*}
    
    We now analyze the sign structure of $\vr$.
    Each equality in the system has the form
    \begin{equation*}
    r_{k_0i_0,0} + r_{k_1i_1,1} = 0,
    \end{equation*}
    which implies that whenever both variables are nonzero, they must have opposite signs.
    Since we are excluding Case~\ref{sec: case 1}, every variable appears in at least one equality.
    As a result, due to the fact that all nonzero coordinates present in one connected component, so, they all have paths to each other, all variables $r_{ki,0}$ share the same sign, and all nonzero variables $r_{ki,1}$ share the opposite sign (we treat zero as sign--neutral).
    
    Finally, from \Cref{rem: fix last variable to zero} we have $r_{k0,0} \le 0$ for all $k \in [n]$.
    This restricts the possible sign configurations to exactly two cases:
    \begin{itemize}
    \item $r_{k0,0} = 0$ for all $k$, with $r_{k1,0} \ge 0$ and $r_{ki,1} \le 0$;
    \item $r_{ki,0} \le 0$ and $r_{ki,1} \ge 0$.
    \end{itemize}
    These correspond precisely to the two patterns stated in \Cref{eq: different extreme ray cases}, which completes the proof.
\end{proof}

Using \Cref{clm: patterns of extreme rays}, we now enumerate all extreme rays by determining which coordinates can be zero under each admissible sign pattern while satisfying the rank condition of \Cref{lem: rank of rays are 4n-2}.

\subsubsection{Case 2: Pattern B}

Suppose Pattern~B holds, namely
\begin{equation*}
r_{k0,0} = 0, \qquad
r_{k1,0} \in \{0,1\}, \qquad
r_{ki,1} \in \{0,-1\},
\quad \forall k \in [n],\ i\in\{0,1\}.
\end{equation*}

We distinguish two subcases.

\paragraph{Subcase 2.1: $\forall k,i:\ r_{ki,0}=0$.}
In this subcase, the only potentially nonzero variables are $\{r_{k0,1},r_{k1,1}\}_{k\in[n]}$.
Let
\begin{equation*}
A:=\{k\in[n]: r_{k0,1}=0\},\qquad 
B:=\{k\in[n]: r_{k1,1}=0\},
\end{equation*}
and denote $a:=\vert A \vert$ and $b:=\vert B \vert$.

We prove that the rank of the active constraint matrix in this subcase equals
\begin{equation*}
(2n-1)+a+b,
\end{equation*}
by constructing an independent family of this cardinality whose span contains all tight equalities.

\medskip
\noindent\textbf{Step 1: a spanning family.}
Define the following sets of equalities:
Let $k' \in A$ and $l' \in B$ be arbitrary fixed values.
\begin{align*}
\mathcal{L}_1 &:= \{\, r_{k0,0}+r_{k0,1}=0 \ :\ k\in A \,\}, \\
\mathcal{L}_2 &:=  \{\, r_{k'0,0}+r_{l1,1}=0 \ :\ l\in B \,\}
               \ \cup\ 
               \{\, r_{k0,0}+r_{l'1,1}=0:\ k\in[n]\setminus\{l'\} \,\}, \\
\mathcal{L}_3 &:= \{\, r_{l1,0}+r_{l1,1}=0 \ :\ l\in B \,\}, \\
\mathcal{L}_4 &:= \{\, r_{l1,0}+r_{k'0,1}=0 \ :\ l\notin B \,\}.
\end{align*}
\begin{remark}
Note that choosing $k'\in A$ and $l'\in B$ implicitly assumes that both sets are nonempty.
We now justify this assumption.

First, $B\neq\varnothing$ by normalization, since $r_{(n-1)1,1}=0$ implies $(n-1)\in B$.

Suppose next that $A=\varnothing$. Then $r_{k0,1}\neq 0$ for all $k\in[n]$, and under pattern~$B$ we have $r_{k0,1}=-1$ for every $k\in[n]$.
In this case, the families $\mathcal L_1$ and $\mathcal L_4$ are empty and the only tight equalities are of types $r_{k0,0}+r_{l1,1}=0$ and $r_{l1,0}+r_{l1,1}=0$.

Consequently, the number of independent equalities is at most
\[
(n-1+b)+b \le 2n-1,
\]
which is strictly smaller than the extreme ray rank $4n-2$.
Hence $A$ must be nonempty.
We may therefore fix arbitrary indices $k'\in A$ and $l'\in B$.
\end{remark}

Let
\begin{equation*}
\mathcal{L}:=\mathcal{L}_1\cup\mathcal{L}_2\cup\mathcal{L}_3\cup\mathcal{L}_4.
\end{equation*}
By construction,
\begin{equation*}
 \vert \mathcal{L} \vert =a+(n-1+b)+b+(n-b)=2n-1+a+b.
\end{equation*}

\medskip
\noindent\textbf{Step 2: spanning property.}
We show that any remaining tight equality lies in $\mathrm{span}(\mathcal{L})$.
The equalities of type $r_{k0,0}+r_{k0,1}=0$ and $r_{l1,0}+r_{l1,1}=0$ are already contained in $\mathcal{L}_1$ and $\mathcal{L}_3$.
It is easy to see that any equality in $\{\, r_{k0,0}+r_{l1,1}=0 \ :\ k\in[n],\, l\in B \,\}$ lies in $\mathrm{span}(\mathcal{L}_2)$.
It remains to show that $\{\, r_{k0,1}+r_{l1,0}=0 \ :\ l\in[n],\, k\in A \,\}$ lies in $\mathrm{span}(\mathcal{L})$.
Fix $l\in B$ and $k\in A$.
Then
\begin{equation*}
(r_{l1,0}+r_{l1,1}=0) - (r_{k0,0}+r_{l1,1}=0)
\quad\Rightarrow\quad
r_{l1,0}-r_{k0,0}=0.
\end{equation*}
Adding $(r_{k0,0}+r_{k0,1}=0)$ yields
\begin{equation*}
r_{l1,0}+r_{k0,1}=0.
\end{equation*}
Now, fix $l\notin B$ and $k\in A\setminus\{k'\}$.
Then
\begin{equation*}
    (r_{l1,0}+r_{k'0,1}=0) - (r_{k'0,0}+r_{k'0,1}=0)
    \quad\Rightarrow\quad
    r_{l1,0}-r_{k'0,0}=0.
\end{equation*}
Adding $(r_{k'0,0}+r_{l'1,1}=0) - (r_{k0,0}+r_{l'1,1}=0)$ yields
\begin{equation*}
    r_{l1,0}-r_{k0,0}=0.
\end{equation*}
Adding $(r_{k0,0}+r_{k0,1}=0)$ yields
\begin{equation*}
r_{l1,0}+r_{k0,1}=0.
\end{equation*}
Hence all equalities of type $r_{l1,0}+r_{k0,1}=0$ with $k\in A$ are linear combinations of equalities in $\mathcal{L}_1\cup\mathcal{L}_2\cup\mathcal{L}_3$.

\medskip
\noindent\textbf{Step 3: linear independence.}
Order the variables as
\begin{equation*}
\{r_{k0,1}:k\in A\},\quad
\{r_{k0,0}:k\in[n]\},\quad
\{r_{l1,1}:l\in B\},\quad
\{r_{l1,0}:l\in B\},\quad
\{r_{l1,0}:l\notin B\}.
\end{equation*}
Each family $\mathcal{L}_1,\dots,\mathcal{L}_4$ introduces a new pivot variable not used earlier:
\begin{itemize}
\item $\mathcal{L}_1$ pivots on $r_{k0,1}$ ($k\in A$),
\item $\mathcal{L}_2$ pivots on $r_{k0,0}$ ($k\in[n]\setminus\{l'\} $) and $r_{l1,1}$ ($l\in B$),
\item $\mathcal{L}_3$ pivots on $r_{l1,0}$ ($l\in B$),
\item $\mathcal{L}_4$ pivots on $r_{l1,0}$ ($l\notin B$).
\end{itemize}
The coefficient matrix therefore contains a block upper-triangular submatrix with nonzero diagonal, implying that $\mathcal{L}$ is linearly independent.

\medskip
\noindent\textbf{Conclusion.}
The active constraint matrix has rank
\begin{equation*}
2n-1+a+b.
\end{equation*}
Imposing the extreme ray rank condition $2n-1+a+b=4n-2$ yields $\{a,b\}=\{n-1,n\}$, and the configuration coincides with Case~\ref{sec: case 1}.

\paragraph{Subcase 2.2: $\exists k_0:\ r_{k_01,0}=1$.}

In this subcase, the equality constraints imply
\begin{equation*}
r_{l0,1}=-1 \quad \forall\, l\in[n],
\qquad
r_{k_01,1}=-1.
\end{equation*}
Moreover, by \Cref{rem: fix last variable to zero},
\begin{equation*}
r_{(n-1)1,0}=0.
\end{equation*}

Let
\begin{equation*}
A:=\{k\in[n]:r_{k1,0}=1\},\qquad 
B:=\{k\in[n]:r_{k1,1}=0\},
\end{equation*}
and denote $a:= \vert A \vert $ and $b:= \vert B \vert $.
Since $r_{k1,1} \leq -r_{k1,0}$, the above constraints imply $A\cap B=\varnothing$ and therefore
\begin{equation*}
a+b\le n.
\end{equation*}

\medskip
\noindent\textbf{Rank count.}
We now count the number of independent tight equalities.

\begin{itemize}
\item Equalities of type $r_{k0,0}+r_{k0,1}=0$ do not appear in this subcase.

\item Equalities of type $r_{k0,0}+r_{l1,1}=0$ contribute $n+b-1$ independent constraints.

\item Equalities of type $r_{l1,0}+r_{k0,1}=0$ contribute $n+a-1$ independent constraints.

\item Equalities of type $r_{k1,0}+r_{k1,1}=0$ contribute $a+b$ independent constraints.
\end{itemize}

Hence the total number of independent equalities equals
\begin{equation*}
(n+b-1)+(n+a-1)+(a+b)
=2n-2+2a+2b.
\end{equation*}

Imposing the extreme ray rank condition $2n-2+2a+2b=4n-2$ yields
\begin{equation*}
a+b=n.
\end{equation*}
\begin{remark}
    Note that $A$ cannot be empty. 
    Indeed, if $A=\varnothing$, then no equalities of type $r_{l1,0}+r_{k0,1}=0$ appear. 
    In this case the number of independent equalities is at most
    \begin{equation*}
    (n+b-1)+b \le 3n-1,
    \end{equation*}
    which is strictly smaller than the extreme ray rank $4n-2$.
    Hence $A\neq\varnothing$.
\end{remark}

\medskip
\noindent\textbf{Enumeration of rays.}
Thus $A$ and $B$ form a partition of $[n]$ with $(n-1)\notin A$ because
$r_{(n-1)1,0}=0$.  
Hence $A\subseteq[n-1]$ is nonempty, and every nonempty subset $A\subseteq[n-1]$ yields an extreme ray.

Equivalently, for any binary vector $\vs\in\{0,1\}^{n-1}$ that is not identically zero, define
\begin{equation*}
\begin{cases}
r_{k0,1}=-1, & \forall k\in[n],\\
r_{k0,0}=0, & \forall k\in[n],\\
r_{(n-1)1,0}=0,\\
r_{k1,0}=s_k, & \forall k\in[n-1],\\
r_{k1,1}=-r_{k1,0}, & \forall k\in[n].
\end{cases}
\end{equation*}
This produces $2^{\,n-1}-1$ $M$-distinct extreme rays belonging to the family \textbf{(IV)} in \Cref{def: pattern of extreme rays}.

\subsubsection{Case 3: Pattern A}

Suppose Pattern~A holds, namely
\begin{equation*}
r_{ki,0} \in \{0,-1\}, \qquad
r_{ki,1} \in \{0,+1\},
\quad \forall k\in[n],\ i\in\{0,1\}.
\end{equation*}

We consider four subcases.

\paragraph{Subcase 3.1: $\forall k,i:\ r_{ki,1}=0$.}
In this subcase, the only potentially nonzero variables are $\{r_{k0,0},r_{k1,0}\}_{k\in[n]}$.
Define
\begin{equation*}
A:=\{k\in[n]:r_{k0,0}=0\},\qquad
B:=\{k\in[n]:r_{k1,0}=0\},
\end{equation*}
and denote $a:= \vert A \vert $ and $b:= \vert B \vert $.

\medskip
\noindent\textbf{Rank count.}
We count the number of independent tight equalities.

\begin{itemize}
\item Equalities of type $r_{k0,0}+r_{k0,1}=0$ contribute $a$ independent constraints.

\item Equalities of type $r_{l1,0}+r_{k0,1}=0$ contribute $n+a-1$ independent constraints.

\item Equalities of type $r_{k1,0}+r_{k1,1}=0$ contribute $b$ independent constraints.

\item Equalities of type $r_{k0,0}+r_{l1,1}=0$ contribute $n-a$ independent constraints (Similar to Subcase 2.1).
\end{itemize}

Hence the total number of independent equalities equals
\begin{equation*}
a+(n+a-1)+(n-a)+b
=2n-1+a+b.
\end{equation*}

Imposing the extreme ray rank condition $2n-1+a+b=4n-2$
yields $\{a,b\}=\{n-1,n\}$.
Therefore this configuration coincides with Case~\ref{sec: case 1}.

\paragraph{Subcase 3.2: $\exists k_0,l_0:\ r_{k_00,1}=1,\ r_{l_01,1}=1$.}
In this case, the equality constraints force
\begin{equation*}
r_{l0,0}=r_{l1,0}=-1 \qquad \forall l\in[n].
\end{equation*}

Define
\begin{equation*}
A:=\{k\in[n]:r_{k0,1}=1\},\qquad
B:=\{k\in[n]:r_{k1,1}=1\},
\end{equation*}
and denote $a:= \vert A \vert $ and $b:= \vert B \vert $.
By \Cref{rem: fix last variable to zero}, $(n-1)\notin B$, hence $b\le n-1$.

\medskip
\noindent\textbf{Rank count.}
As in the previous subcases, we count the number of independent equalities:

\begin{itemize}
\item Equalities of type $r_{k0,0}+r_{k0,1}=0$ contribute $a$ constraints.

\item Equalities of type $r_{l1,0}+r_{k0,1}=0$ contribute $n+a-1$ constraints.

\item Equalities of type $r_{k0,0}+r_{l1,1}=0$ contribute $n-a$ constraints.

\item Equalities of type $r_{k1,0}+r_{k1,1}=0$ contribute $b$ constraints.
\end{itemize}

Hence the total number of independent equalities equals
\begin{equation*}
2n-1+a+b.
\end{equation*}

Imposing the extreme ray rank condition $2n-1+a+b=4n-2$
yields
\begin{equation*}
a=n,\qquad b=n-1.
\end{equation*}

\medskip
\noindent\textbf{Resulting extreme ray.}
These values uniquely determine the configuration, giving the extreme ray
\begin{equation*}
\begin{cases}
r_{k1,0}=-1, & \forall k \in [n],\\
r_{k0,0}=-1, & \forall k \in [n],\\
r_{(n-1)1,1}=0,\\
r_{k1,1}=1, & \forall k\in [n-1],\\
r_{k0,1}=1, & \forall k\in [n].
\end{cases}
\end{equation*}
which belongs to the family \textbf{(III)} in \Cref{def: pattern of extreme rays}.

\paragraph{Subcase 3.3: $\exists k_0:\ r_{k_00,1}=1,\ \forall l:\ r_{l1,1}=0$.}
In this subcase, the equality constraints imply
\begin{equation*}
r_{l1,0}=-1 \quad \forall l\in[n],
\qquad
r_{k_00,0}=-1 .
\end{equation*}

Define
\begin{equation*}
A:=\{k\in[n]:r_{k0,1}=1\},\qquad
B:=\{k\in[n]:r_{k0,0}=0\},
\end{equation*}
and denote $a:= \vert A \vert $ and $b:= \vert B \vert $.

\medskip
\noindent\textbf{Rank count.}
We now count the number of independent tight equalities.

\begin{itemize}
\item Equalities of type $r_{k1,0}+r_{k1,1}=0$ do not appear.

\item Equalities of type $r_{k0,0}+r_{l1,1}=0$ contribute $n+b-1$ independent constraints.

\item Equalities of type $r_{l1,0}+r_{k0,1}=0$ contribute $n+a-1$ independent constraints.

\item Equalities of type $r_{k0,0}+r_{k0,1}=0$ (for indices where both variables appear in tight constraints) contribute $a+b$ independent constraints.
\end{itemize}

\begin{remark}
If $B=\varnothing$, then no equalities of type $r_{k0,0}+r_{l1,1}=0$ appear.  
In this case the number of independent equalities is at most
\begin{equation*}
(n+a-1)+a=n+2a-1,
\end{equation*}
which is strictly smaller than the extreme ray rank $4n-2$.
Hence $B\neq\varnothing$.
\end{remark}

Hence the total number of independent equalities equals
\begin{equation*}
(n+b-1)+(n+a-1)+(a+b)=2n-2+2a+2b.
\end{equation*}

Imposing the extreme ray rank condition
\begin{equation*}
2n-2+2a+2b=4n-2
\end{equation*}
yields
\begin{equation*}
a+b=n.
\end{equation*}

\medskip
\noindent\textbf{Enumeration of rays.}
Thus, the sets $A$ and $B$ form a nontrivial partition of $[n]$.
Equivalently, for any binary vector $\vs\in\{0,1\}^n$ that is neither identically zero nor identically one, define
\begin{equation*}
\begin{cases}
r_{k1,1}=0, & \forall k \in [n],\\
r_{k1,0}=-1, & \forall k \in [n],\\
r_{k0,1}=s_k, & \forall k \in [n],\\
r_{k0,0}=-r_{k0,1}, & \forall k \in [n].
\end{cases}
\end{equation*}
This produces $2^n-2$ $M$-distinct extreme rays belonging to the family \textbf{(V)} in \Cref{def: pattern of extreme rays}.

\paragraph{Subcase 3.4: $\exists k_0:\ r_{k_01,1}=1,\ \forall l:\ r_{l0,1}=0$.}
In this subcase, the equality constraints imply
\begin{equation*}
r_{l0,0}=-1 \quad \forall l\in[n],
\qquad
r_{k_01,0}=-1 .
\end{equation*}

Define
\begin{equation*}
A:=\{k\in[n]:r_{k1,1}=1\},\qquad
B:=\{k\in[n]:r_{k1,0}=0\},
\end{equation*}
and denote $a:= \vert A \vert $ and $b:= \vert B \vert $.

\medskip
\noindent\textbf{Rank count.}
We count the number of independent tight equalities.

\begin{itemize}
\item Equalities of type $r_{k0,0}+r_{k0,1}=0$ do not appear.

\item Equalities of type $r_{k1,0}+r_{k1,1}=0$ contribute $a+b$ independent constraints.

\item Equalities of type $r_{l1,0}+r_{k0,1}=0$ contribute $n+b-1$ independent constraints.

\item Equalities of type $r_{k0,0}+r_{l1,1}=0$ contribute $n+a-1$ independent constraints.
\end{itemize}

\begin{remark}
    If $B=\varnothing$, then no equalities of type $r_{l1,0}+r_{k0,1}=0$ appear.  
    In this case the number of independent equalities is at most
    \begin{equation*}
    (n+a-1)+a=n+2a-1,
    \end{equation*}
    which is strictly smaller than the extreme ray rank $4n-2$.
    Hence $B\neq\varnothing$.
\end{remark}

Hence the total number of independent equalities equals
\begin{equation*}
(a+b)+(n+a-1)+(n+b-1)=2n-2+2a+2b.
\end{equation*}

Imposing the extreme ray rank condition
\begin{equation*}
2n-2+2a+2b=4n-2
\end{equation*}
yields
\begin{equation*}
a+b=n.
\end{equation*}

\medskip
\noindent\textbf{Enumeration of rays.}
By \Cref{rem: fix last variable to zero}, $(n-1)\notin A$.
Thus, $A\subseteq[n-1]$ is a nonempty subset, and every such subset produces an extreme ray.

Equivalently, for any nonzero binary vector $\vs\in\{0,1\}^{\,n-1}$, define
\begin{equation*}
\begin{cases}
r_{k0,1}=0, & \forall k \in [n],\\
r_{k0,0}=-1, & \forall k \in [n],\\
r_{(n-1)1,1}=0,\\
r_{k1,1}=s_k, & \forall k \in [n-1],\\
r_{k1,0}=-r_{k1,1}, & \forall k \in [n].
\end{cases}
\end{equation*}
This produces $2^{\,n-1}-1$ $M$-distinct extreme rays belonging to the family \textbf{(VI)} in \Cref{def: pattern of extreme rays}.
\medskip

The above case analysis exhausts all feasible configurations and characterizes all extreme rays of the cone.
This completes the proof of \Cref{thm: pattern of extreme rays}.

\subsection{Number of Extreme Rays}

Summing the contributions from all cases,
\begin{equation*}
2 + (4n-1) + (2^{n-1}-1) + 1 + (2^n-2) + (2^{n-1}-1)
= 2^{n+1} + 4n - 2,
\end{equation*}
which proves the following Corollary.
\begin{restatable}{corollary}{crlnumber}
\label{crl: number of extreme rays}
    According to \Cref{thm: pattern of extreme rays}, 
    the cone $\mathcal{K}$ is generated by $2^{n+1} + 4n - 2$ $M$-distinct vectors $r_i$.
\end{restatable}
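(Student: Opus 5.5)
The count follows by combining \Cref{thm: pattern of extreme rays} with a bookkeeping argument over the six families of \Cref{def: pattern of extreme rays}. First, I would invoke \Cref{thm: pattern of extreme rays} (equivalently, the case analysis of \Cref{sec: instructing pattern of extreme rays}) to know that $\mathcal{K}=\text{cone}(\mathcal{R})$. Every extreme-ray class falls into exactly one of: the kernel case of \Cref{rem: trivial ray}, Case~\ref{sec: case 1}, Case~2 (Pattern~B), or the subcases of Case~3 (Pattern~A). It then remains to (i) count the members of each family and (ii) check that the resulting vectors are pairwise $M$-distinct, so that no class is counted twice.

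For (i) I read off the cardinalities directly from the definitions. Family \textbf{(I)} contributes $2$, one for each $s\in\{\pm1\}$. Family \textbf{(II)} contributes $4n-1$, one for each coordinate except $(n-1,1,1)$. Family \textbf{(III)} contributes $1$. Families \textbf{(IV)} and \textbf{(VI)} each contribute $2^{n-1}-1$, indexed by nonzero $\mathbf{s}\in\{0,1\}^{n-1}$. Family \textbf{(V)} contributes $2^n-2$, indexed by $\mathbf{s}\in\{0,1\}^n\setminus\{\mathbf 0,\mathbf 1\}$. Adding these gives
\[
2+(4n-1)+1+2(2^{n-1}-1)+(2^n-2)=2^{n+1}+4n-2 .
\]

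For (ii), the main point is to use the normalization of \Cref{rem: fix last variable to zero} as a canonical representative of each class. By \Cref{rem: s} and \Cref{lem: invariency of M_x}, $\ker(M)$ is one-dimensional and spanned by the vector with entries $+1$ at $z=0$ and $-1$ at $z=1$. Adding $a$ times this vector changes $r_{(n-1)1,1}$ by $-a$. Hence within a class $\mathcal{E}_{\vr}$ with $\vr\notin\ker(M)$ there is exactly one element, up to positive scaling, with $r_{(n-1)1,1}=0$.

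All vectors in families \textbf{(II)}--\textbf{(VI)} satisfy this normalization. Each has entries in $\{-1,0,1\}$ and at least one nonzero entry. Moreover, none lies in $\ker(M)$, since a normalized kernel vector is $\mathbf 0$. Two such normalized vectors are therefore $M$-equivalent iff one is a positive multiple of the other, which, given entries in $\{-1,0,1\}$, happens iff they coincide.

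So it suffices to check that the vectors listed in \textbf{(II)}--\textbf{(VI)} are pairwise different as vectors. I would do this with distinguishing coordinates:
\begin{itemize}
\item \textbf{(II)} is the only family with a single nonzero entry; all others have at least $n\ge 2$ nonzero entries.
\item \textbf{(IV)} is the only family with $r_{k0,1}=-1$ for all $k$.
\item \textbf{(V)} has $r_{k1,1}=0$ for all $k$ and $r_{k1,0}=-1$.
\item \textbf{(VI)} has $r_{k0,1}=0$ for all $k$ and $r_{k0,0}=-1$.
\item \textbf{(III)} has $r_{k0,1}=1$ for all $k$.
\end{itemize}
Within each parametrized family, distinct indices $\mathbf{s}$ give distinct coordinate vectors. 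Finally, the two vectors of \textbf{(I)} lie in $\ker(M)\setminus\{\mathbf 0\}$, which is disjoint from the other classes, and they are negatives of each other.

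I expect the main obstacle to be the cross-family distinctness, specifically making rigorous that normalization selects a unique representative. This fails only for kernel vectors, which is why \textbf{(I)} must be treated separately as the lineality directions. The rest is a routine comparison of coordinates.
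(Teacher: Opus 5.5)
Your proposal is correct and follows the paper's proof, which simply sums the family sizes $2+(4n-1)+(2^{n-1}-1)+1+(2^n-2)+(2^{n-1}-1)$ produced by the case analysis behind \Cref{thm: pattern of extreme rays}. Your extra check of pairwise $M$-distinctness via the normalization $r_{(n-1)1,1}=0$ spells out what the paper leaves implicit in its case split. Counting the two kernel vectors of family \textbf{(I)} separately matches the paper's convention of listing $\pm$ a basis of the lineality space; under \Cref{def: extreme_ray_class} alone, those two vectors would fall into a single class.
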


\subsection{Extracting the IV Inequalities}
\label{sec: proof of necessity and sufficiency of extreme rays}
 \thminequalities*
\begin{proof}
Based on the fact that $\mathcal{K}$ can be written as a cone combination of $r_i$s, we can write the sufficient inequalities as $\vpt r_i \leq 0$.
As $\vp$ is the observed probability law, the following inequalities are obviously satisfied, hence, there is no need to check them.
\begin{align}
\label{eq: obvious inequalities}
\begin{cases}
     &\hspace{-1em}\sum\limits_{k\in [n]} (p_{k0,1} + p_{k1,1}) - \sum\limits_{k\in [n]} (p_{k0,0} + p_{k1,0})
    \leq 0 \\
    &\hspace{-1em}\sum\limits_{k\in [n]} (p_{k0,0} + p_{k1,0}) - \sum\limits_{k\in [n]} (p_{k0,1} + p_{k1,1})
    \leq 0 \\
    &\hspace{-1em}-p_{k0,0} \leq 0 \quad \forall k\in [n] \\
    &\hspace{-1em}-p_{k0,1} \leq 0 \quad \forall k\in [n] \\
    &\hspace{-1em}-p_{k1,0} \leq 0 \quad \forall k\in [n] \\
    &\hspace{-1em}-p_{k1,1} \leq 0 \quad \forall k\in [n-1] \\
    &\hspace{-1em}\sum\limits_{k\in [n-1]} (p_{k1,1} + p_{k0,1}) + p_{(n-1)0,1} 
    \leq \sum\limits_{k\in [n]} (p_{k1,0} + p_{k0,0}) 
\end{cases}
\end{align}
Therefore, the given inequalities in \Cref{thm: iv inequalities} are sufficient to test the validity of the observed probability vector $\vp$ satisfying our problem assumptions and setup.
     
For necessity part, we need to show that each extreme ray presented in \Cref{thm: pattern of extreme rays} cannot be derived as cone combination of the others and moreover, for each inequality in \Cref{eq: necessary and sufficient iv inequalities}, there exists a vector $\vp$ such that it satisfies all inequalities except that specific one.

Fix an extreme ray representative $\vr\in\mathcal{R}$ (where $\mathcal{R}$ is the family in \Cref{def: pattern of extreme rays}), and suppose, by contradiction, that
\begin{equation*}
\vr=\sum_{t=1}^m \lambda_t \vr_t,\qquad \lambda_t\ge 0,
\end{equation*}
where each $\vr_t\in\mathcal{R}$ are such that $\mathcal{E}_{\vr_t}\neq \mathcal{E}_{\vr}$ whenever $\vr_t \notin \ker(M)$.

Let $\mathcal I:=\{t:\lambda_t>0\}$ and pick any $t_0\in\mathcal I$. 
Define
\begin{equation*}
\vr_1:=\lambda_{t_0}\vr_{t_0},\qquad 
\vr_2:=\sum_{t\in\mathcal I\setminus\{t_0\}}\lambda_t \vr_t.
\end{equation*}
Then $\vr_1,\vr_2\in\mathcal K$ and $\vr=\vr_1+\vr_2$.

Since $\vr$ is an extreme ray, by \Cref{def: extreme ray} 
we have
\begin{equation*}
\mathcal{E}_{\vr_1}=\mathcal{E}_{\vr_2}=\mathcal{E}_{\vr}.
\end{equation*}
Hence there exist $\alpha>0$ and $\vs\in\ker(M)$ such that
\begin{equation*}
\vr=\alpha\,\vr_{t_0}+\vs .
\end{equation*}
We distinguish two cases.

\medskip
\noindent\textbf{Case 1: $\vr\notin\ker(M)$.}
This implies that $\vr_{t_0} \notin \ker(M)$.
By the normalization in \Cref{rem: fix last variable to zero}, $r_{(n-1)1,1}=0$.  
Since $\vs\in\ker(M)$, we have $\vs=\boldsymbol{0}$ if and only if $s_{(n-1)1,1}=0$, which implies $\vs=\boldsymbol{0}$ and therefore
\begin{equation*}
\vr=\alpha\,\vr_{t_0} .
\end{equation*}
All vectors in $\mathcal R$ are unique rays; hence $\vr=\vr_{t_0}$.

\medskip
\noindent\textbf{Case 2: $\vr\in\ker(M)$.}
Multiplying the above equality by $M$ yields
\begin{equation*}
\mathbf{0}=M\vr=\alpha\,M\vr_{t_0},
\end{equation*}
so $\vr_{t_0}\in\ker(M)$.  
Since both $\vr$ and $\vr_{t_0}$ belong to $\ker(M) = \text{cone}(\vs_1, \cdots, \vs_t, -\vs_1, \cdots, -\vs_t)$, combining the fact that $\vs_1, \cdots, \vs_t$ are independent bases of $\ker(M)$ with positiveness of $\alpha$ implies that $\vr$ and $\vr_{t_0}$ must coincide. 
Hence
\begin{equation*}
\vr=\vr_{t_0} .
\end{equation*}


Therefore, $\vr$ cannot be expressed as a conic combination of representatives of the other elements in $\mathcal{R}$. Since the conic hull of these representatives is a closed convex set, the strong separating hyperplane theorem guarantees the existence of a vector $\vp$ such that
\[
\vp^\top \vr > 0
\quad \text{and} \quad
\vp^\top \vx \le 0
\]
for all $\vx$ in the conic hull generated by representatives of the remaining elements of $\mathcal{R}$.
Moreover, since $\vr$ does not correspond to an extreme ray inducing the inequalities in \Cref{eq: obvious inequalities}, the vector $\vp$ satisfies these inequalities.
In particular, $\vp \neq \mathbf{0}$, and by normalization, the vector
\[
\vp^* = \frac{2\vp}{\|\vp\|_1}
\]
defines a valid observed probability vector such that
\[
(\vp^*)^\top \vr > 0
\quad \text{and} \quad
(\vp^*)^\top \vx \le 0
\]
for all $\vx$ in the conic hull generated by representatives of the other elements of $\mathcal{R}$.

As a result, for each inequality in \Cref{eq: necessary and sufficient iv inequalities}, there exists an observed probability vector $\vp$ that violates that inequality while satisfying all the others.
Consequently, each inequality $\vpt \vr\le 0$ associated with $\vr\in\mathcal{R}$ is \emph{necessary}: removing it would enlarge the feasible set of observed probability vectors $\vp$ beyond the set induced by $\mathcal{K}$.

Combining necessity with the sufficiency argument (since $\mathcal K=\text{cone}(\mathcal R)$), we conclude that the family of inequalities in \Cref{thm: iv inequalities} is necessary and sufficient.
\end{proof}

\section{Proofs of Generalized Versions}
\label{sec: proof of generalizations}

Below, we provide generalized versions of our previous Lemmas.
Specifically, we no longer restrict our setting to $\ell=2$; the instrument $Z$ can take an arbitrary number $\ell$ of values.

\begin{lemma}\label{lem: vert iff rank is rank M}
Let $\vv$ be a feasible point satisfying $M\vv\le \vc$.
Then $\vv$ is a vertex if and only if
\[
\rank(\Mv)=2n\ell - \dim(\ker(M)) = \rank(M).
\]
\end{lemma}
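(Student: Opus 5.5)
The plan is to mirror the proof of \Cref{lem: vert iff rank is 4n-1}, replacing the explicit value $4n-1$ by $\rank(M)$ throughout. For a general $\ell$, $M$ has $2n\ell$ columns, so rank--nullity gives $\rank(M)=2n\ell-\dim(\ker(M))$. This settles the second equality of the statement immediately; what remains is to prove that $\vv$ is a vertex if and only if $\rank(\Mv)=\rank(M)$. We do not need the explicit kernel (for $\ell=2$ it was one-dimensional by \Cref{rem: s}). We only use that $\Mv$ is a row-submatrix of $M$. Hence $\ker(M)\subseteq\ker(\Mv)$, and $\rank(\Mv)\le\rank(M)$, with equality if and only if $\ker(\Mv)=\ker(M)$.

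\emph{Contrapositive of ($\Rightarrow$).} Assume $\rank(\Mv)<\rank(M)$.
\begin{enumerate}
\item Then $\dim\ker(\Mv)>\dim\ker(M)$, so there is $\vt\in\ker(\Mv)\setminus\ker(M)$.
\item The active rows satisfy $\Mv(\vv\pm\epsilon\vt)=\vcv$ for every $\epsilon$.
\item The inactive rows are strict at $\vv$, and there are finitely many of them. So for small $\epsilon>0$ both $\vv\pm\epsilon\vt$ lie in $\mathcal H$.
\item Since $M\vt\neq 0$, these two points are $M$-distinct from $\vv$.
\item Writing $\vv=\tfrac12(\vv+\epsilon\vt)+\tfrac12(\vv-\epsilon\vt)$ contradicts \Cref{def: vertex}.
\end{enumerate}

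\emph{($\Leftarrow$).} Assume $\rank(\Mv)=\rank(M)$, and write $\vv=\lambda\vv_1+(1-\lambda)\vv_2$ with $\vv_1,\vv_2\in\mathcal H$ and $\lambda\in(0,1)$.
\begin{enumerate}
\item Feasibility gives $\Mv\vv_i\le\vcv$. Since their convex combination equals $\vcv$, we get $\Mv\vv_i=\vcv$.
\item Therefore $\vv-\vv_i\in\ker(\Mv)=\ker(M)$.
\item So $\vv,\vv_1,\vv_2$ are $M$-equivalent, and $\vv$ is a vertex.
\end{enumerate}
The proof of the $\ell=2$ case mentions $4n-1$ in only two places, both of which this rank comparison covers.

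The only point needing care is the kernel equality: it requires the comparison to be with $\rank(M)$ itself, not with a fixed number. It also relies on $\mathcal H$ depending on $\vv$ only through $M\vv$, so that kernel shifts preserve feasibility. One should also note that the number of inactive constraints is finite, $2^\ell n^2$ rows, which makes the choice of $\epsilon$ legitimate. No other step seems to present an obstacle.
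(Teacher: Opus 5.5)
Your proposal is correct and follows the paper's proof essentially step for step. In the rank-deficient case you use the same perturbation $\vv\pm\epsilon\vt$ with $\vt\in\ker(\Mv)\setminus\ker(M)$, and in the full-rank case you use the same tightness argument together with $\ker(\Mv)=\ker(M)$. Your direction labels are the logically correct ones; the paper's headers swap $\Rightarrow$ and $\Leftarrow$, but the mathematical content is identical.
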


\begin{proof}
We prove both directions.

\medskip
\noindent
\textbf{($\Leftarrow$) If $\rank(\Mv) < 2n\ell - \dim(\ker(M))$, then $\vv$ is not a vertex.}

Suppose $\rank(\Mv)< 2n\ell - \dim(\ker(M))$.
Note that $\rank(M) = 2n\ell - \dim(\ker(M))$.
Because $\rank(\Mv) < \rank(M)$, we have
\[
\dim \ker(\Mv) > \dim \ker(M).
\]
Hence, there exists a nonzero vector $\vt \in \ker(\Mv)$ such that
$\vt \notin \ker(M)$.

Since $\vt\in\ker(\Mv)$, we have
\[
\Mv(\vv+\epsilon\vt)=\Mv\vv=\vcv
\]
for all $\epsilon\in\mathbb{R}$.
For sufficiently small $\epsilon>0$, the inequalities corresponding to
non-active constraints remain strict, because they are strict at $\vv$.
Therefore,
\[
M(\vv+\epsilon\vt)\le \vc
\quad\text{and}\quad
M(\vv-\epsilon\vt)\le \vc
\]
for sufficiently small $\epsilon$.

Thus both $\vv+\epsilon\vt$ and $\vv-\epsilon\vt$ are feasible.
Moreover,
\[
\vv=\tfrac12(\vv+\epsilon\vt)+\tfrac12(\vv-\epsilon\vt),
\]
and since $\vt\notin\ker(M)$, we have
\[
M(\vv+\epsilon\vt)\neq M\vv,
\]
so the two feasible points are $M$-distinct.

Hence $\vv$ can be written as a non-trivial convex combination of two
$M$-distinct feasible points, and therefore $\vv$ is not a vertex by
\Cref{def: vertex}.

\medskip
\noindent
\textbf{($\Rightarrow$) If $\rank(\Mv)= 2n\ell - \dim(\ker(M))$, then $\vv$ is a vertex.}

Now suppose $\rank(\Mv)=2n\ell - \dim(\ker(M))$.
Assume
\[
\vv=\lambda\vvo+(1-\lambda)\vvt,
\qquad
\lambda\in(0,1),
\]
for feasible points $\vvo,\vvt$.

Since $\vv$ is active on $\Mv$, we have
\[
\Mv\vv=\vcv.
\]
Applying $\Mv$ to the convex combination gives
\[
\Mv\vv
=\lambda\Mv\vvo+(1-\lambda)\Mv\vvt.
\]
Because $\vvo$ and $\vvt$ are feasible,
\[
\Mv\vvo\le \vcv,
\qquad
\Mv\vvt\le \vcv.
\]
Since the convex combination equals $\vcv$, it follows that
\[
\Mv\vvo=\vcv
\quad\text{and}\quad
\Mv\vvt=\vcv.
\]
Therefore,
\[
\Mv(\vv-\vvo)=\mathbf{0},
\qquad
\Mv(\vv-\vvt)=\mathbf{0}.
\]

Because $\rank(\Mv)=\rank(M)$, and the fact the $\Mv$ is a submatrix of $M$ we have
\[
\ker(\Mv)=\ker(M).
\]
Hence,
\[
\vv-\vvo\in\ker(M),
\qquad
\vv-\vvt\in\ker(M).
\]

By \Cref{def: vertex}, any two feasible points differing by a vector
in $\ker(M)$ represent the same extreme point.
Thus $\vv$ cannot be written as a non-trivial convex combination of two
$M$-distinct feasible points, and therefore $\vv$ is a vertex.
\end{proof}

\begin{remark}\label{rem: general uniqness of Mv}
    Since $\rank(M)=\rank(\Mv) = 2n\ell - \dim(\ker(M))$, 
    the solution set of the system $\Mv \vx = \vc$ is  an affine space of the form $\vv + \ker(M)$. 
    Therefore, $\Mv$ uniquely determines the
    $M$-equivalence class $[\vv]$.

    In particular, if two active constraint matrices share the same
    row basis (that is, they consist of the same linearly independent
    rows), then they determine the same solution $\vv$ and thus the same
    $M$-equivalence class $[\vv]$. Consequently, the two active constraint
    matrices must coincide.
\end{remark}

\begin{lemma}
\label{lem: rank of rays are 2nl - dim ker M}
Let $\vr \neq \mathbf 0$ be a feasible vector of the cone $\mathcal{K}=\{\vx: M\vx \le 0\}$.
Let $\Mr$ denote the matrix of all inequalities that are active at $\vr$ as described in \Cref{def: vertex matrix IV version}.
Then $\vr$ is an extreme ray of $\mathcal K$ if and only if the active set contains at least $2n\ell - \dim(\ker(M)) - 1$ linearly independent constraints, i.e.,
\begin{equation*}
    \rank(\Mr) \geq 2n\ell - \dim(\ker(M)) - 1
\end{equation*}
\end{lemma}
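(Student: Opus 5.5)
The proof mirrors \Cref{lem: rank of rays are 4n-2}, with $4n-2$ replaced by $\rank(M)-1 = 2n\ell-\dim(\ker(M))-1$, where we write $\kappa:=\dim(\ker(M))$. The key observation is that $\ker(M)\subseteq\ker(\Mr)$ always holds, since $\Mr$ is a row-submatrix of $M$. Also $\vr\in\ker(\Mr)$ by the definition of the active set. So the rank condition is equivalent to $\dim\ker(\Mr)\le \kappa+1$.

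For the forward direction, assume $\vr$ is extreme and, for contradiction, $\rank(\Mr)<2n\ell-\kappa-1$. Then $\dim\ker(\Mr)\ge \kappa+2$, so there is $\vt\in\ker(\Mr)\setminus\mathrm{span}(\vr,\ker(M))$. Constraints inactive at $\vr$ are strict, so for small $\epsilon>0$ both $\vr\pm\epsilon\vt$ lie in $\mathcal K$. Write $\vr=\tfrac12(\vr+\epsilon\vt)+\tfrac12(\vr-\epsilon\vt)$. By \Cref{def: extreme ray}, each summand must equal $\alpha\vr+\vs$ with $\vs\in\ker(M)$. That would give $\epsilon\vt\in\mathrm{span}(\vr,\ker(M))$, a contradiction.

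For the reverse direction, assume $\rank(\Mr)\ge 2n\ell-\kappa-1$, so $\dim\ker(\Mr)\le\kappa+1$. Take any decomposition $\vr=\vr_1+\vr_2$ with $\vr_1,\vr_2\in\mathcal K$. Then $\mathbf 0=\Mr\vr_1+\Mr\vr_2$ with both terms $\le\mathbf 0$, so $\vr_1,\vr_2\in\ker(\Mr)$. Split into two cases.

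If $\vr\in\ker(M)$, then $\Mr=M$, so $\vr_1,\vr_2\in\ker(M)$. Hence $M\vr_1=0=\alpha M\vr$ and likewise for $\vr_2$, so the decomposition is trivial in the sense of \Cref{def: extreme ray}.

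If $\vr\notin\ker(M)$, then $\mathrm{span}(\vr,\ker(M))\subseteq\ker(\Mr)$ and the left side has dimension $\kappa+1$. The dimension bound forces equality, so $\vr_t=\alpha_t\vr+\vs_t$ with $\vs_t\in\ker(M)$. It remains to show $\alpha_t\ge0$. Since $\vr\notin\ker(M)$ and $M\vr\le\mathbf 0$, some row has $(M\vr)_i<0$. Then $(M\vr_t)_i=\alpha_t(M\vr)_i\le 0$ gives $\alpha_t\ge 0$. This verifies \Cref{def: extreme ray}.

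The only delicate points are the sign check on $\alpha_t$ and the case $\vr\in\ker(M)$. Both are immediate, so I expect no real obstacle beyond the dimension bookkeeping.
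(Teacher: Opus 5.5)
Your proposal is correct and follows the paper's proof essentially step for step. For necessity you perturb along some $\vt\in\ker(\Mr)\setminus\mathrm{span}(\vr,\ker(M))$; for sufficiency you show both summands lie in $\ker(\Mr)=\mathrm{span}(\vr,\ker(M))$, handling $\vr\in\ker(M)$ separately, and your explicit check that $\alpha_t\ge 0$ (via a row with $(M\vr)_i<0$) fills in a detail the paper leaves implicit.
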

\begin{proof}
$(\Rightarrow)$
Assume $\vr$ is an extreme ray and suppose, by contradiction, that $\rank(\Mr)<2n\ell - \dim(\ker(M)) - 1$.  
Then the null space of $\Mr$ has dimension strictly larger than $1+\dim\ker(M)$. 
Hence there exists a nonzero vector
\begin{equation*}
\vt \in \ker(\Mr)\setminus \mathrm{span}\!\left(\vr,\ker(M)\right).
\end{equation*}
Since $\vt\in\ker(\Mr)$, all inequalities that are tight at $\vr$ remain tight along the direction $\vt$. 
Therefore, for sufficiently small
$\epsilon>0$,
\begin{equation*}
M(\vr+\epsilon\vt)\le0,
\qquad
M(\vr-\epsilon\vt)\le0,
\end{equation*}
so both $\vr+\epsilon\vt$ and $\vr-\epsilon\vt$ belong to $\mathcal K$.
Moreover,
\begin{equation*}
\vr=\tfrac12(\vr+\epsilon\vt)+\tfrac12(\vr-\epsilon\vt).
\end{equation*}
Because $\vt\notin \mathrm{span}(\vr,\ker(M))$, at least one of the vectors $\vr\pm\epsilon\vt$ is not contained in $\mathrm{span}(\vr,\ker(M))$, contradicting \Cref{def: extreme ray}.
Hence $\rank(\Mr)\ge 2n\ell - \dim(\ker(M)) - 1$.

$(\Leftarrow)$
Assume $\rank(\Mr)\ge 2n\ell - \dim(\ker(M)) - 1$. Then
\begin{equation*}
\dim\ker(\Mr)\le 1+\dim\ker(M).
\end{equation*}
By construction, $\vr\in\ker(\Mr)$ and $\ker(M)\subseteq\ker(\Mr)$.

Let $\vr^{(1)},\vr^{(2)}\in\mathcal K$ such that
\begin{equation*}
\vr=\vr^{(1)}+\vr^{(2)} .
\end{equation*}
Applying $\Mr$ gives
\begin{equation*}
0=\Mr\vr=\Mr\vr^{(1)}+\Mr\vr^{(2)} .
\end{equation*}
Since $\vr^{(1)},\vr^{(2)}\in\mathcal K$, we have $\Mr\vr^{(1)}\le0$ and $\Mr\vr^{(2)}\le0$, hence
\begin{equation*}
\Mr\vr^{(1)}=\Mr\vr^{(2)}=\mathbf{0},
\end{equation*}
which implies $\vr^{(1)},\vr^{(2)}\in\ker(\Mr)$.

If $\vr\in\ker(M)$, then $\Mr=M$ and therefore $\vr^{(1)},\vr^{(2)}\in\ker(M)$, satisfying \Cref{def: extreme ray}.
Otherwise, $\vr\notin\ker(M)$. 
Since $\ker(M)\subseteq\ker(\Mr)$ and
\begin{equation*}
\dim\ker(\Mr)\le 1+\dim\ker(M),
\end{equation*}
we must have
\begin{equation*}
\ker(\Mr)=\mathrm{span}\!\left(\vr,\ker(M)\right).
\end{equation*}
Thus any feasible decomposition of $\vr$ lies in $\mathrm{span}\!\left(\vr,\ker(M)\right)$, which proves that $\vr$ is an extreme ray.
\end{proof}

\subsection{Vertices and Proofs of Section \ref{sec: llate}}
\propgeneralizedvertex*

\begin{proof}

For convenience we rewrite \Cref{eq: dual_lower_bound_formal} in the equivalent form
\begin{align}
\label{eq: generalized ACE dual optimization problem_app}
\max\quad 
&\sum_{y\in[n]}\sum_{d\in\{0,1\}}\sum_{j\in[\ell]} p_{yd,j}\,x_{yd,j} \\
\text{s.t.}\quad 
&\sum_{j=0}^{\ell-1} x_{y_{i_j} i_j, j} \le \gamma_{y_1}-\gamma_{y_0},
\qquad
\forall y_0,y_1\in[n],\ \forall (i_0,\dots,i_{\ell-1})\in\{0,1\}^{\ell}.
\nonumber
\end{align}

\begin{lemma}[Shift invariance for general version]
\label{lem: invariency of M_x in general}
    Let $\vv$ be feasible for \Cref{eq: generalized ACE dual optimization problem_app}.
    Fix any $t \in \R$ and $j \in [\ell] \setminus \{0\}$, and define $\vu$ by
    \begin{align*}
    u_{ki,0} = v_{ki,0} + t,
    \qquad
    u_{ki,j} = v_{ki,j} - t,
    \qquad
    \forall k \in [n],\ i\in\{0,1\}.
    \end{align*}
    Then $\vy$ is also feasible.
    Moreover, $\Mv = \Mu$ and $\vpt \vu = \vpt \vv$; in particular, $\vv$ and $\vu$ induce the same constraints on $\vp$.
\end{lemma}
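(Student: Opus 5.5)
The statement to prove is the shift invariance lemma: the map $\vv\mapsto\vu$ that adds $t$ to every $z=0$ coordinate and subtracts $t$ from every $z=j$ coordinate is harmless. The plan is to show that this shift is a vector in $\ker(M)$ and then read off all three claims from that single fact, mirroring the binary-instrument version of this lemma given earlier.

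\textbf{Step 1: the shift lies in $\ker(M)$.} Define $\vs\in\R^{n\times2\times\ell}$ by $s_{ki,0}=t$, $s_{ki,j}=-t$, and $s_{ki,j'}=0$ for $j'\notin\{0,j\}$, so that $\vu=\vv+\vs$.

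By \Cref{eq: generalized ACE dual optimization problem_app}, each row of $M$ is indexed by $(y_0,y_1,(i_0,\dots,i_{\ell-1}))$. That row has exactly one entry equal to $1$ in the block of each instrument value $j'\in[\ell]$, namely at coordinate $(y_{i_{j'}},i_{j'},j')$.

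Applying such a row to $\vs$ gives $t-t=0$: the $j'=0$ block contributes $t$, the $j'=j$ block contributes $-t$, and every other block contributes $0$. Hence $M\vs=\mathbf 0$, i.e. $\vs\in\ker(M)$.

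\textbf{Step 2: feasibility and equal active sets.} From Step 1, $M\vu=M\vv+M\vs=M\vv$.

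Feasibility of $\vu$ follows immediately, since $M\vu=M\vv\le\vc$. (In the statement, the vector called $\vy$ is of course $\vu$.)

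Moreover, each row is tight at $\vu$ exactly when it is tight at $\vv$, because every row takes the same value on both points. Therefore $\Mv=\Mu$.

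\textbf{Step 3: invariance of the objective.} By \Cref{lem: ps is zero}, every observed law $\vp$ conforming to the IV model satisfies $\vpt\vs=0$ for all $\vs\in\ker(M)$. Hence $\vpt\vu=\vpt\vv$.

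This can also be checked directly from the definition of the shift. It gives $\vpt\vs=t\bigl(\sum_{k,i}p_{ki,0}-\sum_{k,i}p_{ki,j}\bigr)$. Each $\mathcal P_z$ is a probability distribution of $(Y,D)$, so both sums equal $1$ and the expression is $0$.

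\textbf{Main obstacle.} The only point needing care is the row-structure claim in Step 1: every constraint row contains exactly one $1$ per instrument block. This is immediate from the constraint form $\sum_{j}x_{y_{i_j}i_j,j}$, so I expect no real difficulty. The lemma is essentially the observation that the shift is a kernel direction, and it is this fact that later justifies normalizing one coordinate per block $j\neq0$.
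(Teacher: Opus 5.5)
Your proof is correct and follows essentially the same route as the paper: the paper observes that every constraint row contains exactly one $(\cdot,0)$-coordinate and one $(\cdot,j)$-coordinate, so the row sums are unchanged by the shift. That is precisely your statement that the shift lies in $\ker(M)$, and the paper obtains invariance of the objective from $\sum_{k,i}p_{ki,0}=\sum_{k,i}p_{ki,j}=1$, matching your direct check. Of your two justifications for Step 3, the direct computation is the stronger one: it holds for every observed law, whereas the appeal to the lemma that $\mathbf{p}^\top\mathbf{s}=0$ on $\ker(M)$ only covers laws already known to conform to the IV model.
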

\begin{proof}[proof of \Cref{lem: invariency of M_x in general}]
Each inequality in \Cref{eq: generalized ACE dual optimization problem_app} involves a sum of one $(\cdot,0)$-coordinate and one $(\cdot,j)$-coordinate.
Under the transformation above, these sums remain unchanged, hence feasibility is preserved and $\Mv=\Mu$.
The equality $\vpt \vv=\vpt \vu$ follows from the same cancellation due to the fact that $\sum_{ k \in [n],\ i\in\{0,1\}} p_{k i, 0} = \sum_{ k \in [n],\ i\in\{0,1\}} p_{k i, j} = 1$. 
\end{proof}

\begin{remark}[Normalization]
\label{rem: fix last variable to zero in general}
By \Cref{lem: invariency of M_x in general}, for any feasible point in \Cref{eq: generalized ACE dual optimization problem_app}, we may impose the normalization $v_{(n-1)1,j}=0, \forall j \neq 0$ without loss of generality.
\end{remark}

In order to prove \Cref{prop: generalized exponential vertices}, we show (i) feasibility of the constructed $\vw$, and (ii) that $\vw$ is a vertex by exhibiting a family of constraints that are tight at $\vw$ and whose equalities determine $\vw$ uniquely (after fixing a normalization for the shift invariance).

\paragraph{(i) Feasibility.}
Fix $(i_0,\dots,i_{\ell-1})\in\{0,1\}^{\ell}$ and let
$A:=\{j:i_j=0\}$ and $A^c:=\{j:i_j=1\}$.
For any $y_0,y_1\in[n]$, the left-hand side of the corresponding constraint equals
\[
\sum_{j\in A} w_{y_0 0,j} \;+\; \sum_{j\in A^c} w_{y_1 1,j}.
\]
By construction, among the terms $w_{y_1 1,j}$ only the coordinate with $j=a$ may be nonzero, and among the terms
$w_{y_0 0,j}$ only the coordinates with $j\in\{a,s_{y_0}\}$ may be nonzero. Hence the above sum contains at most two
nonzero contributions, at indices $a$ and $s_{y_0}$, and we obtain
\[
\sum_{j\in A} w_{y_0 0,j} + \sum_{j\in A^c} w_{y_1 1,j}
\le 
\max\{w_{y_0 0,a},w_{y_1 1,a}\}
+
\max\{w_{y_0 0,s_{y_0}},w_{y_1 1,s_{y_0}}\}.
\]
Since $w_{y_1 1,s_{y_0}}=0$ (because $s_{y_0}\neq a$), the second maximum equals $w_{y_0 0,s_{y_0}}=\gamma_{n-1}-\gamma_{y_0}$.
Moreover $w_{y_0 0,a}=\gamma_0-\gamma_{n-1}$ and $w_{y_1 1,a}=\gamma_{y_1}-\gamma_{n-1}$, so the first maximum equals $\gamma_{y_1}-\gamma_{n-1}$.
Therefore,
\[
\sum_{j\in A} w_{y_0 0,j} + \sum_{j\in A^c} w_{y_1 1,j}
\le \bigl(\gamma_{y_1}-\gamma_{n-1}\bigr) + \bigl(\gamma_{n-1}-\gamma_{y_0}\bigr)
= \gamma_{y_1}-\gamma_{y_0},
\]
and all constraints in \Cref{eq: generalized ACE dual optimization problem_app} are satisfied. Thus $\vw$ is feasible.

\paragraph{(ii) Vertex Property.}
We next show that the above construction is uniquely determined (up to the standard shift invariance)
by a collection of tight constraints from \Cref{eq: generalized ACE dual optimization problem_app}. 
Concretely, the construction implies that the following
constraints are tight:
\begin{align}
&\sum_{j=0}^{\ell-1} x_{y 0, j} = \gamma_0-\gamma_y, 
&\qquad \forall y\in[n], \label{eq: E1}\\
&\sum_{j=0}^{\ell-1} x_{y 1, j} = \gamma_{y} - \gamma_{n-1}, 
&\qquad \forall y\in[n], \label{eq: E2}\\
&x_{y 0, s_y} + \sum_{j \neq s_y} x_{k 1, j} = \gamma_{k} - \gamma_{y},
&\qquad \forall y,k\in[n], \label{eq: E3}\\
&x_{y 0, s_y} + x_{k 1, a} + \sum_{j \in A} x_{y 0, j} + \sum_{j \in A^c\setminus \{a, s_y\}} x_{k 1, j}
= \gamma_{k} - \gamma_{y},
&\quad \forall y,k,\ \forall A \subseteq [\ell]\setminus\{a,s_y\}.
\label{eq: E4}
\end{align}
After fixing the normalization
\[
x_{(n-1)1,j}=0,\qquad \forall j\neq 0,
\]
we prove that the resulting linear system has a unique solution, hence $\vw$ is a vertex of the feasible polyhedron.

From the normalization $x_{(n-1)1,j}=0$ for all $j\neq 0$ (\Cref{rem: fix last variable to zero in general}) and equality~\eqref{eq: E2} with $y=n-1$, we obtain
\[
\sum_{j=0}^{\ell-1} x_{(n-1)1,j}=\gamma_{n-1}-\gamma_{n-1}=0.
\]
Since all summands except possibly $j=0$ are zero, it follows that $x_{(n-1)1,0}=0$ as well. 
Thus,
\begin{equation}
\label{eq: all_zero_row}
x_{(n-1)1,j}=0,\qquad \forall j\in[\ell].
\end{equation}

Now apply~\eqref{eq: E3} with $k=n-1$:
\[
x_{y0,s_y} + \sum_{j\neq s_y} x_{(n-1)1,j} = \gamma_{n-1}-\gamma_{y}.
\]
Using~\eqref{eq: all_zero_row}, the sum vanishes, and we conclude that for every $y$,
\begin{equation}
\label{eq: xys_explicit}
x_{y0,s_y}=\gamma_{n-1}-\gamma_{y}.
\end{equation}
Substituting~\eqref{eq: xys_explicit} back into~\eqref{eq: E3} for general $k$ gives
\[
\gamma_{n-1}-\gamma_{y} + \sum_{j\neq s_y} x_{k1,j} = \gamma_{k}-\gamma_{y},
\]
hence
\begin{equation}
\label{eq: sum_excluding_sy}
\sum_{j\neq s_y} x_{k1,j} = \gamma_{k}-\gamma_{n-1},\qquad \forall y,k.
\end{equation}

Combining \eqref{eq: E2} with \eqref{eq: sum_excluding_sy} gives
\begin{equation}
\label{eq: xk1_sy_zero}
x_{k1,s_y} = 0
\qquad \forall k,y.
\end{equation}

Define the index sets
\[
B \coloneqq \{j:\exists k \text{ with } s_k=j\},\qquad C\coloneqq B^c.
\]
By construction $s_y\neq a$ for all $y$, so $a\in C$.

Since there exists $y$ such that $s_y=j$ for each $j\in B$, using~\eqref{eq: xk1_sy_zero} implies that 
\begin{equation}
\label{eq: xk1_B_zero}
x_{k1,j}=0,\qquad \forall k,\ \forall j\in B.
\end{equation}
Consequently,~\eqref{eq: E2} becomes
\begin{equation}
\label{eq: sum_C}
\sum_{j\in C} x_{k1,j} = \gamma_{k}-\gamma_{n-1},\qquad \forall k.
\end{equation}

Next, specialize~\eqref{eq: E4} by choosing
$A\subseteq B\setminus\{s_y\}$ (so that $A$ contains no elements of $C$).
Using~\eqref{eq: xys_explicit} and~\eqref{eq: xk1_B_zero}, the terms involving $x_{k1,j}$ for $j\in B$
vanish, and~\eqref{eq: E4} simplifies to
\begin{equation} \label{eq: rewritting E4}
\gamma_{n-1}-\gamma_{y} + \sum_{j\in A} x_{y0,j} = \gamma_{k}-\gamma_{y} - x_{k1,a} - \sum_{j\in C\setminus\{a\}} x_{k1,j}.
\end{equation}
Combining~\eqref{eq: sum_C} with~\eqref{eq: rewritting E4} gives
\[
\sum_{j\in A} x_{y0,j} = 0.
\]
Varying $A$ over all subsets of $B\setminus\{s_y\}$ forces $x_{y0,j}=0$ for all $j\in B\setminus\{s_y\}$,
and combined with~\eqref{eq: xys_explicit} shows that, within $B$, the only possibly nonzero $d=0$
coordinate is at $s_y$:
\begin{equation}
\label{eq: y0_B_structure}
x_{y0,j}=0\ \ (j\in B\setminus\{s_y\}),\qquad x_{y0,s_y}=\gamma_{n-1}-\gamma_{y}.
\end{equation}

Now choose $A\subseteq C\setminus\{a\}$ in~\eqref{eq: E4}. 
With~\eqref{eq: xk1_sy_zero},
the remaining terms lie entirely in $C$. 
\[
\gamma_{n-1} - \gamma_{y} + x_{k 1, a}  + \sum_{j \in A} x_{y 0, j} + \sum_{j \in A^c\cap (C \setminus \{a\})} x_{k 1, j}
= \gamma_{k} - \gamma_{y}
\]
The dependence on $A$ then implies that for each $j\in C\setminus\{a\}$
there exists a scalar $h_j$ such that
\begin{equation}
\label{eq: hj}
x_{k1,j}=x_{y0,j}=h_j,\qquad \forall k,y,\ \forall j\in C\setminus\{a\}.
\end{equation}
Using~\eqref{eq: sum_C}, we obtain for all $k$:
\begin{align*}
x_{k1,a} + \sum_{j\in C\setminus\{a\}} h_j &= \gamma_{k}-\gamma_{n-1}.
\end{align*}
Also, with~\eqref{eq: E1} and~\eqref{eq: y0_B_structure}, we obtain for all $y$:
\[
x_{y0,a} + \sum_{j\in C\setminus\{a\}} h_j = \gamma_0-\gamma_{n-1}
\]
Finally, by the normalization and~\eqref{eq: hj}, for every $j\in C\setminus\{a\}$,
\[
h_j = x_{(n-1)1,j}=0,
\]
so $x_{k1,j}=x_{y0,j}=0$ for all $j\in C\setminus\{a\}$. Plugging this into the two displayed equations yields
\[
x_{y0,a}=\gamma_0-\gamma_{n-1},\qquad x_{k1,a}=\gamma_{k}-\gamma_{n-1}.
\]
Together with~\eqref{eq: xk1_B_zero} and~\eqref{eq: y0_B_structure}, this determines all coordinates of $\vx$ uniquely.
Therefore, after fixing the stated normalization, the linear system
\eqref{eq: E1}-\eqref{eq: E4} admits a unique solution, and hence the feasible point $\vw$ is a vertex of the feasible polyhedral.

\paragraph{(iii) Counting.}
Fix $a\in\{0,\dots,\ell-1\}$. 
The map $\vs$ is a length-$n$ vector taking values in $[\ell]\setminus\{a\}$, hence there are $(\ell-1)^n$ possible maps in total.
Among these, $(\ell-1)$ maps are constant. 
The construction excludes constant $\vs$, and moreover the resulting vertex is unchanged under the standard shift invariance, so the number of \emph{$M$-distinct} (non-constant) choices contributing $M$-distinct vertices is
\[
(\ell-1)^{\,n-1}-(\ell-1).
\]
Multiplying by the $l$ possible choices of $a$ gives the total count
\[
l\bigl((\ell-1)^{\,n-1}-(\ell-1)\bigr).
\]
This completes the proof.
\end{proof}

\thmExponentialATEBounds*
\begin{proof}
By \Cref{prop: generalized exponential vertices}, the feasible set $\mathcal{H}$ contains at least
\[
\ell\Big((\ell-1)^{\,n-1}-(\ell-1)\Big)
\]
$M$-distinct vertices.

Moreover, according to \Cref{rem: general uniqness of Mv} and similar to \Cref{prop: sharpness of vertices}, for each vertex $\vv \in \mathcal{H}$, there exists a pair $(\mathcal{P}, \mathcal{Q})$ satisfying \eqref{eq: relation between q and p} such that $\vv$ uniquely attains the optimum of the dual objective.

It follows that each such vertex $\vv$ induces an achievable ATE bound of the form $\vpt \vv$. Therefore, any complete set of sharp ATE bounds must include at least one bound corresponding to each of these vertices.

Hence, any complete collection of sharp ATE bounds must contain at least
\[
\ell\Big((\ell-1)^{\,n-1}-(\ell-1)\Big)
\]
bounds.
\end{proof}


\subsection{Extreme rays and Proofs of Section \ref{sec: llineq}}

In this setting, the dual feasibility problem \Cref{eq: general dual feasibility test_formal} can be written as
\begin{align}
\label{eq: generalized dual feasibility test}
    \max \quad & \vpt \vx \nonumber \\
    \text{s.t.} \quad
    &\sum_{j \in S} x_{y_0 0, j} + \sum_{j \in S^c} x_{y_1 1, j} \leq 0 
    \quad \forall y_0, y_1 \in \{0, \dots, n-1\},\ \forall S \subseteq \{0, \dots, \ell-1\}.
\end{align}

\propgeneralizediv*

In order to prove \Cref{prop: exponential inequalities general}, we introduce a set of extreme rays that produce necessary IV inequalities.

\begin{proposition}[Extreme rays generating the inequalities]
\label{prop: extreme rays general appendix}
Fix $y' \in [n-1]$ and $j' \in [\ell]$.  
Let $(j_0,\dots,j_{n-1}) \in ([\ell]\setminus\{j'\})^n$ be not identically constant.  
Define $\vr$ by
\begin{align*}
r_{y'1,j'} &= 1, \\
r_{y0,j_y} &= -1, \quad \forall y, \\
r_{y'1,j_y} &= -1, \quad \forall y,
\end{align*}
and all other entries equal to zero.  
Then $\vr$ is an extreme ray of the cone $\mathcal{K}$.
\end{proposition}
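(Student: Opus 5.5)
We would prove the proposition by applying \Cref{lem: rank of rays are 2nl - dim ker M}. It suffices to show two things: that $\vr\in\mathcal K$, and that every vector satisfying the equalities active at $\vr$ lies in $\mathrm{span}(\vr,\ker(M))$, i.e.\ $\dim\ker(\Mr)\le 1+\dim\ker(M)$. Write $J:=\{j_0,\dots,j_{n-1}\}$; non-constancy gives $|J|\ge 2$, and $j'\notin J$.

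\emph{Feasibility.} Consider a constraint of \Cref{eq: generalized dual feasibility test} indexed by $(y_0,y_1,S)$. Its left-hand side at $\vr$ equals
\[
-\mathbbm{1}(j_{y_0}\in S)\;+\;\mathbbm{1}(y_1=y')\bigl(\mathbbm{1}(j'\notin S)-|J\setminus S|\bigr).
\]
The only positive contribution $+1$ occurs when $y_1=y'$ and $j'\notin S$. In that case, either $j_{y_0}\in S$, which contributes $-1$, or $j_{y_0}\notin S$, so $j_{y_0}\in J\setminus S$, which also contributes $-1$. Hence every constraint is $\le 0$.

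The same count describes the active set $\Mr$ exactly. It consists of three families:
\begin{itemize}
\item[(a)] $y_1\neq y'$ and $j_{y_0}\notin S$;
\item[(b)] $y_1=y'$, $j'\notin S$, $J\subseteq S$ (so $j_{y_0}\in S$);
\item[(c)] $y_1=y'$, $j'\notin S$, $J\setminus S=\{j_{y_0}\}$.
\end{itemize}
No constraint with $y_1=y'$ and $j'\in S$ is tight, because $j_{y_0}\in J$ always yields a $-1$.

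\emph{Uniqueness of the active solution.} Let $\vx$ satisfy all active equalities. First we would determine $\ker(M)$ for general $\ell$: it is spanned by the shifts of \Cref{lem: invariency of M_x in general}, so $\dim\ker(M)=\ell-1$. Checking that there is nothing more is a routine analogue of \Cref{lem: rank is 4n-1}. We then impose the normalization of \Cref{rem: fix last variable to zero in general}, $x_{(n-1)1,j}=0$ for $j\neq 0$. Since $y'\in[n-1]$, the index $n-1$ differs from $y'$, so family (a) is available with $y_1=n-1$.

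The propagation then runs as follows.
\begin{itemize}
\item Family (a) with $y_1=n-1$ and $S=\varnothing$ gives $x_{(n-1)1,0}=0$.
\item Varying $S\not\ni j_{y_0}$ in (a) then gives $x_{y_00,j}=0$ for all $j\neq j_{y_0}$.
\item Using (a) with other $y_1\neq y'$ and $S=\varnothing$, we get $\sum_j x_{y_11,j}=0$. Adding single indices to $S$ then forces $x_{y_11,j}=x_{y_00,j}=0$ for those $j$, so the rows $y_1\neq y'$ vanish.
\item Families (b) and (c) involve only $x_{y'1,\cdot}$ and $x_{y_00,j_{y_0}}$. From (b), with $S\supseteq J$ and $S$ varying over sets containing $J$ and avoiding $j'$, we get $x_{y'1,j}=0$ for $j\notin J\cup\{j'\}$ and $x_{y_00,j_{y_0}}=-x_{y'1,j'}$.
\item From (c), comparing $y_0$ with $j_{y_0}=j$ across different $j\in J$, we get $x_{y'1,j}=x_{y_00,j_{y_0}}$ for each $j\in J$. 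Here $|J|\ge 2$ is what makes (c) non-degenerate: $S$ must contain the other elements of $J$.
\end{itemize}
Thus all coordinates are determined by the single scalar $x_{y'1,j'}$, and the resulting vector is a multiple of $\vr$.

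The main obstacle is this last linear-algebra bookkeeping. Three points need care: the normalization must be compatible with the coordinates involved; the residual shift freedom must be absorbed correctly, since a shift adds $t$ at column $0$, which may or may not lie in $J\cup\{j'\}$; and the case $0\in J\cup\{j'\}$ must be handled separately. We expect to treat it by choosing the shift representative so that column $0$ is outside $J\cup\{j'\}$ when possible, and otherwise checking directly that the shift direction is exactly the $\ker(M)$ component.
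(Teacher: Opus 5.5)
Your proof is correct, and it takes a different route from the paper's. The paper picks a subfamily \eqref{eq: B1}--\eqref{eq: B6} of constraints, asserts that all of them are tight at $\vr$, and solves them to get an $\ell$-dimensional solution set parametrized by $h_0,\dots,h_{\ell-1}$. You instead evaluate every constraint at $\vr$ in closed form, which pins down the active set exactly as your families (a)--(c). You then gauge-fix and propagate.

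That exactness is what the paper's Step 2 lacks. By your formula, the constraint \eqref{eq: B5}, indexed by $(y'',y',[\ell]\setminus\{j',j_{y'}\})$ with $j_{y''}\neq j_{y'}$, evaluates to $-1$ at $\vr$. Likewise \eqref{eq: B4} is tight only for $i\notin J$. Consistently with this, the paper's final parametrization forces $x_{y0,j}=h_j$ independently of $y$, which $\vr$ violates, since $r_{y0,j_y}=-1$ and $(j_y)$ is non-constant. Your route therefore genuinely establishes $\ker(\Mr)=\mathrm{span}(\vr)+\ker(M)$, which is what \Cref{lem: rank of rays are 2nl - dim ker M} requires. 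The paper's subfamily, by contrast, contains inactive rows, so its rank bound does not transfer to $\Mr$.

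The concerns in your last paragraph are moot.
\begin{itemize}
\item Your propagation already shows that any active solution satisfying $x_{(n-1)1,j}=0$ for $j\neq 0$ is a multiple of $\vr$. Nothing in it depends on whether $0\in J\cup\{j'\}$, and $\vr$ itself satisfies the normalization because $y'\neq n-1$. So there is no residual shift to absorb and no separate case to treat.
\item You do not need $\dim\ker(M)=\ell-1$ exactly. Shifting an arbitrary active solution into normal form and then propagating gives $\ker(\Mr)\subseteq\mathrm{span}(\vr)+\ker(M)$, which already yields $\dim\ker(\Mr)\le 1+\dim\ker(M)$.
\item The condition $|J|\ge 2$ plays no role in family (c), which is nonempty for any $J$. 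Non-constancy enters only in your third step, to ensure every column $j$ has some $y_0$ with $j_{y_0}\neq j$. Even there the row-sum equation would fix a single leftover column, so your argument goes through for constant $(j_y)$ as well.
\end{itemize}
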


\begin{proof}
The proof consists of three steps.

\paragraph{Step 1: Feasibility.}
Fix any constraint indexed by $(y_0,y_1,S)$. 
The left-hand side is
\[
LHS = \sum_{j \in S} r_{y_0 0, j} + \sum_{j \in S^c} r_{y_1 1, j}.
\]

By construction, the only nonzero entries of $\vr$ are:
\[
r_{y'1,j'} = 1, \qquad
r_{y0,j_y} = -1, \qquad
r_{y'1,j_y} = -1.
\]

Hence, $LHS$ contains at most one positive term, namely $r_{y'1,j'}$.
If this term appears (i.e., $y_1 = y'$ and $j' \in S^c$), then since $j_{y_0} \neq j'$, either the term $r_{y_0 0, j_{y_0}} = -1$ or $r_{y' 1, j_{y_0}} = -1$ appear in the sums.
All remaining nonzero terms are nonpositive. 
Therefore,
\[
LHS \le 1 - 1 = 0,
\]
and the constraint is satisfied.
Hence $\vr \in \mathcal{K}$.

\paragraph{Step 2: Rank of the Active Constraints.}
First, note that the ambient dimension is $2n\ell$, and by shift invariance (see \Cref{lem: invariency of M_x in general}), the kernel has dimension at least $\ell - 1$.
Hence, by \Cref{lem: rank of rays are 2nl - dim ker M}, $\vr$ satisfies \Cref{def: extreme ray} if and only if
\[
\rank(\Mr) \geq 2n\ell  - \dim(\ker(M)) - 1.
\]

We now show that the set of tight constraints at $\vr$ has rank at least $2n\ell - \ell$.

Consider the following family of constraints, all of which are active (tight) at $\vr$:
\begin{align}
x_{y' 0, i} + \sum_{j \neq i} x_{y 1, j} &= 0,
&& \forall y \neq y',\ i \neq j_{y'}, \label{eq: B1}\\
\sum_{j} x_{y 1, j} &= 0,
&& \forall y \neq y', \label{eq: B2}\\
x_{y 0, i} + \sum_{j \neq i} x_{y 1, j} &= 0,
&& \forall y \neq y',\ i \neq j_{y}, \label{eq: B3}\\
\sum_{j \neq j', i} x_{y' 0, j} + x_{y' 1, j'} + x_{y' 1, i} &= 0,
&& \forall i \neq j_{y'}, j', \label{eq: B4}\\
\sum_{j \neq j', j_{y'}} x_{y'' 0, j} + x_{y' 1, j'} + x_{y' 1, j_{y'}} &= 0,
&& \forall y'' \text{ with } j_{y''} \neq j_{y'}, \label{eq: B5}\\
\sum_{j \neq j', j_y} x_{y 0, j} + x_{y 0 ,j_y} + x_{y' 1, j'} &= 0,
&& \forall y. \label{eq: B6}
\end{align}

We claim that these constraints determine $\vr$ uniquely up to scaling and addition of a vector in $\ker(M)$.

From \eqref{eq: B2}, we have
\[
\sum_{j} x_{y1,j} = 0 \quad \forall y \neq y'.
\]
Substituting into \eqref{eq: B3} yields, for all $i \neq j_y$,
\[
x_{y0,i} + \sum_{j \neq i} x_{y1,j}
= x_{y0,i} - x_{y1,i} = 0,
\]
hence
\[
x_{y0,i} = x_{y1,i}, \qquad \forall y \neq y',\ i \neq j_y.
\]
Combining with \eqref{eq: B1} yields, for each $i \neq j_y$ there exists a scalar $h_i$ such that,
\begin{equation*}
\label{eq:midproof_equal_blocks2}
h_i = x_{y' 0, i} = x_{y0,i} = x_{y1,i}, \qquad \forall y \neq y',\ i \neq j_y.
\end{equation*}
Applying in \eqref{eq: B2} yields, for all $y \neq y'$,
\begin{equation}
\label{eq:midproof_equal_blocks3}
x_{y 1, j_y} = -\sum_{j \neq j_y} h_j.
\end{equation}
Substituting into \eqref{eq: B6} yields, for all $y \neq y'$,
\begin{equation}
\label{eq:midproof_equal_blocks4}
x_{y 1, j_y} - x_{y 0, j_y} = x_{y' 1, j'} - h_{j'}.
\end{equation}
Combining with \eqref{eq: B5} yields, for all $y''$ with $j_{y''} \neq j_{y'}$,
\[
\sum_{j \neq j_{y'}} x_{y'' 1, j} + x_{y' 1, j_{y'}} = 0.
\]
Considering \eqref{eq: B2}, we obtain
\begin{equation*}
\label{eq: midproof_equal_y_prime}
    x_{y' 1, j_{y'}} = h_{j_{y'}} = x_{y' 0, j_{y'}}
\end{equation*}
Comparing \eqref{eq: B4} and \eqref{eq: B6} for $y = y'$, we obtain
\begin{equation*}
\label{eq: midproof_equal_y_prime2}
    x_{y' 1, i} = x_{y' 0, i} = h_{i}, \qquad \forall i \neq j', j_{y'}.
\end{equation*}
Applying in \eqref{eq: B4} yields
\begin{equation*}
\label{eq: midproof_equal_y_prime3}
    x_{y' 1, j'} = -\sum_{j \neq j'} h_{j}.
\end{equation*}
Combining with \eqref{eq:midproof_equal_blocks3} and \eqref{eq:midproof_equal_blocks4} results in
\begin{equation*}
\label{eq: midproof_equal_y_prime4}
    x_{y 0, j_y} =  h_{j_y}.
\end{equation*}

Thus, by fixing values of $h_i$ for all $i \in [\ell]$, $\vx$ is uniquely determined as follows
\begin{align*}
    x_{y 0, j} &= h_j, \qquad &&\forall y, j; \\
    x_{y 1, j} &= h_j, \qquad &&\forall y \neq y', j \neq j_y; \\
    x_{y 1, j_y} &= -\sum_{j \neq j_y} h_j \qquad &&\forall y \neq y';\\
    x_{y' 1, j} &= h_j, \qquad &&\forall j \neq j'; \\
    x_{y' 1, j'} &= -\sum_{j \neq j'} h_{j};
\end{align*}

Therefore, the solution space has dimension $\ell$.
Since $\dim(\ker(M)) \geq \ell - 1$ and $\vr$ belongs to the solution space, 
\[
\rank(\Mr) \geq 2n\ell - \ell \geq 2n\ell - \dim(\ker(M)) - 1.
\]

This shows that the active constraints have maximal rank for an extreme ray, and hence $\mathbf{x}$ defines an extreme ray.

\paragraph{Step 3: Counting.}
For each $(y',j')$, there are $(\ell-1)^n - (\ell-1)$ non-constant choices of the vector $(j_0,\dots,j_{n-1})$.
Multiplying by $(n-1)\ell$ yields the total number of $M$-distinct rays.
\end{proof}

\begin{proof}[Proof of \Cref{prop: exponential inequalities general}]
Each ray presented in \Cref{prop: extreme rays general appendix} induces the inequality
\[
\vpt \vr \leq 0,
\]
which is exactly inequalities in \Cref{prop: exponential inequalities general}.
In order to prove the necessity of these inequalities, it is suffice to show that for each inequality there exists a vector $\vp$ such that all of inequalities except this specific one is satisfied, which is obtained similarly to \Cref{sec: proof of necessity and sufficiency of extreme rays} since the inequalities are derived from $M$-distinct extreme rays.
Furthermore, as in \Cref{sec: proof of necessity and sufficiency of extreme rays}, the basic constraints  
\[
p_{y d, z} \geq 0 \qquad \forall y, d, z,
\quad \text{and} \quad
\sum_{y, d} p_{y d, 0} = \sum_{y, d} p_{y d, z}, \qquad \forall z \in [\ell],
\]
are implied by extreme rays other than those characterized in \Cref{prop: extreme rays general appendix}.
Hence, it follows that the vector $\vp$ is an observed probability vector, corresponding to some distribution $\mathcal{P}$.
\end{proof}

\section{Analytical Example}
\label{sec: example}

In this section, we present an explicit example for $n = \ell = 2$. 
We construct the corresponding matrix $M$, the set $S$, the set of vertices $\mathcal{V}$, 
the ATE bounds based on the distribution $\mathcal{P}$, set of extreme rays, and the associated IV inequalities.

\subsection{Formulation} \label{App: example1}

We begin by listing all the constraints given in \Cref{prop:relation}:
    \begin{align*}
        p_{0 0, 0} = q_{0 0,0 0} + q_{0 0,0 1} + q_{0 1,0 0} + q_{0 1,0 1}
        \\
        p_{1 0, 0} = q_{1 0,0 0} + q_{1 0,0 1} + q_{1 1,0 0} + q_{1 1,0 1}
        \\
        p_{0 1, 0} = q_{0 0,1 0} + q_{0 0,1 1} + q_{1 0,1 0} + q_{1 0,1 1}
        \\
        p_{1 1, 0} = q_{0 1,1 0} + q_{0 1,1 1} + q_{1 1,1 0} + q_{1 1,1 1}
        \\
        p_{0 0, 1} = q_{0 0,0 0} + q_{0 0,1 0} + q_{0 1,0 0} + q_{0 1,1 0}
        \\
        p_{1 0, 1} = q_{1 0,0 0} + q_{1 0,1 0} + q_{1 1,0 0} + q_{1 1,1 0}
        \\
        p_{0 1, 1} = q_{0 0,0 1} + q_{0 0,1 1} + q_{1 0,0 1} + q_{1 0,1 1}
        \\
        p_{1 1, 1} = q_{0 1,0 1} + q_{0 1,1 1} + q_{1 1,0 1} + q_{1 1,1 1}.
    \end{align*}
In addition, we impose the nonnegativity constraints
\[
q_{ij,\mathbf{d}} \ge 0 
\qquad \text{for all } i,j \in \{0,1\}, \; \mathbf{d} \in \{0,1\}^{2}.
\]
Our objective is to maximize or minimize the linear functional 
\[
\mathbf{c}^\top \mathbf{q},
\]
subject to the constraints $\mathbf{p} = M^\top\mathbf{q}$ and $\mathbf{q} \ge 0$, where
\setlength{\arraycolsep}{1pt}
\renewcommand{\arraystretch}{0.9}
\[
\mathbf{p} =
\begin{pmatrix}
p_{00,0} \\
p_{10,0} \\
p_{01,0} \\
p_{11,0} \\
p_{00,1} \\
p_{10,1} \\
p_{01,1} \\
p_{11,1}
\end{pmatrix}
\qquad
\mathbf{q} =
\begin{pmatrix}
q_{00,00} \\
q_{01,00} \\
q_{10,00} \\
q_{11,00} \\
q_{00,01} \\
q_{01,01} \\
q_{10,01} \\
q_{11,01} \\
q_{00,10} \\
q_{01,10} \\
q_{10,10} \\
q_{11,10} \\
q_{00,11} \\
q_{01,11} \\
q_{10,11} \\
q_{11,11}
\end{pmatrix}
\qquad
\mathbf{c} =
\begin{pmatrix}
0 \\
\gamma_1-\gamma_0 \\
\gamma_0 - \gamma_1 \\
0 \\
0 \\
\gamma_1-\gamma_0 \\
\gamma_0 - \gamma_1 \\
0 \\
0 \\
\gamma_1-\gamma_0 \\
\gamma_0 - \gamma_1 \\
0 \\
0 \\
\gamma_1-\gamma_0 \\
\gamma_0 - \gamma_1 \\
0
\end{pmatrix}
\]

\[
M =
\left[
\begin{array}{c|cccccccc}
& p_{00,0} & p_{10,0} & p_{01,0} & p_{11,0}
& p_{00,1} & p_{10,1} & p_{01,1} & p_{11,1} \\ \hline
q_{00,00} & 1 & 0 & 0 & 0 & 1 & 0 & 0 & 0 \\
q_{01,00} & 1 & 0 & 0 & 0 & 1 & 0 & 0 & 0 \\
q_{10,00} & 0 & 1 & 0 & 0 & 0 & 1 & 0 & 0 \\
q_{11,00} & 0 & 1 & 0 & 0 & 0 & 1 & 0 & 0 \\
q_{00,01} & 1 & 0 & 0 & 0 & 0 & 0 & 1 & 0 \\
q_{01,01} & 1 & 0 & 0 & 0 & 0 & 0 & 0 & 1 \\
q_{10,01} & 0 & 1 & 0 & 0 & 0 & 0 & 1 & 0 \\
q_{11,01} & 0 & 1 & 0 & 0 & 0 & 0 & 0 & 1 \\
q_{00,10} & 0 & 0 & 1 & 0 & 1 & 0 & 0 & 0 \\
q_{01,10} & 0 & 0 & 0 & 1 & 1 & 0 & 0 & 0 \\
q_{10,10} & 0 & 0 & 1 & 0 & 0 & 1 & 0 & 0 \\
q_{11,10} & 0 & 0 & 0 & 1 & 0 & 1 & 0 & 0 \\
q_{00,11} & 0 & 0 & 1 & 0 & 0 & 0 & 1 & 0 \\
q_{01,11} & 0 & 0 & 0 & 1 & 0 & 0 & 0 & 1 \\
q_{10,11} & 0 & 0 & 1 & 0 & 0 & 0 & 1 & 0 \\
q_{11,11} & 0 & 0 & 0 & 1 & 0 & 0 & 0 & 1
\end{array}
\right]
\]
\subsection{Example of Submatrix Selection}\label{App: submatrix}

We illustrate the notation $M(\alpha,\beta)$ on the following matrix for \Cref{def: submatrixx}:

\[
\left[
\begin{array}{c|cccccccc}
& p_{00,0} & p_{10,0} & p_{01,0} & p_{11,0}
& p_{00,1} & p_{10,1} & p_{01,1} & p_{11,1} \\ \hline
q_{00,00} & \cellcolor{gray!15}1 & 0 & 0 & \cellcolor{gray!15}0 & 1 & 0 & 0 & 0 \\
q_{01,00} & \cellcolor{gray!15}1 & 0 & 0 & \cellcolor{gray!15}0 & 1 & 0 & 0 & 0 \\
q_{10,00} & \cellcolor{gray!15}0 & 1 & 0 & \cellcolor{gray!15}0 & 0 & 1 & 0 & 0 \\
q_{11,00} & \cellcolor{gray!15}0 & 1 & 0 & \cellcolor{gray!15}0 & 0 & 1 & 0 & 0 \\
q_{00,01} & \cellcolor{gray!15}1 & 0 & 0 & \cellcolor{gray!15}0 & \cellcolor{gray!40}1 & \cellcolor{gray!40}0 & \cellcolor{gray!40}1 & \cellcolor{gray!40}0 \\
q_{01,01} & \cellcolor{gray!15}1 & 0 & 0 & \cellcolor{gray!15}0 & \cellcolor{gray!40}0 & \cellcolor{gray!40}0 & \cellcolor{gray!40}0 & \cellcolor{gray!40}1 \\
q_{10,01} & \cellcolor{gray!15}0 & 1 & 0 & \cellcolor{gray!15}0 & 0 & 0 & 1 & 0 \\
q_{11,01} & \cellcolor{gray!15}0 & 1 & 0 & \cellcolor{gray!15}0 & 0 & 0 & 0 & 1 \\
q_{00,10} & \cellcolor{gray!15}0 & 0 & 1 & \cellcolor{gray!15}0 & \cellcolor{gray!40}1 & \cellcolor{gray!40}0 & \cellcolor{gray!40}0 & \cellcolor{gray!40}0 \\
q_{01,10} & \cellcolor{gray!15}0 & 0 & 0 & \cellcolor{gray!15}1 & \cellcolor{gray!40}1 & \cellcolor{gray!40}0 & \cellcolor{gray!40}0 & \cellcolor{gray!40}0 \\
q_{10,10} & \cellcolor{gray!15}0 & 0 & 1 & \cellcolor{gray!15}0 & 0 & 1 & 0 & 0 \\
q_{11,10} & \cellcolor{gray!15}0 & 0 & 0 & \cellcolor{gray!15}1 & 0 & 1 & 0 & 0 \\
q_{00,11} & \cellcolor{gray!15}0 & 0 & 1 & \cellcolor{gray!15}0 & 0 & 0 & 1 & 0 \\
q_{01,11} & \cellcolor{gray!15}0 & 0 & 0 & \cellcolor{gray!15}1 & 0 & 0 & 0 & 1 \\
q_{10,11} & \cellcolor{gray!15}0 & 0 & 1 & \cellcolor{gray!15}0 & 0 & 0 & 1 & 0 \\
q_{11,11} & \cellcolor{gray!15}0 & 0 & 0 & \cellcolor{gray!15}1 & 0 & 0 & 0 & 1
\end{array}
\right]
\]

The dark gray entries correspond exactly to the submatrix
\[
M(0*01 \cup 0*10, **1).
\]
The light grey entries correspond exactly to the submatrix
\[
M(****, 110\cup000).
\]

\subsection{Vertices and ATE Bounds}\label{App: example2}

We can represent each $\vB\in\{0,1\}^{2\times2\times2}\in S$ (defined in \Cref{def: S define}) as the horizontal vector ordered as
\[
(b_{000},\, b_{100},\, b_{010},\, b_{11,0},\,
  b_{001},\, b_{101},\, b_{011},\, b_{111}).
\]
in
\[
S=\left\{
\begin{aligned}
S_1=\left\{ 
\begin{aligned}
&(1,1,0,1,\,1,1,1,0),\\
&(1,1,1,0,\,1,1,0,1),\\
\end{aligned}
\right\}\\
S_2=\left\{ 
\begin{aligned}
&(0,1,1,0,\,1,1,1,1),\\
&(1,1,1,0,\,0,1,1,1),\\
&(0,1,1,1,\,1,1,1,0),\\
&(1,1,1,1,\,0,1,1,0),\\
\end{aligned}
\right\}\\
S_2=\left\{ 
\begin{aligned}
&(0,1,1,1,\,1,0,1,1),\\
&(1,0,1,1,\,0,1,1,1)
\end{aligned}
\right\}\\
\end{aligned}
\right\}
\]

Now based on \Cref{def: U define}, we can create the vertex set $\mathcal{V}$ defined in \Cref{thm:main PI result}:
\[
\mathcal{V}=\left\{
\begin{aligned}
&(\phantom{-}0,-1,-1,\phantom{-}1,\phantom{-}0,-1,\phantom{-}0,-1),\\
&(\phantom{-}0,-1,\phantom{-}0,-1,\phantom{-}0,-1,-1,\phantom{-}1),\\
&(-1,-1,-1,-1,\phantom{-}1,\phantom{-}0,\phantom{-}0,\phantom{-}1),\\
&(\phantom{-}0,-1,-1,-1,\phantom{-}0,\phantom{-}0,\phantom{-}0,\phantom{-}1),\\
&(-1,-1,-1,\phantom{-}0,\phantom{-}1,\phantom{-}0,\phantom{-}0,\phantom{-}0),\\
&(\phantom{-}0,-1,-1,\phantom{-}0,\phantom{-}0,\phantom{-}0,\phantom{-}0,\phantom{-}0),\\
&(-1,\phantom{-}0,-1,\phantom{-}0,\phantom{-}1,-1,-1,\phantom{-}0),\\
&(\phantom{-}1,-1,-1,\phantom{-}0,-1,\phantom{-}0,-1,\phantom{-}0).
\end{aligned}
\right\}
\]
So based on \Cref{thm: PI ATE bounds} we have
\[
\max\left\{
\begin{aligned}
&-p_{10,0}-p_{01,0}+p_{11,0}-p_{10,1}-p_{11,1},\\[4pt]
&-p_{10,0}-p_{11,0}-p_{10,1}-p_{01,1}+p_{11,1},\\[4pt]
&-p_{00,0}-p_{10,0}-p_{01,0}-p_{11,0}+p_{00,1}+p_{11,1},\\[4pt]
&-p_{10,0}-p_{01,0}-p_{11,0}+p_{11,1},\\[4pt]
&-p_{00,0}-p_{10,0}-p_{01,0}+p_{00,1},\\[4pt]
&-p_{10,0}-p_{01,0},\\[4pt]
&-p_{00,0}-p_{01,0}+p_{00,1}-p_{10,1}-p_{01,1},\\[4pt]
&\phantom{-}p_{00,0}-p_{10,0}-p_{01,0}-p_{00,1}-p_{01,1}
\end{aligned}
\right\}
\le \text{ATE} 
\]
and
\[
\text{ATE} \le
\min\left\{
\begin{aligned}
&+p_{11,0}+p_{00,0}-p_{10,0}+p_{11,1}+p_{10,1},\\[4pt]
&+p_{11,0}+p_{10,0}+p_{11,1}+p_{00,1}-p_{10,1},\\[4pt]
&+p_{01,0}+p_{11,0}+p_{00,0}+p_{10,0}-p_{01,1}-p_{10,1},\\[4pt]
&+p_{11,0}+p_{00,0}+p_{10,0}-p_{10,1},\\[4pt]
&+p_{01,0}+p_{11,0}+p_{00,0}-p_{01,1},\\[4pt]
&+p_{11,0}+p_{00,0},\\[4pt]
&+p_{01,0}+p_{00,0}-p_{01,1}+p_{11,1}+p_{00,1},\\[4pt]
&-p_{01,0}+p_{11,0}+p_{00,0}+p_{01,1}+p_{00,1}
\end{aligned}
\right\}
\]
which are equivalent to the bounds introduced in \cite{balke1995probabilistic}.
\subsection{Extreme Rays and Testable Implications}\label{App: example3}

We now specialize \Cref{thm: iv inequalities} to the case $n=\ell=2$. 
By \Cref{thm: iv inequalities}, the IV model is valid if and only if the observed distribution $\vp$ satisfies the finite family of inequalities induced by the extreme rays of $\mathcal{K}$.

We represent each non trivial extreme ray as the horizontal vector ordered with respect to
\[
(p_{00,0},\, p_{10,0},\, p_{01,0},\, p_{11,0},\, p_{00,1},\, p_{10,1},\, p_{01,1},\, p_{11,1}).
\]
\begin{align*}
    &(\phantom{-}0, \phantom{-}0, \phantom{-}1, \phantom{-}0, -1, -1, -1, \phantom{-}0), \\
    &(-1, \phantom{-}0, -1, -1, \phantom{-}1, \phantom{-}0, \phantom{-}0, \phantom{-}0), \\
    &(\phantom{-}0, -1, -1, -1, \phantom{-}0, \phantom{-}1, \phantom{-}0, \phantom{-}0), \\
    &(-1, -1, -1, \phantom{-}0, \phantom{-}0, \phantom{-}0, \phantom{-}1, \phantom{-}0).
\end{align*}

\paragraph{Explicit Inequalities for $n=2$.}

Instantiating above extreme rays into $\vpt \vr \leq 0$ (equivalently \eqref{eq: necessary and sufficient iv inequalities} for $n=2$), we obtain the following system:



\begin{align*}
    - p_{0 0, 1} - p_{1 0, 1} + p_{0 1, 0} - p_{0 1, 1} &\leq 0;
\end{align*}
\begin{align*}
    - p_{0 1, 0} - p_{1 1, 0} + p_{0 0, 1} - p_{0 0, 0} &\leq 0, \\
    - p_{0 1, 0} - p_{1 1, 0} + p_{1 0, 1} - p_{1 0, 0} &\leq 0;
\end{align*}
\begin{align*}
    - p_{0 0, 0} - p_{1 0, 0} + p_{0 1, 1} - p_{0 1, 0} &\leq 0;
\end{align*}
which by using 
$(p_{00,1}+p_{10,1}+p_{01,1}+p_{11,1})
 =
(p_{00,0}+p_{10,0}+p_{01,0}+p_{11,0}) = 1$
are equivalent to 
\begin{align*}
     p_{0 1, 0} +  p_{1 1, 1} &\leq 1, \\
     p_{0 0, 1} +  p_{1 0, 0} &\leq 1, \\
     p_{1 0, 1} +  p_{0 0, 0} &\leq 1, \\
     p_{0 1, 1} +  p_{1 1, 0} &\leq 1.
\end{align*}

Therefore, producing a total of $2^{n+1}-4 = 4$ inequalities as follow:
\begin{align*}
    p_{0 1, 0} +  p_{1 1, 1} &\leq 1, \\
    p_{0 0, 1} +  p_{1 0, 0} &\leq 1, \\
    p_{1 0, 1} +  p_{0 0, 0} &\leq 1, \\
    p_{0 1, 1} +  p_{1 1, 0} &\leq 1.
\end{align*}

which are equivalent to the bounds introduced in \cite{pearl1995testability}.



\end{document}